\def\BibTeX{{\rm B\kern-.05em{\sc i\kern-.025em b}\kern-.08em
    T\kern-.1667em\lower.7ex\hbox{E}\kern-.125emX}}
\newcommand{\set}[1]{\{#1\}}                    
\newcommand{\setof}[2]{\{{#1}\mid{#2}\}}        
\newcommand{\bigsetof}[2]{\left\{{#1}\mid{#2}\right\}}        
\newcommand{\E}{\mathop{\mathbb E}}    
\newcommand{\dom}{\textsf{Dom}}
\newcommand{\one}{\bm 1}
\newcommand{\degree}{\texttt{deg}}
\newcommand{\calH}{\mathcal H}
\newtheorem{thm}{Theorem}[section]
\newtheorem{lmm}[thm]{Lemma}
\newtheorem{prop}[thm]{Proposition}
\newtheorem{cor}[thm]{Corollary}
\newtheorem{ex}[thm]{Example}
\newtheorem{defn}[thm]{Definition}
\newtheorem{claim}{Claim}
\newcommand{\defeq}{\stackrel{\text{def}}{=}}
\newcommand{\N}{\mathbb N} 
\newcommand{\R}{\mathbb R} 
\newcommand{\Rp}{{\mathbb R}_{\tiny +}} 
\newcommand{\floor}[1]{\lfloor{#1}\rfloor}
\newcommand{\ceil}[1]{\lceil{#1}\rceil}
\newcommand{\mvd}{\twoheadrightarrow}
\newcommand{\fd}{\rightarrow}
\newcommand{\conehull}{\mathbf{conhull}}
\newcommand{\bigjoin}{\mathlarger{\mathlarger{\mathlarger{\Join}}}}
\newcommand{\bigtimes}{\mathlarger{\mathlarger{\mathlarger{\times}}}}
\begin{document}

\title{Applications of Information Inequalities \\ to Database Theory Problems \thanks{Research supported in part by NSF IIS 1907997, NSF
    IIS 1954222, and NSF-BSF 2109922.}}

\author{Dan Suciu \\
  \textit{University of Washington}\\
  Seattle, USA \\
  suciu@cs.washington.edu}

\maketitle

\begin{abstract}
  The paper describes several applications of information inequalities to problems in database theory.  The problems discussed include: upper bounds of a query's output, worst-case optimal join algorithms, the query domination problem, and the implication problem for approximate integrity constraints.  The paper is self-contained: all required concepts and results from information inequalities are introduced here, gradually, and motivated by database problems.
\end{abstract}

\section{Introduction}

\label{sec:intro}

Notions and techniques from information theory have found a number of
uses in various areas of database theory.  For example, entropy and
mutual information have been used to characterize database
dependencies \cite{DBLP:journals/tse/Lee87,DBLP:journals/tse/Lee87a}
and normal forms in relational and XML databases
\cite{DBLP:conf/pods/ArenasL02,DBLP:journals/jacm/ArenasL05}. More
recently, information inequalities were used with much success to
obtain tight bounds on the size of the output of a query on a given
database
\cite{DBLP:journals/siamcomp/AtseriasGM13,DBLP:journals/jacm/GottlobLVV12,
  DBLP:journals/talg/GroheM14,DBLP:conf/pods/KhamisNS16,DBLP:conf/pods/Khamis0S17},
{\em and} to devise query plans for worst-case optimal join
algorithms~\cite{DBLP:conf/pods/KhamisNS16,DBLP:conf/pods/Khamis0S17}.
Information theory was also used to compare the sizes of the outputs
of two queries, or, equivalently, to check query containment under bag
semantics~\cite{DBLP:journals/ejc/KoppartyR11,DBLP:journals/tods/KhamisKNS21}.
Finally, information theory has been used to reason about approximate
integrity constraints in the
data~\cite{DBLP:journals/lmcs/KenigS22,DBLP:conf/sigmod/KenigMPSS20}.

This paper presents some of these recent applications of information
theory to databases, in a unified framework.  All applications
discussed here make use of information inequalities, which have been
intensively studied in the information theory
community~\cite{Yeung:2008:ITN:1457455,DBLP:journals/tit/ZhangY97,zhang1998characterization,DBLP:conf/isit/Matus07,DBLP:journals/tit/KacedR13}.
We will introduce gradually the concepts and results on information
inequalities, motivating them with database applications.

We start by presenting in Sec.~\ref{sec:agm} a celebrated result in
database theory: the AGM upper bound, which gives a tight upper bound
on the query output size, given the cardinalities of the input
relations.  The AGM bound was first introduced by Grohe and
Marx~\cite{DBLP:conf/soda/GroheM06}, and refined in its current form
by Atserias, Grohe, and
Marx~\cite{DBLP:journals/siamcomp/AtseriasGM13}, hence the name AGM.
(A related result appeared earlier in~\cite{friedgut-kahn-1998}.)
While the original papers already used information inequalities to
prove these bounds, in this paper we provide an alternative,
elementary proof, which is based on a family of inequalities due to
Friedgut~\cite{DBLP:journals/tamm/Friedgut04}, and which are of
independent interest.

Next, we turn our attention in Sec.~\ref{sec:bound} to an extension of
the AGM bound, by providing an upper bound on the size of the query's
output using functional dependencies and statistics on degrees, in
addition to cardinality statistics.  The extension to functional
dependencies was first studied by Gottlob et
al.~\cite{DBLP:journals/jacm/GottlobLVV12} and then by Khamis et
al.~\cite{DBLP:conf/pods/KhamisNS16}, while the general framework was
introduced by Khamis et al.~\cite{DBLP:conf/pods/Khamis0S17}.  Here,
information inequalities are a necessity, and we use this opportunity
to introduce entropic vectors and polymatroids, and to define
information inequalities.  We show simple examples of how to compute
upper bounds on the query's output size using Shannon inequalities
(monotonicity and submodularity, reviewed in
Sec.~\ref{sec:background:ii}).

A natural question is whether the upper bound on the query's output
size provided by information inequalities is tight: we discuss this in
Sec.~\ref{sec:lower:bound}.  This question is surprisingly subtle, and
it requires us to dig even deeper into information theory, and discuss
non-Shannon inequalities.  More than 30 years ago,
Pippenger~\cite{pippenger1986} asserted that constraints on entropies
are the ``{\em laws of information theory}'' and asked whether the
basic Shannon inequalities form the complete laws of information
theory, i.e., whether every constraint on entropies can be derived
from the Shannon's basic inequalities.  In a celebrated result
published in 1998, Zhang and Yeung~\cite{zhang1998characterization}
answered Pippenger's question negatively by finding a linear
inequality that is satisfied by all entropic functions with 4
variables, but cannot be derived from Shannon's inequalities.  Later,
Mat{\'{u}}s~\cite{DBLP:conf/isit/Matus07} proved that, for 4 variables
or more, there are infinitely many, independent non-Shannon
inequalities.  In fact, it is an open problem whether the validity of
an information inequality is decidable.  We provide here a short,
self-contained proof of Zhang and Yeung's result.  This result has a
direct consequence to our problem, computing an upper bound on the
query's output size: we prove that Shannon inequalities are
insufficient to compute a tight upper bound.  In contrast, we show
that the upper bound derived by using general information inequalities
is tight, a result related to one by Gogacz and
Torunczyk~\cite{DBLP:conf/icdt/GogaczT17} (for cardinality constraints
and functional dependencies only) and another one by Khamis et
al.~\cite{DBLP:conf/pods/Khamis0S17} (for general degree constraints).
The take-away of this section is that we have two upper bounds on the
query's output size: one that uses Shannon inequalities, which is
computable but not always tight, and another one that uses general
information inequalities, which is tight but whose computability is an
open problem.

This motivates us to look in Sec.~\ref{sec:special:cases} at a special
case, when the two bounds coincide and, thus, are both tight and
computable.  This special case is when the statistics are restricted
to cardinalities, and to degrees on a single variable.  We call the
corresponding class of information inequalities {\em simple
  inequalities}, and prove that they are valid for all entropic
vectors iff they are provable using Shannon inequalities.  Moreover,
in this special case, the worst-case database instances (where the
size of the query's output reaches the theoretical upper bound) have a
simple yet interesting structure, called {\em normal database
  instances}, which generalize the {\em product database instances}
that are the worst case instances for the AGM bound.

In Sec.~\ref{sec:query:evaluation} we turn to the most exciting
application of upper bounds to the query's output size: the design of
Worst Case Optimal Join, WCOJ, algorithms, which compute a query in a
time that does not exceed the upper bound on their output size.  Thus,
a WCOJ algorithm is {\em worst-case optimal}.  The vast majority of
database systems today compute a conjunctive query as a sequence of
binary joins, whose intermediate results may exceed the upper bound on
the final output size.  Therefore, database execution engines are not
WCOJ algorithms.  For that reason, the discovery of the first WCOJ
algorithm by Ngo, Porat, R{\'{e}}, and
Rudra~\cite{DBLP:conf/pods/NgoPRR12,DBLP:journals/jacm/NgoPRR18} was a
highly celebrated result.  While the original WCOJ algorithm was
complex, some of the same authors described a very simple WCOJ, called
Generic Join (GJ) in~\cite{DBLP:journals/sigmod/NgoRR13}, which,
together with its refinement Leapfrog Trie Join
(LFTJ)~\cite{DBLP:conf/icdt/Veldhuizen14} forms the basis of the few
implementations to
date~\cite{DBLP:conf/sigmod/SchleichOC16,DBLP:journals/pvldb/FreitagBSKN20,DBLP:journals/tods/MhedhbiKS21,DBLP:journals/corr/abs-2301-10841}.
Looking back at these results, we observe that any concrete WCOJ
algorithm also provides a proof of the upper bound of the query's
output size, since the size of the output cannot exceed the
runtime of the algorithm.  A WCOJ algorithm can be designed in
reverse: start from a {\em proof} of the upper bound, then convert
that proof into a WCOJ algorithm.  We call this paradigm {\em from
  proofs to algorithms}, and illustrate it on three different proof
systems for information inequalities: we derive GJ, an algorithm we
call Heavy/Light, and PANDA.

Next, in Sec.~\ref{sec:domination} we move beyond upper bounds, and
consider a related problem: given two queries, check whether the size
of the output of the second query is always greater than or equal to
that of the first query.  This problem, called the {\em query
  domination problem}, is equivalent to the query containment problem
under bag semantics.  The latter was introduced by Chaudhuri and
Vardi~\cite{DBLP:conf/pods/ChaudhuriV93}, is motivated by the
semantics of SQL, where queries return duplicates, hence the answer to
a query is a bag rather than a set.  The query containment problem is:
given two queries, interpreted under bag semantics, check whether the
output of the first query is always contained in that of the second
query.  It has been shown that the containment problem is undecidable
for {\em unions} of conjunctive
queries~\cite{DBLP:journals/tods/IoannidisR95} and for conjunctive
queries with {\em inequalities}~\cite{DBLP:conf/pods/JayramKV06}, by
reduction from Hilbert's 10th problem.  However, it remains an open
problem to date whether the containment of two conjunctive queries is
decidable.  We describe in this section a surprising finding by
Kopparty and Rossman~\cite{DBLP:journals/ejc/KoppartyR11}, who have
reduced the containment problem to information inequalities.  This
result was further extended in~\cite{DBLP:journals/tods/KhamisKNS21},
and it was shown that the containment problem under bag semantics is
computationally equivalent to information inequalities with $\max$,
which are inequalities that assert that the maximum of a finite number
of linear expressions is $\geq 0$.  The decidability of either of
these problems remains open to date.

Finally, we present in Sec.~\ref{sec:conditional:inequalities}
another, quite distinct application of information inequalities:
reasoning about approximate integrity constraints.  The implication
problem for integrity constraints asks whether a set of integrity
constraints logically implies some other constraint: this is a problem
in Logic, and consists of checking the validity of a sentence
$\bigwedge_i \sigma_i \Rightarrow \sigma$.  When the integrity
constraints can be captured by some information measures, such as is
the case for Functional Dependencies and Multivalued Dependencies,
then an implication can be described as a {\em conditional information
  inequality}.  The problem we study is whether the exact implication
problem can be {\em relaxed} to an inequality between these
information measures, $\sum_i h(\sigma_i) \geq h(\sigma)$.  We review
a result from~\cite{DBLP:journals/lmcs/KenigS22} stating that every
exact implication between FDs and MVDs relaxes to an inequality.
However, in a surprising result, Kaced and
Romashchenko~\cite{DBLP:journals/tit/KacedR13} have given examples of
conditional information inequalities that do {\em not} relax.  In
other words, the exact implication holds, but the tiniest violation of
an integrity constraint in the premise may cause arbitrarily large
violation of the integrity constraint in the consequence.  Yet in
another turn, ~\cite{DBLP:journals/lmcs/KenigS22} show that {\em
  every} conditional information inequality relaxes with some error
term, which can be made arbitrarily small, at the cost of increasing
the coefficients of the terms representing the premise.  In
particular, every conditional inequality could be derived from an
unconditioned inequality, by having the error term tend to zero, since
in the conditional inequality the premise is assumed to be zero, hence
the magnitudes of their coefficients do not matter.  This section leads
us to our deepest dive into the space of entropic vectors and almost
entropic vectors: we show that the set of entropic vectors is neither
convex nor a cone, that its topological closure is a convex cone,
called the set of {\em almost entropic functions}, and use the theory
of closed convex cones to prove the relaxation-with-error theorem.

{\bf Acknowledgments} I am deeply indebted to my collaborators,
especially Hung Q. Ngo who introduced me to applications of
information inequalities to databases and with whom I had wonderful
collaborations, and also to Mahmoud Abo Khamis, Batya Kenig, and
Phokion G. Kolaitis.  I also thank Dan Olteanu and Andrei Romashchenko
for commenting on an early version of this paper.

\section{Basic Notations}

\label{sec:problem}

For two natural numbers $M, N$ we denote by
$[M:N] \defeq \set{M, M+1, \ldots, N}$; when $M=1$ we abbreviate
$[1:N]$ by $N$.  We will use upper case $X, Y, Z$ for variable names,
and lower case $x, y, z$ for values of these variables.  We use
boldface for tuples of variables, e.g. $\bm X, \bm Y$, or tuples of
values, e.g. $\bm x, \bm y$.

A {\bf conjunctive query}, CQ, is an expression of the form:
\begin{align}
  Q(\bm Y_0) = & \exists \bm Z (R_1(\bm Y_1) \wedge \cdots \wedge R_m(\bm Y_m)) \label{eq:cq}
\end{align}
Each $R_j(\bm Y_j)$ is called an {\em atom}: $R_j$ is a relation name,
and $\bm Y_j$ are variables.  We refer to $\bm Y_j$ interchangeably as
the {\em variables} of $R_j$, or the {\em attributes} of $R_j$.  The
variables $\bm Z$ are called {\em existential variables}, while
$\bm Y_0$ are called {\em head variables}.  We denote by $n$ the total
number of variables in the query, and by
$\bm X = \set{X_1, \ldots, X_n}$ the set of these variables.  Thus
$\bm X= \bm Y_0 \cup \bm Z$, and $\bm Y_j \subseteq \bm X$,
$\forall j$.

Fix some infinite domain $\dom$.  If $\bm X$ is a set of variables,
then we write $\dom^{\bm X}$ for the set of $\bm X$-tuples.  A {\em
  database instance} is $\bm D = (R_1^D, \ldots, R_m^D)$, where, for
each $j=1,m$, $R_j^D\subseteq \dom^{\bm Y_j}$, where $\bm Y_j$ are the
attributes of $R_j$.  Unless otherwise stated, relations are assumed
to be finite.  When $\bm D$ is clear from the context, then we will
drop the superscript and write simply $R_j$ for the instance $R_j^D$,
for $j=1,m$.

We denote by $Q(\bm D) \subseteq \dom^{\bm Y_0}$ the output, or answer
to the query $Q$ on the database $D$.  The {\em query evaluation
  problem} is: given a database instance $\bm D$, compute the output
$Q(\bm D)$.  The design and analysis of efficient query evaluation
algorithms is a fundamental problem in database systems and database
theory.  For the complexity of the query evaluation problem, we
consider only the {\em data complexity}, where $Q$ is fixed, and the
complexity is a function of the input database $\bm D$.

For a simple illustration, consider:
\begin{align}
  Q(X) = & \exists Y \exists Z(R(X,Y)\wedge S(Y,Z) \wedge T(Z,X)) \label{eq:cq:ex}
\end{align}
$Q$ returns all nodes $x$ that belong to an $RST$ triangle.

A {\em Boolean conjunctive query} is a conjunctive query with no head
variables.
At the other extreme, a {\em full conjunctive query} is a query with
no existential variables. For example, the query:
\begin{align}
  Q(X,Y,Z) = & R(X,Y) \wedge S(Y,Z) \wedge T(Z,X) \label{eq:cq:triangle}
\end{align}
is a full CQ computing all triangles formed by the relations
$R, S, T$.  Full conjunctive queries are of special importance because
they often occur as intermediate expressions during query evaluation.
Unless otherwise stated, we will assume in this paper that the query
is a full conjunctive query {\em without self-joins}, meaning that the
relation names of the atoms $R_1, R_2, \ldots$ are distinct.  Such a
query is also called a natural join of the relations
$R_1, \ldots, R_m$.

Fix a relation $R(\bm X)$, with $n$ attributes.  A {\bf functional
  dependency}, or FD, is an expression $\bm U \rightarrow \bm V$,
where $\bm U, \bm V \subseteq \bm X$.  An instance $R^D$ {\em
  satisfies} the FD, and we write
$R^D \models \bm U \rightarrow \bm V$, if for any two tuples
$\bm x_1, \bm x_2 \in R^D$, $\bm x_1.\bm U = \bm x_2.\bm U$ implies
$\bm x_1.\bm V = \bm x_2.\bm V$.  A set of functional dependencies
$\Sigma$ {\em implies} a functional dependency $\bm U \fd \bm V$, in
notation $\Sigma \models \bm U \fd \bm V$, if, for every instance
$R^D$, if $R^D \models \Sigma$ then
$R^D \models \bm U \rightarrow \bm V$.  Armstrong's
axioms~\cite{DBLP:journals/tods/ArmstrongD80} form a complete
axiomatization of the implication problem for FDs.  The {\em closure}
of $\bm U \subseteq \bm X$, denoted $\bm U^+$, is the set of all
attributes $X$ s.t.  $\Sigma \models \bm U \fd X$.  The closure can be
computed in polynomial time in the size of $\bm U$ and $\Sigma$.  A
set $\bm U$ is {\em closed} if $\bm U^+ = \bm U$.  A {\em super-key}
for $R(\bm X)$ is a set $\bm U$ with the property that
$\bm U^+= \bm X$, and a {\em key} is a minimal set of attributes that
is a superkey.

A finite {\bf lattice} is a partially ordered set $(L,\preceq)$ where
every two elements $x,y \in L$ have a least upper bound $x \vee y$,
and a greatest lower bound $x \wedge y$.  In particular the lattice
has a smallest and a largest element, usually denoted by
$\hat 0, \hat 1$.  Consider now a set of variables $\bm X$, and a set
of functional dependencies, $\Sigma$, over $\bm X$.  We denote by
$(L_{\Sigma}, \subseteq)$ the lattice consisting of the closed sets,
$L_\Sigma = \setof{\bm U}{\bm U^+=\bm U}$.  One can verify that the
operations in this lattice are
$\bm U \wedge \bm V \defeq \bm U \cap \bm V$ and
$\bm U \vee \bm V \defeq (\bm U \cup \bm V)^+$.

The {\bf cartesian product} of two relations $R(\bm X), S(\bm Y)$ with
disjoint sets of attributes is the set
$R \times S \defeq \setof{(\bm x,\bm y)}{\bm x \in R, \bm y \in S}$
with attributes $\bm X \cup \bm Y$; its size is
$|R \times S| = |R|\cdot |S|$.  Fix a set of attributes $\bm X$, and
two $\bm X$-tuples $\bm x = (x_1, \ldots, x_n)$ and
$\bm x' = (x_1',\ldots, x_n')$.  Their {\bf domain product} is the
$\bm X$-tuple
$\bm x \otimes \bm x' \defeq ((x_1,x'_1), \ldots, (x_n,x'_n))$; thus,
the value of each attribute is a pair.
\begin{defn} \label{def:domain:product} The {\em domain product} of
  two relation instances $R$ and $S$, with the same set of attributes
  $\bm X$, is
  $R \otimes S \defeq \setof{\bm x \otimes \bm x'}{\bm x\in R, \bm
    x'\in S}$.
\end{defn}
We have $|R\otimes S| = |R| \cdot |S|$.  If
$\bm D_i = (R_1^{D_i}, \ldots, R_m^{D_i})$, $i=1,2$, are two database
instances over the same schema, then we define their domain product
$\bm D_1 \otimes \bm D_2$ as
$(R_1^{D_1}\otimes R_1^{D_2}, \ldots, R_m^{D_1} \otimes R_m^{D_2})$.
One can check that
$Q(\bm D_1 \otimes \bm D_2) = Q(\bm D_1)\otimes Q(\bm D_2)$ for any
conjunctive query $Q$.  The domain product should not be confused with
the cartesian product.  It was first introduced by
Fagin~\cite{DBLP:journals/jacm/Fagin82} (under the name {\em direct
  product}) to prove the existence of an Armstrong relation for
constraints defined by Horn clauses, and later used by Geiger and
Pearl~\cite{GeigerPearl1993} to prove that Conditional Independence
constraints on probability distributions also admit an Armstrong
relation.  The same construction appears under the name ``fibered
product'' in~\cite{DBLP:journals/ejc/KoppartyR11}.

\section{Warmup: the AGM Bound}

\label{sec:agm}

Consider a full conjunctive query:
\begin{align}
  Q(\bm X) = \bigwedge_{j=1,m} R_j(\bm Y_j) \label{eq:cq:full}
\end{align}
where $\bm X = \set{X_1, \ldots, X_n}$.  Assume we have a database
$\bm D$, and we know the cardinality of each relation $R_j^D$.  How
large could the query output be?  The answer is given by an elegant
result, initially formulated by Grohe and
Marx~\cite{DBLP:conf/soda/GroheM06} and later refined by Atserias,
Grohe, and Marx~\cite{DBLP:journals/siamcomp/AtseriasGM13}, and is
called today the AGM bound of the query $Q$.  To state this bound, we
first need to review the connection between conjunctive queries and
hypergraphs.

We associate $Q$ in~\eqref{eq:cq:full} with the hypergraph
$\calH = (\bm X, E)$, where $E=\set{\bm Y_1, \ldots, \bm Y_m}$.  In
other words, the nodes of the hypergraph are the variables, and its
hyperedges are the atoms of the query.  A {\em fractional edge cover}
of the hypergraph $\calH$ is a tuple of non-negative weights
$\bm w = (w_j)_{j=1,m}$, such that every variable $X_i$ is {\em
  covered}, meaning:
\begin{align}
  \forall i=1,n:&& \sum_{j: X_i \in \bm Y_j} w_j \geq  &1 \label{eq:edge:cover}
\end{align}
A fractional edge cover of the query $Q$ is a fractional edge cover of
its associated hypergraph.  The AGM bound is the following:

\begin{thm}[AGM Bound] \label{thm:agm} For any fractional edge cover
  $\bm w$ of the query~\eqref{eq:cq:full}, and every instance $\bm D$:
  \begin{align}
    |Q(\bm D)| \leq & \prod_{j=1,m} |R_j^D|^{w_j} \label{eq:agm}
  \end{align}
\end{thm}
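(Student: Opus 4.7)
The plan is to derive the bound as a direct corollary of \emph{Friedgut's inequality}, which the introduction has singled out as the elementary tool of this section. I will assume that Friedgut's inequality has been stated and proved just before the theorem: for any hypergraph $\calH = (\bm X, E)$ with edges $\set{\bm Y_1, \ldots, \bm Y_m}$, any fractional edge cover $\bm w = (w_j)_{j=1,m}$, and any non-negative functions $f_j : \dom^{\bm Y_j} \to \R_+$,
\begin{align*}
  \sum_{\bm x \in \dom^{\bm X}} \prod_{j=1}^{m} f_j(\bm x.\bm Y_j) \;\leq\; \prod_{j=1}^{m} \left( \sum_{\bm y_j \in \dom^{\bm Y_j}} f_j(\bm y_j)^{1/w_j} \right)^{w_j}.
\end{align*}
The task then reduces to a bookkeeping specialization of the $f_j$'s that makes the left-hand side count $|Q(\bm D)|$ and the right-hand side collapse to $\prod_j |R_j^D|^{w_j}$.

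The first step is to set $f_j(\bm y_j) \defeq \mathbf{1}[\bm y_j \in R_j^D]$. Because $Q$ is a full CQ \emph{without} self-joins, the relation names of the atoms are distinct, so each $f_j$ is unambiguously defined, and a tuple $\bm x \in \dom^{\bm X}$ lies in $Q(\bm D)$ exactly when $\bm x.\bm Y_j \in R_j^D$ for every $j$. Consequently $\prod_j f_j(\bm x.\bm Y_j) = \mathbf{1}[\bm x \in Q(\bm D)]$, and the left-hand side of Friedgut's inequality is exactly $|Q(\bm D)|$. For the right-hand side, the key observation is that an indicator is idempotent under any positive exponent: assuming $w_j > 0$ one has $f_j(\bm y_j)^{1/w_j} = f_j(\bm y_j)$, so $\sum_{\bm y_j} f_j(\bm y_j)^{1/w_j} = |R_j^D|$, and raising to the $w_j$-th power and multiplying across $j$ gives $\prod_j |R_j^D|^{w_j}$. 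The edge cases $w_j = 0$ contribute a trivial factor of $1$ on the right and can simply be dropped from the product. Chaining the two identifications through Friedgut's inequality yields precisely~\eqref{eq:agm}.

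The whole argument after Friedgut is essentially syntactic, so the only genuine obstacle is Friedgut's inequality itself. I would expect the preceding subsection to prove it by induction on $|\bm X|$: peel off one variable $X_i$, and apply Hölder's inequality to the inner sum over $X_i$ with exponents $1/w_j$ for the atoms $R_j$ containing $X_i$, using the edge-cover constraint $\sum_{j: X_i \in \bm Y_j} w_j \geq 1$ to verify that the Hölder exponents are admissible; then collect the remaining factors and appeal to the inductive hypothesis on the hypergraph with $X_i$ removed. This route is fully elementary---no entropies, no polymatroids---which aligns with the paper's promise of an alternative proof reserving the information-theoretic machinery for the harder bounds of the following sections.
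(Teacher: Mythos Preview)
Your proposal is correct and matches the paper's approach essentially verbatim: the paper derives the AGM bound by specializing Friedgut's inequality to the indicator tensors $r_j[\bm y_j] = \mathbf{1}[\bm y_j \in R_j^D]$, and it proves Friedgut's inequality itself by induction on the number of variables, peeling off $X_n$ and applying H\"older's inequality using the edge-cover condition $\sum_{j: X_n \in \bm Y_j} w_j \geq 1$.
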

To reduce clutter, we will often drop $\bm D$ from both $Q(\bm D)$ and
$R_j^D$, and write the bound simply as $|Q| \leq \prod_j
|R_j|^{w_j}$. 

Let $\bm B = (B_j)_{j=1,m}$ be a non-negative vector, representing the
cardinalities of the relations in the database.  We define:
\begin{align}
  AGM(Q,\bm B) \defeq & \min_{\bm w} \prod_{j=1,m} B_j^{w_j} \label{eq:agm:formula}
\end{align}
where $\bm w$ ranges over all fractional edge covers of the query's
hypergraph.  Then Theorem~\ref{thm:agm} can be restated as follows:
for every instance $\bm D$, if $|R_j^D| \leq B_j$ for $j=1,m$, then
$|Q(\bm D)| \leq AGM(Q,\bm B)$.  When $\bm B$ is clear from the
context, then we write the bound simply as $AGM(Q)$.

Before we prove the bound, we illustrate it with a classic example.


\begin{ex} \label{ex:triangle} Consider the triangle
  query~\eqref{eq:cq:triangle}, which we repeat here:
  $Q(X,Y,Z) = R(X,Y) \wedge S(Y,Z) \wedge T(Z,X)$.  Its associated
  hypergraph is a graph with three nodes $X,Y,Z$ and three edges
  forming a triangle.  A fractional edge cover is any non-negative
  tuple $(w_R,w_S,w_T)$ satisfying:
  \begin{align*}
&
                   \begin{array}{lllll}
                         \mbox{Cover $X$: } & w_R+&&w_T&\geq 1\\
                         \mbox{Cover $Y$: } & w_R+&w_S&&\geq 1\\
                         \mbox{Cover $Z$: } &&w_S+&w_T&\geq 1
                   \end{array}
  \end{align*}
  The inequality $|Q| \leq |R|^{w_R}\cdot |S|^{w_S} \cdot |T|^{w_T}$
  holds for every fractional edge cover.  Consider the following four
  fractional edge covers: $(0,1,1),(1,0,1),(1,1,0),(1/2,1/2,1/2)$:
  these are the extreme vertices of the edge-covering polytope.  It
  follows that the AGM bound in~\eqref{eq:agm:formula} is achieved at
  one of the four extreme vertices:
  \begin{align*}
    A&GM(Q) = \\
    & \min\left(|S|\cdot |T|,|R|\cdot |T|,|R|\cdot |S|,|R|^{1/2}\cdot|S|^{1/2}\cdot|T|^{1/2}\right)
  \end{align*}
  When $|R|=|S|=|T|=N$ then $AGM(Q) = N^{3/2}$.
\end{ex}

In the rest of this section we will prove the AGM
bound~\eqref{eq:agm}, then show that the bound is tight.

{\bf Friedgut's Inequalities} While the original proof of the AGM
bound used information inequalities, we postpone the discussion of
information inequalities until Sec.~\ref{sec:bound}, where we consider
more general statistics.  Instead, we give here a simple, elementary
proof, based on an elegant family of inequalities introduce by
Friedgut~\cite{DBLP:journals/tamm/Friedgut04}.

Fix a hypergraph $\calH = (\bm X, E)$.  Let $N > 0$ be a natural
number, and for each hyperedge $\bm Y_j \in E$, let
$r_j \in \R_+^{N^{|\bm Y_j|}}$ be a non-negative, multi-dimensional
vector with $|\bm Y_j|$ dimensions; we will refer to $r_j$ as a {\em
  tensor}.  In what follows, we denote by $\bm i$ a tuple
$\bm i = (i_1, \ldots, i_n) \in [N]^{\bm X}$, and by $\bm i_j$ its
projection on $\bm Y_j$.

\begin{thm}[Friedgut's
  Inequality]~\cite{DBLP:journals/tamm/Friedgut04} \label{th:friedgut}
  For every fractional edge cover $\bm w$ of the hypergraph $\calH$,
  the following holds:
  \begin{align}
    \sum_{\bm i} \prod_{j=1,m} r_j[\bm i_j] \leq & \prod_{j=1,m} \left(\sum_{\bm i_j}r_j[\bm i_j]^{\frac{1}{w_j}}\right)^{w_j} \label{eq:friedgut}
  \end{align}
\end{thm}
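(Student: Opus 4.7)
The plan is to prove~\eqref{eq:friedgut} by induction on $n=|\bm X|$, peeling off one variable per step with a weighted H\"older inequality. After substituting $f_j(\bm i_j)\defeq r_j[\bm i_j]^{1/w_j}$, the inequality takes the symmetric form
\begin{align*}
  \sum_{\bm i}\prod_j f_j(\bm i_j)^{w_j}\;\leq\;\prod_j\Bigl(\sum_{\bm i_j} f_j(\bm i_j)\Bigr)^{w_j},
\end{align*}
which is what I will prove. The single-variable ingredient I use at each step is the generalized H\"older inequality: for non-negative $f_1,\dots,f_k$ on a finite set and weights $w_1,\dots,w_k\geq 0$ with $\sum_j w_j\geq 1$, $\sum_i\prod_j f_j(i)^{w_j}\leq \prod_j(\sum_i f_j(i))^{w_j}$. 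The case $\sum w_j=1$ is textbook H\"older with conjugate exponents $p_j=1/w_j$; the case $\sum w_j>1$ follows from it because the right-hand side is monotone non-decreasing in each $w_j$, a short computation reducing this monotonicity to the non-negativity of an entropy-like quantity.

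For the inductive step, pick a variable $X_n$, let $J=\{j:X_n\in\bm Y_j\}$ and $J'=[m]\setminus J$, write $\bm i=(\bm i',i_n)$, and factor
\begin{align*}
  \sum_{\bm i}\prod_j f_j(\bm i_j)^{w_j}\;=\;\sum_{\bm i'}\Bigl(\prod_{j\in J'}f_j(\bm i_j)^{w_j}\Bigr)\sum_{i_n}\prod_{j\in J} f_j(\bm i_j)^{w_j},
\end{align*}
where for $j\in J'$ the tuple $\bm i_j$ depends only on $\bm i'$. The edge-cover condition at $X_n$ forces $\sum_{j\in J}w_j\geq 1$, so applying the single-variable fact to the inner sum bounds it by $\prod_{j\in J}\bigl(\sum_{i_n} f_j(\bm i_j)\bigr)^{w_j}$. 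Setting $\tilde f_j\defeq \sum_{i_n} f_j(\bm i_j)$ for $j\in J$, a function of $\bm Z_j\defeq\bm Y_j\setminus\{X_n\}$, what remains is the same symmetric target on the reduced hypergraph $\calH'=(\bm X\setminus\{X_n\},\{\bm Y_j:j\in J'\}\cup\{\bm Z_j:j\in J\})$. The same weights $\bm w$ still cover $\calH'$, because for any $X_i\neq X_n$ and $j\in J$ one has $X_i\in\bm Z_j$ iff $X_i\in\bm Y_j$, leaving the cover sum at $X_i$ unchanged. The inductive hypothesis applied to $\calH'$, combined with $\sum\tilde f_j=\sum f_j$, telescopes everything back into the desired right-hand side.

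The only non-mechanical step is the single-variable generalized H\"older when $\sum_{j\in J}w_j>1$ strictly, since the textbook H\"older inequality assumes equality and the slack has to be absorbed without losing constants; the monotonicity argument above (or, equivalently, a direct reduction to $\sum w_j=1$ by scaling down some $w_j$'s) is what does the work. Once the one-variable lemma is in place, the substitution $f_j=r_j^{1/w_j}$ is engineered so that the surviving tensors $\tilde f_j$ play exactly the role of the original $f_j$ at the next induction level, and the cover condition propagates automatically to the reduced hypergraph, so the induction closes routinely.
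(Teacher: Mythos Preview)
Your proposal is correct and follows essentially the same route as the paper: the same substitution $f_j = r_j^{1/w_j}$ to the symmetric form, induction on $n$, and the generalized H\"older inequality (with $\sum_{j\in J} w_j \geq 1$) to peel off a single variable while the fractional edge cover persists on the reduced hypergraph. The only cosmetic difference is the order of operations---you apply H\"older to the inner sum over $i_n$ first and then induction on the remaining variables, whereas the paper applies induction to the inner $(n{-}1)$-variable sum for each fixed $i_n$ and then H\"older on the outer $i_n$ sum---but both orderings yield the same bound.
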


\begin{figure}
  \centering
  {\footnotesize
  \begin{align*}
    & \mbox{Cauchy-Schwartz:} & \sum_i a[i]\cdot b[i] \leq & \left(\sum_i a[i]^2\right)^{1/2}\cdot\left(\sum_i b[i]^2\right)^{1/2}&&\includegraphics[width=0.1\linewidth]{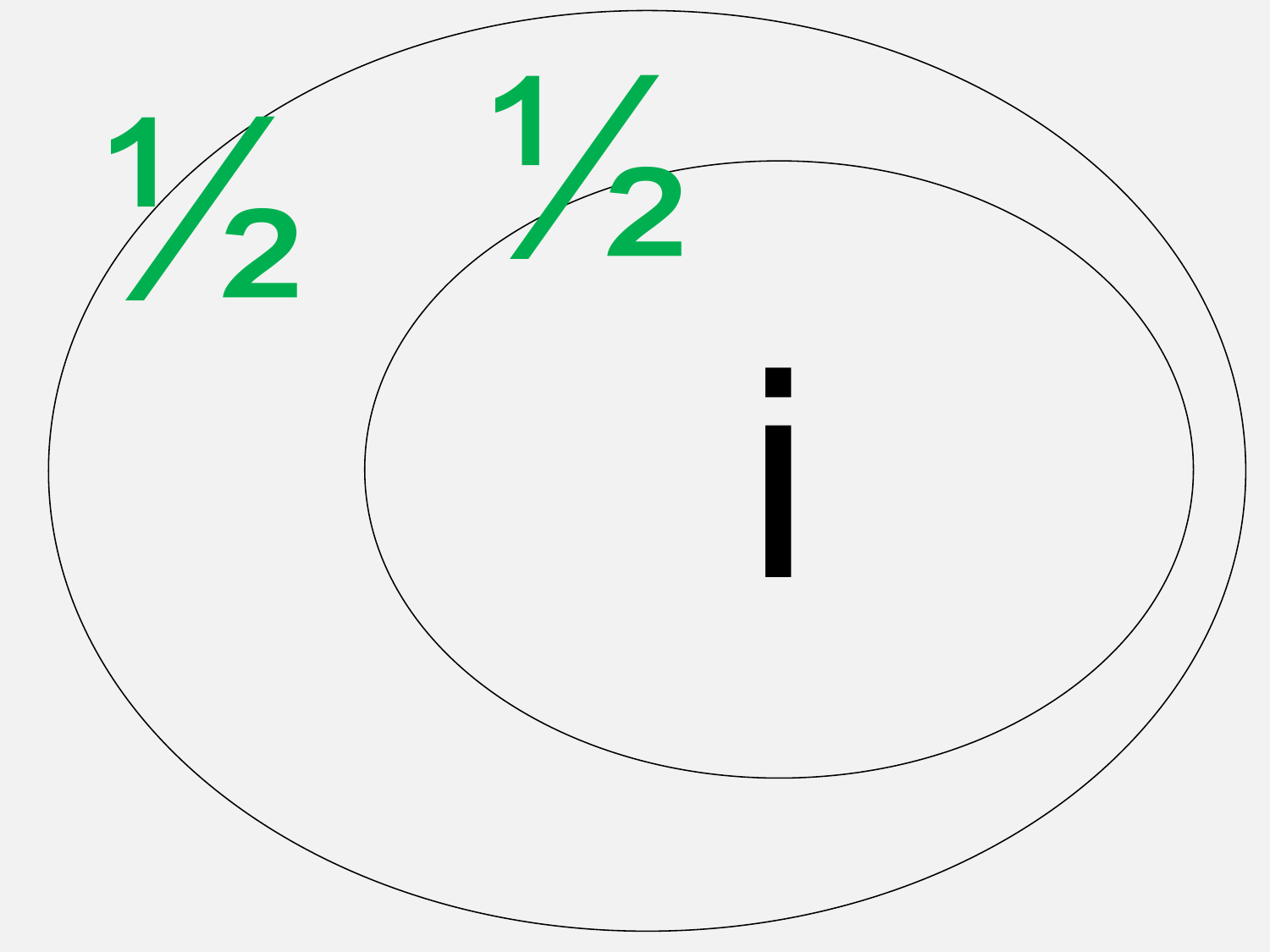}\\
    & \mbox{H\"older:} & \sum_i \prod_j a_j[i] \leq & \prod_j\left(\sum_i a_j[i]^{\frac{1}{w_j}}\right)^{w_j}\mbox{ when $\sum_j w_j \geq 1$}&&\includegraphics[width=0.1\linewidth]{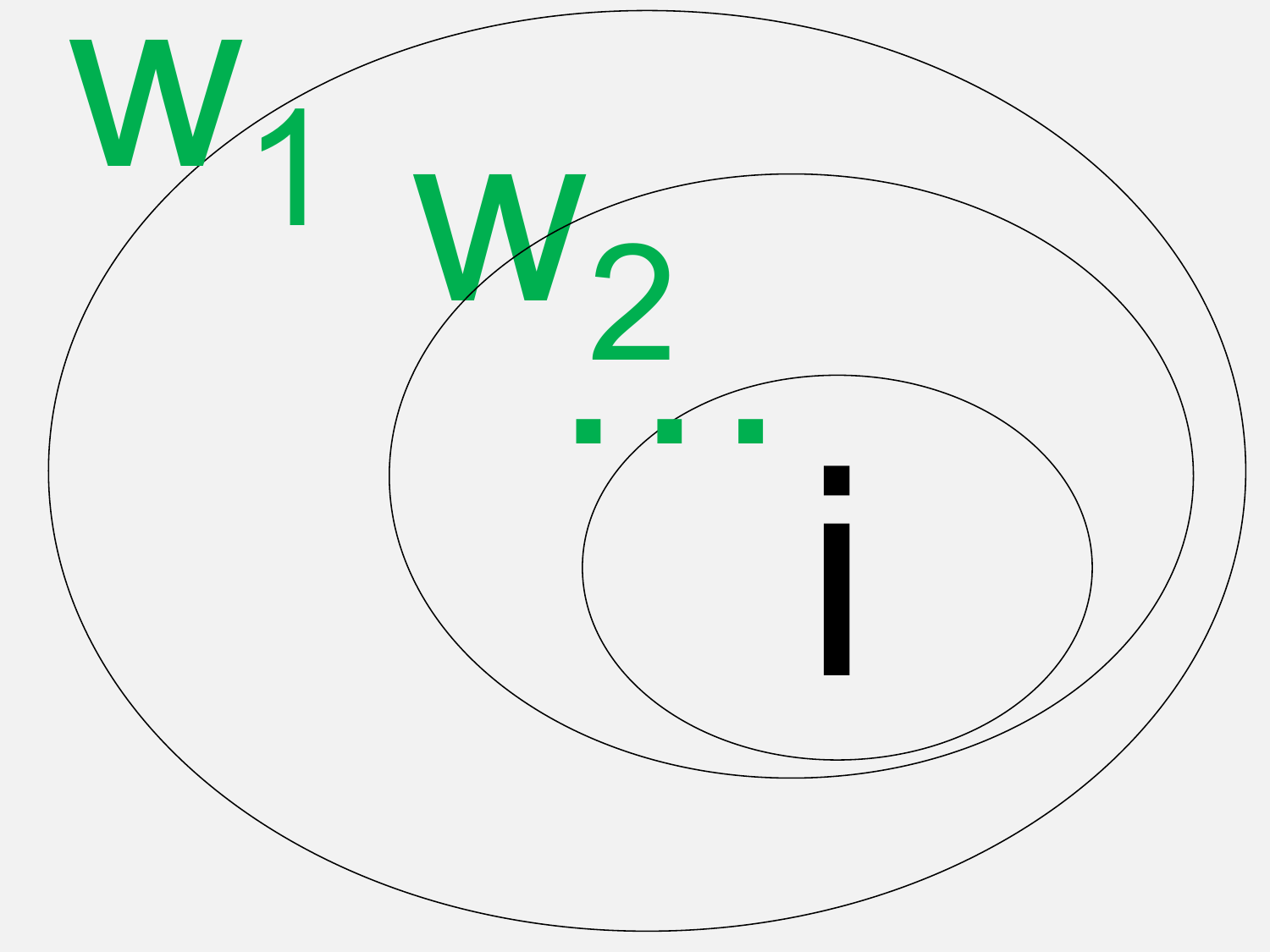}\\
    & \mbox{Friedgut:} & \sum_{i,j,k} a[i,j]\cdot b[j,k] \cdot c[k,i] \leq & \left(\sum_{i,j}a[i,j]^2\right)^{1/2}\cdot\left(\sum_{j,k}b[j,k]^2\right)^{1/2}\cdot\left(\sum_{k,i}c[k,i]^2\right)^{1/2}&&\includegraphics[width=0.1\linewidth]{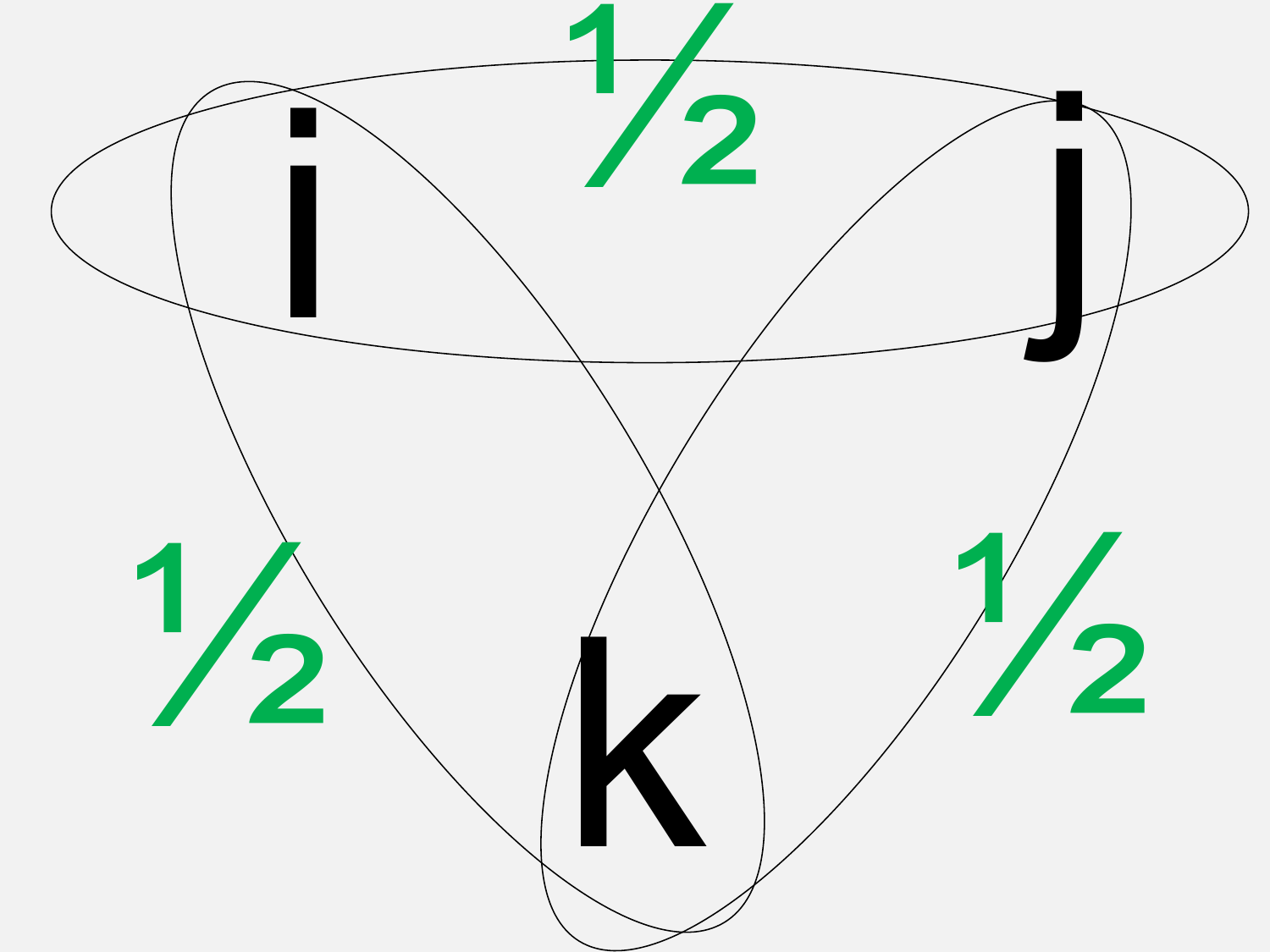}
  \end{align*}
  }
  \caption{Examples of Friedgut's inequalities~\eqref{eq:friedgut}.
    In each case we show the associated hypergraph on the right.}
  \label{fig:friedgut}
\end{figure}

Fig.~\ref{fig:friedgut} illustrates several instances
of~\eqref{eq:friedgut}.  We invite the reader to check that
Loomis–Whitney's inequality~\cite{zbMATH03054426} is also an instance
such an inequality.  Using Theorem~\ref{th:friedgut} we can prove the
AGM bound as follows.  Given a relational instance
$\bm D = (R_1^D, \ldots, R_m^D)$ define the following tensors:
\begin{align*}
r_j[x_1, \ldots, x_{a_j}] \defeq
&
  \begin{cases}
    1 & \mbox{if $(x_1, \ldots, x_{a_j}) \in R_j^D$} \\
    0 & \mbox{otherwise}
  \end{cases}
\end{align*}
Then the LHS of~\eqref{eq:friedgut} is $|Q(\bm D)|$ and the RHS is
$\prod_j |R_j|^{w_j}$.

\begin{proof}
  (of Theorem~\ref{th:friedgut}) While the original proof also used
  information inequalities, we give here a direct proof, by induction
  on the number $n$ of vertices of the hypergraph $\calH$.  (This
  proof generalizes Loomis–Whitney's proof in~\cite{zbMATH03054426}.)

  We replace each tensor expression $r_j[\bm i_j]$ with
  $(r_j[\bm i_j])^{w_j}$, then in order to prove~\eqref{eq:friedgut}
  it suffices to prove:
  \begin{align}
    \sum_{\bm i} \prod_j r_j[\bm i_j]^{w_j} \leq & \prod_j\left(\sum_{\bm i_j} r_j[\bm i_j]\right)^{w_j} \label{eq:friedgut:1}
  \end{align}
  We notice that the index variables $\bm i = (i_1, \ldots, i_n)$ used
  in the summation correspond one-to-one to the nodes of the
  hypergraph $\bm X=\set{X_1, \ldots, X_n}$, and the subset
  $\bm i_j$ contains the index variables corresponding to nodes in
  $\bm Y_j$.  We now prove~\eqref{eq:friedgut:1} by induction on
  $n$.
  
  When $n=1$ then this is H\"older's inequality (see
  Fig.~\ref{fig:friedgut}), whose proof can be found in textbooks.
  Assume $n > 1$ and consider the hypergraph $\calH'$ obtained by
  removing the last variable $X_n$: its nodes are
  $\set{X_1, \ldots, X_{n-1}}$ and its hyperedges are
  $\setof{\bm Y_j - \set{X_n}}{j=1,m}$.  The weights
  $w_1, \ldots, w_m$ continue to be a fractional edge cover for
  $\calH'$.  Group the LHS of Eq.~\eqref{eq:friedgut:1} by factoring
  out the sum over the variable $i_n$, and apply induction hypothesis
  to the summation over the other variables $i_1, \ldots, i_{n-1}$,
  which form the hypergraph $\calH'$:
  \begin{align*}
\sum_{i_n} & \left(\sum_{i_1,\ldots,i_{n-1}}\prod_j r_j[\bm i_j]^{w_j}\right)
\leq \sum_{i_n} \prod_j \left(\sum_{\bm i_j-\set{i_n}}r_j[\bm i_j]\right)^{w_j}
  \end{align*}
  We factor out the products that do not depend on the variable $i_n$,
  then use the fact that $\sum_{j: X_n \in \bm Y_j} w_j \geq 1$
  because $X_n$ is covered, and apply H\"older's inequality
  (Fig.~\ref{fig:friedgut}) with $a_j[i_n] \defeq r_j[\bm i_j]$.  The
  RHS of the expression above becomes:
  \begin{align*}
\prod_{j:X_n \not\in \bm Y_j}&\left(\sum_{\bm i_j}r_j[\bm i_j]\right)^{w_j}
\cdot \sum_{i_n} \prod_{j:X_n \in \bm Y_j}\left(\sum_{\bm i_j-\set{i_n}}r_j[\bm i_j]\right)^{w_j}\\
\leq & \prod_{j:X_n \not\in \bm Y_j}\left(\sum_{\bm i_j}r_j[\bm i_j]\right)^{w_j}
\cdot \prod_{j:X_n \in \bm Y_j}\left(\sum_{i_n}\sum_{\bm i_j-\set{i_n}}r_j[\bm i_j]\right)^{w_j}
  \end{align*}
  This is the RHS of~\eqref{eq:friedgut:1}, which completes the proof.
\end{proof}

{\bf The lower bound} How tight is the AGM bound?  One key insight
in~\cite{DBLP:journals/siamcomp/AtseriasGM13} is that, while the upper
bound is described by a linear program, a lower bound can be described
using the dual linear program: tightness follows from the strong
duality theorem for linear programs.  They proved:

\begin{thm}\label{th:agm:lower:bound} For any query $Q$ with $n$
  variables, and vector $\bm B$ there exists a database $\bm D$
  s.t. $|Q(\bm D)|\geq \frac{1}{2^n}AGM(Q,\bm B)$. We call such a
  database $\bm D$ a {\em worst-case instance}.
\end{thm}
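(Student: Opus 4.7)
The plan is to exploit LP duality, as hinted by the paper. After taking logarithms, the AGM bound~\eqref{eq:agm:formula} becomes the linear program
\begin{align*}
\log AGM(Q,\bm B) \;=\; \min_{\bm w}\; \sum_{j=1}^m w_j \log B_j \quad \text{s.t.} \quad \sum_{j:\,X_i \in \bm Y_j} w_j \geq 1 \;\; (\forall i),\;\; w_j \geq 0.
\end{align*}
Its LP dual assigns a weight $v_i \geq 0$ to each variable $X_i$ and reads
\begin{align*}
\max_{\bm v}\; \sum_{i=1}^n v_i \quad \text{s.t.} \quad \sum_{i:\,X_i \in \bm Y_j} v_i \;\leq\; \log B_j \;\; (\forall j),\;\; v_i \geq 0.
\end{align*}
By strong LP duality, the optimal dual value equals $\log AGM(Q,\bm B)$. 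The first step of the proof is just to write down this dual and fix an optimal dual solution $\bm v^*$.

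\textbf{Constructing the worst-case instance.} The second step is to convert the dual solution into a \emph{product instance}. For each variable $X_i$ pick a fresh domain of size $N_i \defeq \lfloor 2^{v_i^*} \rfloor$ (with $N_i = 1$ if $v_i^* = 0$), and define $R_j^D$ to be the full cartesian product of these domains over its attributes:
\begin{align*}
R_j^D \defeq \prod_{X_i \in \bm Y_j} [N_i].
\end{align*}
Then $|R_j^D| = \prod_{X_i \in \bm Y_j} N_i \leq \prod_{X_i \in \bm Y_j} 2^{v_i^*} \leq B_j$ by the dual feasibility constraint, so $\bm B$ is respected. Since every $R_j^D$ is a full product over its attributes, the join is the full product over all attributes, giving $|Q(\bm D)| = \prod_{i=1}^n N_i$.

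\textbf{Rounding loss.} The final step is to account for the floor. A routine estimate gives $\lfloor x \rfloor \geq x/2$ for every $x \geq 1$, and since $2^{v_i^*} \geq 1$ we obtain
\begin{align*}
|Q(\bm D)| \;=\; \prod_{i=1}^n N_i \;\geq\; \prod_{i=1}^n \frac{2^{v_i^*}}{2} \;=\; \frac{1}{2^n}\, 2^{\sum_i v_i^*} \;=\; \frac{1}{2^n}\, AGM(Q,\bm B),
\end{align*}
using strong duality in the last equality. This completes the proof.

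\textbf{Where the work is.} The core conceptual content is entirely packaged in the duality step: once one recognizes the AGM optimization as an LP after taking logarithms, the worst-case instance essentially ``reads itself off'' the dual. The only technical nuisance is the integrality of domain sizes; that is the origin of the $1/2^n$ slack, and one should not expect to remove it without additional assumptions (for instance, if all $B_j$ are powers of $2$ at an optimum, the factor disappears). No deep combinatorics is needed beyond the observation that a product of full relations joins to the full product.
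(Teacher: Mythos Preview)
Your proof is correct and essentially identical to the paper's own argument: both take logarithms to reduce the AGM bound to a covering LP, invoke strong duality, build a product instance with domain sizes $\lfloor 2^{v_i^*}\rfloor$, verify feasibility via the dual constraints, and lose the $1/2^n$ factor to rounding.
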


\begin{proof}
  The logarithm of the AGM bound~\eqref{eq:agm:formula} is the optimal
  value of the following {\em primal linear program}:
  \begin{align*}
    \mbox{minimize }&& \sum_j w_j \log B_j\\
    \mbox{where }&& \forall i:\ \sum_{j: X_i \in \bm Y_j}w_j \geq 1\\
                    && \forall j: w_j \geq 0
  \end{align*}
  The {\em dual linear program} is:
  \begin{align*}
    \mbox{maximize }&&\sum_i v_i\\
    \mbox{where }&& \forall j:\ \sum_{i: X_i \in \bm Y_j}v_i \leq \log B_j\\
                    && \forall i: v_i \geq 0
  \end{align*}
  For any two feasible solutions $\bm w, \bm v$ of the primal and
  dual, {\em weak duality} holds:
  $\sum_j w_j \log B_j \geq \sum_i v_i$.  If $\bm w^*, \bm v^*$ are
  the optimal solutions, then the {\em strong duality} theorem states
  that these two expressions are equal, therefore:
  \begin{align}
    AGM(Q,\bm B) = & 2^{\sum_j w_j^*\log B_j} = 2^{\sum_i v_i^*}=\prod_i 2^{v_i^*} \label{eq:agm:primal}
  \end{align}
  If $\bm v$ is any dual solution, we construct the following database
  instance $\bm D$: for each variable $X_i$, define the domain
  $V_i \defeq [\floor{2^{v_i}}] = \set{1,2,\ldots,\floor{2^{v_i}}}$,
  and set $R_j^D \defeq \bigtimes_{i: X_i \in \bm Y_j} V_i$, for
  $j=1,m$.  We call $\bm D$ a {\em product database instance}, because
  each relation is a cartesian product.  $\bm D$ satisfies the
  cardinality constraints $|R_j^D| \leq B_j$ because
  \begin{align*}
    |R_j^D| =& \prod_{i: X_i \in \bm Y_j} \floor{2^{v_i}} \leq  2^{\sum_{i: X_i \in \bm Y_j}v_i} \leq 2^{\log B_j}= B_j
  \end{align*}
  Similarly, the output to the query is the product
  $Q(\bm D) = \bigtimes_i V_i$, and its size is
  $\prod \floor{2^{v_i}}$.  At optimality, when $\bm v = \bm v^*$,
  \begin{align}
    |Q(\bm D)| = \prod_i \floor{2^{v_i^*}} \label{eq:agm:dual}
  \end{align}
  Theorem~\ref{th:agm:lower:bound} follows from~\eqref{eq:agm:primal}
  and~\eqref{eq:agm:dual}, and observing that
  $\floor{2^{v_i^*}} \geq \frac{1}{2} 2^{v_i^*}$.  
\end{proof}

Thus, one could say that the AGM bound is tight up to a ``rounding
error''.  The original
paper~\cite{DBLP:journals/siamcomp/AtseriasGM13} provides an extensive
discussion on tightness and proves two facts.  First, they construct
arbitrarily large databases $\bm D$ where the AGM bound is tight
exactly.
Second, they
describe an example where the ratio between the lower and upper bound
can be arbitrarily close to $1/2^n$, as $n$ grow arbitrarily large;
despite this example, the AGM bound is considered to be tight for
practical purposes.


{\bf Discussion} The AGM bound is elegant in that it solves completely
the problem it set out to solve: find the tight upper bound when the
cardinalities of all relations are known, and nothing else is known.
However, the bound is limited, in that it cannot take advantage of
other statistics or constraints on the input data, which are often
available in practice.  For example, consider the join of two
relations, $Q(X,Y,Z) = R(X,Y) \wedge S(Y,Z)$, and assume that both
$|R|, |S| \leq B$.  The AGM bound is $|Q| \leq B^2$ (because the only
fractional edge cover is $(1,1)$), and the reader can check that this
is tight, i.e. there exists relations where $|R|=|S|=B$ and
$|Q| = B^2$.  But, in practice, joins are often key/foreign-key joins,
for example, $S.Y$ may be a key in $S$, and in that case $|Q| \leq B$,
because every tuple in $R$ joins with at most one tuple in $S$.
%
In order to account for additional information about the data, like
keys or constraints on degrees, we need to use a more powerful tool
than Friedgut's inequalities: information inequalities.

\section{Max-Degree Query Bounds}

\label{sec:bound}

We describe now the general framework for computing an upper bound for
the query output size, using information inequalities.  We will use
the cardinalities of the input relations (like in the AGM bound), keys
or, more generally, functional dependencies for individual relations,
and bounds on degrees, also called maximum frequencies, which
generalize keys and functional dependencies.  This section is based
largely
on~\cite{DBLP:journals/jacm/GottlobLVV12,DBLP:conf/pods/KhamisNS16,DBLP:conf/pods/Khamis0S17}.
We start with a short review of information inequalities.

\subsection{Background on Information Inequalities}

\label{sec:background:ii}

Consider a finite probability distribution $(D,p)$, where
$p : D \rightarrow [0,1]$, $\sum_{x \in D} p(x) = 1$.  We denote by
$X$ the random variable with outcomes in $D$, and define its {\em
  entropy} as:
\begin{align}
  H(X) \defeq & - \sum_{x \in D} p(x) \log p(x) \label{eq:h}
\end{align}
If $N \defeq |D|$, then $0 \leq H(X) \leq \log N$, the equality
$H(X)=0$ holds iff $X$ is deterministic (i.e.  $\exists x\in D$,
$p(x)=1$), and the equality $H(X) = \log N$ holds iff $X$ is uniformly
distributed (i.e. $p(x) = 1/N$ for all $x \in N$).


Consider now a finite probability distribution $(R, p)$, where
$R \subseteq \dom^{\bm X}$ is a non-empty, finite relation with $n$
attributes $\bm X = \set{X_1, \ldots, X_n}$.  We will always assume
w.l.o.g. that $R$ is the support of $p$, otherwise we just remove from
$R$ the tuples $\bm x$ where $p(\bm x)=0$.  For each
$\alpha \subseteq [n]$, define $X_\alpha \defeq (X_i)_{i \in \alpha}$
the joint random variable obtained as follows: draw randomly a tuple
$\bm x \in R$ with probability $p(\bm x)$, then return $\bm x_\alpha$.
We associate the probability space $(R, p)$ with a vector
$\bm h \in \Rp^{2^{[n]}}$, by defining
$\bm h_\alpha \defeq H(\bm X_\alpha)$, and call $\bm h$ an {\em
  entropic vector}.  For any vector $\bm h$ (entropic or not) we will
write $h(\bm X_\alpha)$ for $\bm h_\alpha$.  In other words, we will
blur the distinction between a vector in $\R_+^{2^{[n]}}$, a vector in
$\R_+^{2^{\bm X}}$, and a function $2^{\bm X} \rightarrow \R_+$.

In analyzing properties of queries, we often examine the entropic
vector derived from a uniform distribution.

\begin{defn} \label{def:uniform} The {\em uniform} probability space
  associated to a non-empty, finite relation
  $R \subseteq \dom^{\bm X}$ is $(R,p)$ where $p(\bm x) = 1/|R|$ for
  every tuple $\bm x \in R$.  We will call its entropic vector $\bm h$
  {\em uniform} and say that it is {\em associated} to $R$.
\end{defn}

If $\bm h$ is associated to $R$, then $h(\bm X) = \log |R|$, and
$h(\bm U) \leq \log |\Pi_{\bm U}(R)|$ for every subset
$\bm U \subseteq \bm X$.  Fig.~\ref{fig:parity} illustrates the
entropic vector $h$ associated to a relational instance $R$ with
attributes $X,Y,Z$; we call $h$ the {\em parity function}, because the
relation $R$ contains all triples $(x,y,z)$ that have an even number
of $1$.

\begin{figure}\centering

  \begin{minipage}[b]{0.45\linewidth}
    {
    \begin{align*}
      &
        \begin{array}{|c|c|c|c}\cline{1-3}
          X&Y&Z&\\ \cline{1-3}
          0&0&0&p=1/4 \\
          0&1&1&p=1/4 \\
          1&0&1&p=1/4 \\
          1&1&0&p=1/4 \\ \cline{1-3}
        \end{array}
    \end{align*}
    }
  \end{minipage}
  \hfill
  \includegraphics[width=0.45\linewidth]{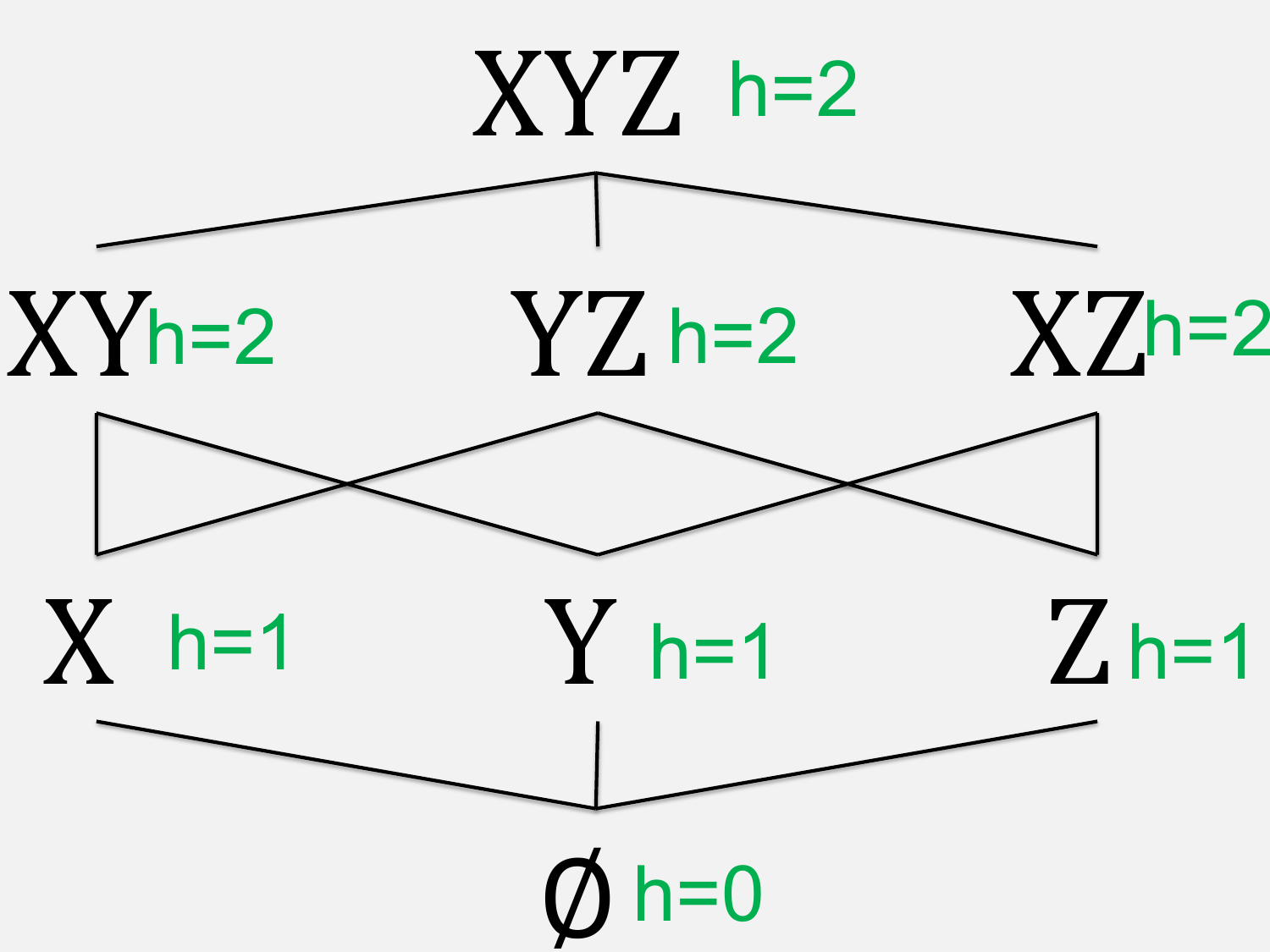}
  \caption{A relation defining the parity entropy $\bm h$.  The
    marginal distribution of $X$ is $p(X=0)=p(X=1)=1/2$, hence its
    entropy is $h(X)=1$, and similarly for the others values.}
  \label{fig:parity}
\end{figure}
Any entropic vector $\bm h \in \R_+^{2^{\bm X}}$ satisfies the
following {\em basic Shannon inequalities}:
\begin{align}
  h(\emptyset) = & 0 \label{eq:emptyset:zero}\\
  h(\bm U\cup \bm V) \geq & h(\bm U)\label{eq:monotonicity}\\
  h(\bm U) + h(\bm V) \geq & h(\bm U \cup \bm V) + h(\bm U \cap \bm V)\label{eq:submodularity}
\end{align}
The last two inequalities are called called {\em monotonicity} and
{\em submodularity} respectively.  A {\em Shannon inequality} is a
positive linear combination of basic Shannon inequalities.

Any vector $h : 2^{\bm X} \rightarrow \R_+$ that satisfies the basic
Shannon inequalities is called a {\em polymatroid}.  The set of
entropic vectors is denoted by $\Gamma_n^*$ and the set of
polymatroids is denoted by $\Gamma_n$, where $n=|\bm X|$ is the number
of variables.  The following holds:
$\Gamma_n^* \subsetneq \Gamma_n \subsetneq \Rp^{2^{[n]}}$.  In
particular, not every polymatroid is entropic, as we will see shortly
(in Fig.~\ref{fig:zhang:yeung:h}).  Fig.~\ref{fig:diagram} represents
these two sets, as well as other sets, defined later in this paper.
In some of the literature the entropic vectors and the polymatroids
are defined as $(2^n-1)$-dimensional vectors, by dropping the
$\emptyset$-dimension, because, in that case, both sets $\Gamma_n^*$
and $\Gamma_n$ have a non-empty interior.  We prefer to use $2^n$
dimensions since this simplifies most of the discussion, and postpone
dealing with the non-empty interior until
Section~\ref{sec:proof:of:duality}.

\begin{figure}
  \begin{minipage}[b]{0.45\linewidth}
    \begin{tabular}{ll}
    $\Gamma_n$: & polymatroids \\
    $\bar\Gamma_n^*$: & almost-entropic \\
      $\Gamma_n^*$: & entropic \\
      $\Upsilon_n$: & group realizable \\
    $N_n$: & normal polymatroids \\
    $M_n$: & modular polymatroids
    \end{tabular}
  \end{minipage}
  \hfill
  \begin{minipage}[b]{0.5\linewidth}
    \includegraphics[width=1.0\linewidth]{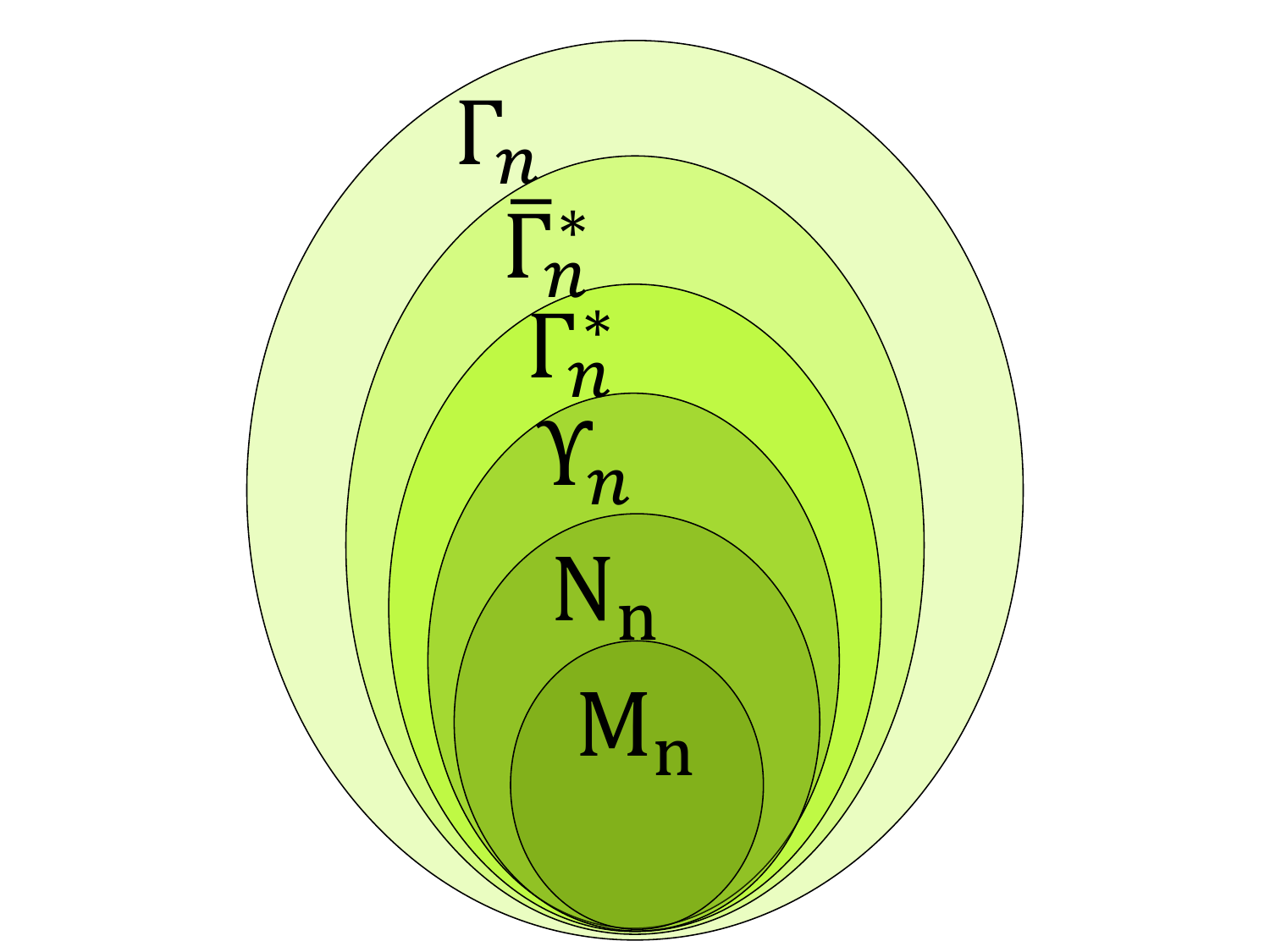}\vspace{-1cm}
  \end{minipage}
  \caption{Landscape of polymatroids}
  \label{fig:diagram}
\end{figure}

An information inequality is an assertion stating that a linear
expression of entropic terms is $\geq 0$.

\begin{defn} \label{def:ii}
  We associate to any vector $\bm c \in \R^{2^{[n]}}$ the following
  {\em information inequality}:
  \begin{align}
    \sum_{\alpha \subseteq [n]} c_\alpha h(\bm X_\alpha) \geq & 0 \label{eq:information:inequality}
  \end{align}
  By using the dot-product notation, we can write the inequality as
  $\bm c \cdot \bm h \geq 0$.  If the inequality holds for all
  $h \in K$, where $K \subseteq \Rp^{2^{[n]}}$ is some set, then we
  say that it is valid for $K$, and write
  $K \models \bm c \cdot \bm h \geq 0$.
\end{defn}

Thus, we will talk about inequalities valid for entropic vectors, or
valid for polymatroids, and the latter are precisely the Shannon
inequalities (this is implicit in the proof of
Th.~\ref{th:primal:dual:bound:polyhedral} below).  Any Shannon
inequality is also valid for entropic vectors; however, we will see in
Th.~\ref{th:zhang:yeung} below a non-Shannon inequality, which is
valid for entropic vectors, but not for polymatroids.  In analogy with
mathematical logic, one should think the vectors $\bm h$ as {\em
  models}, inequalities $\bm c \cdot \bm h \geq 0$ as {\em formulas},
and sets $K \subseteq \Rp^{2^{[n]}}$ as {\em classes of models}.

\begin{ex}
  The following is a Shannon inequality, called {\em Shearer's
    inequality}:
  \begin{align}
    h(XY)+h(YZ)+h(ZX)-2h(XYZ) \geq & 0 \label{eq:shearer}
  \end{align}
  To prove it, we apply submodularity twice, underlining the affected terms:
  \begin{align*}
    &\underline{h(XY)+h(YZ)}+h(ZX) \\
    &\ \ \ \ \geq  h(XYZ)+\underline{h(Y) + h(ZX)}\\
    &\ \ \ \ \geq h(XYZ)+h(XYZ)+h(\emptyset) = 2h(XYZ)
  \end{align*}
  Equivalently, we observe that~\eqref{eq:shearer} is the sum of the
  following basic Shannon inequalities:
  \begin{align*}
    h(XY) + h(YZ) - h(Y) - h(XYZ) \geq & 0 \\
    h(Y) + h(ZX) - h(\emptyset) - h(XYZ) \geq & 0 \\
    h(\emptyset) = & 0
  \end{align*}
\end{ex}

We will prove shortly (Theorem~\ref{th:primal:dual:bound:polyhedral}
below) that one can decide in time exponential in $n$ whether an
inequality is valid for all polymatroids.
%
%
In contrast, it
is an open problem whether entropic validity is decidable.

\subsection{The Entropic Bound and the Polymatroid Bound}

\label{sec:entropic:polymatroid:bound}

The general framework for computing a bound on a query's output uses
degree constraints, which, in turn, correspond to conditional
entropies.  We define these two notions first.

We write $\bm U \bm V$ for set $\bm U \cup \bm V$.  Given
$\bm h \in \R_+^{2^{[n]}}$, define:
\begin{align}
  h(\bm V | \bm U) \defeq & h(\bm U \bm V) - h(\bm U) \label{eq:conditional}
\end{align}
$\bm U, \bm V$ need not be disjoint, and
$h(\bm V|\bm U) = h(\bm V-\bm U|\bm U)$; for example,
$h(XY|X)=h(Y|X)$.  If $h(\bm V|\bm U) = 0$ then we say that $\bm h$
satisfies the functional dependency $\bm U \rightarrow \bm V$, and we
write $\bm h \models \bm U \rightarrow \bm V$.
Lee~\cite{DBLP:journals/tse/Lee87} proved that, if $R$ is a relation
instance with attributes $\bm X$, $p:R \rightarrow [0,1]$ is a
probability distribution, and $\bm h$ is its entropic vector, then
$R \models \bm U \fd \bm V$ iff $\bm h \models \bm U \fd \bm V$.  For
a simple illustration, referring to Fig.~\ref{fig:parity}, both $R$
and its entropy $\bm h$ satisfy the FDs $XY \rightarrow Z$,
$XZ\rightarrow Y$, and $YZ\rightarrow X$: for example $XY$ is a key
(all 4 tuples have distinct values $XY$) and
$h(Z|XY)=h(XYZ)-h(XY)=2-2=0$.

Fix $\bm U$, and denote by
$h(-|\bm U): 2^{\bm X-\bm U}\rightarrow \R_+$ the function
$\bm V \mapsto h(\bm V|\bm U)$.  If $\bm h$ is a polymatroid, then
$h(-|\bm U)$ is also a polymatroid, called the conditional
polymatroid.  If $\bm h$ is an entropic vector, then, surprisingly,
$\bm h(-|\bm U)$ is not necessarily entropic (as we will see later in
Sec.~\ref{subsec:almost:entropic}), yet the name {\em conditional
  entropy} is justified by the following.  Suppose $\bm h$ is
associated to $(R, p)$.  Fix an outcome $\bm u\in \dom^{\bm U}$,
consider the random variable $\bm V$ conditioned on $\bm U = \bm u$,
and denote its entropy by $h(\bm V|\bm U=\bm u)$.  Then:
\begin{align}
  h(\bm V | \bm U) = & \E_{\bm u}\left[h(\bm V | \bm U = \bm u)\right]\label{eq:def:conditional:expectation}
\end{align}
In other words, $h(\bm V | \bm U)$ equals the {\em expectation} over
the outcomes $\bm u$ of the (standard) entropy of the random variable
$\bm V$ conditioned on $\bm U = \bm u$.  The proof of
identity~\eqref{eq:def:conditional:expectation} consists of applying
directly the definition of the entropy given in Eq.~\eqref{eq:h}.

When proving Shannon inequalities it is sometimes convenient to write
the submodularity inequality as
$h(\bm V|\bm U) \geq h(\bm V | \bm U \bm W)$.\footnote{This is
  equivalent to
  $h(\bm U \bm V) - h(\bm U) \geq h(\bm U \bm V \bm W) - h(\bm U \bm
  W)$; when $\bm V \cap \bm W=\emptyset$ then this is a submodularity
  inequality.} In other words, conditioning on more variables can only
decrease the entropy.

\begin{ex} \label{ex:ii:conditionals} We illustrate a simple Shannon
  inequality with conditionals:
    \begin{align*}
    \underline{h(XY)}&\underline{+h(YZ)}+h(ZU) + h(U|XZ) + h(X|YU) \geq\\
      \geq & h(XYZ)+\underline{h(Y)+h(ZU)}+h(U|XZ) + h(X|YU)\\
    \geq & h(XYZ) + h(YZU) + \underline{h(U|XZ)} + \underline{h(X|YU)}\\
    \geq & h(XYZ) + h(YZU) + h(U|XYZ) + h(X|YZU)\\
    = & 2h(XYZU) 
  \end{align*}
\end{ex}

Next, we define degrees of a relation instance
$R \subseteq \dom^{\bm X}$.  Given subsets
$\bm U, \bm V \subseteq \bm X$, and $\bm u \in \dom^{\bm U}$, the {\em
  $\bm V$-degree of $\bm U =\bm u$ in $R$} is the number of distinct
values $\bm v$ that occur in $R$ together with $\bm u$; the {\em
  max-$\bm V$-degree of $\bm U$} is the maximum degree over all values
$\bm u$.  Formally:
\begin{align*}
\degree_R(\bm V | \bm U= \bm u) \defeq & \left|\setof{\bm v}{(\bm u, \bm v) \in \Pi_{\bm U \bm V}(R)}\right|\\
\degree_R(\bm V | \bm U) \defeq & \max_{\bm u}\left(\degree_R(\bm V | \bm U=\bm u)\right)
\end{align*}
We note that $\degree_R(\bm V|\bm U) \geq 1$ (since we assumed
$R\neq \emptyset$), and equality holds iff $R$ satisfies the
functional dependency $\bm U \rightarrow \bm V$.

\begin{defn} \label{def:statistics} Fix a relation $R(\bm X)$. A {\em
    degree statistics}, or a {\em statistics} in short, $\sigma$, is
  an expression of the form $\sigma = (\bm V|\bm U)$ where
  $\bm U, \bm V\subseteq \bm X$; when $\bm U=\emptyset$ then we call
  $\sigma$ a {\em cardinality statistics}, and write it as $(\bm V)$.
  If $\Sigma$ is a set of statistics, then we call
  $\bm B = (B_{\sigma})_{\sigma \in \Sigma}$, where $B_\sigma \geq 1$,
  {\em statistics values} associated to $\Sigma$.  The {\em
    log-statistics values} are
  $\bm b = \log \bm B = (b_\sigma := \log B_\sigma)_{\sigma \in
    \Sigma}$.
\end{defn}

We abbreviate $h(\bm V|\bm U)$ and $\degree_R(\bm V|\bm U)$ with
$h(\sigma)$ and $\degree_R(\sigma)$ respectively.  If $\Sigma$ is a
set of statistics, then a {\em $\Sigma$-inequality} is an inequality
of the following form:
\begin{align} \sum_{\sigma \in \Sigma} w_\sigma h(\sigma) \geq & h(\bm
X) \label{eq:conditional:ii}
\end{align}
where $\bm w = (w_\sigma)_{\sigma \in \Sigma}$ are nonnegative
weights.

Fix a hypergraph $\calH = (\bm X, E)$.  We say that $\Sigma$ is {\em
  guarded by $\calH$} if, for every $\sigma = (\bm V|\bm U)$ in
$\Sigma$, there exists a hyperedge $\bm Y_{\sigma} \in E$ such that
$\bm U, \bm V \subseteq \bm Y_{\sigma}$; we call $\bm Y_{\sigma}$ the
{\em guard} of $\sigma$.  When $\calH$ is the hypergraph of a query
$Q$, then we say that $\Sigma$ is guarded by $Q$, and that
$R_{\sigma}$ is the guard of $\sigma$.  The following theorem
establishes the key connection between information inequalities and
query output size.  The proof relies on a method originally introduced
by Chung et al. for a combinatorial
problem~\cite{DBLP:journals/jct/ChungGFS86}, and adapted by Grohe and
Marx for constraint satisfaction~\cite{DBLP:journals/talg/GroheM14},
then by Atserias, Grohe, and Marx for their AGM
bound~\cite{DBLP:journals/siamcomp/AtseriasGM13}.

\begin{thm} \label{th:degree:bound} Assume $\Sigma$ is guarded by $Q$.
  If the $\Sigma$-inequality~\eqref{eq:conditional:ii} is valid for
  entropic vectors then:
  \begin{align}
    |Q| \leq & \prod_{\sigma \in \Sigma} \degree_{R_\sigma}^{w_\sigma}(\sigma) \label{eq:conditional:q}
  \end{align}
  where $R_\sigma$ is the guard of $\sigma \in \Sigma$.
\end{thm}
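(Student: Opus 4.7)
The plan is to take the uniform probability distribution on the output $Q(\bm D)$, let $\bm h$ be its entropic vector, and then verify that (i) $h(\bm X) = \log |Q(\bm D)|$, and (ii) for every statistic $\sigma \in \Sigma$, $h(\sigma) \leq \log \degree_{R_\sigma}(\sigma)$. Plugging these into the valid $\Sigma$-inequality and exponentiating then yields the claim. This is the standard Chung--Grohe--Marx ``entropy method'' adapted to degree statistics.

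First I would dispose of the trivial case $Q(\bm D) = \emptyset$. Otherwise, let $p$ be uniform on $Q(\bm D) \subseteq \dom^{\bm X}$ and let $\bm h$ be the associated entropic vector. Since $Q$ is full (so $\bm X = \bm Y_0$), we have $h(\bm X) = \log |Q(\bm D)|$ directly from Eq.~\eqref{eq:h}. Because $\bm h$ is entropic and the inequality $\sum_\sigma w_\sigma h(\sigma) \geq h(\bm X)$ is assumed valid on $\Gamma_n^*$, it holds for this particular $\bm h$:
\begin{align*}
  \log |Q(\bm D)| \;\leq\; \sum_{\sigma \in \Sigma} w_\sigma\, h(\sigma).
\end{align*}

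The main technical step, and the only thing that uses the guardedness hypothesis, is the bound $h(\sigma) \leq \log \degree_{R_\sigma}(\sigma)$ for each $\sigma = (\bm V | \bm U)$. I would argue this via the conditional-expectation identity \eqref{eq:def:conditional:expectation}: $h(\bm V | \bm U) = \E_{\bm u}[h(\bm V | \bm U = \bm u)]$. For each fixed $\bm u$, the random variable $\bm V$ conditioned on $\bm U = \bm u$ is supported on the set $\setof{\bm v}{(\bm u, \bm v) \in \Pi_{\bm U \bm V}(Q(\bm D))}$, so its entropy is at most the log of the size of that support. Now I use that $\bm U, \bm V \subseteq \bm Y_\sigma$: every tuple of $Q(\bm D)$ projects into $R_\sigma$ on $\bm Y_\sigma$ (because $R_\sigma$ is an atom of $Q$), hence $\Pi_{\bm U \bm V}(Q(\bm D)) \subseteq \Pi_{\bm U \bm V}(R_\sigma)$. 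Therefore the support size is at most $\degree_{R_\sigma}(\bm V | \bm U = \bm u) \leq \degree_{R_\sigma}(\sigma)$, giving $h(\bm V | \bm U = \bm u) \leq \log \degree_{R_\sigma}(\sigma)$ for every $\bm u$ and, by taking expectation, $h(\sigma) \leq \log \degree_{R_\sigma}(\sigma)$.

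Combining the two displayed bounds,
\begin{align*}
  \log |Q(\bm D)| \;\leq\; \sum_{\sigma \in \Sigma} w_\sigma \log \degree_{R_\sigma}(\sigma) \;=\; \log \prod_{\sigma \in \Sigma} \degree_{R_\sigma}^{\,w_\sigma}(\sigma),
\end{align*}
and exponentiating gives \eqref{eq:conditional:q}. The main obstacle, if any, is the per-outcome support bound: one must be careful that the guard is an actual atom of $Q$ (not just some relation in the schema) so that $\Pi_{\bm U \bm V}(Q(\bm D)) \subseteq \Pi_{\bm U \bm V}(R_\sigma)$ holds; once the guardedness is correctly invoked this is immediate. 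Everything else is a direct application of the definitions of conditional entropy and degree, plus the hypothesis of entropic validity.
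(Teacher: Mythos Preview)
Your proof is correct and follows essentially the same approach as the paper: take the uniform distribution on $Q(\bm D)$, use $h(\bm X)=\log|Q(\bm D)|$, bound each $h(\sigma)$ via the conditional-expectation identity~\eqref{eq:def:conditional:expectation} and the guarded containment $\Pi_{\bm U\bm V}(Q(\bm D))\subseteq \Pi_{\bm U\bm V}(R_\sigma)$, then apply the assumed entropic inequality. The paper's write-up is terser (it bounds the expectation by a $\max$ rather than bounding each conditional term pointwise), but the argument is the same.
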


\begin{proof}
  Fix a database instance $\bm D$, and let $\bm h$ be the entropic
  vector associated to the relation $Q(\bm D)$; by uniformity,
  $h(\bm X) = \log |Q(\bm D)|$.  If
  $\sigma = (\bm V|\bm U) \in \Sigma$ has guard $R_\sigma$, then:
  \begin{align*}
   h(\sigma) =  & \E_{\bm u}\left[h(\bm V | \bm U = \bm u)\right] \leq  \max_{\bm u} h(\bm V | \bm U = \bm u)\\
    \leq & \max_{\bm u} \log \deg_{R_\sigma}(\bm V | \bm U=\bm u) \\
           =  & \log \deg_{R_\sigma}(\bm V | \bm U) = \log \deg_{R_\sigma}(\sigma)
  \end{align*}
  Using~\eqref{eq:conditional:ii} we derive:
  \begin{align*}
    \sum_\sigma w_\sigma \log \deg_{R_\sigma}(\sigma) \geq & \sum_\sigma w_\sigma h(\sigma) \geq  h(\bm X) = \log|Q|
  \end{align*}
\end{proof}

Inequality~\eqref{eq:conditional:ii} is similar to the AGM
inequality~\eqref{eq:agm}.  Next, we proceed as we did for the AGM
bound: fix numerical values for the statistics, then minimize the
bound over all valid $\bm w$'s.  We say that a database instance
$\bm D$ {\em satisfies} the statistics $\Sigma, \bm B$, in notation
$\bm D \models (\Sigma, \bm B)$, if
$\degree_{R_\sigma^{\bm D}}(\sigma)\leq B_\sigma$ for all
$\sigma \in \Sigma$.  Similarly, we say that a vector $\bm h$
satisfies the log-statistics $\Sigma, \bm b$ if
$h(\sigma) \leq b_\sigma$ for all $\sigma$.  We define:

\begin{defn}[Query Upper Bound] \label{def:ep:bound} Let
  $\Sigma, \bm B$ be statistics values, guarded by the query $Q$.  Fix
  some set $K\subseteq \R^{2^{[n]}}$.  The {\em Upper Bound}
  w.r.t. $K$ of the query $Q$ is:
  \begin{align*}
    \text{U-Bound}_K(Q,\Sigma,\bm B) \defeq & \inf_{\bm w: K \models \mbox{Eq.\eqref{eq:conditional:ii}}} \prod_{\sigma\in\Sigma} B_\sigma^{w_\sigma}
  \end{align*}
  The {\em entropic upper bound} is $\text{U-Bound}_{\Gamma_n^*}$ and
  the {\em polymatroid upper bound} is $\text{U-Bound}_{\Gamma_n}$.
\end{defn}

Sometimes it is more convenient to use the logarithm and define:
\begin{align}
  \text{Log-U-Bound}_K(Q,\Sigma,\bm b) \defeq & \inf_{\bm w: K \models \mbox{Eq.\eqref{eq:conditional:ii}}} \sum_{\sigma\in\Sigma}w_\sigma b_\sigma \label{eq:def:log-u-bound}
\end{align}
%
%
\begin{cor} \label{cor:entropic:bound} The following hold:
  \begin{itemize}
  \item
    $\text{U-Bound}_{\Gamma_n^*}(Q,\Sigma,\bm B)\leq
    \text{U-Bound}_{\Gamma_n}(Q,\Sigma,\bm B)$, and
  \item If $\bm D \models (\Sigma, \bm B)$ then
    $|Q(\bm D)| \leq \text{U-Bound}_{\Gamma_n^*}(Q,\Sigma,\bm B)$.
  \end{itemize}
\end{cor}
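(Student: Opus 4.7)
The plan is to derive both parts of Corollary~\ref{cor:entropic:bound} directly from Theorem~\ref{th:degree:bound} together with the elementary containment $\Gamma_n^* \subseteq \Gamma_n$, which is just the statement that every entropic vector satisfies the basic Shannon inequalities~\eqref{eq:emptyset:zero}--\eqref{eq:submodularity}. Once these two ingredients are in hand, everything else is bookkeeping with the infimum in Definition~\ref{def:ep:bound}.

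For the first item, I would note that from $\Gamma_n^*\subseteq \Gamma_n$ it follows that any $\Sigma$-inequality $\sum_\sigma w_\sigma h(\sigma) \geq h(\bm X)$ valid on all polymatroids is automatically valid on all entropic vectors. Hence the feasible sets nest as
\begin{align*}
\bigsetof{\bm w\geq 0}{\Gamma_n \models \mbox{Eq.}\eqref{eq:conditional:ii}} \subseteq \bigsetof{\bm w\geq 0}{\Gamma_n^* \models \mbox{Eq.}\eqref{eq:conditional:ii}}.
\end{align*}
Since both upper bounds minimize the \emph{same} objective $\prod_\sigma B_\sigma^{w_\sigma}$, the infimum over the larger (entropic) feasible set is no larger than the infimum over the smaller (polymatroid) one, which is exactly $\text{U-Bound}_{\Gamma_n^*}(Q,\Sigma,\bm B) \leq \text{U-Bound}_{\Gamma_n}(Q,\Sigma,\bm B)$.

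For the second item, I would fix an arbitrary feasible $\bm w$ for the entropic bound, i.e.\ $\bm w\geq 0$ with $\Gamma_n^* \models \eqref{eq:conditional:ii}$. Because $\Sigma$ is guarded by $Q$, Theorem~\ref{th:degree:bound} applies and gives
\begin{align*}
|Q(\bm D)| \leq \prod_{\sigma\in\Sigma} \degree_{R_\sigma^{\bm D}}^{w_\sigma}(\sigma).
\end{align*}
The hypothesis $\bm D \models (\Sigma,\bm B)$ means $\degree_{R_\sigma^{\bm D}}(\sigma)\leq B_\sigma$ for every $\sigma$, and together with $w_\sigma\geq 0$ this yields $|Q(\bm D)| \leq \prod_\sigma B_\sigma^{w_\sigma}$. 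Since the same inequality holds for \emph{every} admissible $\bm w$, passing to the infimum on the right gives $|Q(\bm D)| \leq \text{U-Bound}_{\Gamma_n^*}(Q,\Sigma,\bm B)$.

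I do not anticipate any genuine obstacle: the corollary is essentially a packaging of Theorem~\ref{th:degree:bound} together with the containment $\Gamma_n^*\subseteq \Gamma_n$. The only small subtlety worth flagging is the transition from the pointwise-in-$\bm w$ inequality to the infimum, which is immediate because the bound holds uniformly over all feasible $\bm w$ and we are bounding the $\bm w$-independent quantity $|Q(\bm D)|$.
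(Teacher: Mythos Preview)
Your proposal is correct and follows exactly the same approach as the paper, which simply notes that the first item is by $\Gamma_n^* \subseteq \Gamma_n$ and the second is by Theorem~\ref{th:degree:bound}. You have merely spelled out the bookkeeping (nesting of feasible sets, passing to the infimum) that the paper leaves implicit.
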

The first item is by $\Gamma_n^* \subseteq \Gamma_n$, the second is by
Th.~\ref{th:degree:bound}.

Let's compare these bound with the AGM bound~\eqref{eq:agm:formula}.
There, we had to minimize an expression where $\bm w$ ranged over the
fractional edge covers of the query's hypergraph.  In our new setting,
$\bm w$ ranges over valid $\Sigma$-inequalities, a much more difficult
task.  To compute the polymatroid bound, $\bm w$ ranges over
$\Sigma$-inequalities valid for polymatroids, and we will show in
Th.~\ref{th:primal:dual:bound:polyhedral} that this bound can be
computed in time exponential in $n$.  However, in order to compute the
entropic bound, $\bm w$ needs to define a valid entropic inequality,
and it is currently open whether this bound is computable.  On the
other hand, we will prove that the entropic bound is asymptotically
tight, while the polymatroid bound is not.  Thus, we are faced with a
difficult choice, between and exact but non-computable bound, or a
computable but inexact bound.  This justifies examining non-trivial
special cases of statistics $\Sigma$ when these two bounds agree.
We illustrate the entropic upper bound with an example.

\begin{ex} \label{ex:pods2016}
  Consider the following conjunctive query:
  \begin{align*}
    Q(X,Y,Z,U) = & R(X,Y) \wedge S(Y,Z) \wedge T(Z,U) \\
    \wedge & A(X,Z,U)\wedge B(X,Y,U)
  \end{align*}
  Suppose that we are given the following set of statistics
  $\Sigma = \set{(XY), (YZ), (ZU), (U|XZ), (X|YU)}$.  In other words,
  we have bounds on the cardinalities of $R, S, T$, but not of $A, B$,
  hence we can assume that $|A|=|B|=\infty$.  Instead, we have the
  statistics $\degree_A(U|XZ)$ and $\degree_B(X|YU)$.  The AGM
  bound~\eqref{eq:agm} is $|Q| \leq |R| \cdot |T|$, because the only
  fractional edge cover whose bound is $< \infty$ is $w_R = w_T = 1$
  and $w_S = w_A = w_B = 0$.
  
  The polymatroid bound follows from these $\Sigma$-inequalities:
  \begin{align*}
    & h(XY)+h(YZ)+h(ZU) + h(U|XZ) + h(X|YU) \geq \nonumber \\
    & \hspace{5cm} \geq  2h(XYZU) \\
    & h(XY) + h(ZU) \geq  h(XYZU) \\
    & h(XY) + h(YZ) + h(U|XZ) \geq h(XYZU) \\
    & h(YZ) + h(ZU) + h(X|YU) \geq h(XYZU)
  \end{align*}
  We proved the first inequality in Example~\ref{ex:ii:conditionals},
  while the other three are immediate.  They imply:
  \begin{align*}
    |Q| \leq & \big(|R|\cdot |S| \cdot |T| \cdot \degree_A(U|XZ) \cdot \degree_B(X|YU)\big)^{1/2}\\
    |Q| \leq & |R|\cdot |T| \\
    |Q| \leq & |R| \cdot |S| \cdot  \degree_B(X|YU)\\
    |Q| \leq & |S| \cdot |T| \cdot  \degree_A(U|XZ)
  \end{align*}
  The AGM bound is the second inequality. We show in
  Appendix~\ref{app:ex:pods2016} that the entropic bound is the
  minimum over all four expressions above. (This requires proving that
  there is no ``better'' inequality that gives us a smaller bound.)
  Since all four inequalities are Shannon inequalities, it follows
  that, in this case, the entropic bound is equal to the polymatroid
  bound.

  When $XZ$ is a key in $A$, and $YU$ is a key in $B$, then the
  polymatroid bound simplifies to:
  \begin{align*}
    |Q| \leq \min((|R|\cdot|S|\cdot|T|)^{1/2},|R|\cdot|T|, |R|\cdot |S|, |S| \cdot |T|)
  \end{align*}
\end{ex}

{\bf Special Case: Cardinality Constraints} We show next that the AGM
bound is a special case where the polymatroid and entropic bounds
coincide; we will see a more general setting when this happens in
Sec.~\ref{sec:special:cases}.  Assume $\Sigma$ is restricted to
cardinality constraints, and assume for simplicity that $\Sigma$ has
exactly one cardinality constraint $(\bm Y_j)$ for each relation
$R_j(\bm Y_j)$ in the query;
$\Sigma = \set{(\bm Y_1),\ldots, (\bm Y_m)}$.  The
$\Sigma$-inequality~\eqref{eq:conditional:ii} becomes:
\begin{align}
  \sum_{j=1,m} w_j h(\bm Y_j) \geq & h(\bm X) \label{eq:shearer:ii}
\end{align}
When $w_1=\cdots=w_m$ then~\eqref{eq:shearer:ii} is called Shearer's
inequality.
\begin{thm} \label{th:shearer}
  The following are equivalent:
  \begin{enumerate}[(i)]
  \item \label{item:th:shearer:1} Inequality~\eqref{eq:shearer:ii} is
    valid for polymatroids.
  \item \label{item:th:shearer:2} Inequality~\eqref{eq:shearer:ii} is
    valid for entropic functions.
  \item \label{item:th:shearer:3} The weights $\bm w$ form a
    fractional edge cover of the hypergraph
    $(\bm X, \setof{\bm Y_j}{j=1,m})$.
  \end{enumerate}
\end{thm}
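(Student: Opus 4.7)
The plan is to establish a cycle of implications: (\ref{item:th:shearer:1}) $\Rightarrow$ (\ref{item:th:shearer:2}) $\Rightarrow$ (\ref{item:th:shearer:3}) $\Rightarrow$ (\ref{item:th:shearer:1}). The first is immediate from $\Gamma_n^* \subseteq \Gamma_n$: any inequality valid for all polymatroids is in particular valid for entropic vectors.

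For (\ref{item:th:shearer:3}) $\Rightarrow$ (\ref{item:th:shearer:1}), the standard Shearer-style argument suffices, using only the basic Shannon inequalities. Fix an arbitrary ordering $X_1, \ldots, X_n$ of the variables and apply the chain rule to obtain $h(\bm X) = \sum_{i=1}^n h(X_i \mid X_1, \ldots, X_{i-1})$. For each hyperedge $\bm Y_j$, apply the chain rule together with the submodularity principle that conditioning on more variables can only decrease entropy (noted right after Eq.~\eqref{eq:def:conditional:expectation}) to get
\begin{align*}
h(\bm Y_j) = \sum_{X_i \in \bm Y_j} h\bigl(X_i \mid \bm Y_j \cap \{X_1,\ldots,X_{i-1}\}\bigr)
\geq \sum_{X_i \in \bm Y_j} h(X_i \mid X_1,\ldots,X_{i-1}).
\end{align*}
Multiplying by $w_j \geq 0$, summing over $j$, and swapping the order of summation yields
\begin{align*}
\sum_j w_j h(\bm Y_j) \geq \sum_{i=1}^n \Bigl(\sum_{j: X_i \in \bm Y_j} w_j\Bigr) h(X_i \mid X_1, \ldots, X_{i-1}) \geq \sum_{i=1}^n h(X_i \mid X_1, \ldots, X_{i-1}) = h(\bm X),
\end{align*}
where the last inequality uses the edge-cover condition~\eqref{eq:edge:cover} together with the nonnegativity of conditional entropies (which follows from monotonicity).

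For (\ref{item:th:shearer:2}) $\Rightarrow$ (\ref{item:th:shearer:3}), I argue by contrapositive: if some variable $X_i$ is not covered, i.e.\ $\sum_{j: X_i \in \bm Y_j} w_j < 1$, I exhibit an entropic vector violating~\eqref{eq:shearer:ii}. Take a distribution in which $X_i$ is uniform on a two-element set and every other variable is a constant. This is entropic (it is the uniform distribution on a two-row relation), and its entropic vector satisfies $h(\bm U) = 1$ whenever $X_i \in \bm U$ and $h(\bm U) = 0$ otherwise. Thus $h(\bm X) = 1$, while $\sum_j w_j h(\bm Y_j) = \sum_{j: X_i \in \bm Y_j} w_j < 1$, contradicting~\eqref{eq:shearer:ii}.

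The only slightly subtle step is the chain-rule/submodularity manipulation in (\ref{item:th:shearer:3}) $\Rightarrow$ (\ref{item:th:shearer:1}), but this is purely mechanical once the ordering is fixed; the counterexample construction in the other direction is routine. There is no real obstacle here — the theorem is essentially Shearer's inequality together with its tightness, repackaged in the query-hypergraph language.
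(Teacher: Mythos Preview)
Your proof is correct. For (\ref{item:th:shearer:2}) $\Rightarrow$ (\ref{item:th:shearer:3}) you use exactly the paper's witness: the two-row relation you describe is precisely the basic modular function $\bm h^{X_i}$ of Fig.~\ref{fig:modular}. For (\ref{item:th:shearer:3}) $\Rightarrow$ (\ref{item:th:shearer:1}) the paper's primary proof is organized differently: it peels off one variable $X_n$ at a time, splits the hyperedges according to whether they contain $X_n$, and recurses on the conditional polymatroid $h(-\mid X_n)$. Your chain-rule argument is in effect the fully unrolled version of this induction, and is essentially the modularization proof the paper later gives as an alternative (Lemma~\ref{lemma:modularizaation} and Corollary~\ref{cor:unconditioned:inequalities}). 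Your route is shorter; the paper's inductive organization has the payoff that it is the template for the Generic Join algorithm in Sec.~\ref{sec:query:evaluation}, where each induction step becomes a loop over one variable.
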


\begin{figure}
    \begin{align*}
    R = &
      \begin{array}{|ccc|c} \cline{1-3}
        X_1 \ldots X_{i-1} & X_i & X_{i+1} \ldots X_n & p \\ \cline{1-3}
        0 \ \ \ \ldots \ \ \ 0 & 0 & 0 \ \ \ \ldots\ \ \  0 & 1/2 \\
        0 \ \ \ \ldots \ \ \ 0 & 1 & 0 \ \ \ \ldots\ \ \  0 & 1/2 \\ \cline{1-3}
      \end{array} 
        & h^{X_i}(\bm U) =
       \begin{cases}
        1 & \mbox{if $X_i \in \bm U$}\\
        0 & \mbox{if $X_i \not\in \bm U$}
      \end{cases}
  \end{align*}
\caption{A relation $R$ with two tuples that agree on all attributes,
  except $X_i$.  Its entropic vector is called the {\em basic modular
    function}, $\bm h^{X_i}$; it is used in Th.~\ref{th:shearer}, and
  discussed in more detail in Sec.~\ref{sec:special:cases}.}
  \label{fig:modular}
\end{figure}

\begin{proof}
  \ref{item:th:shearer:1} $\Rightarrow$ \ref{item:th:shearer:2} is
  immediate.  For~\ref{item:th:shearer:2} $\Rightarrow$
  \ref{item:th:shearer:3}, assume that~\eqref{eq:shearer:ii} holds for
  all entropic vectors, and consider any variable $X_i \in \bm X$, for
  $i=1,n$.  Consider the {\em basic modular entropic function}
  $\bm h^{X_i}$ shown in Fig.~\ref{fig:modular}.  Since $\bm h^{X_i}$
  satisfies inequality~\eqref{eq:shearer:ii}, it follows that
  $\sum_{j=1,m: X_i \in \bm Y_j} w_j \geq 1$ (because
  $h^{X_i}(\bm Y_j)=1$ iff $X_i \in \bm Y_j$), proving that $\bm w$ is
  a fractional edge cover.

  It remains to prove the implication~\ref{item:th:shearer:3}
  $\Rightarrow$ \ref{item:th:shearer:1}.  This is a well known result,
  and it admits multiple proofs (we give a second proof in
  Sec.~\ref{sec:heavy:light}).
  Here, we will prove the inequality by induction on $n$:
  \begin{itemize}
  \item Partition the set of indices $j$ into $J_0$ and $J_1$:
    \\
    $J_0 \defeq \setof{j}{X_n \not\in \bm Y_j}$,
    $J_1 \defeq \setof{j}{X_n \in \bm Y_j}$.
  \item If $j\in J_1$ then write
    $h(\bm Y_j) = h(X_n) + h(\bm Y_j-X_n|X_n)$.  Note that
    $\sum_{j \in J_1} w_j \geq 1$ because $X_n$ is covered.
  \item If $j\in J_0$ then write $h(\bm Y_j) \geq h(\bm Y_j-X_n|X_n)$.
  \end{itemize}
  Using the steps above we obtain:
  \begin{align*}
    \sum_j  w_jh(\bm Y_j) \geq  & h(X_n) + \sum_j w_jh(\bm Y_j-X_n|X_n) \\
    \geq &   h(X_n) + h(\bm X-X_n|X_n) =  h(\bm X)
  \end{align*}
  The last line used induction on the polymatroid $h(- | X_n)$.
\end{proof}

It follows immediately that the AGM bound, the entropic bound, and the
polymatroid bound coincide in the simple case when the statistics are
restricted to the cardinalities of the input relations.

{\bf Discussion} Degree constraints occur often and naturally in
database applications.  For example, if a relation $R_j(\bm Y_j)$ has
a key $\bm U$, then $\degree_{R_j}(\bm Y_j|\bm U) = 1$.  In practice
almost every relation has a key, so this case is very common.  In
other cases some cardinality constraints can be obtained directly from
the application. For example, suppose that in a database of customers
we require that no customer may have more than 10 credit cards, which
naturally leads to a max-degree constraint.  Such constraints are used
in some modern systems, for example in {\em scale-independent query
  processing}~\cite{DBLP:conf/cidr/ArmbrustFPLTTO09,
  DBLP:journals/pvldb/ArmbrustCKFFP11,
  DBLP:conf/sigmod/ArmbrustLKFFP13}.


\section{The Worst-Case Instance}

\label{sec:lower:bound}


Informally, we call a database instance $\bm D$ a {\em worst-case
  instance} if it satisfies the given statistics, and the query's
output is as large as, or approaches asymptotically (in a sense to be
made precise), the entropic upper bound.  We will show that such a
worst-case instance exists, proving that the entropic bound is {\em
  asymptotically tight}, which is a weaker notion of tightness than
for the AGM bound in Th.~\ref{th:agm:lower:bound}.  We will also show
that, in general, the polymatroid bound is not tight, even for this
weaker notion of tightness.

To construct the worst-case instance we need a {\em dual} definition
of the entropic and polymatroid bounds.  We define them directly using
log-version:

\begin{defn}[Query Lower Bound] \label{def:ep:bound:dual} Fix
  log-statistics $\Sigma, \bm b$.  For any set
  $K\subseteq \R^{2^{[n]}}$, the {\em Log Lower Bound} w.r.t. $K$ is:
  \begin{align}
    \text{Log-L-Bound}_K(Q,\Sigma,\bm b) \defeq & \sup_{\bm h \in K: \bm h \models (\Sigma, \bm b)}h(\bm X)\label{eq:k:bound}
  \end{align}
  As before, the {\em entropic log-lower bound} is
  $\text{Log-L-Bound}_{\Gamma_n^*}$, and the {\em polymatroid
    log-lower bound} is $\text{Log-L-Bound}_{\Gamma_n}$.
\end{defn}

The log-lower bound asks us to find a vector $\bm h$ that satisfies
all log-statistics $\Sigma, \bm b$, and where $\bm h(\bm X)$ is as
large as possible.  We call $\bm h$ a {\em worst case} entropic
vector, or the worst case polymatroid respectively.  Using $\bm h$, we
would like to construct a worst-case database instance $\bm D$, that
satisfies $\Sigma, \bm B$, and $\log |Q(\bm D)| = h(\bm X)$.  The
difficulty lies in the fact that, when $\bm h$ is a polymatroid then
such a database may not exists general, and when $\bm h$ is an
entropic vector, then it may be realized by a probability space that
is non-uniform, hence we cannot use it to construct $\bm D$.  We start
by observing the following, which is easy to check:
\begin{align}
  \text{Log-L-Bound}_K(Q,\Sigma,\bm b)\leq & \text{Log-U-Bound}_K(Q,\Sigma,\bm b) \label{eq:l:leq:u:for:k}
\end{align}
Indeed, if $\bm h\in K$ satisfies $\bm h \models (\Sigma, \bm b)$, and
$\bm w$ satisfies
$\forall \bm h \in K, \sum_\sigma w_\sigma h(\sigma) \geq h(\bm X)$,
then
$h(\bm X) \leq \sum_\sigma w_\sigma h(\sigma) \leq \sum_\sigma
w_\sigma b_\sigma$, and the claim follows from
$\text{Log-L-Bound}_K=\sup_{\bm h} h(\bm X) \leq
\text{Log-U-Bound}_K=\inf_{\bm w} \sum_\sigma w_\sigma b_\sigma$.

When $K=\Gamma_n$, then~\cite{DBLP:conf/pods/Khamis0S17} showed
that the two bounds are equal.  We prove a slightly more general
statement:

\begin{thm} \label{th:primal:dual:bound:polyhedral} Suppose the set
  $K$ is defined by linear constraints:
  $K = \setof{\bm h \in \Rp^{2^{[n]}}}{\bm M \cdot \bm h \geq 0}$,
  where $\bm M$ is some matrix.\footnote{Equivalently: $K$ is a
    polyhedral cone, reviewed in Sec.~\ref{subsec:almost:entropic}.}
  Then, $\text{Log-L-Bound}_K$ and $\text{Log-U-Bound}_K$ are defined
  by a pair of primal/dual linear programs, with a number of variables
  exponential in $n$; the expressions $\inf, \sup$
  in~\eqref{eq:def:log-u-bound}, ~\eqref{eq:k:bound} can be replaced
  by $\min, \max$; and Eq.~\eqref{eq:l:leq:u:for:k} becomes an
  equality, $h^*(\bm X) = \sum_\sigma w^*_\sigma b_\sigma$, where
  $\bm h^*$, $\bm w^*$ are the optimal solutions of the primal and
  dual program respectively.
\end{thm}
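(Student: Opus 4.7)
The plan is to formulate both bounds as linear programs (LPs) over $\bm h$ and $\bm w$ respectively, and then observe that they are LP duals of each other, so that strong LP duality yields both the equality and the attainment of the optima. The \emph{primal} LP encoding $\text{Log-L-Bound}_K(Q,\Sigma,\bm b)$ is read directly off Eq.~\eqref{eq:k:bound}: the variables are $(h_\alpha)_{\alpha \subseteq [n]}$, the objective is $\max h_{[n]}$, and the constraints are $\bm h \geq 0$ together with $\bm M \bm h \geq 0$ (encoding $\bm h \in K$) and $h_{UV} - h_U \leq b_\sigma$ for each $\sigma = (\bm V \mid \bm U) \in \Sigma$. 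There are $2^n$ variables and $2^n + |\Sigma| + (\mbox{rows of }\bm M)$ constraints, matching the ``exponential in $n$'' size claim, and $\bm h = \bm 0$ is always feasible so the primal is never infeasible.

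Next I would take the LP dual. Let $\bm E$ be the $|\Sigma| \times 2^n$ matrix whose $\sigma$-row is $\bm e_{UV} - \bm e_U$, so that $\bm E \bm h$ is the vector of the $h(\sigma)$'s. Casting the primal in standard form $\max \bm e_{[n]}^T \bm h$ subject to $\bm E \bm h \leq \bm b$, $-\bm M \bm h \leq \bm 0$, $\bm h \geq \bm 0$, and introducing dual variables $\bm w \geq \bm 0$ for the $\bm E$-constraints and $\bm \lambda \geq \bm 0$ for the $\bm M$-constraints, the LP dual is $\min \sum_\sigma w_\sigma b_\sigma$ subject to $\bm w, \bm \lambda \geq \bm 0$ and the componentwise inequality $\bm E^T \bm w - \bm M^T \bm \lambda \geq \bm e_{[n]}$.

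The crux is then to project $\bm \lambda$ out and verify that what remains is precisely $\text{Log-U-Bound}_K$. For fixed $\bm w \geq \bm 0$, the defining constraint ``$K \models \sum_\sigma w_\sigma h(\sigma) \geq h(\bm X)$'' from Eq.~\eqref{eq:def:log-u-bound} rewrites as $(\bm E^T \bm w - \bm e_{[n]})^T \bm h \geq 0$ for every $\bm h$ in the polyhedral cone $K = \{\bm h : \tilde{\bm M} \bm h \geq \bm 0\}$ with $\tilde{\bm M} = \binom{\bm I}{\bm M}$. By Farkas' lemma, this is equivalent to $\bm E^T \bm w - \bm e_{[n]}$ lying in the conic hull of the rows of $\tilde{\bm M}$, i.e., $\bm E^T \bm w - \bm e_{[n]} = \bm \mu + \bm M^T \bm \lambda$ for some $\bm \mu, \bm \lambda \geq \bm 0$, which rearranges to $\exists \bm \lambda \geq \bm 0 : \bm E^T \bm w - \bm M^T \bm \lambda \geq \bm e_{[n]}$. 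Hence the LP dual coincides with $\text{Log-U-Bound}_K$.

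Finally, strong LP duality closes the argument: since the primal is always feasible, either both LPs have a common finite optimum attained at some $\bm h^*, (\bm w^*, \bm \lambda^*)$, which turns $\sup$ into $\max$ and $\inf$ into $\min$ and gives the desired equality $h^*(\bm X) = \sum_\sigma w^*_\sigma b_\sigma$; or the primal is unbounded above, in which case the dual is infeasible and both bounds equal $+\infty$. I expect the main obstacle to be the Farkas step translating the validity statement ``$K \models \ldots$'' into nonnegativity of a linear functional on the cone $K$; once this identification is clean, the rest is routine linear programming.
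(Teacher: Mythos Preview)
Your proposal is correct and follows essentially the same route as the paper: both set up the primal LP for $\text{Log-L-Bound}_K$ with variables $\bm h$ and constraints $\bm E\bm h\leq\bm b$, $\bm M\bm h\geq 0$, $\bm h\geq 0$, take the LP dual with multipliers $(\bm w,\bm\lambda)$, and then identify the dual with $\text{Log-U-Bound}_K$ by showing that the existence of $\bm\lambda\geq 0$ in the dual constraint is equivalent to the validity statement $K\models\sum_\sigma w_\sigma h(\sigma)\geq h(\bm X)$. The only cosmetic difference is that the paper proves this last equivalence via a second auxiliary primal/dual LP pair, whereas you invoke Farkas' lemma directly; these are of course the same argument.
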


\begin{proof} Denote $s = |\Sigma|$, and let $\bm A$ be the
  $s \times 2^n$ matrix that maps $\bm h$ to the vector
  $\bm A \cdot \bm h = (h(\sigma))_{\sigma \in \Sigma}\in \R^s$.  Let
  $\bm c \in \R^{2^n}$ be the vector $c_{\bm X} = 1$, $c_{\bm U}=0$
  for $\bm U \neq \bm X$.  The two bounds are the optimal solutions to
  the following pair of primal/dual linear programs:
  \begin{align*}
    &
      \begin{array}{ll|ll}
        \multicolumn{2}{l|}{\text{Log-L-Bound}_K}& \multicolumn{2}{|l}{\text{Log-U-Bound}_K} \\ \hline
        \multicolumn{2}{l|}{\mbox{Maximize } \bm c^T \cdot \bm h} & \multicolumn{2}{|l}{\mbox{Minimize }  \bm w^T\cdot\bm b}\\
        \mbox{where} & \bm A \cdot \bm h \leq \bm b &  \mbox{where} & \bm w^T \cdot \bm A - \bm c^T \geq \bm u^T \cdot\bm M\\
                              & -\bm M \cdot \bm h \leq 0 & &
      \end{array}
  \end{align*}
  where the primal variables are $\bm h \geq 0$, and the dual
  variables are $\bm w, \bm u \geq 0$; the reader may check that the
  two programs above form indeed a primal/dual pair.
  $\text{Log-L-Bound}_K$ is by definition the optimal value of the
  program above.  We prove that $\text{Log-U-Bound}_K$ is the value of
  the dual.  First, observe that the
  $\Sigma$-inequality~\eqref{eq:conditional:ii} is equivalent to
  $(\bm w^T \cdot \bm A - \bm c^T)\cdot h \geq 0$.  We claim that this
  inequality holds $\forall \bm h \in K$ iff there exists $\bm u$
  s.t. $(\bm w,\bm u)$ is a feasible solution to the dual.  For that
  consider the following primal/dual programs with variables
  $\bm h\geq 0$ and $\bm u\geq 0$ respectively:
  \begin{align*}
    &
      \begin{array}{ll|ll}
        \multicolumn{2}{l|}{\mbox{Minimize } (\bm w^T \cdot\bm A - \bm c^T) \cdot \bm h} & \multicolumn{2}{|l}{\mbox{Maximize }  0}\\
        \mbox{where} & \bm M  \cdot \bm h \geq 0 &  \mbox{where} & \bm u^T \cdot \bm M\leq\bm w^T \bm A - \bm c^T 
      \end{array}
  \end{align*}
  The primal (left) has optimal value 0 iff the inequality
  $(\bm w^T \cdot \bm A - \bm c^T) \cdot \bm h\geq 0$ holds forall
  $\bm h \in K$; otherwise its optimal is $-\infty$.  The dual (right)
  has optimal value 0 iff there exists a feasible solution $\bm u$;
  otherwise its optimal is $-\infty$.  Strong duality proves our
  claim.
\end{proof}


When $K=\Gamma_n^*$, then the two terms in~\eqref{eq:l:leq:u:for:k}
may no longer be equal in general, but we prove that they are equal
asymptotically.  Call a {\em $k$-amplification} of a set of
log-statistics $\Sigma, \bm b$ the log-statistics $\Sigma, k\bm b$,
where $k$ is a natural number.  Observe that the entropic log-upper
bound increases linearly with the $k$-amplification:
\begin{align}
\text{Log-U-Bound}_{\Gamma_n^*}(Q,\Sigma,k\bm b) = &
k\text{Log-U-Bound}_{\Gamma_n^*}(Q,\Sigma,\bm b) \label{eq:linearity:log:u:bound}
\end{align}
The lower bound increases at least linearly,
$\text{Log-L-Bound}_{\Gamma_n^*}(Q,\Sigma,k\bm b) \geq
k\text{Log-L-Bound}_{\Gamma_n^*}(Q,\Sigma,\bm b)$, because of the
following proposition:

\begin{prop} \label{prop:sum} If $\bm h_1, \bm h_2$ are two entropic
  vectors, then so is $\bm h_1 + \bm h_2$. 
\end{prop}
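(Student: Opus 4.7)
The plan is to realize $\bm h_1 + \bm h_2$ by taking two \emph{independent} probability spaces that realize $\bm h_1$ and $\bm h_2$ respectively, and then reading off the entropic vector of the compound random variables obtained by pairing corresponding coordinates. This is exactly the probabilistic analogue of the domain product of Definition~\ref{def:domain:product}, which the paper has already used for similar independence constructions.

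Concretely, let $(R_1, p_1)$ realize $\bm h_1$, with random variables $\bm X = (X_1, \ldots, X_n)$, and let $(R_2, p_2)$ realize $\bm h_2$, with random variables $\bm X' = (X_1', \ldots, X_n')$. On the product sample space $R_1 \times R_2$, put the independent product distribution $p(\bm x, \bm x') \defeq p_1(\bm x) p_2(\bm x')$, so that the tuple $\bm X$ and the tuple $\bm X'$ are independent. For each $i \in [n]$ define a new random variable $Y_i \defeq (X_i, X_i')$, taking values in $\dom \times \dom$.

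For any $\alpha \subseteq [n]$, the joint variable $\bm Y_\alpha = (Y_i)_{i \in \alpha}$ is just the pair $(\bm X_\alpha, \bm X'_\alpha)$. Since $\bm X_\alpha$ and $\bm X'_\alpha$ are independent (as functions of the independent tuples $\bm X$ and $\bm X'$), standard additivity of entropy under independence yields
\begin{align*}
H(\bm Y_\alpha) = H(\bm X_\alpha, \bm X'_\alpha) = H(\bm X_\alpha) + H(\bm X'_\alpha) = h_1(\bm X_\alpha) + h_2(\bm X_\alpha).
\end{align*}
Hence the entropic vector of the probability space on $\bm Y = (Y_1, \ldots, Y_n)$ is exactly $\bm h_1 + \bm h_2$, so this vector is entropic.

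There is no real obstacle here; the only point worth checking is the additivity of entropy under independence, which follows by applying the definition~\eqref{eq:h} to the factored distribution $p(\bm x, \bm x') = p_1(\bm x) p_2(\bm x')$ and splitting the logarithm of the product.
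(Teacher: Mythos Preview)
Your proof is correct and is essentially the same as the paper's: the paper realizes $\bm h_1 + \bm h_2$ by the domain product $(R_1 \otimes R_2, p)$ with $p(\bm x_1 \otimes \bm x_2) \defeq p_1(\bm x_1)\cdot p_2(\bm x_2)$, which is precisely your construction $Y_i = (X_i, X_i')$ on the independent product space. You have simply written out in more detail the entropy-additivity step that the paper leaves implicit.
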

\begin{proof} Suppose $\bm h_1, \bm h_2$ are realized by two finite
  probability spaces $(R_1, p_1), (R_2,p_2)$.  Then their sum is
  realized by $(R_1 \otimes R_2, p)$ (see
  Def.~\ref{def:domain:product}), where
  $p(\bm x_1 \otimes \bm x_2)\defeq p(\bm x_1) \cdot p(\bm x_2)$.
\end{proof}

We prove:

\begin{thm} \label{th:primal:dual:bound} Fix any $Q, \Sigma, \bm b$.
  The entropic upper and lower bounds are {\em asymptotically equal},
  in the following sense:
  \begin{align}
  \sup_k \frac{\text{Log-L-Bound}_{\Gamma_n^*}(Q,\Sigma,k\bm b)}{\text{Log-U-Bound}_{\Gamma_n^*}(Q,\Sigma,k\bm b)}=&1 \label{eq:th:primal:dual:bound:entropic}
  \end{align}
\end{thm}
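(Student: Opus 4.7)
The plan is to pass to the closure $\bar{\Gamma}_n^*$, which the paper establishes (in Section~\ref{sec:conditional:inequalities}) to be a closed convex cone even though $\Gamma_n^*$ itself is neither convex nor a cone; exploit conic strong duality on this cone; and then approximate the witness back into $\Gamma_n^*$, using Proposition~\ref{prop:sum} to absorb any integrality via amplification.

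First, I observe that passing to the closure does not change the upper bound. Any valid information inequality $\sum_\sigma w_\sigma h(\sigma) \geq h(\bm X)$ on $\Gamma_n^*$ extends to $\bar{\Gamma}_n^*$ by continuity of linear forms, and the converse is trivial. Hence the set of feasible dual weights is identical, so $\text{Log-U-Bound}_{\bar{\Gamma}_n^*}(Q,\Sigma,\bm b) = \text{Log-U-Bound}_{\Gamma_n^*}(Q,\Sigma,\bm b) =: U$.

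Next, I invoke strong duality for the closed convex cone $\bar{\Gamma}_n^*$. The feasible region $\{\bm h \in \bar{\Gamma}_n^* : h(\sigma) \leq b_\sigma \text{ for all } \sigma \in \Sigma\}$ is closed and bounded (by monotonicity, $h(\bm X_\alpha) \leq h(\bm X) \leq U$), hence compact; duality yields $\bm h^* \in \bar{\Gamma}_n^*$ attaining the supremum with $h^*(\sigma) \leq b_\sigma$ and $h^*(\bm X) = U$. I then approximate $\bm h^*$ from inside $\Gamma_n^*$: for any $\gamma \in (0,1)$, the scaled vector $(1-\gamma)\bm h^*$ still lies in $\bar{\Gamma}_n^*$ (a cone) and strictly satisfies $(1-\gamma)h^*(\sigma) < b_\sigma$ whenever $b_\sigma > 0$. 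By the definition of closure there exists $\bm h_m \in \Gamma_n^*$ with $\bm h_m \to (1-\gamma)\bm h^*$; for $m$ large enough the strict slack absorbs the approximation error, so $\bm h_m \models (\Sigma, \bm b)$ and $h_m(\bm X) \geq (1-2\gamma)U$. For any integer $k \geq 1$, Proposition~\ref{prop:sum} makes $k\bm h_m$ entropic with $(k\bm h_m) \models (\Sigma, k\bm b)$ and $(kh_m)(\bm X) \geq k(1-2\gamma)U$. Combined with~\eqref{eq:linearity:log:u:bound} this gives $\text{Log-L-Bound}_{\Gamma_n^*}(Q,\Sigma,k\bm b) \geq (1-2\gamma)kU$ for every $k$; together with~\eqref{eq:l:leq:u:for:k} and $\gamma \to 0$, the theorem follows.

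The main obstacle is the strong-duality step. Because $\bar{\Gamma}_n^*$ is not polyhedral for $n \geq 4$ (as witnessed by the non-Shannon Zhang--Yeung inequality discussed in Section~\ref{sec:lower:bound}), Theorem~\ref{th:primal:dual:bound:polyhedral} does not apply directly. I would establish it by a hyperplane separation argument on closed convex cones: if the target value $U$ were not attained by any $\bm h \in \bar{\Gamma}_n^*$ satisfying the constraints, a separating hyperplane between the "forbidden" region and the cone would exhibit a nonnegative dual certificate $\bm w$ with $\sum_\sigma w_\sigma b_\sigma < U$, contradicting the definition of $U$. A secondary, more benign subtlety is that Proposition~\ref{prop:sum} only gives closure under integer sums rather than arbitrary positive scalings --- but this is precisely why the theorem is stated asymptotically in $k$, and the shrinking by $(1-\gamma)$ above lets us sidestep any integrality mismatch.
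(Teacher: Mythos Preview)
Your overall architecture matches the paper's: pass to the closed convex cone $\bar\Gamma_n^*$, establish that upper and lower bounds coincide there, then approximate the optimal $\bm h^*$ back into $\Gamma_n^*$ with some slack absorbed by amplification. However, there is a genuine gap in the approximation step when some $b_\sigma = 0$, i.e., when $\Sigma$ contains functional dependencies (the case $B_\sigma = 1$, which Definition~\ref{def:statistics} explicitly allows and which is used throughout the paper).

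Concretely: if $b_\sigma = 0$ then $h^*(\sigma) = 0$, and your scaled vector $(1-\gamma)\bm h^*$ still has $(1-\gamma)h^*(\sigma) = 0$ with \emph{no} slack. When you pick $\bm h_m \in \Gamma_n^*$ close to $(1-\gamma)\bm h^*$, you will typically have $h_m(\sigma) > 0$, and since the constraint is $h_m(\sigma) \leq 0$, it is violated no matter how large $m$ is. Your sentence ``for $m$ large enough the strict slack absorbs the approximation error'' is exactly where the argument breaks; the qualifier ``whenever $b_\sigma > 0$'' you inserted earlier is not a harmless restriction but the crux of the difficulty. The paper closes this gap by appealing to Chan and Yeung's Theorem~\ref{th:chan:groups} (together with the observation in Appendix~\ref{app:th:chan:groups} that the group-realizable approximants $\bm h^{(r)}$ can be chosen to satisfy \emph{exactly} every FD that $\bm h$ satisfies), so the approximation preserves the tight constraints on the nose. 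This is the content of Lemma~\ref{lmm:the:same:u:bound}.

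The same issue resurfaces in your strong-duality step. A separating-hyperplane argument for a general closed convex cone is not enough to rule out a duality gap; one needs a constraint qualification such as Slater's condition, and the cone $\bar\Gamma_n^*$ has empty interior (it lies in the hyperplane $h(\emptyset)=0$), while the tight constraints $h(\sigma)=0$ further obstruct any interior feasible point. The paper handles this (Theorem~\ref{th:primal:dual:bound:k}) by quotienting out the FDs: it passes to the lattice $L_{\Sigma_0}$ of closed sets and works with the projected cone $K_0 = \Pi_{L_0}(\bar\Gamma_n^* \cap F)$, which \emph{does} have nonempty interior, and then verifies Slater's condition there. Showing that this projection does not change $\text{Log-U-Bound}$ is itself nontrivial and is where the relaxation theorem (Theorem~\ref{th:relaxation}) enters. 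Your sketch does not address any of this machinery, and without it the argument is incomplete precisely in the FD case.
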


The proof of this result, which appears to be novel, requires a
discussion of the set of {\em almost-entropic functions}, and we defer
this to Sec.~\ref{sec:conditional:inequalities}.

Finally, we can answer the central question in this section: the
entropic bound is tight asymptotically, while the polymatroid bound is
not.

\begin{thm} \label{th:tight:non-tight} (1) For any query $Q$ and
  statistics $\Sigma, \bm B$, the entropic bound is asymptotically
  tight, in the following sense:
  \begin{align}
    \sup_{k}\frac{\sup_{\bm D: \bm D \models \bm B^k}\log |Q(\bm D)|}{\text{Log-L-Bound}_{\Gamma_n^*}(Q,\Sigma,k\bm b)}=&1 \label{eq:tight}
  \end{align}
  (2) The polymatroid bound is not asymptotically tight: there exists
  a query $Q$ and statistics $\Sigma, \bm B$ such that:
  \begin{align}
    \sup_{k}\frac{\sup_{\bm D: \bm D \models \bm B^k}\log |Q(\bm D)|}{\text{Log-L-Bound}_{\Gamma_n}(Q,\Sigma,k\bm b)}\leq&\frac{43}{44} \label{eq:non-tight}
  \end{align}
  Here $\bm B^k\defeq (B^k_\sigma)_{\sigma\in\Sigma}$, and
  $\bm b \defeq \log \bm B$.  Moreover, this property holds even if
  $\Sigma, \bm B$ consists only of cardinality constraints and
  functional dependencies, i.e. $\forall \sigma \in \Sigma$ either
  $\sigma = (\bm V)$ or $B_\sigma = 1$.
\end{thm}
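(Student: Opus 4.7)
I would prove Part (1) by a two-sided bound and Part (2) by exhibiting a concrete counterexample rooted in the Zhang--Yeung non-Shannon inequality.

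For the upper direction of Part (1), Corollary~\ref{cor:entropic:bound} gives $\log|Q(\bm D)| \leq \text{Log-U-Bound}_{\Gamma_n^*}(Q,\Sigma,k\bm b)$ for every $\bm D \models \bm B^k$, while Theorem~\ref{th:primal:dual:bound} yields $\sup_k \text{Log-U-Bound}_{\Gamma_n^*}(Q,\Sigma,k\bm b)/\text{Log-L-Bound}_{\Gamma_n^*}(Q,\Sigma,k\bm b) = 1$. Combining these two facts shows the supremum in~\eqref{eq:tight} is at most $1$.

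The substantive direction is $\sup \geq 1$, which requires constructing, for each $\epsilon > 0$, a database instance that approaches the entropic log-lower bound. Fix $k_0$ and choose $\bm h^* \in \Gamma_n^*$ with $\bm h^* \models (\Sigma, k_0\bm b)$ and $h^*(\bm X) \geq \text{Log-L-Bound}_{\Gamma_n^*}(Q,\Sigma,k_0\bm b) - \epsilon$. By definition $\bm h^*$ is realised by some (in general non-uniform) finite probability space $(R^*, p^*)$. Passing to $m$-fold products, I would invoke the method of strongly typical sequences: for large $m$, the typical set $T^m_\delta \subseteq (R^*)^m$, viewed as a relation with attribute set $\bm X$ over the compound domain $\dom^m$, is approximately uniform, has cardinality $2^{m h^*(\bm X)(1 \pm o(1))}$, and for each $\sigma = (\bm V|\bm U)$ has max-degree at most $2^{m h^*(\sigma)(1 + o(1))}$. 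Setting $\bm D_m \defeq (\Pi_{\bm Y_j}(T^m_\delta))_j$ and absorbing the sub-exponential slack into the $k$-amplification, I obtain $\bm D_m \models \bm B^{mk_0(1+o(1))}$ with $\log|Q(\bm D_m)| \geq m h^*(\bm X)(1-o(1))$. Dividing by $\text{Log-L-Bound}_{\Gamma_n^*}(Q,\Sigma,(mk_0)\bm b) \geq m h^*(\bm X)$ (via Proposition~\ref{prop:sum} iterated) and letting $\epsilon \to 0$, $m \to \infty$ drives the ratio to $1$. The main obstacle is the simultaneous concentration, for all $\sigma \in \Sigma$, of the projected max-degrees together with the output cardinality; this is a standard but careful application of the strong AEP.

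For Part (2), I would construct a query on four variables and a set $\Sigma$ of cardinality and FD statistics such that the polymatroid bound is strictly larger than the entropic one. The key ingredient is a Zhang--Yeung-type non-Shannon inequality $\bm c \cdot \bm h \geq 0$ on four variables, valid on $\Gamma_4^*$ but violated by some $\bm h_0 \in \Gamma_4$. I would design $Q$ and $\Sigma$ so that the LP dual defining $\text{Log-L-Bound}_{\Gamma_n}$ is optimised in the direction of $\bm h_0$, yielding a value $V_\Gamma$, while the same LP restricted to $\Gamma_n^*$ (equivalently, strengthened by the Zhang--Yeung inequality) drops to some $V^* \leq (43/44) V_\Gamma$; the constant falls out of a direct calculation plugging the Zhang--Yeung coefficients into the dual LP. Applying Part (1) gives $\sup_{\bm D \models \bm B^k} \log|Q(\bm D)| \leq (1+o(1)) k V^*$, while Theorem~\ref{th:primal:dual:bound:polyhedral} gives $\text{Log-L-Bound}_{\Gamma_n}(Q,\Sigma,k\bm b) = k V_\Gamma$, hence the ratio in~\eqref{eq:non-tight} is bounded by $V^*/V_\Gamma \leq 43/44$. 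The hardest step is engineering $Q$ and $\Sigma$ (restricted to cardinalities and FDs) so that the polymatroid LP optimum is genuinely witnessed by a Zhang--Yeung-violating polymatroid rather than an entropic one, and so that the gap specialises precisely to $43/44$ rather than a weaker constant.
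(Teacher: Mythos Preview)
Your plan for Part~(1) contains a genuine gap in the lower-direction argument. You write ``Dividing by $\text{Log-L-Bound}_{\Gamma_n^*}(Q,\Sigma,(mk_0)\bm b) \geq m h^*(\bm X)$ \ldots\ drives the ratio to $1$.'' But this inequality bounds the denominator from \emph{below}, which only re-establishes that the ratio is $\leq 1$ --- the upper direction you already handled. To push the ratio toward $1$ from below you need an \emph{upper} bound on $\text{Log-L-Bound}_{\Gamma_n^*}(Q,\Sigma, m k_0 \bm b)$, and Proposition~\ref{prop:sum} (superadditivity of the entropic L-bound) points the wrong way. The paper closes this gap by routing through $\text{Log-U-Bound}_{\Gamma_n^*}$, which is \emph{exactly} linear in $\bm b$ (Eq.~\eqref{eq:linearity:log:u:bound}): first choose $k_0$ via Theorem~\ref{th:primal:dual:bound} so that $\text{Log-L-Bound}_{\Gamma_n^*}(k_0\bm b) \geq (1-\varepsilon)k_0 U$ where $U \defeq \text{Log-U-Bound}_{\Gamma_n^*}(\bm b)$; then build $\bm D$ with $\log|Q(\bm D)| \geq (1-\varepsilon)^4 kU$; finally bound the denominator by $\text{Log-L-Bound}_{\Gamma_n^*}(k\bm b) \leq \text{Log-U-Bound}_{\Gamma_n^*}(k\bm b) = kU$. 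Your arbitrary choice of $k_0$ does not give this control. That said, your typical-sequences construction is a legitimate alternative to the paper's Chan--Yeung group realizability (Theorem~\ref{th:chan:groups}) plus Slack Lemma; both convert a non-uniform entropic witness into a relation with controlled max-degrees, though the group route gives exact degree equalities and sidesteps the AEP max-degree concentration you correctly flag as delicate.

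For Part~(2) your strategy is right but the claim of ``four variables'' is too optimistic. The paper needs a fifth variable $C$: the raw Zhang--Yeung inequality~\eqref{eq:zhang:yeung} is not a $\Sigma$-inequality of the form~\eqref{eq:conditional:ii} with $h(\bm X)$ on the right and only cardinality and FD terms on the left; the extra variable (together with several auxiliary submodularities) is used to recast it as such, yielding inequality~\eqref{eq:non-shannon}. The paper then exhibits an explicit five-variable polymatroid (Fig.~\ref{fig:h-abxyc}) satisfying all the FDs and cardinality bounds with $h(ABXYC)=4$, so $\text{Log-L-Bound}_{\Gamma_n}(k\bm b) \geq 4k$, while~\eqref{eq:non-shannon} gives $\log|Q(\bm D)| \leq 43k/11$ directly via Corollary~\ref{cor:entropic:bound}, producing exactly $43/44$. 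Note in particular that Part~(2) does not require Part~(1); your detour through asymptotic tightness of the entropic bound is unnecessary.
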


Equivalently, Eq.~\eqref{eq:tight} says that, for any statistics
$\Sigma,\bm B$, if we allow a sufficiently large amplification
$\Sigma, \bm B^k$, then there exists a worst-case instance $\bm D$
satisfying the amplified statistics such that $\log |Q(\bm D)|$
approaches $\text{L-Bound}_{\Gamma_n^*}(Q,\Sigma,\bm B^k)$.  Notice
that this is a weaker notion of tightness than for the AGM bound in
Theorem~\ref{th:agm:lower:bound}.
There, tightness referred to the ratio betwen the lower and upper
bound, while here tightness refers to the ratio of their logarithms, a
weaker notion.  Eq.~\eqref{eq:non-tight} says that even this weaker
tightness fails for the polymatroid bound.

The first proof of asymptotic tightness was given by Gogacz and
Torunczyk~\cite{DBLP:conf/icdt/GogaczT17}, for the restricted case
when the statistics are either cardinalities, or functional
dependencies.  The general case was proven
in~\cite{DBLP:conf/pods/Khamis0S17}.  Both results were stated
slightly differently from ours, by using almost-entropic functions.
We prefer to state our result in terms of the entropic functions,
since it is more natural, and defer the discussion of almost entropic
functions to Sec.~\ref{sec:conditional:inequalities}, where they have
a very natural justification.

In the rest of this section we prove Theorem~\ref{th:tight:non-tight}.
We use this opportunity to continue our dive into the fascinating
world of entropic functions, and non-Shannon inequalities, which are
needed for the proof.  However, the rest of this section is rather
technical, and readers not interested in this background may safely
skip the rest of this section, since we do not need it, except for the
short introduction of mutual information.

\subsection{Background: Non-Shannon Inequalities, Lattices, Groups}

\label{sec:background:non-shannon}

{\bf Mutual Information}
Given a vector $\bm h \in \R^{2^{[n]}}$ and three disjoint sets of
variables $\bm U, \bm V, \bm W \subseteq \bm X$, we denote by:
\begin{align}
  I_h(\bm V; \bm W|\bm U) \defeq & h(\bm U\bm V) + h(\bm U\bm W)-h(\bm U) - h(\bm U\bm V \bm V)
\label{eq:mi}
\end{align}
When $h$ is clear from the context, then we will drop the index $h$
from $I_h$ and write simply $I$.  When $h$ is an entropic vector, then
$I_h(\bm V; \bm W|\bm U)$ is called the {\em mutual information of
  $\bm V, \bm W$ conditioned on $\bm U$}.  In that case,
$I_h(\bm V; \bm W|\bm U) = 0$ iff the probability space realizing $h$
satisfies $\bm V \perp \bm W | \bm U$, meaning that $\bm V, \bm W$ are
independent conditioned on $\bm U$.  With some abuse, we will call
$I_h$ a conditional mutual information even when $\bm h$ is a
polymatroid.  The following properties hold:

\begin{prop} \label{prop:I:properties}
  For any polymatroid $h$:
  \begin{itemize}
  \item $I_h(\bm V; \bm W|\bm U) \geq 0$ (this is a submodularity
    inequality).
  \item The {\em chain rule} holds: \\
    $I_h(\bm U \bm V; \bm W|\bm Z) = I_h(\bm U;\bm W|\bm Z) + I_h(\bm  V;\bm W|\bm U \bm Z)$.
  \item An {\em elemental mutual information term} is an expression of
    the form $I_h(X_i;X_j|\bm U)$, where $X_i\neq X_j \not\in \bm U$
    are single variables.  Every mutual information
    $I(\bm V;\bm W|\bm U)$ is the sum of elemental terms.
  \item For any subsets $\bm U_0\subseteq \bm U$ and
    $\bm V_0 \subseteq \bm V$:
    \begin{align}
    I_h(\bm U_0;\bm V_0|\bm Z) \leq & I_h(\bm U; \bm V | \bm Z) \label{eq:I:chain:condition}
    \end{align}
    If $I_h(\bm U; \bm V | \bm Z)=0$, then
    $I_h(\bm U_0;\bm V_0|\bm Z) = 0$.
  \end{itemize}
\end{prop}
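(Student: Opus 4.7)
The plan is to prove all four properties by algebraic manipulation from the definition~\eqref{eq:mi}, with properties (3) and (4) following from (1) and (2) in sequence.

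For property (1), I would just expand the definition. Since $\bm U, \bm V, \bm W$ are disjoint, the sets $\bm{UV}$ and $\bm{UW}$ satisfy $\bm{UV} \cup \bm{UW} = \bm{UVW}$ and $\bm{UV} \cap \bm{UW} = \bm U$, so $I_h(\bm V;\bm W|\bm U) \geq 0$ is literally the submodularity inequality~\eqref{eq:submodularity} applied to $\bm{UV}$ and $\bm{UW}$. For property (2), I would simply expand both sides of the chain rule using~\eqref{eq:mi} and \eqref{eq:conditional} and observe that the terms $h(\bm{ZU})$ and $h(\bm{ZUW})$ cancel, leaving the claimed identity. Neither step requires anything beyond the basic Shannon inequalities.

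For property (3), the plan is to apply the chain rule iteratively. Writing $\bm V = \{V_1, \ldots, V_p\}$ and $\bm W = \{W_1, \ldots, W_q\}$, repeated application of (2) with the left argument peels off variables one at a time:
\begin{align*}
I(\bm V;\bm W|\bm U) = \sum_{i=1}^{p} I(V_i; \bm W \mid \bm U \cup \{V_1,\ldots,V_{i-1}\}).
\end{align*}
Applying (2) again on the right argument of each term splits $\bm W$ into singletons, yielding a sum of elemental terms $I(V_i; W_j \mid \bm U')$ for enlarged conditioning sets $\bm U'$. All indices are distinct from each other and from the conditioning set by construction, so these are genuine elemental terms.

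For property (4), I would again use the chain rule to decompose $I_h(\bm U;\bm V|\bm Z)$ so that $I_h(\bm U_0; \bm V_0 | \bm Z)$ appears as one summand and all other summands are mutual informations, hence nonnegative by (1). Concretely:
\begin{align*}
I_h(\bm U; \bm V|\bm Z) &= I_h(\bm U_0; \bm V|\bm Z) + I_h(\bm U \setminus \bm U_0; \bm V \mid \bm U_0 \bm Z) \\
&= I_h(\bm U_0; \bm V_0|\bm Z) + I_h(\bm U_0; \bm V \setminus \bm V_0 \mid \bm V_0 \bm Z) + I_h(\bm U \setminus \bm U_0; \bm V \mid \bm U_0 \bm Z),
\end{align*}
and dropping the last two nonnegative terms gives~\eqref{eq:I:chain:condition}. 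The ``zero implies zero'' clause is then immediate: if the left-hand side of~\eqref{eq:I:chain:condition} vanishes, each of the nonnegative summands must vanish individually, so $I_h(\bm U_0;\bm V_0 \mid \bm Z)=0$. No step here is a genuine obstacle; the only mild subtlety is being careful that the variable sets involved in each chain-rule application are disjoint, which holds because we split off sets from $\bm U$ and $\bm V$ and push the removed portion into the conditioning set.
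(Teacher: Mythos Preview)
Your proposal is correct and follows the standard route; the paper itself does not supply a proof of this proposition, treating it as routine background, so there is no paper proof to compare against. One trivial wording slip: in the last paragraph you write ``if the left-hand side of~\eqref{eq:I:chain:condition} vanishes,'' but in that labeled inequality the quantity assumed to vanish, $I_h(\bm U;\bm V|\bm Z)$, is the right-hand side; you presumably meant the left-hand side of your displayed decomposition.
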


{\bf Non-Shannon inequalities} The first non-Shannon inequality was
proven by Zhang and Yeung~\cite{zhang1998characterization}.  We review
it here, following the simplified presentation by
Romashchenko~\cite{Romashchenko-talk-2022-11-09} (see also
Csirmaz~\cite{Csirmaz-talk-2022-10-12}).

Zhang and Yeung~\cite{zhang1998characterization} proved the following:

\begin{thm} \label{th:zhang:yeung}
  The following  is a non-Shannon inequality:
  \begin{align}
    I_h(X ; Y) \leq & I_h(X ; Y |A)+I_h(X ; Y |B)+I_h(A; B) \label{eq:zhang:yeung} \\
             + & I_h(X ; Y |A)+I_h(A; Y |X)+I_h(A; X |Y) \nonumber
  \end{align}
\end{thm}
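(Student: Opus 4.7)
My plan is to prove Theorem~\ref{th:zhang:yeung} via the \emph{copy lemma} construction of Zhang and Yeung. The key idea is that although \eqref{eq:zhang:yeung} is not a positive combination of Shannon inequalities on four variables, it does become derivable from Shannon inequalities once we adjoin auxiliary random variables that are conditionally independent copies of $(A,B)$ given $(X,Y)$.

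Concretely, I would fix a joint distribution $P$ on $(X,Y,A,B)$ realizing $\bm h$ and extend it to a six-variable distribution on $(X,Y,A,B,A',B')$ by declaring that, conditional on $(X,Y)$, the pair $(A',B')$ is independent of $(A,B)$ and has the same conditional distribution as $(A,B)$. Let $\bm h'$ be the entropic vector of the extended distribution. By construction, $h'(\sigma') = h(\sigma)$ for every ``primed'' statistic $\sigma'$ (obtained from an unprimed one by swapping $A,B$ for $A',B'$), the marginals on $(X,Y,A,B)$ and on $(X,Y,A',B')$ both coincide with $\bm h$, and the conditional-independence identity
\[
  I_{\bm h'}(AB; A'B' \mid XY) \;=\; 0
\]
holds.

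Since $\bm h'$ is entropic, it is a polymatroid and therefore satisfies every Shannon inequality. I would then assemble a nonnegative combination of submodularities on $\bm h'$, expand each term using the chain rule from Prop.~\ref{prop:I:properties}, and finally collapse primed entropies back to unprimed ones via the identities above. The coefficient $2$ on $I_h(X;Y\mid A)$ in \eqref{eq:zhang:yeung} should arise precisely because both the primed and the unprimed copies of $A$ contribute symmetric terms once $I_{\bm h'}(AB; A'B' \mid XY) = 0$ has been substituted, causing one copy's conditional mutual information to be identified with the other's.

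The main obstacle is purely bookkeeping: identifying the exact nonnegative combination of elemental Shannon inequalities on the six-variable vector $\bm h'$ whose projection back onto the original four variables reproduces \eqref{eq:zhang:yeung} with the correct coefficients. That no such combination exists on four variables alone is precisely what makes \eqref{eq:zhang:yeung} a non-Shannon inequality as claimed; the copy lemma is the device that unlocks it by moving the proof into a higher-dimensional entropic space where the conditional-independence constraint is available as a free algebraic identity.
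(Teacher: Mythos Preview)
Your approach to the validity half of the theorem is essentially the paper's: construct the copy $(A',B')$ of $(A,B)$ over $(X,Y)$, then exploit the entropic vector on the enlarged variable set, where the conditional-independence identity $I(AB;A'B'\mid XY)=0$ becomes available, to derive~\eqref{eq:zhang:yeung} from Shannon inequalities on the extended space. The paper does exactly this, starting from an explicit five-variable Shannon inequality~\eqref{eq:shannon:complicated} (in fact $B'$ is never used), substituting the copy identities, and collapsing back to four variables. Your ``bookkeeping'' concern is real: the paper defers the verification of~\eqref{eq:shannon:complicated} to Appendix~\ref{app:eq:shannon:complicated}, where it is checked by an explicit algebraic identity.

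However, there is a genuine gap. The theorem asserts that~\eqref{eq:zhang:yeung} is a \emph{non-Shannon} inequality, which has two parts: (i) it is valid for all entropic vectors, and (ii) it is \emph{not} valid for all polymatroids, i.e.\ it cannot be obtained as a nonnegative combination of basic Shannon inequalities on four variables. You address only~(i). Your final paragraph merely restates the definition of ``non-Shannon'' without proposing any argument for~(ii). The paper proves~(ii) by exhibiting a concrete polymatroid (Fig.~\ref{fig:zhang:yeung:h}) on which $I(X;Y)=1$ while every term on the right-hand side of~\eqref{eq:zhang:yeung} vanishes. Without such a witness, the theorem is only half proved; the copy-lemma argument by itself shows validity, not that the inequality lies outside the Shannon cone.
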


In other words, this inequality is valid for entropic vectors, but it
cannot be proven using the basic Shannon inequalities, hence the term
{\em non-Shannon}.  The proof of this inequality, and that of several
other non-Shannon inequalities proven after 1998, relies on the
following {\em copy lemma}.

\begin{lmm}[Copy Lemma] \label{lmm:copy}
  Let $\bm X, \bm Y$ be two disjoint sets of variables, and let
  $\bm h$ be an entropic vector with variables $\bm X \bm Y$.  Let
  $\bm Y'$ be fresh copies of the variables $\bm Y$.  Thus, each
  variable $Y \in \bm Y$ has a copy $Y' \in \bm Y'$.  Then there
  exists an entropic vector $\bm h'$ over variables $\bm X\bm Y\bm Y'$
  such that the following hold:
  \begin{align*}
    & I_{h'}(\bm Y; \bm Y'|\bm X) = 0 \\
    \forall \bm U \subseteq \bm X\bm Y:\ \ &h'(\bm U) = h(\bm U) \nonumber\\
    \forall \bm U' \subseteq \bm X\bm Y':\ \ &h'(\bm U') =  h(\bm U) \nonumber
  \end{align*}
  We say that $\bm Y'$ is a {\em copy of $\bm Y$ over $\bm X$}.
\end{lmm}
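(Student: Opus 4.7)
The plan is to realize $\bm h'$ by an explicit probability space constructed as the conditional product of two independent copies of $\bm Y$ given $\bm X$. Suppose $\bm h$ is realized by $(R, p)$ with $R \subseteq \dom^{\bm X \bm Y}$, and let $p(\bm x) \defeq \sum_{\bm y} p(\bm x, \bm y)$ denote the marginal on $\bm X$. Over the domain $\dom^{\bm X \bm Y \bm Y'}$ I would define
$$p'(\bm x, \bm y, \bm y') \defeq \frac{p(\bm x, \bm y)\, p(\bm x, \bm y')}{p(\bm x)}$$
whenever $p(\bm x) > 0$, and $0$ otherwise; then let $R'$ be its support and take $\bm h'$ to be the entropic vector of $(R', p')$.

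My first step would be to verify that $p'$ is a probability distribution: summing over $\bm y, \bm y'$ yields $p(\bm x)$ by the telescoping $\sum_{\bm y} p(\bm x, \bm y) = p(\bm x)$, and then summing over $\bm x$ gives $1$. Next, I would check the marginal identity on $\bm X \bm Y$: summing $p'$ out over $\bm y'$ collapses to $p(\bm x, \bm y)$, so the joint distribution of $(\bm X, \bm Y)$ under $p'$ agrees with that under $p$, and hence $h'(\bm U) = h(\bm U)$ for every $\bm U \subseteq \bm X \bm Y$. By the symmetry of the defining formula in $\bm Y$ and $\bm Y'$, the same reasoning gives the identity for subsets of $\bm X \bm Y'$, interpreted via the canonical bijection $\bm Y' \leftrightarrow \bm Y$. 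Finally, the definition rearranges as $p'(\bm y, \bm y' \mid \bm x) = p(\bm y \mid \bm x) \cdot p(\bm y' \mid \bm x)$, which is exactly the statement $\bm Y \perp \bm Y' \mid \bm X$ under $p'$, so $I_{h'}(\bm Y; \bm Y' \mid \bm X) = 0$.

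The only subtleties I anticipate are two bookkeeping points, not genuine obstacles. First, one must handle the boundary case $p(\bm x) = 0$; the convention $p' = 0$ there is harmless because such $\bm x$ contribute nothing to marginals, supports, or entropies. Second, the claim $h'(\bm U') = h(\bm U)$ for $\bm U' \subseteq \bm X \bm Y'$ must be read with $\bm U$ the image of $\bm U'$ under the primed-to-unprimed bijection; with this reading the equality is immediate from the symmetric $\bm X \bm Y'$-marginal. No non-trivial information-theoretic step is needed, which is why the Copy Lemma serves as a clean primitive for deriving non-Shannon inequalities such as~\eqref{eq:zhang:yeung}.
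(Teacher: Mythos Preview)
Your proposal is correct and follows exactly the same construction as the paper: both define $p'(\bm x,\bm y,\bm y') = p(\bm x,\bm y)\,p(\bm x,\bm y')/p(\bm x)$ and note that the three claimed properties follow directly. Your write-up simply spells out the verifications that the paper leaves to the reader.
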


The first inequality asserts $\bm Y \perp \bm Y' | \bm X$.  The second
asserts that $\bm h, \bm h'$ agree on $\bm X\bm Y$.  And the last
equality asserts that $\bm h'$ on $\bm X\bm Y'$ is identical to
$\bm h$ on $\bm X\bm Y$ up to the renaming of variables from $\bm Y'$
to $\bm Y$ (assuming $X'=X$ for $X\in \bm U$).

\begin{proof} (of Lemma~\ref{lmm:copy}) Let $p$ be a probability
  distribution of random variables $\bm X\bm Y$ that realizes the
  entropic vector $\bm h$.  Define the following probability
  distribution $p'$ of random variables $\bm X \bm Y \bm Y'$: the
  domains of the variables $\bm Y'$ is the same as that of $\bm Y$,
  and for all outcomes $(\bm x, \bm y, \bm y')$,
  $p'(\bm X=\bm x, \bm Y = \bm y, \bm Y' = \bm y')\defeq$
  $\frac{p(\bm X = \bm x, \bm Y = \bm y)p(\bm X = \bm x, \bm Y = \bm
    y')}{p(\bm X= \bm x)}$.  The claims in the lemma are easily
  verified.
\end{proof}

In general, the copy lemma does not hold for polymatroids, as we will
see shortly.  We prove now Zhang and Yeung's
inequality~\eqref{eq:zhang:yeung}.  Start from the following Shannon
inequality over 5 variables, $X,Y,A,B,A'$:
\begin{align}
  I_h&(X ; Y) \leq I_h(X ; Y |A)+I_h(X ; Y |B)+I_h(A; B)+ \nonumber \\
  + & I_h(X ; Y |A')+I_h(A'; Y |X)+I_h(A'; X |Y) + \nonumber \\
  + & 3I_h(A';AB|XY)  \label{eq:shannon:complicated}
\end{align}
While this is ``only'' a Shannon inequality, it is surprisingly
difficult to prove; we invite the readers to try it themselves, but,
for completeness, we give the proof in
Appendix~\ref{app:eq:shannon:complicated}.  Consider now an entropic
vector $\bm h$ over four variables, $X,Y,A,B$.  We apply the copy
lemma, and copy $AB$ over $XY$, resulting in an entropic vector $h'$
with variables $X,Y,A,B,A',B'$.  In particular, $h'$
satisfies~\eqref{eq:shannon:complicated} (we don't use $B'$).  Now we
observe that (a) $I(A';AB|XY) = 0$, and (b) every occurrence of $A'$
in the second line can be replaced by $A$; for example
$I(X ; Y |A')=I(X ; Y |A)$, because $I(X ; Y |A')$ is expressed in
terms of $h(A'),h(XA'),h(YA'),h(XYA')$, which are equal to their
copies $h(A),h(XA),h(YA),h(XYA)$.  Thus,
inequality~\eqref{eq:shannon:complicated}
becomes~\eqref{eq:zhang:yeung}, proving that~\eqref{eq:zhang:yeung} is
valid for all entropic functions $\bm h$.

It remains to prove that~\eqref{eq:zhang:yeung} is not a Shannon
inequality, and for that it suffices to describe one polymatroid that
fails the inequality.  To ``see'' this polymatroid, it is best to view
it as being defined over a lattice.  We take this opportunity to
discuss another important concept: polymatroids on lattices.

{\bf Polymatroids on lattices} A {\em polymatroid} on a lattice
$(L,\preceq)$ is a function $h : L \rightarrow \R_+$ satisfying:
\begin{align*}
  h(\hat 0) = & 0 \\
  h(x \vee y) \geq & h(x) && \mbox{monotonicity} \\
  h(x) + h(y) \geq & h(x \vee y) + h(x \wedge y) && \mbox{submodularity}
\end{align*}

Let $\bm X$ be a set of $n$ variables, and $\Sigma$ a set of
functional dependencies for $\bm X$.  Recall from
Sec.~\ref{sec:problem} that $(L_\Sigma,\subseteq)$ is the lattice of
closed sets.  If a (standard) polymatroid $h \in \Gamma_n$ satisfies
the functional dependencies $\Sigma$, then it is not hard to see that
its restriction to $L_\Sigma$ is a polymatroid on $L_\Sigma$.
Conversely, any polymatroid $\bm h$ on the lattice
$(L_\Sigma,\subseteq)$ can be extended to a standard polymatroid
$\bm \bar h:2^{\bm X} \rightarrow \Rp$ by setting
$\bar h(\bm U) \defeq h(\bm U^+)$, and, furthermore, $\bm h$ satisfies
$\Sigma$.  In short, there is a one-to-one correspondence between
polymatroids satisfying a set of functional dependencies, and
polymatroids defined on the associated lattice.


We now complete the proof of Theorem~\ref{th:zhang:yeung}, by showing
that inequality~\eqref{eq:zhang:yeung} does not hold for the
polymatroid $\bm h$ in Fig.~\ref{fig:zhang:yeung:h}.  To read the
figure, recall that $h(\bm U) = h(\bm U^+)$ for any set $\bm U$.  For
example, $ABX^+ = ABXY$, therefore $h(ABX) = h(ABX^+) = h(ABXY) = 4$,
and, also, $h(AB) = h(AB^+) = h(ABXY) = 4$.  We check now that $\bm h$
violates the inequality~\eqref{eq:zhang:yeung}:
\begin{align*}
  & I(X;Y) =  1 \\
  & I(X ; Y |A)+I(X ; Y |B)+I(A; B) = 0 + 0 + 0 \\
  & I(X ; Y |A)+I(A; Y |X)+I(A; X |Y) = 0 + 0 + 0
\end{align*}
The LHS of~\eqref{eq:zhang:yeung} is 1, while the RHS is 0.  This
completes the proof of Theorem~\ref{th:zhang:yeung}.

\begin{figure}
  \centering
    \includegraphics[width=0.45\linewidth]{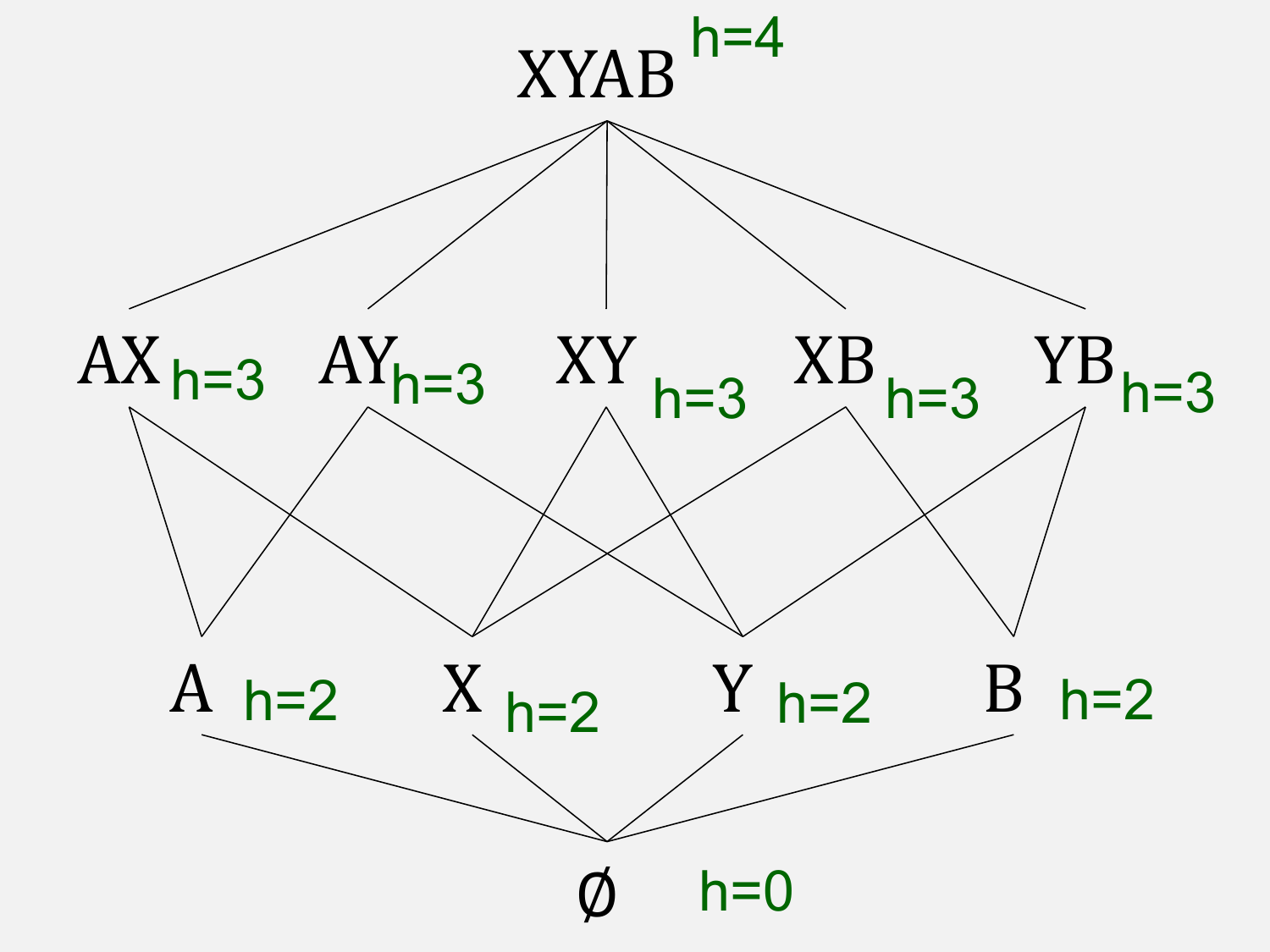}
    \caption{A lattice, and the polymatroid
      from~\cite{zhang1998characterization} defined on the lattice.}
  \label{fig:zhang:yeung:h}
\end{figure}

As a final comment, we note that it is instructive to check directly
that the polymatroid in Fig.~\ref{fig:zhang:yeung:h} fails to satisfy
the copy lemma, without using Zhang and Yeung's inequality; we
provide a direct proof in Appendix~\ref{app:eq:shannon:complicated}.

{\bf Group-theoretic characterization of information inequalities}
Chan and Yeung~\cite{DBLP:journals/tit/ChanY02} described an elegant
characterization of information inequalities in terms of group
inequalities.  Given a finite group $G$ and a subgroup
$G_1 \subseteq G$, a {\em left coset} is a set of the form $a G_1$,
for some $a \in G$.  By Lagrange's theorem, the set of left cosets,
denoted $G/G_1$, forms a partition of $G$, and $|G/G_1| = |G|/|G_1|$.
Fix $n$ subgroups $G_1, \ldots, G_n$, and consider
the relational instance:
\begin{align}
R = & \setof{(aG_1,\ldots, aG_n)}{a \in G} \label{eq:group:realization}
\end{align}
whose set of attributes we identify, as usual, with
$\bm X=\set{X_1, \ldots, X_n}$.  Notice that
$|R| = |G|/|\bigcap_{i=1,n}G_i|$.  The entropic vector $\bm h$
associated to the relation $R$ (Def.~\ref{def:uniform}) is called a
{\em group realizable entropic vector}, and the set of group
realizable entropic vectors is denoted by
$\Upsilon_n \subseteq \Gamma_n^*$, see Fig.~\ref{fig:diagram}.  One
can check that, for any subset of variables $\bm U \subseteq \bm X$,
$h(\bm U) =\log |G|/|\bigcap_{X_i \in \bm U} G_i|$.  The following was
proven in~\cite{DBLP:journals/tit/ChanY02}:

\begin{thm} \label{th:chan:groups}
  For any $\bm h \in \Gamma_n^*$ there exists a sequence
  $\bm h^{(r)} \in \Upsilon_n$, such that
  $\lim_{r \rightarrow \infty} \frac{1}{r} \bm h^{(r)} = \bm h$.
\end{thm}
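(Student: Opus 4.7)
The plan is to follow the classical Chan--Yeung construction: use the symmetric group $S_r$ together with partition-stabilizer subgroups dictated by the variables $X_i$, and read off entropies from the resulting multinomial coefficients via Stirling's formula. The whole point is that a coset count of a coarsened partition is a multinomial coefficient whose log is, up to lower-order terms, exactly $r$ times the entropy of the corresponding marginal.

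First I would fix a finite probability space $(R,p)$ on $\bm X$ realizing $\bm h$, and for each natural number $r$ build a rational approximation $p_r(\bm x) = n_r(\bm x)/r$ by rounding $rp(\bm x)$ to the nearest integer and adjusting a few entries so that $\sum_{\bm x} n_r(\bm x) = r$. Since $R$ is finite, $\|p_r - p\|_\infty = O(1/r)$, and by continuity of the entropy in $p$, the entropy of every marginal under $p_r$ differs from that under $p$ by $O((\log r)/r)$.

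Next, for each $r$ take $G = S_r$ acting on $[r]$, and partition $[r]$ into cells $\{C_{\bm x}\}_{\bm x \in R}$ of sizes $n_r(\bm x)$. For each variable $X_i$, coarsen this partition by merging $C_{\bm x}$ and $C_{\bm x'}$ whenever $\bm x.X_i = \bm x'.X_i$, and let $G_i \subseteq G$ be the pointwise stabilizer of the coarsened partition --- a direct product of symmetric groups on its blocks. Via~\eqref{eq:group:realization} this defines a group-realizable entropic vector $\bm h^{(r)} \in \Upsilon_n$. For any $\bm U \subseteq [n]$, the intersection $\bigcap_{i \in \bm U} G_i$ is the stabilizer of the common refinement of these partitions, whose blocks correspond to distinct $\bm X_{\bm U}$-values and have sizes $rp_r(\bm X_{\bm U}=\bm u)$. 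Consequently
\[
h^{(r)}(\bm X_{\bm U}) \;=\; \log\frac{|G|}{|\bigcap_{i \in \bm U} G_i|} \;=\; \log r! \;-\; \sum_{\bm u} \log\bigl(rp_r(\bm u)\bigr)!.
\]

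Applying Stirling's formula $\log k! = k\log k - k + O(\log k)$ to each factorial, using $\sum_{\bm u} p_r(\bm u) = 1$ to cancel the linear-in-$r$ and $r\log r$ contributions, yields
\[
h^{(r)}(\bm X_{\bm U}) \;=\; r\, H_{p_r}(\bm X_{\bm U}) \;+\; O(\log r),
\]
where $H_{p_r}$ denotes entropy under $p_r$. Dividing by $r$, the Stirling error contributes $O((\log r)/r) \to 0$, and $H_{p_r}(\bm X_{\bm U}) \to H_p(\bm X_{\bm U}) = h(\bm X_{\bm U})$ by continuity. Hence $\frac{1}{r}\bm h^{(r)} \to \bm h$ coordinatewise, as required. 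The main technical obstacle is coordinating the rounding error in $p_r$ with the Stirling remainder so that both decay as $r \to \infty$; once the group construction is in place, the rest is bookkeeping with multinomials.
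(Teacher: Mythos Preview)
Your proposal is correct and follows essentially the same approach as the paper (which in turn follows Chan--Yeung): symmetric group $S_r$, column/partition stabilizers $G_i$, and the coset count as a multinomial coefficient whose logarithm is handled by Stirling. The only cosmetic difference is that the paper first passes to a fixed rational distribution and then lets $r$ range over multiples of the common denominator, whereas you re-round for each $r$; both work, and your write-up actually spells out the Stirling step that the paper leaves as ``a combinatorial argument.''
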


It follows easily from the original proof that, if $\bm h$ satisfies a
set of functional dependencies, then so do all functions
$\bm h^{(r)}$, for $r \geq 0$; for completeness, we will include the
argument in Appendix~\ref{app:th:chan:groups}.


{\bf Open Problems} Characterizing the valid entropic information
inequalities is a major open problem.
Mat{\'{u}}s~\cite{DBLP:conf/isit/Matus07} proved that, for $n\geq 4$,
there are infinitely many independent non-Shannon inequalities.
Currently, the only techniques known for proving such inequalities
consists of repeated applications of Shannon inequalities and the Copy
Lemma.

A related open problem is the complexity of deciding Shannon
inequalities: what is the complexity of checking
$\Gamma_n \models \bm c \cdot \bm h \geq 0$, as a function of
$||\bm c||_1$?  It is implicit in the proof of
Theorem~\ref{th:primal:dual:bound:polyhedral} that this can be decided
in time exponential in $n$, but the complexity in terms of
$||\bm c||_1$ is open.
More discussion can be found in~\cite{DBLP:conf/icalp/KhamisK0S20}

\subsection{The Entropic Bound Is Asymptotically Tight}

\label{sec:th:tight:non-tight}

We prove here Theorem~\ref{th:tight:non-tight} item (1).  The plan is
the following.  We need to find a database $\bm D$ such that
$\log |Q(D)|$ comes close to
$\text{Log-L-Bound}_{\Gamma_n^*}(Q,\Sigma,\bm b)$. By definition,
there exists $\bm h \in \Gamma_n^*$ s.t. $h(\bm X)$ is close to
$\text{Log-L-Bound}_{\Gamma_n^*}(Q,\Sigma,\bm b)$.  We can't construct
a database $\bm D$ out of $\bm h$, because the probability
distribution realizing $\bm h$ may be non-uniform, instead we use Chan
and Yeung's theorem to approximate $r \bm h$ by a group realizable
vector $\bm h^{(r)}$, which is by definition associated to a relation
instance.  Hence, the need to amplify by the factor $r$.  However, if
we amplify, we don't know how
$\text{Log-L-Bound}_{\Gamma_n^*}(Q,\Sigma,r\bm b)$ grows.  Here we use
Theorem~\ref{th:primal:dual:bound}, showing that
$\text{Log-L-Bound}_{\Gamma_n^*}$ and
$\text{Log-U-Bound}_{\Gamma_n^*}$ are asymptotically equal, then use
the fact that $\text{Log-U-Bound}_{\Gamma_n^*}$ is linear, see
Eq.~\eqref{eq:linearity:log:u:bound}.  We give the details next.

By Corollary~\ref{cor:entropic:bound}, for all $k \in \N$:
\begin{align*}
  \frac{\sup_{\bm D: \bm D \models \bm B^k}\log |Q(\bm D)|}{\text{Log-U-Bound}_{\Gamma_n^*}(Q,\Sigma,k\bm b)}\leq &1 
\end{align*}
Together with Theorem~\ref{th:primal:dual:bound}
(Eq.~\eqref{eq:th:primal:dual:bound:entropic}) this implies:
\begin{align*}
 \sup_k \frac{\sup_{\bm D: \bm D \models \bm B^k}\log |Q(\bm D)|}{\text{Log-L-Bound}_{\Gamma_n^*}(Q,\Sigma,k\bm b)}\leq &1 
\end{align*}
To prove equality, it suffices to show that, $\forall\varepsilon > 0$,
$\exists k \in \N$ such that:
\begin{align}
  \log |Q(\bm D)| \geq & (1-\varepsilon)^4 \text{Log-L-Bound}_{\Gamma_n^*}(Q,\Sigma,k\bm b) \label{eq:suffices:to:show:U}
\end{align}

Let $U \defeq \text{Log-U-Bound}_{\Gamma_n^*}(Q,\Sigma,\bm b)$.  We
will assume that $\text{Log-U-Bound}_{\Gamma_n^*}(Q,\Sigma,\bm b)$ is
finite; otherwise, we let $U$ be an arbitrarily large number and the
proof below requires only minor adjustments, which we omit.  We will
assume w.l.o.g. that $U > 0$.  Recall that
$\text{Log-U-Bound}_{\Gamma_n^*}$ is
linear~\eqref{eq:linearity:log:u:bound}.  We prove:

\begin{claim} \label{claim:asymptotic:tight:1} For all
  $\varepsilon > 0$, there exists $k\in \N$, and a database $\bm D$
  such that $\bm D \models (\Sigma, \bm B^k)$ and
  $\log |Q(\bm D)| \geq (1-\varepsilon)^4 kU$
\end{claim}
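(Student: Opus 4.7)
My plan is to chain together four approximations, each costing one $(1-\varepsilon)$ factor, that together produce a database witness for the claim.

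\textbf{Step 1 (entropic vector close to the upper bound).} First I invoke Theorem~\ref{th:primal:dual:bound} together with the linearity identity~\eqref{eq:linearity:log:u:bound}: the ratio $\text{Log-L-Bound}_{\Gamma_n^*}(Q,\Sigma,k_0\bm b) / (k_0 U)$ tends to $1$ as $k_0 \to \infty$, so I can fix $k_0$ with $\text{Log-L-Bound}_{\Gamma_n^*}(Q,\Sigma,k_0\bm b) \geq (1-\varepsilon) k_0 U$. Since the log-lower bound is a supremum, I pick an entropic vector $\bm h \in \Gamma_n^*$ such that $\bm h \models (\Sigma,k_0\bm b)$, meaning $h(\sigma) \leq k_0 b_\sigma$ for every $\sigma \in \Sigma$, and $h(\bm X) \geq (1-\varepsilon)^2 k_0 U$. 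In particular, for every statistics $\sigma$ with $b_\sigma = 0$ (functional dependencies), the vector $\bm h$ satisfies the FD $\bm U \rightarrow \bm V$.

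\textbf{Step 2 (group realization via Chan--Yeung).} Next I apply Theorem~\ref{th:chan:groups} to obtain a sequence $\bm h^{(r)} \in \Upsilon_n$ of group-realizable vectors with $\tfrac{1}{r}\bm h^{(r)} \to \bm h$; by the parenthetical remark in Sec.~\ref{sec:background:non-shannon}, each $\bm h^{(r)}$ inherits every FD satisfied by $\bm h$, so $h^{(r)}(\sigma) = 0$ whenever $b_\sigma = 0$. Pick $r$ large enough that simultaneously (a) $h^{(r)}(\bm X) \geq (1-\varepsilon)^3 r k_0 U$ and (b) for every $\sigma$ with $b_\sigma > 0$, $h^{(r)}(\sigma) \leq (1+\varepsilon') r k_0 b_\sigma$, where $\varepsilon'$ is a small slack chosen to absorb the ceiling in Step~4.

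\textbf{Step 3 (build a database from the coset relation).} Let $G, G_1,\ldots,G_n$ be the groups realizing $\bm h^{(r)}$, and let $R$ be the coset relation~\eqref{eq:group:realization}. I define the database $\bm D$ by $R_j^D \defeq \Pi_{\bm Y_j}(R)$ for each atom of $Q$. Since $Q$ is a full CQ without self-joins, every tuple in $R$ projects to a legal tuple in every $R_j^D$, hence $R \subseteq Q(\bm D)$ and $\log |Q(\bm D)| \geq \log |R| = h^{(r)}(\bm X)$. A direct group-theoretic computation (the fiber of the map $b \mapsto (bG_j)_{j \in \bm V}$ on the coset $aG_{\bm U}$ has constant size $|G_{\bm U\bm V}|$, so the number of $\bm V$-values is the same for every $\bm u$) shows that for any $\bm U,\bm V \subseteq \bm Y_j$, $\degree_{R_j^D}(\bm V | \bm U) = \degree_R(\bm V|\bm U) = |G_{\bm U}|/|G_{\bm U\bm V}| = 2^{h^{(r)}(\bm V | \bm U)}$; in other words, for group-realizable vectors the max-degree matches the conditional entropy exactly.

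\textbf{Step 4 (verify statistics and conclude).} Set $k \defeq \lceil (1+\varepsilon') r k_0 \rceil$. For $b_\sigma > 0$, Step~3 combined with bound (b) of Step~2 gives $\degree_{R_\sigma^D}(\sigma) = 2^{h^{(r)}(\sigma)} \leq 2^{(1+\varepsilon') r k_0 b_\sigma} \leq B_\sigma^k$, while for $b_\sigma = 0$ the FD is preserved so $\degree_{R_\sigma^D}(\sigma) = 1 = B_\sigma^k$. Thus $\bm D \models (\Sigma,\bm B^k)$. Combining with bound (a), $\log |Q(\bm D)| \geq h^{(r)}(\bm X) \geq (1-\varepsilon)^3 r k_0 U$, and a routine choice of $\varepsilon'$ (and of $r$ large enough that the ``$+1$'' in the ceiling is negligible) ensures $rk_0 \geq (1-\varepsilon) k$, yielding $\log |Q(\bm D)| \geq (1-\varepsilon)^4 k U$ as required. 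The main technical obstacle is Step~3: one needs the precise identification of max-degrees with conditional entropies for group-realizable vectors, which is what forces us to approximate $\bm h$ by an element of $\Upsilon_n$ rather than trying to realize $\bm h$ directly by a uniform database.
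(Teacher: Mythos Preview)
Your proposal is correct and follows the same overall architecture as the paper: (i) use Theorem~\ref{th:primal:dual:bound} plus linearity to find an entropic $\bm h$ with $h(\bm X)\geq(1-\varepsilon)^2k_0U$ subject to the constraints, (ii) approximate by a group-realizable $\bm h^{(r)}$ via Theorem~\ref{th:chan:groups} while preserving FDs, (iii) build $\bm D$ from the coset relation~\eqref{eq:group:realization} and use the exact identity $\log\degree_R(\sigma)=h^{(r)}(\sigma)$ for group realizations.

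The one genuine difference is in how the approximation error of Step~2 is absorbed. The paper first applies a \emph{Slack Lemma} (Lemma~\ref{lemma:create:slack}): it replaces $\bm h$ by $(k_1{-}1)\bm h$ to manufacture strict slack $h'(\sigma)\leq(1-\varepsilon/2)k_0k_1b_\sigma$, so that after Chan--Yeung one obtains $h^{(r)}(\sigma)\leq rk_0k_1b_\sigma$ on the nose and can take $k=rk_0k_1$ (automatically an integer). You instead allow the overshoot $h^{(r)}(\sigma)\leq(1+\varepsilon')rk_0b_\sigma$ and absorb it by setting $k=\lceil(1+\varepsilon')rk_0\rceil$, spending the fourth $(1-\varepsilon)$ factor on the inequality $rk_0\geq(1-\varepsilon)k$. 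Both work; your route is slightly shorter since it avoids stating a separate lemma, while the paper's route keeps $k$ as a clean product and isolates the slack mechanism for reuse (it is invoked again in the proof of Lemma~\ref{lmm:the:same:u:bound}).
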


Eq.~\eqref{eq:suffices:to:show:U} follows from
$kU = \text{Log-U-Bound}_{\Gamma_n^*}(Q,\Sigma,k\bm b) \geq
\text{Log-L-Bound}_{\Gamma_n^*}(Q,\Sigma,k\bm b)$.
It remains to prove Claim~\ref{claim:asymptotic:tight:1}.  Since
$\text{Log-L-Bound}_{\Gamma_n^*}$ and
$\text{Log-U-Bound}_{\Gamma_n^*}$ are asymptotically
equal~\eqref{eq:th:primal:dual:bound:entropic}, there exists
$k_0 \in \N$ such that
\begin{align}
  & \text{Log-L-Bound}_{\Gamma_n^*}(Q,\Sigma,k_0\bm b) \geq\nonumber \\
  & \geq  (1-\varepsilon)\text{Log-U-Bound}_{\Gamma_n^*}(Q,\Sigma,k_0\bm b)=  (1-\varepsilon)k_0 U \label{eq:k:k0:l:u}
\end{align}
By the definition of
$\text{Log-L-Bound}_{\Gamma_n^*}(Q,\Sigma,k_0\bm b)$
in~\eqref{eq:k:bound}, there exists $\bm h \in \Gamma_n^*$ such that:
\begin{align*}
  h(\bm X) \geq &(1-\varepsilon)\text{Log-L-Bound}_{\Gamma_n^*}(Q,\Sigma,k_0\bm b)  \geq (1-\varepsilon)^2 k_0U\\
  h(\sigma) \leq & k_0 b_\sigma, \ \ \ \forall \sigma \in \Sigma
\end{align*}

At this point we need the following Slack Lemma:
%
%
\begin{lmm}[Slack Lemma] \label{lemma:create:slack} For every
  $\bm h \in \Gamma_n^*$ and every $\varepsilon \in [0,1]$, there
  exists $k \in \N$ and $\bm h' \in \Gamma_n^*$ such that:
  \begin{align*}
    \bm h' \geq & (1-\varepsilon)k \bm h \\
    \forall \bm U, \bm V \subseteq \bm X:\ \ h'(\bm V|\bm U) \leq & (1-\varepsilon/2)kh(\bm V|\bm U)
  \end{align*}
\end{lmm}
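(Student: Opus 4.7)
The plan is to exploit the fact, established earlier in the excerpt (Proposition~\ref{prop:sum}), that entropic vectors are closed under sum, so that for any positive integer $m$ the vector $m\bm h$ lies in $\Gamma_n^*$. We cannot scale $\bm h$ by an arbitrary real factor and stay in $\Gamma_n^*$, but we can scale by integers. The idea is then to first amplify by a large integer $k$, and then pick an integer $m$ that closely approximates the real scaling $(1-\varepsilon/2)k$. The gap between the two constants $(1-\varepsilon)$ and $(1-\varepsilon/2)$ in the statement is precisely the room we need to absorb the rounding error.

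Concretely, I would handle the edge case $\varepsilon=0$ trivially by taking $k=1$ and $\bm h' = \bm h$, and assume $\varepsilon\in(0,1]$ henceforth. Pick any integer $k \geq 2/\varepsilon$. The interval $[(1-\varepsilon)k,\,(1-\varepsilon/2)k]$ has length $(\varepsilon/2)k \geq 1$ and therefore contains an integer; set
\[
m \defeq \lfloor (1-\varepsilon/2)\,k \rfloor,
\]
so that $(1-\varepsilon)k \leq m \leq (1-\varepsilon/2)k$ (the upper bound is by the floor, the lower bound follows from $m > (1-\varepsilon/2)k - 1 \geq (1-\varepsilon)k$ by our choice of $k$). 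Define $\bm h' \defeq m\bm h$; by Proposition~\ref{prop:sum} applied inductively, $\bm h' \in \Gamma_n^*$.

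The two required inequalities are now immediate from linearity and non-negativity. Since $\bm h \geq \bm 0$ and $m \geq (1-\varepsilon)k$, componentwise $\bm h' = m\bm h \geq (1-\varepsilon)k\,\bm h$. For the conditional bound, using $h'(\bm V\mid\bm U) = h'(\bm U\bm V)-h'(\bm U) = m\,h(\bm V\mid\bm U)$ and $h(\bm V\mid\bm U) \geq 0$ together with $m \leq (1-\varepsilon/2)k$ gives
\[
h'(\bm V\mid\bm U) \;=\; m\,h(\bm V\mid\bm U) \;\leq\; (1-\varepsilon/2)\,k\,h(\bm V\mid\bm U).
\]

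There is really no obstacle: the only observation needed is that the additive closure of $\Gamma_n^*$ allows integer scaling, and the amplification factor $k$ is chosen just large enough so that an integer lies in the interval $[(1-\varepsilon)k,(1-\varepsilon/2)k]$. In retrospect, the strength of the lemma is not in the construction of $\bm h'$ but in the fact that the bounds $(1-\varepsilon)$ and $(1-\varepsilon/2)$ are separated by a fixed positive quantity, creating the slack mentioned in the lemma's name.
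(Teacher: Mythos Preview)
Your proof is correct and takes essentially the same approach as the paper: both exploit closure of $\Gamma_n^*$ under integer scaling (via Proposition~\ref{prop:sum}) to find an integer multiple of $\bm h$ lying in the window $[(1-\varepsilon)k,(1-\varepsilon/2)k]$. The paper makes the specific choice $k=\lceil 1/\varepsilon\rceil$ and $\bm h'=(k-1)\bm h$, checking that $1/k\in[\varepsilon/2,\varepsilon]$, while you allow any $k\geq 2/\varepsilon$ and take the floor; the underlying idea is identical.
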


\begin{proof}
  Assume w.l.o.g. that $\varepsilon > 0$, and set
  $k \defeq \ceil{\frac{1}{\varepsilon}}$ and
  $\bm h' \defeq (k-1)\bm h$.
  Then
  $\frac{1}{\varepsilon}\leq k \leq 1+\frac{1}{\varepsilon}\leq
  \frac{2}{\varepsilon}$ which implies
  $\varepsilon/2 \leq \frac{1}{k}\leq \varepsilon$. We have:
  \begin{align*}
    \bm h' = &  \left(1-\frac{1}{k}\right)k\bm h\geq (1-\varepsilon)k\bm h\\
    \bm h'(\bm V|\bm U) = & \left(1-\frac{1}{k}\right)k\bm h(\bm V|\bm U) \leq (1-\varepsilon/2)kh'(\bm V|\bm U)
  \end{align*}
\end{proof}

%

We apply the Slack Lemma to $\bm h$ and obtain a number $k_1$ and an
entropic vector $\bm h'$ such that:
\begin{align}
  h'(\bm X) \geq & (1-\varepsilon)k_1h(\bm X) \geq (1-\varepsilon)^3k_0k_1U \label{eq:local:mkk0u}\\
  h'(\sigma) \leq & (1-\varepsilon/2)k_1h(\sigma) \leq (1-\varepsilon/2)k_0k_1b_\sigma,\ \forall \sigma \in \Sigma \nonumber
\end{align}
Let
$g \defeq \min_{\bm U, \bm V \subseteq \bm X: h'(\bm V|\bm U) >
  0}h'(\bm V|\bm U)$ be the smallest non-zero value of
$h'(\bm V|\bm U)$.  By Chan and Yeung's theorem~\ref{th:chan:groups},
there exists a group realizable entropic vector $\bm h^{(r)}$ that
satisfies all the FDs satisfied by $\bm h'$, and
$||\bm h' - \frac{1}{r}\bm h^{(r)}||_\infty \leq \varepsilon g/ 4$.
Since $U>0$ we have $h'(\bm X)>0$ hence $h'(\bm X)\geq g$ and we
derive from~\eqref{eq:local:mkk0u}:
\begin{align*}
  \frac{1}{r} h^{(r)}(\bm X) \geq & h'(\bm X) - \varepsilon g/4 \geq  (1-\varepsilon/4)h'(\bm X) \nonumber\\
  \geq & (1-\varepsilon)^4k_0k_1U
\end{align*}
On the other hand,
$\frac{1}{r}h^{(r)}(\bm V|\bm U) \leq h(\bm V|\bm U) + \varepsilon
g/2$, for all sets $\bm U, \bm V$. We use
$\bm h' \models (1-\varepsilon/2)k_0k_1\bm b$ to prove
$\bm h^{(r)} \models rk_0k_1\bm b$.  Consider a statistics
$\sigma \in \Sigma$.  If $h'(\sigma) = 0$, then $\bm h'$ satisfies the
FD $\sigma$, and therefore $\bm h^{(r)}$ also satisfies this FD, thus
$h^{(r)}(\sigma) = 0\leq k_0k_1 b_\sigma$.  If $h'(\sigma) > 0$ then
$h'(\sigma) \geq g$ and the claim follows from:
\begin{align*}
  \frac{1}{r} h^{(r)}(\sigma) \leq & h'(\sigma) + \varepsilon g/2\nonumber\\
  \leq & h'(\sigma) + (\varepsilon/2) h'(\sigma) = (1+\varepsilon/2) h'(\sigma)\nonumber \\
  \leq & (1+\varepsilon/2)(1-\varepsilon/2)k_0k_1 b_{\sigma}\leq k_0k_1 b_{\sigma} 
\end{align*}
So far, we have:
\begin{align}
  h^{(r)}(\bm X) \geq & (1-\varepsilon)^4rk_0k_1U &   \bm h^{(r)}\models & rk_0k_1\bm b\label{eq:h:r:k:x}
\end{align}

To complete the proof of Claim~\ref{claim:asymptotic:tight:1}, we
construct the database $\bm D$ as follows.  Let the relation $R$ be
the group realization of $\bm h^{(r)}$
(Eq.~\eqref{eq:group:realization}).  For each relation $R_j(\bm Y_j)$,
define $R_j^{\bm D} \defeq \Pi_{\bm Y_j}(R)$.  By construction,
$Q(\bm D) = R$, and
$\log |Q(\bm D)| = h^{(r)}(\bm X) \geq (1-\varepsilon)^4 rk_0k_1U$
by~\eqref{eq:h:r:k:x}.  Furthermore, since $\bm h^{(r)}$ is
group-realized, for every statistics $\sigma \in\Sigma$, with guard
$R_\sigma$, we have
$\log \degree_{R_\sigma}(\sigma) = h^{(r)}(\sigma)\leq
rk_0k_1b_{\sigma}$; thus, $\bm D \models (\Sigma, \bm B^{rk_0k_1})$.
This implies:
\begin{align*}
  \sup_{\bm D: \bm D \models (\Sigma,\bm B^{rk_0k_1})}\log |Q(\bm D)| \geq & (1-\varepsilon)^4 rk_0k_1U
\end{align*}
proving Claim~\ref{claim:asymptotic:tight:1} for $k=rk_0k_1$.


\subsection{The Polymatroid Bound Is Not Asymptotically Tight}

We prove now Theorem~\ref{th:tight:non-tight} item (2).

\begin{prop}
  The following is a non-Shannon inequality:
  \begin{align}
    11h&(ABXYC) \leq \label{eq:non-shannon}\\
       & 3h(XY) + 3h(AX) + 3h(AY) \nonumber\\
    + & h(BX) + h(BY) + 5h(C)\nonumber\\
    + & (h(XYC|AB) + 4h(BC|AXY) + h(AC|BXY)) \nonumber \\
    + & (h(BXY|AC) + 2h(ABY|XC) + 2h(ABX|YC)) \nonumber
  \end{align}
\end{prop}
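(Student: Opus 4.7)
The plan follows the two-step template used to prove Zhang and Yeung's inequality (Theorem~\ref{th:zhang:yeung}): first establish validity for all entropic vectors via the Copy Lemma, and then exhibit a polymatroid violating the inequality to confirm non-derivability from Shannon inequalities alone.

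For validity, the conditional terms on the right-hand side indicate exactly which copies to introduce, and with what multiplicity. Starting from an arbitrary entropic vector $\bm h$ on variables $\{A,B,X,Y,C\}$, I would apply Lemma~\ref{lmm:copy} eleven times to build an extended entropic vector $\bm h'$: one copy of $XYC$ over $AB$, four copies of $BC$ over $AXY$, one copy of $AC$ over $BXY$, one copy of $BXY$ over $AC$, two copies of $ABY$ over $XC$, and two copies of $ABX$ over $YC$. The total of $1+4+1+1+2+2=11$ matches the coefficient of $h(ABXYC)$ on the left. Each copy operation preserves marginals on its base set and enforces a zero conditional mutual information between the copy and the original.

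Next, on the enlarged alphabet I would establish a Shannon inequality whose only non-Shannon ``residue'', once the copy identities $h'(\bm U\bm V')=h(\bm U\bm V)$ and $I_{h'}(\bm V;\bm V'|\bm U)=0$ are applied, is precisely the stated inequality. Each of the eleven copy operations lets us replace one use of submodularity of the form $h'(\bm U\bm V)+h'(\bm U\bm V')\geq h'(\bm U\bm V\bm V')+h'(\bm U)$ by the exact identity $2h(\bm U\bm V)=h'(\bm U\bm V\bm V')+h(\bm U)$, contributing a single conditional-entropy slack term of the form $h(\bm V|\bm U)$ to the bound. Summing the eleven contributions in the right proportions yields $11 h(ABXYC)$ on the left and the claimed expression on the right. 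The main obstacle is finding the precise Shannon derivation on $\bm h'$; this is a mechanical but tedious exercise, best carried out by solving the underlying linear program over Shannon inequalities on the extended variable set, then reading off a human-verifiable certificate along the lines of the derivation in Example~\ref{ex:ii:conditionals}.

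Finally, to confirm the inequality is genuinely non-Shannon, I would exhibit a polymatroid on an appropriate FD-lattice (analogous to Fig.~\ref{fig:zhang:yeung:h}) that satisfies all basic Shannon inequalities but violates~\eqref{eq:non-shannon}. The candidate polymatroid should force all six conditional slack terms on the right-hand side to vanish (or be small) while keeping $h(ABXYC)$ strictly positive; a natural starting point is the Zhang--Yeung polymatroid from Fig.~\ref{fig:zhang:yeung:h}, suitably extended to incorporate the additional variable $C$ (for instance by making $C$ functionally determined by $XY$ so that the $C$-conditioning terms become zero). Verifying such a violating polymatroid by direct calculation then completes the proof that~\eqref{eq:non-shannon} cannot be derived from Shannon inequalities alone.
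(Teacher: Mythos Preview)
Your plan over-engineers the validity proof and rests on a misreading of the right-hand side. The paper's argument is far simpler: inequality~\eqref{eq:non-shannon} is obtained by \emph{summing} the already-established Zhang--Yeung inequality~\eqref{eq:zhang:yeung} (expanded and rearranged), three basic Shannon inequalities of the form $h(A)+h(C)\geq h(AC)$, $h(X)+h(C)\geq h(XC)$, $h(Y)+h(C)\geq h(YC)$, and the trivial identity $11h(ABXYC)=11h(ABXYC)$. No new Copy Lemma applications are required at all; the single copy used inside the proof of~\eqref{eq:zhang:yeung} is inherited. The six conditional terms on the right-hand side are \emph{not} hints about copy operations---they are engineered so that, in the subsequent application to query bounds, they correspond to functional dependencies and hence vanish. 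Your heuristic ``one conditional term $h(\bm V|\bm U)$ $\Rightarrow$ one copy of $\bm V$ over $\bm U$'' has no general justification, and there is no reason to expect that eleven independent copies would produce the right Shannon inequality on the extended alphabet.

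For the non-Shannon part, your instinct to extend the Zhang--Yeung polymatroid of Fig.~\ref{fig:zhang:yeung:h} by adjoining $C$ is correct in spirit: the paper exhibits exactly such a polymatroid (Fig.~\ref{fig:h-abxyc}), built on a lattice where the six conditional terms vanish as FDs and $h(ABXYC)=4$, which then violates~\eqref{eq:non-shannon}. So that half of your plan would have worked; the validity half would not, at least not along the route you describe.
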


\begin{proof}
  Consider the following five inequalities:
  \begin{align*}
    0 \leq  & 3h(AX) + 3h(AY) - 4h(AXY) - h(A) \\
    + & h(BX) + h(BY) - h(BXY) \\
    - & h(AB) +3 h(XY) - 2h(X) - 2h(Y) \\
    0 \leq & h(A)+h(C) - h(AC) \\
    0 \leq & 2(h(X)+h(C) - h(XC)) \\
    0 \leq & 2(h(Y)+h(C) - h(YC)) \\
    11h&(ABXYC) =  11h(ABXYC)
  \end{align*}
  The first inequality holds because it is inequality
  Eq.~\eqref{eq:zhang:yeung}, expanded and re-arranged.  The next
  three inequalities are basic Shannon inequalities.  The last line is
  an identity.  A tedious but straightforward calculation shows that
  if we add the five (in)equalities above, then we
  obtain~\eqref{eq:non-shannon}, proving the claim.
\end{proof}

Consider the following query, derived from
inequality~\eqref{eq:non-shannon}:
\begin{align*}
  Q(A,B,X,Y,C) = & R_1(X,Y)\wedge R_2(A,X) \wedge R_3(A,Y) \\
  \wedge & R_4(B,X) \wedge R_5(B,Y) \wedge R_6(C) \\
  \wedge & R_7(A,B,X,Y,C)
\end{align*}
and the following statistics:
\begin{align*}
  \Sigma = & \{(XY), (AX), (AY), (BX), (BY),(C), \\
           & \ (XYC|AB), (BC|AXY), h(AC|BXY), \\
           & \ (BXY|AC), (ABY|XC), (ABX|YC)\}\\
  \bm b = & \{b_{XY} = b_{AX} = b_{AY} = b_{BX} = b_{BY} = 3, b_C = 2,\\
           & \ b_{XYC|AB} = b_{BC|AXY} = b_{AC|BXY} = 0, \\
           & \ b_{(BXY|AC)} = b_{ABY|XC} =  b_{ABX|YC} = 0\}
\end{align*}
In other words, we are given the cardinalities of $R_1, \ldots, R_6$,
but are not given the cardinality of $R_7$, instead we are told that
it satisfies the 6 FD's corresponding to the 6 conditional terms in
inequality~\eqref{eq:non-shannon}.  Consider any scale factor $k > 0$,
and the scaled log-statistics $k \bm b$.
Inequality~\eqref{eq:non-shannon} and the
definition~\eqref{eq:def:log-u-bound} imply:
\begin{align*}
  & \text{Log-U-Bound}_{\Gamma_n^*}(Q,\Sigma,k\bm b) \leq \\
  & \ \ k \frac{3b_{XY} + 3 b_{AX} + 3 b_{AY} +  b_{BX} + b_{BY} + 5 b_C}{11} = \frac{43k}{11}
\end{align*}
By Corollary~\ref{cor:entropic:bound}, for any database $\bm D$, if
$\bm D \models (\Sigma, \bm B^k)$ then:
\begin{align*}
  \log |Q(\bm D)|\leq & \frac{43k}{11}
\end{align*}

On the other hand, consider the polymatroid $k\bm h$, where $\bm h$ is
the polymatroid in Fig.~\ref{fig:h-abxyc}.  Since $h(ABXYC)=4$ and
$\bm h \models (\Sigma, \bm b)$, it follows that $kh(ABXYC) = 4k$, and
$k \bm h \models k \bm b$, therefore:
\begin{align*}
  \frac{\sup_{\bm D: \bm D \models \bm B^k}\log |Q(\bm D)|}{\text{Log-L-Bound}_{\Gamma_n}(Q,\Sigma,k\bm b)}\leq&\frac{43}{44}
\end{align*}
This implies Theorem~\ref{th:tight:non-tight} item (2).

\begin{figure}
  \centering
    \includegraphics[width=0.45\linewidth]{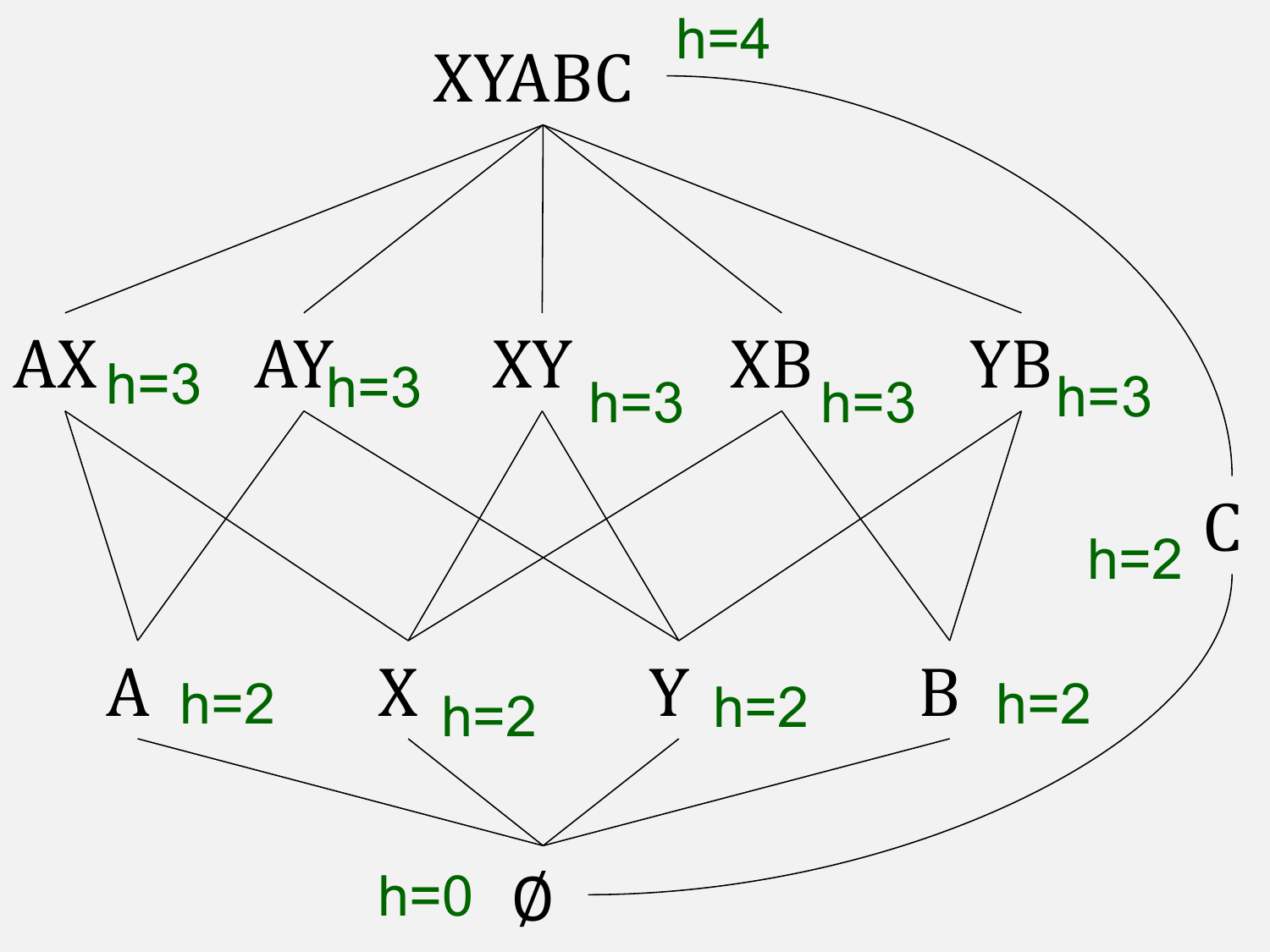}
  \caption{A polymatroid proving that the polymatroid bound is not tight.}
  \label{fig:h-abxyc}
\end{figure}

%

\section{Simple Inequalities}

\label{sec:special:cases}

We have a dilemma: the entropic bound is asymptotically tight, but it
is open whether it is computable, while the polymatroid bound is
computable, but is provably not tight in general.  We show in this
section that, under a reasonable syntactic restriction on the
statistics $\Sigma$, these two bounds are equal.  We do this by
describing a similar syntactic restriction for information
inequalities, which we call {\em simple inequalities}.  In that case
validity over entropic functions coincides with validity over
polymatroids, and we recover the stronger notion of tightness that we
had for the AGM bound.

%

\subsection{Background: Subclasses of Polymatroids}

\label{subsec:modular:normal}

A polymatroid $\bm h$ is called {\em modular} if the submodularity
inequality~\eqref{eq:submodularity} is an equality.  Equivalently,
$\bm h$ is modular if $\bm h \geq 0$ and for every subset
$\alpha \subseteq[n]$, $h(\bm X_\alpha) = \sum_{i \in\alpha}h(X_i)$.
We will denote by $M_n$ the set of modular polymatroids, see
Fig.~\ref{fig:diagram}.
For each $i=1,n$, we call the function $\bm h^{X_i}$ in
Fig.~\ref{fig:modular} a {\em basic modular function}; recall that
$h^{X_i}(\bm U) = 1$ when $X_i \in \bm U$ and $=0$ otherwise.  The
following holds:

\begin{prop}
  (1) A function $\bm h$ is modular iff it is a positive linear
  combination of basic modular functions,
  $\bm h = \sum_i a_i \bm h^{X_i}$, where $a_i \geq 0$ for all $i$.
  (2) Every modular function is entropic.
\end{prop}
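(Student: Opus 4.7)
The plan is to prove the two parts separately, with part (1) being a routine structural argument and part (2) requiring an explicit probabilistic construction.

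For part (1), I would first handle the $(\Leftarrow)$ direction by checking that each basic modular function $\bm h^{X_i}$ is itself modular: a direct case analysis on whether $X_i$ lies in $\bm U\cup \bm V$, $\bm U \cap \bm V$, exactly one of $\bm U,\bm V$, or neither shows that $h^{X_i}(\bm U)+h^{X_i}(\bm V) = h^{X_i}(\bm U \cup \bm V)+h^{X_i}(\bm U \cap \bm V)$ in all four cases. Non-negativity and the condition $h^{X_i}(\emptyset)=0$ are immediate, so $\bm h^{X_i}\in M_n$. Since modularity is preserved under non-negative linear combinations (both sides of the submodularity inequality are linear in $\bm h$, so equality is preserved), any $\bm h = \sum_i a_i \bm h^{X_i}$ with $a_i\geq 0$ is modular. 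For the $(\Rightarrow)$ direction, suppose $\bm h$ is modular. A straightforward induction on $|\alpha|$, using $h(\bm X_\alpha) = h(\bm X_{\alpha\setminus\{i\}}) + h(X_i) - h(\emptyset)$ (applied to disjoint sets), gives $h(\bm X_\alpha) = \sum_{i \in \alpha} h(X_i)$. Setting $a_i \defeq h(X_i)$, which is non-negative by monotonicity and $h(\emptyset)=0$, yields $\bm h = \sum_i a_i \bm h^{X_i}$.

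For part (2), by part (1) it suffices to realize an arbitrary $\bm h = \sum_i a_i \bm h^{X_i}$ with $a_i \geq 0$ as an entropic vector. The plan is to build a single probability distribution whose coordinates are mutually independent, with the $i$th coordinate having entropy exactly $a_i$. First, for each $i$ pick $N_i \in \N$ with $\log N_i \geq a_i$; since the entropy of a probability distribution on $[N_i]$ varies continuously from $0$ (point mass) to $\log N_i$ (uniform) as we move the probability vector along a path in the simplex, the intermediate value theorem yields a distribution $p_i$ on $[N_i]$ with $H(p_i) = a_i$. Then define the product distribution $p$ on $[N_1]\times\cdots\times[N_n]$ by $p(x_1,\ldots,x_n) \defeq \prod_i p_i(x_i)$. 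Independence of the coordinates implies $H(\bm X_\alpha) = \sum_{i \in \alpha} H(X_i) = \sum_{i\in\alpha} a_i = h(\bm X_\alpha)$ for every $\alpha$, so the entropic vector of $p$ is exactly $\bm h$.

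The only mildly delicate step is the entropy realization for a single variable, since we need arbitrary real entropies rather than just dyadic or rational ones; the continuity of entropy on the probability simplex disposes of this. Everything else is either a one-line linearity observation or the standard fact that entropy is additive over independent components. No information inequalities beyond the definition of $\Gamma_n^*$ are needed, and the Copy Lemma, Chan–Yeung, and Proposition~\ref{prop:sum} are not required for this proof (although Proposition~\ref{prop:sum} would give a slightly different route via sums of basic modular entropies, it only yields integer-multiple coefficients directly).
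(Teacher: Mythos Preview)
Your proposal is correct and follows essentially the same approach as the paper: part (1) is declared straightforward there too, and for part (2) the paper also uses the intermediate value theorem on a one-parameter family of distributions on $[N]$ to realize an arbitrary entropy $a \in [0,\log N]$. The only cosmetic difference is that the paper reduces to showing each $a\,\bm h^{X_i}$ is entropic (tacitly invoking closure of $\Gamma_n^*$ under sums via the domain product), whereas you build the full product distribution directly; these are the same construction unpacked differently.
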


\begin{proof}
  Item (1) is straightforward, but item (2) requires some thought.  It
  suffices to prove that $a \bm h^{X_i}$ is entropic for all real
  numbers $a \geq 0$.  For that purpose we need to describe one random
  variable $X_i$, whose entropy is $h(X_i) = a$.  Let $N$ be a natural
  number such that $\log N \geq a$, and consider the uniform
  probability space where $X_i$ has $N$ outcomes with the same
  probabilities, $p_i = 1/N$, $i=1,N$.  Replace $p_1$ by
  $p_1 + \theta$, and replace each $p_j$ with $j>1$ by
  $p_j - \theta/(N-1)$, for $\theta \in [0, 1-\frac{1}{N}]$.  When
  $\theta=0$ then the distribution is uniform and $h(X_j) = \log N$;
  when $\theta = 1-\frac{1}{N}$ then the distribution is
  deterministic, $p_1=1, p_2=\cdots=p_N = 0$, and $h(X_j) = 0$. By
  continuity, there exists some $\theta$ where $h(X_j)=a$.
\end{proof}

Fix a set of variables $\bm W \subseteq \bm X$.  The {\em step
  function at $\bm W$} is:
\begin{align}
  h_{\bm W}(\bm U) \defeq
  &
    \begin{cases}
      0 & \mbox{if $\bm U \subseteq \bm W$}\\
      1 & \mbox{otherwise}
    \end{cases} \label{eq:step:function}
\end{align}
There are $2^n-1$ non-zero step functions (since $\bm h_{\bm X}=0$).
$\bm h_{\bm W}$ is the entropy of the (uniform distribution of the)
following relation with 2 tuples:
\begin{align}
  R_{\bm W} \defeq R^{\bm X-\bm W} \defeq  & \ \ \ \ \ 
    \begin{array}{|cc|c} \cline{1-2}
      \bm W& \bm X - \bm W& p \\ \cline{1-2}
      0\cdots 0&0\cdots 0& 1/2 \\
      0\cdots 0&1\cdots 1& 1/2 \\ \cline{1-2}
    \end{array} \label{eq:r:w}
\end{align}
Sometimes it is convenient to use an alternative notation.  For a set
of variables $\bm V \subseteq \bm X$, define:
\begin{align}
  h^{\bm V}(\bm U) \defeq
  &
    \begin{cases}
      0 & \mbox{if $\bm U \cap \bm V = \emptyset$}\\
      1 & \mbox{otherwise}
    \end{cases} \label{eq:step:function:alt}
\end{align}
Then $\bm h^{\bm V} = \bm h_{\bm X-\bm V}$.  A basic modular function
$\bm h^{X_i}$ is the same as the step function $\bm h^{\set{X_i}}$; if
$|\bm V| \geq 2$ then $\bm h^{\bm V}$ is not modular.

\begin{defn} \label{def:normal:polymatroid}
  A {\em normal polymatroid} is a positive linear combination of step
  functions,
  \begin{align}
    \bm h = &\sum_{\bm V \subseteq \bm X, \bm V\neq \emptyset} a_{\bm V} \bm h^{\bm V}\label{eq:normal:polymatroid}
  \end{align}
  where $a_{\bm V}\geq 0$ for all $\bm V$.
\end{defn}

We denote by $N_n$ the set of normal polymatroids, see
Fig.~\ref{fig:diagram}.  Normal polymatroids are the same as {\em
  polymatroids with a non-negative I-measure} described
in~\cite{Yeung:2008:ITN:1457455,DBLP:journals/tods/KhamisKNS21}.

\begin{prop}
  The $2^n-1$ non-zero step functions $\bm h^{\bm V}$,
  $\bm V\neq \emptyset$ form a basis of the vector space
  $\setof{\bm h \in \R^{2^{[n]}}}{h(\emptyset)=0}$.  More precisely,
  every such vector $\bm h$ satisfies
  $\bm h = \sum_{\bm V} a_{\bm V}\bm h^{\bm V}$, where:
  \begin{align}
    a_{\bm U}  \defeq & - \sum_{\bm V \subseteq \bm  U}(-1)^{|\bm V|}h(\bm V|\bm X-\bm U) \label{eq:normal:coefficients}
  \end{align}
\end{prop}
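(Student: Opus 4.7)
The plan is to verify directly that the stated formula yields the correct expansion, which simultaneously proves both that the step functions span the space and, by a dimension count, that they are linearly independent; the vector space $\{\bm h \in \R^{2^{[n]}} : h(\emptyset)=0\}$ has dimension $2^n-1$ and there are exactly $2^n-1$ non-zero step functions $\bm h^{\bm V}$, so spanning forces linear independence.

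First I would rewrite the step function as $h^{\bm U}(\bm V) = 1 - [\bm U \subseteq \bm X - \bm V]$, so that for any candidate coefficients $(a_{\bm U})_{\bm U \neq \emptyset}$, setting $f(\bm W) \defeq \sum_{\emptyset \neq \bm U \subseteq \bm W} a_{\bm U}$, the partial sum becomes
\begin{align*}
\sum_{\bm U \neq \emptyset} a_{\bm U}\, h^{\bm U}(\bm V) = f(\bm X) - f(\bm X - \bm V).
\end{align*}
Matching this with $h(\bm V)$ and evaluating at $\bm V = \bm X$ forces $f(\bm X) = h(\bm X)$, and hence requires $f(\bm W) = h(\bm X) - h(\bm X - \bm W)$ for every $\bm W \subseteq \bm X$.

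Second, Möbius inversion on the Boolean lattice inverts the defining relation $f(\bm W) = \sum_{\emptyset \neq \bm U \subseteq \bm W} a_{\bm U}$ uniquely as $a_{\bm U} = \sum_{\bm W \subseteq \bm U}(-1)^{|\bm U|-|\bm W|} f(\bm W)$. Substituting $f(\bm W) = h(\bm X) - h(\bm X - \bm W)$, the $h(\bm X)$ contribution drops out because $\sum_{\bm W \subseteq \bm U}(-1)^{|\bm W|} = 0$ whenever $\bm U \neq \emptyset$. Reindexing by $\bm V \defeq \bm U - \bm W$, and rewriting $h(\bm X - (\bm U - \bm V)) = h((\bm X - \bm U)\cup \bm V) = h(\bm V \mid \bm X - \bm U) + h(\bm X - \bm U)$, the stray $h(\bm X - \bm U)$ term vanishes again by the same alternating-sign identity. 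What remains is exactly formula~\eqref{eq:normal:coefficients}.

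Finally, since Möbius inversion is a bijection between the $(a_{\bm U})$ and the $(f(\bm W))$, the coefficients defined by~\eqref{eq:normal:coefficients} do produce the required $f$, and therefore $\bm h = \sum_{\bm U \neq \emptyset} a_{\bm U} \bm h^{\bm U}$. The uniqueness of the expansion yields linear independence, completing the basis claim. The main obstacle is purely the bookkeeping in the Möbius inversion step and the two uses of $\sum_{\bm W \subseteq \bm U}(-1)^{|\bm W|} = 0$; no information-theoretic content is used beyond the definition of $h(\bm V \mid \bm X - \bm U)$.
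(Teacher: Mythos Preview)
Your proof is correct and follows essentially the same route as the paper: both set up the linear system via the identity $h(\bm X)-h(\bm X-\bm W)=\sum_{\emptyset\neq\bm U\subseteq\bm W}a_{\bm U}$, invert it by M\"obius inversion on the Boolean lattice, and simplify using $\sum_{\bm W\subseteq\bm U}(-1)^{|\bm W|}=0$ twice to reach~\eqref{eq:normal:coefficients}. Your framing through the auxiliary function $f$ is a clean way to organize the same computation the paper carries out in Appendix~\ref{app:eq:normal:coefficients}.
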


The proof follows by solving the following system of linear equations
with unknowns $a_{\bm V}$:
\begin{align}
\forall \bm U\neq \emptyset:\ \ \  h(\bm U) = & \sum_{\bm V: \bm V \cap \bm U \neq \emptyset} a_{\bm V} \label{eq:coefficients:normal}
\end{align}
The solution is obtained by using M\"obius' inversion formula (we
prove this in Appendix~\ref{app:eq:normal:coefficients}) and consists
of the expression~\eqref{eq:normal:coefficients}.
Expression~\eqref{eq:normal:coefficients} is called {\em conditional
  interaction information}, and denoted by
$I(X_{i_1};X_{i_2};\cdots|\bm X-\bm U)$, where
$\bm U=\set{X_{i_1},X_{i_2},\ldots}$.  The following holds (the proof
is immediate and omitted):

\begin{prop}
  (1) A function $\bm h$ is a normal polymatroid iff, for every set
  $\bm U \subseteq \bm X$, $\bm U \neq 0$, the conditional interaction
  information~\eqref{eq:normal:coefficients} is $\geq 0$.  (2) Every
  normal polymatroid is entropic.
\end{prop}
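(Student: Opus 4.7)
Part (1) is essentially a translation of the basis discussion immediately preceding the statement. My plan is to argue that since $\{\bm h^{\bm V} : \emptyset \neq \bm V \subseteq \bm X\}$ forms a basis of the subspace $\{\bm h \in \R^{2^{[n]}} : h(\emptyset)=0\}$, any polymatroid $\bm h$ admits a \emph{unique} expansion $\bm h = \sum_{\bm V} a_{\bm V} \bm h^{\bm V}$. The coefficients $a_{\bm U}$ satisfy the system~\eqref{eq:coefficients:normal}, and M\"obius inversion (cited to Appendix~\ref{app:eq:normal:coefficients}) identifies them with the conditional interaction information expression~\eqref{eq:normal:coefficients}. Hence $\bm h$ is normal, i.e.\ all $a_{\bm V} \geq 0$, iff every conditional interaction information term is $\geq 0$. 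So part (1) reduces to a unique-decomposition argument with nothing to prove beyond what has already been set up.

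For part (2) my plan is to exhibit, for each non-negative real $a$ and each $\emptyset \neq \bm V \subseteq \bm X$, a probability space whose entropic vector equals $a\bm h^{\bm V}$, and then apply Proposition~\ref{prop:sum} (which says that sums of entropic vectors are entropic) to conclude that any finite non-negative combination $\sum_{\bm V} a_{\bm V} \bm h^{\bm V}$ is entropic. The construction of $a \bm h^{\bm V}$ is straightforward: pick a single random variable $Y$ with entropy exactly $a$ (possible by the same continuity argument used earlier to realize arbitrary scalar multiples of basic modular functions, i.e.\ perturb a uniform distribution on sufficiently many outcomes), and define a joint distribution on $\bm X$ by $X_i := Y$ for $i \in \bm V$ and $X_i := 0$ for $i \not\in \bm V$. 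Then $h(\bm U) = H(Y) = a$ when $\bm U \cap \bm V \neq \emptyset$ and $h(\bm U)=0$ otherwise, which is exactly $a\bm h^{\bm V}$.

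Actually, the cleanest version of the argument realizes the whole normal polymatroid in one shot, which is what I would present in the paper. Introduce mutually independent random variables $\{Y_{\bm V}\}_{\emptyset \neq \bm V \subseteq \bm X}$ with $H(Y_{\bm V}) = a_{\bm V}$, and define $X_i := (Y_{\bm V})_{\bm V \ni i}$. For any $\bm U \subseteq \bm X$, the joint random variable $\bm X_{\bm U}$ exposes exactly those $Y_{\bm V}$ for which $\bm V \cap \bm U \neq \emptyset$, so by independence
\begin{align*}
h(\bm U) = \sum_{\bm V : \bm V \cap \bm U \neq \emptyset} H(Y_{\bm V}) = \sum_{\bm V : \bm V \cap \bm U \neq \emptyset} a_{\bm V},
\end{align*}
which matches~\eqref{eq:coefficients:normal} and therefore equals the normal polymatroid $\bm h$.

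I do not expect a real obstacle. The only slightly non-routine point is constructing a single random variable with a prescribed real entropy $a \geq 0$, but this is handled verbatim by the continuity argument already used in the modular case earlier in this section, so it may simply be referenced rather than reproven.
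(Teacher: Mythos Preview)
Your proposal is correct. The paper itself omits the proof entirely (``The proof is immediate and omitted''), and your argument fills in exactly the details one would expect: part~(1) follows from uniqueness of the step-function basis expansion together with the M\"obius-inversion identification of the coefficients with~\eqref{eq:normal:coefficients}, and part~(2) follows either from Proposition~\ref{prop:sum} applied to individually realized scaled step functions, or equivalently from your one-shot construction with independent $Y_{\bm V}$'s, which is the same thing unpacked.
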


\begin{ex}
  The parity function $\bm h$ Fig.~\ref{fig:parity} is the simplest
  example of a polymatroid that is not normal.  The 
  coefficients can be derived using~\eqref{eq:normal:coefficients},
%
  or, we can check directly that:
  \begin{align*}
    \bm h = & \bm h^{X,Y} + \bm h^{X,Z} + \bm h^{Y,Z} - \bm h^{X,Y,Z}
  \end{align*}
  The coefficient of $\bm h^{X,Y,Z}$ is negative, hence $\bm h$ is not
  normal.
\end{ex}

\subsection{Special Inequalities}

We describe here a class of information inequalities, called {\em
  simple inequalities}, were $\Gamma_n^*$-validity coincides with
$\Gamma_n$-validity.  The modular and normal polymatroids turn out to
be the key tools to study these inequalities.

The following is sometimes referred in the literature as the {\em
  modularization lemma}:

\begin{lmm} \label{lemma:modularizaation} For any polymatroid $\bm h$
  there exists a modular polymatroid $\bm h'$ such that (a)
  $\bm h' \leq \bm h$ and (b) $\bm h'(\bm X) = \bm h(\bm X)$.
\end{lmm}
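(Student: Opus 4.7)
My plan is to exhibit the modular polymatroid $\bm h'$ explicitly via a greedy chain construction driven by an arbitrary ordering of the variables, and then verify domination using submodularity and equality on $\bm X$ using the chain rule (i.e.\ telescoping of conditional entropies).

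Concretely, fix any permutation $\pi$ of $[n]$ and define the coefficients
\begin{align*}
a_{\pi(i)} \defeq & \; h(X_{\pi(i)} \mid X_{\pi(1)}, \ldots, X_{\pi(i-1)}), \qquad i=1,\ldots,n,
\end{align*}
and then set $\bm h' \defeq \sum_{i=1}^{n} a_{\pi(i)} \bm h^{X_{\pi(i)}}$, so that $h'(\bm U) = \sum_{X_{\pi(i)} \in \bm U} a_{\pi(i)}$ for every $\bm U \subseteq \bm X$. By construction $\bm h'$ is a nonnegative linear combination of basic modular functions, hence it is a modular polymatroid; nonnegativity of the $a_{\pi(i)}$ follows from monotonicity of $\bm h$ applied to the definition of conditional entropy in Eq.~\eqref{eq:conditional}.

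For property (b), the chain rule (which is just the telescoping identity $h(\bm V\bm U)-h(\bm U) = h(\bm V \mid \bm U)$) gives
\begin{align*}
h'(\bm X) = \sum_{i=1}^{n} h(X_{\pi(i)} \mid X_{\pi(1)}, \ldots, X_{\pi(i-1)}) = h(\bm X).
\end{align*}
For property (a), fix $\bm U \subseteq \bm X$ and list its elements in the order induced by $\pi$ as $X_{\pi(i_1)}, \ldots, X_{\pi(i_k)}$ with $i_1 < \cdots < i_k$. Since $\{X_{\pi(i_1)}, \ldots, X_{\pi(i_{j-1})}\} \subseteq \{X_{\pi(1)}, \ldots, X_{\pi(i_j-1)}\}$, the ``conditioning on more variables decreases entropy'' form of submodularity (noted just after Eq.~\eqref{eq:def:conditional:expectation}) yields
\begin{align*}
a_{\pi(i_j)} = h(X_{\pi(i_j)} \mid X_{\pi(1)}, \ldots, X_{\pi(i_j-1)}) \leq h(X_{\pi(i_j)} \mid X_{\pi(i_1)}, \ldots, X_{\pi(i_{j-1})}).
\end{align*}
Summing over $j=1,\ldots,k$ and applying the chain rule once more on the right telescopes to $h(\bm U)$, giving $h'(\bm U) = \sum_{j=1}^{k} a_{\pi(i_j)} \leq h(\bm U)$, as required.

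There is no genuine obstacle here: the whole lemma is the statement that the greedy vertex of the polymatroid's base polytope lies inside the polymatroid, and submodularity is precisely what makes the greedy argument go through. The only point worth double-checking in the write-up is the indexing in the submodularity step, namely that enlarging the conditioning set from $\{X_{\pi(i_1)}, \ldots, X_{\pi(i_{j-1})}\}$ to the (potentially much larger) initial segment $\{X_{\pi(1)}, \ldots, X_{\pi(i_j-1)}\}$ can only decrease the conditional entropy, which is exactly what guarantees $h'(\bm U) \leq h(\bm U)$.
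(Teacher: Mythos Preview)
Your proof is correct and is essentially identical to the paper's: both fix an arbitrary variable order, set $h'(X_i) = h(X_i \mid X_1,\ldots,X_{i-1})$, verify (b) by telescoping, and verify (a) by the same submodularity step (conditioning on the larger prefix only decreases the conditional entropy). The only cosmetic difference is that you write the ordering via an explicit permutation $\pi$, whereas the paper just relabels.
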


\begin{proof}
  Order the variables arbitrarily $X_1, \ldots, X_n$ and define
  $h'(X_i) \defeq h(X_i|\bm X_{[1:i-1]})$, where
  $\bm X_{[1:i-1]}\defeq \set{X_1,\ldots,X_{i-1}}$.  We check
  condition (a): for $\alpha \subseteq [n]$,
  \begin{align*}
    h'(\bm X_\alpha) = & \sum_{i \in \alpha} h(X_i|\bm X_{[1:i-1]}) \\
    \leq & \sum_{i \in \alpha} h(X_i|\bm X_{[1:i-1]\cap \alpha}) = h(\bm X_\alpha)
  \end{align*}
  We check (b):
  $h'(\bm X) = \sum_{i=1,n} h(X_i|\bm X_{[1:i-1]}) = h(\bm X)$.
\end{proof}

The modularization lemma gives us an alternative, and more general
proof of Theorem~\ref{th:shearer}:

\begin{cor} \label{cor:unconditioned:inequalities} Consider an
  inequality of the form $\sum_i w_i h(\bm V_i) \geq h(\bm X)$, where
  $w_i \geq 0$ and $\bm V_i$ are subsets of $\bm X$.  The following
  conditions are equivalent:
  \begin{enumerate}[(1)]
  \item \label{item:unconditioned:1} The inequality is valid for
    polymatroids.
  \item \label{item:unconditioned:2} The inequality is valid for
    entropic functions.
  \item \label{item:unconditioned:3} The inequality is valid for
    modular functions.
  \end{enumerate}
\end{cor}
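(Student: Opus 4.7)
The plan is to establish the cycle of implications $(1) \Rightarrow (2) \Rightarrow (3) \Rightarrow (1)$. The first two are essentially inclusions: $\Gamma_n^* \subseteq \Gamma_n$ immediately gives $(1) \Rightarrow (2)$, and the earlier proposition showing that every modular function is entropic (via a continuity argument on a parametrized family of distributions) immediately gives $(2) \Rightarrow (3)$. All the content lies in $(3) \Rightarrow (1)$, and the modularization lemma (Lemma~\ref{lemma:modularizaation}) is tailor-made for it.

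For $(3) \Rightarrow (1)$, my approach is as follows. Fix an arbitrary polymatroid $\bm h \in \Gamma_n$. Apply Lemma~\ref{lemma:modularizaation} to obtain a modular polymatroid $\bm h'$ satisfying $\bm h' \leq \bm h$ pointwise on all subsets of $\bm X$, together with the equality $h'(\bm X) = h(\bm X)$. By hypothesis (3), the inequality is valid on modular functions, so
\begin{align*}
  \sum_i w_i\, h'(\bm V_i) \geq h'(\bm X).
\end{align*}
Since $w_i \geq 0$ for every $i$ and $h'(\bm V_i) \leq h(\bm V_i)$, we can chain
\begin{align*}
  \sum_i w_i\, h(\bm V_i) \geq \sum_i w_i\, h'(\bm V_i) \geq h'(\bm X) = h(\bm X),
\end{align*}
which is the inequality for $\bm h$.

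There is no real obstacle: the modularization lemma has already been established, and the very shape of the inequality under consideration is exactly what makes the argument work. The right-hand side consists of a single term $h(\bm X)$, which modularization preserves, while the left-hand side is a \emph{non-negative} combination of subset entropies, which modularization can only decrease. Any attempt to extend this corollary to inequalities with subset terms on the right, or with negative coefficients on the left, would immediately fail (for example, the non-Shannon inequality of Zhang and Yeung shows that validity on entropic vectors is strictly stronger than validity on polymatroids in general). As a final remark, specializing to $\bm V_i = \bm Y_i$, the hyperedges of a query, recovers Theorem~\ref{th:shearer}, since a modular function of the form $\sum_i a_i \bm h^{X_i}$ satisfies the inequality iff $\sum_{i:X_\ell \in \bm V_i} w_i \geq 1$ for every $\ell$, i.e.\ iff $\bm w$ is a fractional edge cover.
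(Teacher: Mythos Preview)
Your proof is correct and follows essentially the same approach as the paper: both rely on the modularization lemma (Lemma~\ref{lemma:modularizaation}) for the substantive implication $(3)\Rightarrow(1)$, using that $\bm h' \leq \bm h$ together with $h'(\bm X)=h(\bm X)$. The only cosmetic difference is that the paper phrases $(3)\Rightarrow(1)$ as a contrapositive argument while you argue directly.
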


\begin{proof}
  The implications \ref{item:unconditioned:1} $\Rightarrow$
  \ref{item:unconditioned:2} $\Rightarrow$ \ref{item:unconditioned:3}
  are immediate.  We prove \ref{item:unconditioned:3} $\Rightarrow$
  \ref{item:unconditioned:1}, by contradiction: if the inequality
  fails on some polymatroid $\bm h$,
  $\sum_i w_i h(\bm V_i) < h(\bm X)$, and $\bm h'$ is the modular
  function in Lemma~\ref{lemma:modularizaation}, then,
  $\sum_i w_i h'(\bm V_i) \leq \sum_i w_i h(\bm V_i) < h(\bm X) =
  h'(\bm X)$ contradicting \ref{item:unconditioned:3}.
\end{proof}

We prove in Appendix~\ref{app:lemma:normalization} the following
extension of the Modularization Lemma:

\begin{lmm} \label{lemma:normalization} For any polymatroid $\bm h$
  there exists a normal polymatroid $\bm h'$ such that (a)
  $\bm h' \leq \bm h$, (b) $h'(\bm X) = h(\bm X)$, and (c)
  $h'(X_i) = h(X_i)$ for every variable $X_i \in \bm X$.
\end{lmm}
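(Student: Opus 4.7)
My plan is to upgrade the modularization lemma (Lemma~\ref{lemma:modularizaation}) by iterative corrections on the step-function decomposition. I would start by setting $\bm h' = \bm m$, the modular polymatroid produced by Lemma~\ref{lemma:modularizaation}: $\bm m \leq \bm h$, $m(\bm X) = h(\bm X)$, and $\bm m = \sum_i m(X_i)\, \bm h^{\{X_i\}}$ is trivially normal. Thus (a) and (b) are already in hand; condition (c) typically fails, with shortfall $h(X_i) - m(X_i) = I(X_i; \bm X_{[1:i-1]})$ at each singleton depending on the chain used in the modularization.

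To achieve (c), I would perform \emph{swap operations} on the step-function coefficients. The basic swap replaces $t$ units of $\bm h^{\{X_j\}}$ by $t$ units of $\bm h^{\{X_i,X_j\}}$. A direct computation using Eq.~\eqref{eq:step:function:alt} shows that this raises $h'(X_i)$ by $t$, leaves all other singletons and $h'(\bm X)$ unchanged, and raises $h'(\bm U)$ by $t$ exactly on those $\bm U$ with $X_i \in \bm U$ and $X_j \notin \bm U$. The swap is admissible whenever the current $a_{\{X_j\}} \geq t$ and $h(\bm U) - h'(\bm U) \geq t$ on the affected sets. More generally, one can replace $t$ units of $\bm h^{\bm V_1}$ by $t$ units of $\bm h^{\bm V_2}$ for any $\bm V_1 \subsetneq \bm V_2$; the iterated $\bm h'$ remains nonnegative-combination of step functions, hence in the normal cone, throughout.

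The main obstacle is showing that swaps can be orchestrated so that every $h'(X_i)$ reaches $h(X_i)$. I expect this to reduce, via LP duality on the primal
\[
\max \sum_i h'(X_i) \quad \text{s.t.} \quad \bm h' \text{ normal},\ \bm h' \leq \bm h,\ h'(\bm X) = h(\bm X),
\]
to showing that its dual is nonnegative. The dual certificate takes the shape of a linear combination $\sum_{\bm U} c_{\bm U} h(\bm U) \geq 0$ with free signs on singletons and on $\bm X$, nonnegative signs elsewhere, and a combinatorial partial-sum condition $\sum_{\bm U \cap \bm V \neq \emptyset} c_{\bm U} \geq 0$ for every nonempty $\bm V$. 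The hard part is deriving each such inequality from submodularity and monotonicity of $\bm h$ — equivalently, showing in the primal that whenever pairwise swaps stall, a higher-arity swap with $|\bm V| \geq 3$ becomes available. Executing this structural case analysis is the technical core of the proof, and I expect it is precisely what the appendix carries out in detail.
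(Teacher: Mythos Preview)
Your proposal leaves the hard part undone. You correctly flag that orchestrating the swaps --- equivalently, verifying the LP-duality certificate --- is ``the technical core,'' but you do not execute it; you instead assume the appendix fills it in. It does not. What you have is a reformulation of the lemma as a feasibility/optimality claim, together with the (correct) observation that the modularization lemma gives a feasible starting point; the actual content --- that the optimum of your LP equals $\sum_i h(X_i)$ --- is not established.

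The paper's argument is entirely different: induction on $n = |\bm X|$. For $\bm U \subseteq \bm X \setminus \{X_n\}$ write $h(\bm U) = h(\bm U \mid X_n) + I_h(\bm U; X_n)$. The conditional $h_1(\bm U) := h(\bm U \mid X_n)$ is a polymatroid on $n-1$ variables, so induction yields a normal $h_1' \leq h_1$ matching $h_1$ on the top set and on every singleton. The mutual-information piece is replaced by $h_2(\bm U) := \max_{X_i \in \bm U} I_h(X_i; X_n)$, which is normal by a short direct argument (sort the values and telescope; Prop.~\ref{prop:max:normal}) and satisfies $h_2(\bm U) \leq I_h(\bm U; X_n)$ by Eq.~\eqref{eq:I:chain:condition}. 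One then glues: writing $h_1' = \sum_{\bm V} b_{\bm V}\,\bm h^{\bm V}$ and $h_2 = \sum_{\bm V} c_{\bm V}\,\bm h^{\bm V}$ with both sums over $\bm V \subseteq \bm X \setminus \{X_n\}$, set
\[
\bm h' := \sum_{\bm V} b_{\bm V}\,\bm h^{\bm V} + \sum_{\bm V} c_{\bm V}\,\bm h^{\bm V \cup \{X_n\}} + a\,\bm h^{\bm X},
\qquad a := h(X_n) - \max_i I_h(X_i; X_n) \geq 0.
\]
Conditions (a)--(c) are then checked by direct computation. No LP, no swap orchestration, no structural case analysis.
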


\begin{defn} \label{def:simple:inequalities} We call a set of
  statistics $\Sigma$ {\em simple} if, for all
  $(\bm V|\bm U) \in \Sigma$, $|\bm U| \leq 1$.  A {\em simple
    information inequality} is a $\Sigma$-inequality where $\Sigma$ is
  simple:
  \begin{align}
    \sum_{\sigma \in \Sigma} w_\sigma h(\Sigma) \geq & h(\bm X) \label{eq:simple:conditional}
  \end{align}
\end{defn}

We immediately derive:

\begin{cor} \label{cor:simple:inequalities} Given a simple
  inequality~\eqref{eq:simple:conditional}, the following are
  equivalent:
  \begin{enumerate}[(1)]
  \item The inequality is valid for polymatroids.
  \item The inequality is valid for entropic functions.
  \item The inequality is valid for normal polymatroids
  \end{enumerate}
\end{cor}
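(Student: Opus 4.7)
The plan is to mirror the proof of Corollary~\ref{cor:unconditioned:inequalities}, but replacing the modularization lemma with the normalization lemma (Lemma~\ref{lemma:normalization}). The reason to expect this to work is that a simple statistics term $h(\bm V|\bm U)$ only involves either $h(\bm V)$ (when $\bm U=\emptyset$) or $h(\bm V X_i)-h(X_i)$ (when $\bm U=\set{X_i}$); the three guarantees provided by Lemma~\ref{lemma:normalization}---pointwise domination, preservation of $h(\bm X)$, and preservation of each $h(X_i)$---are exactly what is needed to push conditional terms in the right direction.

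First, the implications $(1) \Rightarrow (2) \Rightarrow (3)$ are immediate from the inclusions $N_n \subseteq \Gamma_n^* \subseteq \Gamma_n$ displayed in Fig.~\ref{fig:diagram}; the containment $N_n \subseteq \Gamma_n^*$ follows from the proposition in Sec.~\ref{subsec:modular:normal} asserting that every normal polymatroid is entropic (being a positive linear combination of entropic step functions $\bm h^{\bm V}$, using Prop.~\ref{prop:sum}).

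The substantive direction is $(3) \Rightarrow (1)$, which I would prove by contrapositive. Suppose the simple inequality~\eqref{eq:simple:conditional} fails on some polymatroid $\bm h$, i.e.\ $\sum_\sigma w_\sigma h(\sigma) < h(\bm X)$. Apply Lemma~\ref{lemma:normalization} to obtain a normal polymatroid $\bm h'$ with $\bm h' \leq \bm h$, $h'(\bm X) = h(\bm X)$, and $h'(X_i) = h(X_i)$ for every variable $X_i$. For each simple statistics $\sigma = (\bm V|\bm U) \in \Sigma$: if $\bm U = \emptyset$, then $h'(\sigma) = h'(\bm V) \leq h(\bm V) = h(\sigma)$ by condition (a); if $\bm U = \set{X_i}$, then
\begin{align*}
h'(\sigma) = h'(\bm V X_i) - h'(X_i) \leq h(\bm V X_i) - h(X_i) = h(\sigma),
\end{align*}
using (a) on the positive term and (c) on the negative term. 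Summing with nonnegative weights and invoking (b) on the right-hand side gives
\begin{align*}
\sum_\sigma w_\sigma h'(\sigma) \leq \sum_\sigma w_\sigma h(\sigma) < h(\bm X) = h'(\bm X),
\end{align*}
so $\bm h'$ is a normal polymatroid that refutes the inequality, contradicting (3).

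The only real obstacle is that this argument breaks the moment $|\bm U| \geq 2$: then $h'(\bm V|\bm U) = h'(\bm V \bm U) - h'(\bm U)$, and although condition (a) bounds $h'(\bm V \bm U) \leq h(\bm V \bm U)$ in the right direction, we would need $h'(\bm U) \geq h(\bm U)$ in the wrong direction, which Lemma~\ref{lemma:normalization} does not supply (and cannot, by Theorem~\ref{th:tight:non-tight}(2), since otherwise the polymatroid and entropic bounds would always coincide). This explains precisely why the hypothesis that $\Sigma$ be simple---restricting conditionals to a single variable---is exactly what makes the normalization lemma applicable, and it is the only place in the argument where simplicity is used.
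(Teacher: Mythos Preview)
Your proof is correct and is precisely the argument the paper has in mind: it explicitly states that the proof ``is identical to that of Corollary~\ref{cor:unconditioned:inequalities} and omitted,'' and the only change needed is to swap the modularization lemma for Lemma~\ref{lemma:normalization}, exactly as you do. Your closing paragraph explaining why the simplicity hypothesis $|\bm U|\leq 1$ is the sharp condition for the normalization lemma to apply is a welcome addition that the paper does not spell out.
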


The proof is identical to that of
Corollary~\ref{cor:unconditioned:inequalities} and omitted.

\subsection{Special Databases}

\label{sec:normal:relation}

When the statistics $\Sigma$ are simple, then we show here that the
polymatroid and the entropic bound coincide.  We also show that the
bound is tight, using a similar notion of tightness as in the AGM
bound, where the ratio between the lower and upper bound depends only
on the query; also, there is no need to amplify the statistics values.
Moreover, like in the AGM bound, the worst-case database instance has
a special structure, which we call a {\em normal database}.
We start by showing:

\begin{thm} \label{th:simple:no:gap}
  If $\Sigma$ is simple, then:
  \begin{align*}
  & \text{Log-U-Bound}_{N_n}(Q,\Sigma,\bm b) =  \\
  & \ \ \ =\text{Log-U-Bound}_{\Gamma_n^*}(Q,\Sigma,\bm b) = \text{Log-U-Bound}_{\Gamma_n}(Q,\Sigma,\bm b)
  \end{align*}
\end{thm}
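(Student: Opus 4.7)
The proof is essentially a direct translation of Corollary~\ref{cor:simple:inequalities} to the language of bounds. Recall that each log-upper bound is defined as the infimum of the same linear objective $\sum_\sigma w_\sigma b_\sigma$ over the same variables $\bm w \geq 0$, the only difference being the feasibility set: $\bm w$ must make the simple $\Sigma$-inequality~\eqref{eq:simple:conditional} valid over $K \in \{N_n, \Gamma_n^*, \Gamma_n\}$.

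The plan is to sandwich the three bounds between each other. First I would note the chain of inclusions $N_n \subseteq \Gamma_n^* \subseteq \Gamma_n$ (the first by Section~\ref{subsec:modular:normal}, the second because polymatroids are defined by the basic Shannon inequalities, all of which hold on entropic vectors). Enlarging the set $K$ can only impose more constraints on $\bm w$ and hence shrink the feasibility set, yielding a larger infimum. Therefore
\begin{align*}
\text{Log-U-Bound}_{N_n}(Q,\Sigma,\bm b) \leq \text{Log-U-Bound}_{\Gamma_n^*}(Q,\Sigma,\bm b) \leq \text{Log-U-Bound}_{\Gamma_n}(Q,\Sigma,\bm b).
\end{align*}

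For the reverse direction, Corollary~\ref{cor:simple:inequalities} asserts that, for a simple $\Sigma$-inequality, validity over $N_n$ is equivalent to validity over $\Gamma_n$ (and over $\Gamma_n^*$). Hence any weight vector $\bm w$ that is feasible for the $N_n$-program is automatically feasible for the $\Gamma_n$-program, and the three feasibility sets are in fact identical. Taking infima of the common objective over the same set gives equality of all three bounds.

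The substantive content is therefore already packed into Corollary~\ref{cor:simple:inequalities}, which in turn relies on the Normalization Lemma (Lemma~\ref{lemma:normalization}). The only mild care one should take is with the infimum: if the set of feasible $\bm w$ is empty, all three bounds equal $+\infty$ by convention, so the equality still holds trivially; otherwise the sets coincide non-trivially and the infima agree. There is no additional obstacle beyond correctly invoking Corollary~\ref{cor:simple:inequalities}.
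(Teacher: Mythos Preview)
Your proposal is correct and follows exactly the same approach as the paper: use the inclusions $N_n \subseteq \Gamma_n^* \subseteq \Gamma_n$ to get the chain of inequalities, then invoke Corollary~\ref{cor:simple:inequalities} to collapse the outer two bounds, forcing all three to be equal. The paper's proof is essentially a two-line version of what you wrote.
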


\begin{proof}
  Since $N_n \subseteq \Gamma_n^* \subseteq \Gamma_n$ we have
  inequalities above: $\cdots \leq \cdots \leq \cdots$
  Corollary~\ref{cor:simple:inequalities} implies
  $\text{Log-U-Bound}_{N_n}(Q,\Sigma,\bm
  b)=\text{Log-U-Bound}_{\Gamma_n}(Q,\Sigma,\bm b)$, hence all three
  quantities are equal.
\end{proof}

We describe now normal relational instances, and normal databases.
Start with a single relation $R(\bm X)$ with $n$ attributes $\bm X$.
Recall that an instance $R$ is a {\em product relation} if
$R = S_1 \times \cdots \times S_n$, for $n$ sets $S_1, \ldots, S_n$:
the worst-case instance of the AGM bound consisted of product
relations.  We generalize this concept:

\begin{defn}
  A relation instance $T$ with $n$ attributes is a {\em normal
    relation} if there exists $m$ finite sets $S_1, \ldots, S_m$ and a
  function $\psi:[n] \rightarrow 2^{[m]}$ such that
  \begin{align*}
    T = & \setof{(\bm s_{\psi(1)}, \bm s_{\psi(2)}, \ldots, \bm s_{\psi(n)})}{\bm s \in S_1 \times \cdots \times S_m}
  \end{align*}
\end{defn}
In a normal relation the values of an attribute can be tuples
themselves.  Every product relation is a normal relation, but not vice
versa.  A database instance is {\em normal} if each of its relations
is normal.  A {\em basic normal relation of size $N$} is the following:
\begin{align}
  T^{\bm V}_N \defeq & \setof{(k\cdot \one_{X_1\in\bm V},\cdots,k\cdot \one_{X_n\in\bm V})}{k=0,N-1}\label{eq:t:w}
\end{align}
Here $\one_{X_i \in \bm V}$ is an indicator variable that is 1 when
$X_i \in \bm V$ and 0 otherwise; thus, if an attribute $X_i$ is in
$\bm V$ then it takes the values $0,1,\ldots,N-1$ in the relation
$T^{\bm V}_N$, otherwise it has constant values 0.  The entropic
vector of $T^{\bm V}_N$ is $(\log N) \bm h^{\bm V}$.  In particular,
the relation $R^{\bm V}$ in~\eqref{eq:r:w} is $T^{\bm V}_2$.

\begin{ex}
  We give three examples of normal relations with $n=3$ attributes:
  \begin{align*}
    A = & \setof{(i,j,k)}{i,j,k \in [0:N-1]} && \mbox{product relation} \\
    B = & \setof{(i,i,i)}{i \in [0:N-1]},&&\mbox{normal relation} \\
    C = & \setof{(i,(i,j),j)}{i,j\in [0:N-1]} &&\mbox{normal relation}
  \end{align*}
  Their cardinalities are $|A|=N^3$, $|B| = N$, $|C|=N^2$.
  We also notice:
  \begin{align*}
    A = & T^X_N\otimes T^Y_N \otimes T^Z_N & B = & T^{XYZ}_N &  C = & T^{X,Y}_N \otimes T^{Y,Z}_N
  \end{align*}
\end{ex}

We prove that the lower bound for simple statistics is tight.

\begin{thm}\label{th:simple:lower:bound}  Let $\Sigma$ be a set of
  simple statistics for a query $Q$ and let $\bm B$ be statistics
  values.  Then there exists a worst-case instance $\bm D$ such that
  $|Q(\bm D)| \geq \frac{1}{2^{2^n-1}}\text{U-Bound}_{\Gamma_n^*}(Q,\Sigma,\bm B)$.
\end{thm}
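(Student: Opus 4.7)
The plan mirrors the AGM lower-bound argument (Theorem~\ref{th:agm:lower:bound}), but with a normal polymatroid playing the role of the extreme-point dual solution, and with normal relations replacing cartesian products. Three pre-existing ingredients make this work: (i) Theorem~\ref{th:primal:dual:bound:polyhedral} applied to $K=\Gamma_n$, which, by strong LP duality, realizes an optimal primal polymatroid $\bm h^*\in \Gamma_n$ with $\bm h^*\models(\Sigma,\bm b)$ and $h^*(\bm X)=\text{Log-U-Bound}_{\Gamma_n}(Q,\Sigma,\bm b)$; (ii) Theorem~\ref{th:simple:no:gap}, which collapses this quantity with $\text{Log-U-Bound}_{\Gamma_n^*}(Q,\Sigma,\bm b)$ because $\Sigma$ is simple; and (iii) Lemma~\ref{lemma:normalization}, which extracts from $\bm h^*$ a normal polymatroid $\bm h'\le \bm h^*$ preserving $h'(\bm X)=h^*(\bm X)$ and, crucially, $h'(X_i)=h^*(X_i)$ on every single variable.

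Expanding the normal polymatroid in the step-function basis,
\[
\bm h' \;=\; \sum_{\emptyset\neq\bm V\subseteq\bm X} a_{\bm V}\,\bm h^{\bm V}, \qquad a_{\bm V}\ge 0,
\]
I build a ``universal'' normal relation on schema $\bm X$ as a domain product of basic normal relations~\eqref{eq:t:w},
\[
T \;\defeq\; \bigotimes_{\emptyset\neq\bm V\subseteq\bm X} T^{\bm V}_{\floor{2^{a_{\bm V}}}},
\]
and take the worst-case instance to be $R_j^{\bm D}\defeq \Pi_{\bm Y_j}(T)$. Since every tuple of $T$ already matches on every atom of $Q$, we have $T\subseteq Q(\bm D)$, and the rounding budget gives
\[
|Q(\bm D)|\;\ge\;|T|\;=\;\prod_{\bm V}\floor{2^{a_{\bm V}}}\;\ge\;\frac{1}{2^{2^n-1}}\cdot 2^{\sum_{\bm V}a_{\bm V}}\;=\;\frac{1}{2^{2^n-1}}\cdot 2^{h'(\bm X)}\;=\;\frac{1}{2^{2^n-1}}\,\text{U-Bound}_{\Gamma_n^*}(Q,\Sigma,\bm B).
\]

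What remains is to verify $\bm D\models(\Sigma,\bm B)$. By Proposition~\ref{prop:sum} the uniform entropic vector of $T$ is the normal polymatroid $\bm h^T=\sum_{\bm V}\log\floor{2^{a_{\bm V}}}\cdot \bm h^{\bm V}$, which satisfies $\bm h^T\le \bm h'$ coordinatewise, and since each $h^{\bm V'}(\bm V|X_i)\ge 0$ this linearity transfers to conditionals, giving $h^T(\bm V|X_i)\le h'(\bm V|X_i)$. For a cardinality statistic $\sigma=(\bm V)$, $\log|\Pi_{\bm V}(T)|=h^T(\bm V)\le h'(\bm V)\le h^*(\bm V)\le b_\sigma$. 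For a conditional statistic $\sigma=(\bm V|X_i)$, a direct case analysis on each basic normal relation $T^{\bm V'}_N$ (splitting on $X_i\in\bm V'$ vs $X_i\notin\bm V'$) shows that $\deg_{T^{\bm V'}_N}(\bm V|X_i)$ is the \emph{same} for every achievable value of $X_i$ and equals $N^{h^{\bm V'}(\bm V|X_i)}$ exactly; since degrees multiply under domain product and are preserved by projecting onto the guard, $\log\deg_{R_\sigma^{\bm D}}(\bm V|X_i)=h^T(\bm V|X_i)$. The chain then closes via $h^T(\bm V|X_i)\le h'(\bm V|X_i)\le h^*(\bm V|X_i)\le b_\sigma$.

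The main obstacle, and the place where simplicity of $\Sigma$ is genuinely needed, is the middle inequality $h'(\bm V|X_i)\le h^*(\bm V|X_i)$: unpacking the conditional yields $h'(\bm V X_i)-h'(X_i)\le h^*(\bm V X_i)-h^*(X_i)$, which, given only the componentwise bound $\bm h'\le \bm h^*$, requires the matching equality $h'(X_i)=h^*(X_i)$ to absorb the negative term. Lemma~\ref{lemma:normalization} supplies exactly this equality on single variables, and no more; for a conditioning set $\bm U$ with $|\bm U|\ge 2$, $h'(\bm U)$ can be strictly smaller than $h^*(\bm U)$ and the chain breaks, which is precisely why the theorem is confined to simple statistics.
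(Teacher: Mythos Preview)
Your proof is correct but takes a different route from the paper. The paper applies Theorem~\ref{th:primal:dual:bound:polyhedral} directly to $K=N_n$ (which is polyhedral, being the conic hull of the $2^n-1$ step functions), obtaining an optimal $\bm h^*\in N_n$ that is \emph{already} normal and already satisfies $\bm h^*\models(\Sigma,\bm b)$. The constraint verification then becomes a single line: $\log\deg_{R_\sigma}(\sigma)=\sum_{\bm V}(\log b_{\bm V})h^{\bm V}(\sigma)\le h^*(\sigma)\le b_\sigma$, with no need to compare two different polymatroids.

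Your detour through $K=\Gamma_n$ followed by Lemma~\ref{lemma:normalization} works, but it forces you to bridge the gap $h'(\bm V|X_i)\le h^*(\bm V|X_i)$ via the single-variable equality $h'(X_i)=h^*(X_i)$, which is exactly the extra content of the normalization lemma over the modularization lemma. This has the pedagogical virtue of making the role of simplicity very explicit (as you note in the last paragraph), but it is strictly more work than the paper's approach, which sidesteps the issue entirely by optimizing over $N_n$ from the start. The paper never invokes Lemma~\ref{lemma:normalization} in this proof; that lemma is used elsewhere (for Corollary~\ref{cor:simple:inequalities} and the domination problem), and you are effectively re-deriving part of Theorem~\ref{th:simple:no:gap} inside the lower-bound argument.
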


\begin{proof} We use the following, whose proof is immediate:
  \begin{prop}
    Let $R(\bm X)$, $R'(\bm X)$ be relations over the same attributes
    $\bm X$.
    \begin{itemize}
    \item If $R, R'$ are normal relations, then $R \otimes R'$ is
      normal.
    \item
      $\degree_{R \otimes R'}(\sigma) = \degree_R(\sigma) \cdot
      \degree_{R'}(\sigma)$, for all $\sigma = (\bm V|\bm U)$.
    \end{itemize}
  \end{prop}

  Denote by $U \defeq \text{U-Bound}_{\Gamma_n^*}(Q,\Sigma,\bm B)$,
  $\bm b \defeq \log \bm B$, then
  $\log U = \text{Log-U-Bound}_{N_n}(Q,\Sigma,\bm
  b)=\text{Log-L-Bound}_{N_n}(Q,\Sigma,\bm b)$, by
  Theorem~\ref{th:simple:no:gap} and
  Theorem~\ref{th:primal:dual:bound:polyhedral} respectively.  Let
  $\bm h^* \in N_n$ be the optimal solution to the linear program
  defining $\text{Log-L-Bound}_{N_n}(Q,\Sigma,\bm b)$ (see
  Theorem~\ref{th:primal:dual:bound:polyhedral}), then
  $\bm h^* \models (\Sigma,\bm b)$ and $h^*(\bm X) = \log U$.  Since
  $\bm h^*$ is normal, it can be written as:
  \begin{align*}
    \bm h^* = & \sum_{\bm V\neq \emptyset}a_{\bm V} \bm h^{\bm V}, \ \
                \ \ a_{\bm V}\geq 0
  \end{align*}
  Then, $\log U = h^*(\bm X)= \sum_{\bm V\neq \emptyset}a_{\bm V}$,
  and $U = \prod_{\bm V} 2^{a_{\bm V}}$.

  For each set $\bm V$, $\emptyset \neq \bm V \subseteq \bm X$ we define:
  \begin{align*}
    b_{\bm V} \defeq  \floor{2^{a_{\bm V}}}, \ \ \ 
    P^{\bm V} \defeq & T^{\bm V}_{b_{\bm V}} && \mbox{basic normal relation~\eqref{eq:t:w}} \\
    R \defeq & \bigotimes_{\bm V} P^{\bm V} && \mbox{normal relation}
  \end{align*}
  Define the worst-case instance as $\bm D = (R_1^D, \ldots, R_m^D)$,
  where $R_j^D = \Pi_{\bm Y_j}(R)$.
  We first check that $\bm D$ satisfies the constraints, and for that
  let $\sigma \in \Sigma$ have witness $R_\sigma$, then:
  \begin{align*}
    \log & \degree_{R_\sigma}(\sigma) = \log \degree_R(\sigma) = \sum_{\bm V}  \log \degree_{P^{\bm V}}(\sigma)\\
    = & \sum_{\bm V} (\log b_{\bm V}) h^{\bm V}(\sigma) \leq \sum_{\bm V} a_{\bm V} h^{\bm V}(\sigma) =  h^*(\sigma) \leq b_\sigma
  \end{align*}
  Finally, we check the query's output size:
  \begin{align*}
    \log & |Q(\bm D)| =  \log |R| = \log \prod_{\bm V} |P^{\bm V}| = \sum_{\bm V}(\log b_{\bm V}) h^{\bm  V}(\bm X)
  \end{align*}
  Since $h^{\bm V}(\bm X) = 1$, this implies
  $|Q(\bm D)| = \prod_{\bm V} b_{\bm V} = \prod_{\bm
    V}\floor{2^{a_{\bm V}}} \geq \frac{1}{2^{2^n-1}}U$, because
  $\floor{2^{a_{\bm V}}} \geq \frac{1}{2}2^{a_{\bm V}}$.
\end{proof}

The reader may want to check the analogy with the worst-case instance
of the AGM bound: the optimal solution $\bm v^*$ there became here
$\bm h^*$, and the domain $V_i=[\floor{2^{v_i^*}}]$ defined for the
variable $X_i$ became here the normal relation $P^{\bm V}$.  As
before, we constructed the worst-case instance $\bm D$ without
amplifying the statistics, and $|Q(\bm D)|$ is within a constant,
which depends only on the query, of
$\text{U-Bound}_{\Gamma_n^*}(Q,\Sigma,\bm B)$.

{\bf Discussion} The restriction to simple statistics occurs naturally
in many applications.  Databases are often designed with simple keys
(consisting of a single attribute), and applications that use degrees
often consider only simple degrees.  The restriction to simple
statistics is often acceptable.

It remains open where one can extend this definition to richer classes
of statistics, or inequalities, while still preserving the property
that validity for entropic vectors is the same as validity for
polymatroids.  The set of statistics in Example~\ref{ex:pods2016} is
not ``simple'', yet the entropic bound coincides with the polymatroid
bound.  This (and other examples) suggests that other non-trivial
syntactic classes may exist where these two bounds agree.

\section{Query Evaluation}

\label{sec:query:evaluation}

The query evaluation problem is: given a conjunctive query $Q$,
evaluate it on a (usually large) database $\bm D$.  In this paper we
consider only the {\em data complexity}, where the query is fixed, and
the runtime is given as a function of the statistics of $\bm D$.
Database systems compute queries using a sequence of {\em binary
  joins}, of the form $C(X,Y,Z) = A(X,Y) \wedge B(Y,Z)$, which are
written as $C = A \bowtie B$.  Assuming all relations are pre-sorted,
the time complexity of the join is $\tilde O(|A|+|B|+|A \bowtie B|)$.
A {\em semi-join}, denoted $C = A \ltimes B$, is a join followed by
the projection on the attributes of the first relation, meaning
$C(X,Y) = \exists Z(A(X,Y) \wedge B(Y,Z))$.  A semijoin can be
computed in time $\tilde O(|A|)$.


A {\em Worst Case Optimal Join} (WCOJ) is an algorithm that evaluates
$Q$ in time no larger than its theoretical upper bound.  A sequence of
binary joins is usually not a WCOJ, because intermediate results may
be larger than the theoretical upper bound of the query.  For example
the upper bound for the triangle query in Example~\ref{ex:triangle} is
$N^{3/2}$, but if we evaluate it as $(R\bowtie S) \bowtie T$, the join
$R \bowtie S$ can have size $N^2$.

Any WCOJ algorithm represents an indirect proof of the query's upper
bound, since the size of the output cannot exceed the time complexity
of the algorithm.  For example, if we are given an algorithm for the
triangle query, together with a proof that its runtime is
$\tilde O(N^{3/2})$, then we have a proof that the size of the output
is also $\tilde O(N^{3/2})$.  This means that proving an upper bound
on the query's output is {\em inevitable} for designing a WCOJ.  We
show in this section that one can proceed in reverse: given a proof of
the upper bound, convert it into a WCOJ. We call this paradigm {\em
  From Proofs to Algorithms}.  Thus, the question to ask in designing
a WCOJ algorithm is: how do we {\em prove} an upper bound on the
query's output?  And how do we convert it into an algorithm?

\subsection{Generic Join}

Consider the setting of the AGM bound: we are given only cardinality
statistics on the base relations.  In that case, a proof of the upper
bound is a proof of $\sum_j w_j h(\bm Y_j) \geq h(\bm X)$, since it
implies $|Q| \leq \prod_j |R_j|^{w_j}$.  We gave a proof of this
inequality in Theorem~\ref{th:shearer}; the proof consists of
conditioning on the last variable $X_n$, then applying induction on
the remaining variables.  We convert that proof into an algorithm:
iterate $X_n$ over its domain, and compute recursively the residual
query.  This algorithm is called {\em Generic Join}, or GJ, and was
introduced by Ngo, R{\'{e}}, and Rudra
\cite{DBLP:journals/sigmod/NgoRR13}.  We describe it in detail next.

%
%

Fix a full conjunctive query with variables $\bm X$, which we write as
$Q = \bigjoin_{j=1,m} R_j$.  As usual, $\bm Y_j$ are the variables of
$R_j$.  Generic Join computes $Q$ as follows:

\begin{itemize}
\item Let $X_n$ be an arbitrary variable.
\item Partition the set of indices $j$ into $J_0$ and $J_1$:
  \\
  \null\hfill $J_0 \defeq \setof{j}{X_n \not\in \bm Y_j}$, \hfill  $J_1 \defeq \setof{j}{X_n \in \bm Y_j}$.\hfill\null
\item Compute the set $D = \bigcap_{j \in J_1}\Pi_{X_n}(R_j)$.  
\item For each value $x \in D$, do:
  \begin{itemize}
  \item Compute $R_j[x] := \Pi_{\bm Y_j-X_n}(\sigma_{X_n=x}(R_j))$, for
    $j \in J_1$.
  \item Denote $R_j[x] := R_j$ for $j \in J_0$.
  \item Compute the residual query $\bigjoin_{j=1,m} R_j[x]$.
  \end{itemize}
\end{itemize}

We invite the reader to check how the algorithm can be ``read off''
the proof of Theorem~\ref{th:shearer}.  To compute the runtime of the
algorithm, assume that the relations are given in listing
representation, sorted lexicographically using the attribute order
$X_n, X_{n-1}, \ldots, X_1$.  Then, the runtime, $T_n$, is:
\begin{align*}
  T_n(R_1,\ldots,R_m) = & T_{\text{intersection}} + \sum_x T_{n-1}(R_1[x],\ldots,R_m[x])
\end{align*}
By induction hypothesis:
\begin{align*}
  T_{n-1}(R_1[x],\ldots,R_m[x]) = & \tilde O\left(\prod_j |R_j[x]|^{w_j}\right)
\end{align*}
which leads to:
\begin{align*}
  T_n& = T_{\text{intersection}} +  \tilde O\left(\prod_{j \in J_0} |R_j|^{w_j} \sum_x \prod_{j \in J_1}|R_j[x]|^{w_j}\right)\\
  \leq  & T_{\text{intersection}} +  \tilde O\left(\prod_{j \in J_0} |R_j|^{w_j} \prod_{j \in J_1}\left(\sum_x |R_j[x]|\right)^{w_j}\right)\\
  = & T_{\text{intersection}} + \tilde O\left(\prod_j |R_j|^{w_j}\right)
\end{align*}
We used H\"older's inequality in Fig.~\ref{fig:friedgut} (since
$\sum_{j \in J_1} w_j \geq 1$, because $X_n$ is covered), and the fact
that $\sum_x |R_j[x]| = |R_j|$ for $j \in J_1$.  The crux of the
algorithm is the intersection: its runtime should not exceed
$\prod_j |R_j|^{w_j}$, and for that it suffices to iterate over the
smallest set $\Pi_{X_n}(R_j)$, and probe in the others: the runtime is
$\tilde O(\min_{j \in J_1} |R_j|) \leq \tilde O(\prod_{j\in
  J_1}|R_j|^{w_j})$, since $\sum_{j \in J_1} w_j \geq 1$.

\begin{ex}
  Using the variable order $X,Y,Z$, GJ computes the triangle query
  $R(X,Y)\wedge S(Y,Z) \wedge T(Z,X)$ as follows:
  \begin{align*}
    & \texttt{For } x\in \Pi_X(R) \cap \Pi_X(T)\texttt{ do:}\\
    & \ \ \ \texttt{For } y \in \Pi_Y(R[X=x]) \cap \Pi_Y(S) \texttt{ do:}\\
    & \ \ \ \ \ \texttt{For } z \in \Pi_Z(S[Y=y]) \cap \Pi_Z(T[X=x])  \texttt{ do:}\\
    & \ \ \ \ \ \ \ \texttt{output}(x,y,z)
  \end{align*}
\end{ex}

The choice of algorithm for computing the intersection is critical for
GJ.  To see this, consider the simplest query,
$Q(X) = R(X) \wedge S(X)$, that {\em is} an intersection.  The AGM
bound is $\min(|R|,|S|)$, corresponding to the edge covers $(1,0)$ and
$(0,1)$, and GJ must compute the query in time
$\tilde O(\min(|R|,|S|))$.  By assumption, $R, S$ are already sorted,
but we cannot run a standard merge algorithm, since its runtime is
$O(|R|+|S|)$; instead, we iterate over the smaller relation and do a
binary search in the larger.

Because of its simplicity and ease of implementation, GJ is the poster
child of WCOJ algorithms.  One remarkable property of GJ is that its
runtime is always bounded by the AGM bound, no matter what variable
order we choose.  Before GJ,
Veldhuizen~\cite{DBLP:conf/icdt/Veldhuizen14} described an algorithm
called Leapfrog Triejoin (LFTJ), which uses a similar logic as GJ, but
also specifies in the details of the required trie data structure.
Several implementations of GJ/LFTJ exists
today~\cite{DBLP:conf/sigmod/SchleichOC16,DBLP:conf/sigmod/ArefCGKOPVW15,DBLP:journals/pvldb/FreitagBSKN20,DBLP:journals/tods/MhedhbiKS21,DBLP:journals/corr/abs-2301-10841}.

\subsection{The Heavy/Light Algorithm}

\label{sec:heavy:light}

Balister and
Bollob{\'{a}}s~\cite{DBLP:journals/combinatorica/BalisterB12} provided
the following alternative proof of an inequality of the
form~\eqref{eq:shearer:ii}, which we write in an equivalent form using
integer coefficients:
\begin{align}
  E \defeq \sum_{j=1,m} k_j h(\bm Y_j) \geq & k_0 h(\bm X) \label{eq:unconditional:integer}
\end{align}
where $k_i \in \N$, for $i=0,m$.
View the expression $E$ as a bag of terms $h(\bm Y_j)$ where each term
$h(\bm Y_j)$ occurs $k_j$ times.  A {\em compression step} consists of
the following:
\begin{itemize}
\item Choose two terms $h(\bm U), h(\bm V) \in E$ such that
  $\bm U \not\subseteq \bm V$ and $\bm V \not\subseteq \bm U$.
\item Replace $h(\bm U) + h(\bm V)$ with
  $h(\bm U\cup \bm V)+h(\bm U \cap \bm V)$.
\end{itemize}

\begin{thm}~\cite{DBLP:journals/combinatorica/BalisterB12} Any
  sequence of compression steps eventually leads to:
  \begin{align}
    E =  & \ell_0 h(\bm Z_0) + \ell_1 h(\bm Z_1) + \cdots,&  \mbox{ where } & \bm Z_0 \supset \bm Z_1 \supset \cdots\label{eq:bb}
  \end{align}
  Furthermore, if each variable $X_i$ is covered at least $k_0\geq 1$
  times\footnote{Meaning: $\sum_{j: X_i \in \bm Y_j} k_j \geq k_0$.}
  by the original expression $E$ in~\eqref{eq:unconditional:integer},
  then $\bm Z_0=\bm X$ and $\ell_0 \geq k_0$; in particular, the
  inequality~\eqref{eq:unconditional:integer} is valid.
\end{thm}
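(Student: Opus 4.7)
The plan is to pin down one invariant and one strict monovariant, prove termination by boundedness of the monovariant, identify the terminal form, and read off the conclusion from the invariant. The starting observation is that each compression step can only decrease the current expression, because submodularity~\eqref{eq:submodularity} gives $h(\bm U) + h(\bm V) \geq h(\bm U \cup \bm V) + h(\bm U \cap \bm V)$. Hence the original $E$ dominates whatever bag we end up with, and it suffices to show that the terminal bag is a chain $\bm Z_0 \supsetneq \bm Z_1 \supsetneq \cdots$ satisfying $\bm Z_0 = \bm X$ and $\ell_0 \geq k_0$.

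For termination, take $\Phi \defeq \sum_j k_j |\bm Y_j|^2$ summed over the bag. Writing $a = |\bm U \cap \bm V|$, $b = |\bm U \setminus \bm V|$, $c = |\bm V \setminus \bm U|$, the elementary identity $(a+b+c)^2 + a^2 - (a+b)^2 - (a+c)^2 = 2bc$ shows that $\Phi$ increases by $2bc \geq 2$ at each step, since incomparability of $\bm U, \bm V$ forces $b, c \geq 1$. On the other hand, the multiplicity $m_i \defeq \sum_{j: X_i \in \bm Y_j} k_j$ of any variable $X_i$ is preserved by a four-case check (depending on whether $X_i$ lies in $\bm U \cap \bm V$, $\bm U \setminus \bm V$, $\bm V \setminus \bm U$, or neither). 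Consequently the incidence count $\sum_i m_i = \sum_j k_j |\bm Y_j|$ is conserved, and since each $|\bm Y_j| \leq n$ we get $\Phi \leq n \sum_i m_i$, a fixed ceiling. Thus the process must halt. At the terminal state no two sets in the bag are incomparable, so they form a chain; collapsing equal sets into their aggregate coefficient yields the desired strict chain $\bm Z_0 \supsetneq \bm Z_1 \supsetneq \cdots$.

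Under the hypothesis that every $X_i$ is covered at least $k_0$ times by $E$, the invariance of $m_i$ gives $m_i \geq k_0 \geq 1$ in the terminal bag, so each $X_i$ lies in some $\bm Z_j$ and therefore in $\bm Z_0$; hence $\bm Z_0 = \bm X$. To get $\ell_0 \geq k_0$, if the chain has a single element then $\ell_0 = m_i$ for every $i$ and we are done; otherwise, pick any $X_i \in \bm Z_0 \setminus \bm Z_1$, which exists by strict containment, and observe that $X_i$ belongs to no $\bm Z_j$ with $j \geq 1$, so $\ell_0 = m_i \geq k_0$. Combining with submodularity applied at each compression step and $h(\bm Z_j) \geq h(\emptyset) = 0$ by monotonicity, we obtain $E \geq \sum_j \ell_j h(\bm Z_j) \geq \ell_0 h(\bm X) \geq k_0 h(\bm X)$, as required. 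The one place that needs care is verifying that the monovariant is \emph{strictly} monotone (so termination is genuine, not just eventual non-increase); the identity $2bc > 0$ for incomparable $\bm U, \bm V$ handles this cleanly, and the rest is bookkeeping.
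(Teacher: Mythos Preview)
Your proof is correct and follows essentially the same approach as the paper: the same monovariant $\sum k_j |\bm Y_j|^2$ (with the same strict-increase computation), the same per-variable multiplicity invariant, and the same identification of the terminal bag as a chain. You add a bit more detail in the final step---explicitly invoking submodularity to bound the original $E$ by the terminal expression and picking a variable in $\bm Z_0 \setminus \bm Z_1$ to read off $\ell_0 \geq k_0$---which the paper leaves implicit, but the argument is otherwise identical.
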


\begin{proof} Each compression step strictly increases the quantity
  $\sum_{h(\bm Z) \in E} |\bm Z|^2$.  To see this, write
  $\bm U = \bm A \cup \bm C$, $\bm V = \bm B \cup \bm C$ where
  $\bm A,\bm B,\bm C$ are disjoint sets, then
  $|\bm U|^2 + |\bm V|^2 = (|\bm A|+|\bm C|)^2 + (|\bm B|+|\bm C|)^2$,
  while
  $|\bm U \cup \bm V|^2 + |\bm U \cap \bm V|^2 = (|\bm A|+|\bm B|+|\bm
  C|)^2 + |\bm C|^2$, and the latter is strictly larger when
  $|\bm A|\cdot |\bm B| > 0$.  This quantity cannot exceed
  $(\sum_j k_j)n^2$, therefore compression needs to terminate, and
  this happens when for any two sets in $E$ one contains the
  other. Then, $E$ must have the form~\eqref{eq:bb}.  Finally, we
  observe that compression preserves the number of times each variable
  $X_i$ is covered by $E$, because the number of sets in
  $\set{\bm U, \bm V}$ containing $X_i$ is the same as the number of
  sets in $\set{\bm U \cup \bm V, \bm U \cap \bm V}$ containing $X_i$.
  Therefore, if each variable is covered by $E$ at least $k_0$ times,
  then $\bm Z_0=\bm X$ and $\ell_0 \geq k_0$.
\end{proof}

Call a sequence of compression steps that converts an expression $E$
in~\eqref{eq:unconditional:integer} to~\eqref{eq:bb} a {\em BB-proof
  sequence}.  To derive an algorithm, we need to impose an additional
restriction.  Call a BB-proof sequence {\em divergent} if, after each
compression step
$h(\bm U) + h(\bm V) \rightarrow h(\bm U\cup \bm V)+h(\bm U \cap \bm
V)$, we can split $E$ into $E'+E''$, such that $E'$ contains
$h(\bm U\cup \bm V)$ and covers every variable at least $k_0'$ times,
$E''$ contains $h(\bm U \cap \bm V)$ and covers every variable at
least $k_0''$ times, and $k_0 = k_0'+k_0''$.

We convert a divergent BB-proof sequence into an algorithm called the
{\em Heavy/Light Algorithm}.  Let $\bm B$ be the statistics values,
$\bm b \defeq \log \bm B$, and set $B\defeq \max_j B_j$.  Assume
w.l.o.g. that the inequality
$E \defeq \sum_j k_jh(\bm Y_j)\geq k_0h(\bm X)$ is optimal, meaning
that
$\sum_j (k_j/k_0) b_j =
\text{Log-U-Bound}_{\Gamma_n}(Q,\Sigma,\log(\bm B))$ (see
Eq.~\eqref{eq:def:log-u-bound}).  Denote by $\bm h^*$ an optimal solution
to the dual, meaning $\bm h^* \models (\Sigma, \bm b)$ and
$h^*(\bm X) = \text{Log-L-Bound}_{\Gamma_n}(Q,\Sigma,\bm b)$ (see
Eq.~\eqref{eq:k:bound}).  These two quantities are the same by
Thm.~\ref{th:primal:dual:bound:polyhedral}:
$\sum_j (k_j/k_0) b_j=h^*(\bm X)$.  The algorithm uses a working
memory which stores, for each term $h(\bm Z)$ in $E$, a temporary
relation $S(\bm Z)$, called the {\em guard} of $h(\bm Z)$, and
maintains the invariant: $\log |S(\bm Z)| \leq h^*(\bm Z)$.
Initially, the working memory is $\setof{R_j}{k_j > 0}$: by
complementary slackness, if $k_j>0$ then the dual constraint
constraint is tight, $h^*(\bm Y_j) = b_j$, and the invariant holds
because $\log |R_j|\leq \log B_j = b_j = h^*(\bm Y_j)$.

The algorithm repeatedly processes a compression step
$h(\bm U)+h(\bm V)\rightarrow h(\bm U\cup \bm V)+ h(\bm U\cap \bm V)$
of the BB-sequence, as follows.  If the two guards are $S(\bm U)$ and
$S'(\bm V)$, let $\bm C \defeq \bm U \cap \bm V$, write
$\bm U = \bm A \bm C$, $\bm V = \bm B \bm C$, and define
$M \defeq 2^{h^*(\bm A|\bm C)}$.  Partition the guard $S(\bm A \bm C)$
into two subsets:
\begin{align*}
  S_{\text{light}}(\bm A, \bm C) = & \setof{(\bm a, \bm c) \in S}{\degree_S(\bm A|\bm C= \bm c) \leq M} \\
  S_{\text{heavy}}(\bm C) = & \setof{\bm c \in \Pi_{\bm C}(S)}{\degree_S(\bm A|\bm C= \bm c) > M}
\end{align*}
Compute new guards using a join and a semijoin:
\begin{align*}
S''(\bm A,\bm B,\bm C) := & S_{\text{light}}(\bm A,\bm C) \bowtie S'(\bm B,  \bm C)\\
S'''(\bm C) := & S_{\text{heavy}}(\bm C) \ltimes S'(\bm B, \bm C)
\end{align*}
The invariant holds because $|S''| \leq M \cdot |S|$ implies
$\log |S''| \leq h^*(\bm A|\bm C) + h^*(\bm B \bm C) \leq h^*(\bm A\bm
B\bm C)$, and because $|S_{\text{heavy}}|\leq |S|/M$ (since every
$\bm c \in S_{\text{heavy}}$ occurs $\geq M$ times in $S$) implies
$\log |S'''| \leq \log|S|-\log M = h^*(\bm A\bm C)-h^*(\bm A|\bm C) =
h^*(\bm C)$.  The runtime of the join and semijoin is
$\leq \tilde O(2^{h^*(\bm X)}) = \tilde
O(\text{U-Bound}_{\Gamma_n}(Q,\Sigma,\bm B))$.  Next, the algorithm
proceeds recursively, by processing independently $E'$ and $E''$,
semi-joins the result of $E'$ with the relations missing from $E'$,
similarly semi-joins the result of $E''$ with the relations missing
from $E''$, then returns the union of these two results.  Correctness
is easily checked.

\begin{figure}
  \centering
    \includegraphics[width=0.45\linewidth]{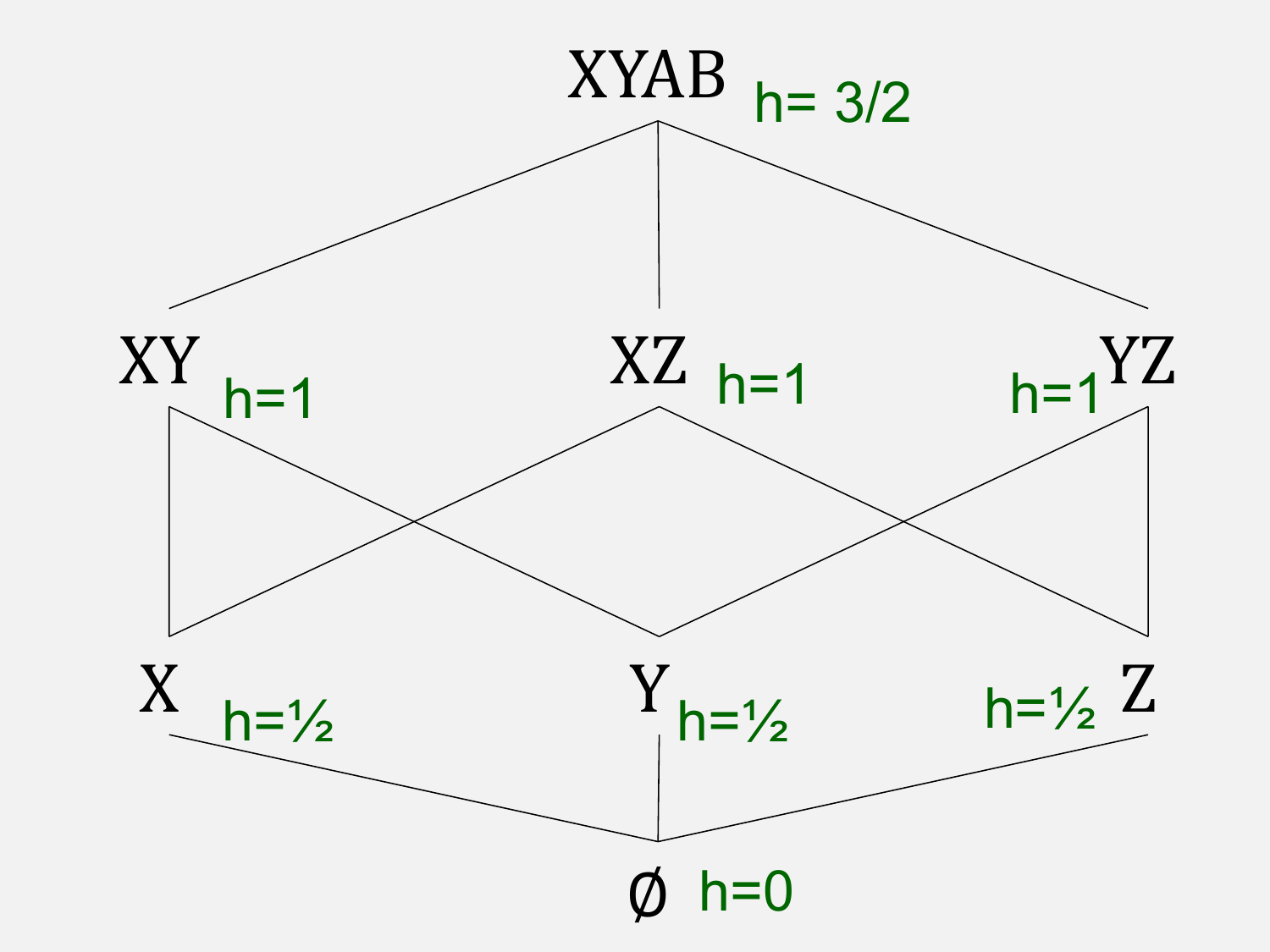}
  \caption{A simple polymatroid used in Example~\ref{ex:heavy:light}.}
  \label{fig:heavylight}
\end{figure}

\begin{ex} \label{ex:heavy:light} Consider the triangle query
  $R(X,Y)\wedge S(Y,Z) \wedge T(Z,X)$, and the following divergent
  proof:
  $\underline{h(XY)+h(YZ)}+h(ZX) \rightarrow h(XYZ) +
  \underline{h(Y)+h(ZX)} \rightarrow h(XYZ)+h(XYZ)$.  Assume for
  simplicity that the three relations have the same cardinalities
  $|R|=|S|=|T|=B$.  The optimal polymatroid is
  $\bm h^* \defeq \log B \cdot \bm h$, for $\bm h$ in
  Fig.~\ref{fig:heavylight}.
  
  The Heavy/Light Algorithm proceeds as follows. For the first
  compression step it partitioning $R$ into:
  \begin{align*}
    R_{\text{light}}:= & \setof{(x,y)}{\degree_R(X|Y=y) \leq 2^{h^*(Y|X)}= B^{1/2}}\\
    R_{\text{heavy}}:= & \setof{y}{\degree_R(X|Y=y) > 2^{h^*(Y|X)}= B^{1/2}}
  \end{align*}
  then it computes:
  \begin{align*}
    \texttt{Temp}_1(X,Y,Z)  := & R_{\text{light}}(X,Y) \bowtie S(Y,Z)\\
    \texttt{Temp}_2(Y) := & R_{\text{heavy}}(Y) \ltimes S(Y,Z)
  \end{align*}
  At this point the BB-proof diverged into two branches,
  $h(XYZ)\geq h(XYZ)$, and $h(Y) + h(ZX) \geq h(XYZ)$, and we perform
  a recursive call for each branch.  The first branch immediately
  returns $\texttt{Temp}_1$, which we semi-join with $T$:
  $Q'(X,Y,Z) = \texttt{Temp}_1(X,Y,Z) \ltimes T(Z,X)$.  The other
  branch applies the second compression step which corresponds to the
  following join operation:
  \begin{align*}
    \texttt{Temp}_3(X,Y,Z) := & \texttt{Temp}_2(Y) \times T(Z,X)
  \end{align*}
  which we semi-join with $R$ and $S$:
  $Q''(X,Y,Z) := \texttt{Temp}_3(X,Y,Z) \ltimes R(X,Y)\ltimes S(Y,Z)$.
  Finally, we return the union $Q' \cup Q''$.  The reader may verify
  that the runtime is $\tilde O(B^{3/2})$.
\end{ex}

An advantage of the Heavy/Light Algorithm over GJ is that it reuses
existing join operators, which already have very efficient
implementations in database systems.  However, the algorithm only
works for divergent BB-proofs.  This raises the question: does every
inequality~\eqref{eq:unconditional:integer} have a divergent proof?
The answer is negative, as provided by the following example due to
Yilei Wang~\cite{yilei-wang-2021}.

\begin{ex}
  The following has no divergent BB-proof:
  \begin{align*}
    E = h(XYZ)+h(ZUV)+h(VWX)+h(YUW) \geq & 2h(XYZUVW)
  \end{align*}
Assume w.l.o.g. that we start by compressing
$h(XYZ)+h(ZUV)\rightarrow h(XYZUV) + h(Z)$ (by symmetry, all other
choices are equivalent).  Then we need to partition $E$ into $E'+E''$.
Suppose $E'$ contains $h(Z)$; since $E'$ covers every variable, it
must contain both remaining terms $h(VWX)+h(YUW)$, which means that
$E''$ can only contain $h(XYZUV)$ alone, and it does not cover $W$.
\end{ex}

\subsection{PANDA}

Both Generic Join and the Heavy/Light Algorithm are restricted to
cardinality statistics, in other words they only work in the framework
of the AGM bound.  PANDA, introduced
in~\cite{DBLP:conf/pods/Khamis0S17}, is a WCOJ algorithm that works
for general statistics.  While it runs in time given by the
theoretical query upper bound, it also includes a polylogarithm factor
in the size of the database, with a rather large exponent.  We
describe PANDA here at a high level, and refer the reader
to~\cite{DBLP:conf/pods/Khamis0S17} for details.

Let $\Sigma$ be a set of statistics, and consider a
$\Sigma$-inequality with integer coefficients:
\begin{align}
  E \defeq  \sum_{\sigma \in \Sigma} k_\sigma h(\sigma) \geq & k_0 h(\bm X) \label{eq:conditional:inequality:integer}
\end{align}
A {\em CD-proof sequence} for the
inequality~\eqref{eq:conditional:inequality:integer} is a sequence of
steps that convert the LHS to the RHS, where each step is one of the
following:
\begin{itemize}
\item Composition: $h(\bm U) + h(\bm V|\bm U) \rightarrow h(\bm U\bm
  V)$.
\item Decomposition:
  $h(\bm U\bm V) \rightarrow h(\bm U) + h(\bm V|\bm U)$.
\item Submodularity:
  $h(\bm V|\bm U) \rightarrow h(\bm V|\bm U \bm W)$.
\item No-Op: $h(\bm U) \rightarrow 0$.
\end{itemize}
We say that the CD-proof sequence {\em proves} the
inequality~\eqref{eq:conditional:inequality:integer} if its starts
from the LHS and ends with the RHS.  The following was proven
in~\cite{DBLP:conf/pods/Khamis0S17}:

\begin{lmm}
  Inequality~\eqref{eq:conditional:inequality:integer} is valid for
  polymatroids iff it admits a CD-proof sequence.
\end{lmm}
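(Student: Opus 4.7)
The plan is to prove the two directions separately. For soundness, the easy direction, I would check that each CD-rule preserves the polymatroid inequality in the $\geq$ direction: Composition and Decomposition are equalities (just the definition $h(\bm V|\bm U) = h(\bm U\bm V) - h(\bm U)$); the Submodularity rule replaces $h(\bm V|\bm U)$ by the smaller quantity $h(\bm V|\bm U\bm W)$, since conditioning on more variables only decreases polymatroid entropy; and the No-Op rule $h(\bm U) \to 0$ is valid by monotonicity $h(\bm U) \geq h(\emptyset) = 0$. Hence any CD-sequence rewriting the LHS of~\eqref{eq:conditional:inequality:integer} into the RHS automatically witnesses its polymatroid validity.

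For completeness, the key idea is to invoke LP duality, precisely as in the proof of Theorem~\ref{th:primal:dual:bound:polyhedral}. The Shannon cone $\setof{\bm c}{\Gamma_n \models \bm c \cdot \bm h \geq 0}$ is a rational polyhedral cone finitely generated by (conditional forms of) basic monotonicity $h(\bm V|\bm U) \geq 0$ and basic submodularity $h(\bm V|\bm U) \geq h(\bm V|\bm U\bm W)$. A polymatroid-valid inequality with integer coefficients therefore admits a derivation as a non-negative rational combination of these basic inequalities; clearing denominators, we may take the multiplicities to be non-negative integers. The task then reduces to simulating each basic Shannon step by a constant-length fragment of CD-rules, and concatenating these fragments into one CD-sequence from LHS to RHS.

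The simulation is direct for submodularity: one application of the Submodularity CD-rule discharges one basic submodularity step. For monotonicity $h(\bm V|\bm U) \geq 0$, one first uses Decomposition to expose a copy of $h(\bm U)$ from a larger unconditional term already present in the expression (as in the chain-rule decomposition $h(\bm U\bm V') \to h(\bm U) + h(\bm V'|\bm U)$), then uses Composition $h(\bm U) + h(\bm V|\bm U) \to h(\bm U\bm V)$ to absorb the unwanted conditional term, and finally No-Op to delete the resulting unconditional term $h(\bm U\bm V)$. The coupled effect on $h(\bm U)$ is matched in the Shannon bookkeeping by the fact that the Shannon proof prescribes exactly the same re-arrangements of unconditional terms.

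The principal obstacle is precisely this bookkeeping. The CD-system is a syntactic rewrite system, whereas the Shannon proof lives at the level of algebraic identities between linear combinations of $h(\bm U)$. One must therefore maintain an inductive invariant that, at each point in the simulation, the current CD-expression contains the exact multiset of atomic terms needed to apply the next basic Shannon step. Since Composition and Decomposition are equalities and can be interleaved freely, the invariant can always be restored, but verifying this rigorously requires a careful induction on the length of the Shannon proof, including a normal-form argument reducing everything to elemental conditional entropies $h(X_i|\bm W)$. This is the technical core of the completeness proof in~\cite{DBLP:conf/pods/Khamis0S17}, and I would follow the same route.
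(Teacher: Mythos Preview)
The paper does not give its own proof of this lemma; it merely states it and defers to~\cite{DBLP:conf/pods/Khamis0S17}. Your proposal---soundness by checking that each CD-rule only decreases the expression, and completeness via LP duality over the polyhedral cone $\Gamma_n$ followed by simulating each elemental Shannon inequality with a short CD-fragment---is exactly the strategy of that reference, and you correctly flag the multiset bookkeeping (ensuring the right atomic terms are available at each simulation step) as the nontrivial technical core.
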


PANDA converts a CD-proof sequence into an algorithm, similarly to the
way we converted a BB-sequence to an algorithm.  Given statistics
$\Sigma, \bm B$, guarded by the query $Q$
(Sec.~\ref{sec:entropic:polymatroid:bound}), assume that
inequality~\eqref{eq:conditional:inequality:integer} is the optimal
solution to $\text{Log-U-Bound}_{\Gamma_n}(Q,\Sigma,\log(\bm B))$;
otherwise, choose a better inequality.  Denote by $h^*$ an optimal
solution to $\text{Log-L-Bound}_{\Gamma_n}(Q,\Sigma,\log(\bm B))$.
%
%
The algorithm has a working memory consisting of a guard, call it
$S_{\bm V|\bm U}(\bm Z)$, for every term $h(\bm V|\bm U)$ in $E$,
satisfying the following invariant: $\bm V \subseteq \bm Z$ and there
exists a subset $\bm U_0 \subseteq \bm U \cap \bm Z$, such that:
\begin{align*}
  \log \degree_{S_{\bm V|\bm U}}(\bm V|\bm U_0) \leq & h^*(\bm V|\bm U)
\end{align*}
The guard need not have all variables $\bm U$, but only a subset
$\bm U_0$ that is sufficient to prove the bound on the max-degree.

Initially, the working memory consists of all guards $R_{\sigma}$ of
the statistics $\sigma = (\bm V|\bm U) \in \Sigma$, where
$k_\sigma > 0$.  By complementary slackness, if $k_\sigma > 0$, then
the corresponding constraint on $\bm h^*$ is tight,
$h^*(\sigma) = b_\sigma (\defeq \log B_{\sigma})$, therefore
$\log \degree_{R_\sigma}(\sigma)\leq b_\sigma = h^*(\sigma)$ because
the input database satisfies the statistics.  PANDA performs the
following action for each step of the CD-proof sequence:

\begin{description}
\item[Composition]
  $h(\bm U) + h(\bm V|\bm U) \rightarrow h(\bm U\bm V)$.  Compute the
  new guard as:
  \begin{align*}
    S_{\bm U\bm V} := & \Pi_{\bm U}(S_{\bm U}) \bowtie \Pi_{\bm U_0 \bm V}(S_{\bm V|\bm U})
  \end{align*}
  Since
  $|S_{\bm U\bm V}| \leq |S_{\bm U}|\cdot \degree_{S_{\bm V|\bm
      U}}(\bm V|\bm U_0)$, we have:
  \begin{align*}
    \log|S_{\bm U\bm V}| \leq & \log\degree_{S_{\bm U|\emptyset}}(\bm U|\emptyset) +  \log \degree_{S_{\bm V|\bm U}}(\bm V|\bm U_0)\\
    \leq & h^*(\bm U) + h^*(\bm V|\bm U) = h^*(\bm V)
  \end{align*}
  Thus, the invariant holds, and the runtime does not exceed the
  polymatroid bound, whose log is $h^*(\bm X)$.
\item[Submodularity] $h(\bm V|\bm U) \rightarrow h(\bm V|\bm U\bm W)$.
  Here PANDA only records that the new term $h(\bm V|\bm U\bm W)$ has
  the same guard as the old term $h(\bm V|\bm U)$.
\item[Decomposition]
  $h(\bm U\bm V) \rightarrow h(\bm U)+h(\bm V|\bm U)$.  Here PANDA
  first projects out the extra variables in the guard of
  $h(\bm U\bm V)$ and obtains a relation
  $S := \Pi_{\bm U\bm V}(S_{\bm U\bm V})$ whose size $N \defeq |S|$
  satisfies $\log N \leq h^*(\bm U\bm V)$.  Next, it performs {\em
    regularization}: partition $S$ into $\log N$ fragments
  $S = \bigcup_{i=1,\log N} S_i$, where:
  \begin{align*}
  S_i(\bm U,\bm V) \defeq & \setof{(\bm u, \bm v)\in S}{\degree_S(\bm V|\bm U=\bm u) \in [2^{i-1},2^i]}
  \end{align*}
  PANDA then continues with $\log B$ recursive calls.  The $i$'th
  recursive call replaces $S$ with $S_i$ in the query, adds two new
  statistics $(\bm V|\bm U)$ and $(\bm U)$ to $\Sigma$, and two
  log-statistics values, $b_{\bm V|\bm U} \defeq i$ and
  $b_{\bm U} \defeq N/2^{i-1}$, both with guard $S_i(\bm U \bm V)$.
  Then, PANDA computes a {\em new} optimal primal/dual solutions to
  the polymatroid bound, resulting in a new
  inequality~\eqref{eq:conditional:inequality:integer} and a new
  polymatroid $\bm h^*$.  It uses these to compute the residual query
  where $S$ is replaced by $S_i$.  Finally, it returns the union of
  all $\log N$ results from all recursive calls.
\end{description}

We leave out several details of PANDA, including the proof of
termination, and refer the reader to~\cite{DBLP:conf/pods/Khamis0S17}.
We also note that PANDA was extended from computing full conjunctive
queries, to computing Boolean conjunctive queries, with a runtime
given by the {\em submodular width} of the query, a notion introduced
by Marx~\cite{DBLP:journals/jacm/Marx13}.

\section{The Domination Problem}

\label{sec:domination}

We now move beyond the query upper bound problem, and consider a
related question, called the {\em domination problem}: given two
queries $Q, Q'$, check if, for any database $\bm D$,
$|Q(\bm D)| \leq |Q'(\bm D)|$.  The queries $Q$ and $Q'$ need not have
the same number of variables.  In this section we consider full
conjunctive queries that may have self-joins, i.e. the same relation
name may occur several times in the query; for example in
$R(X,Y)\wedge R(Y,Z)$ the same relation $R$ occurs twice.

\begin{defn}
  Given two conjunctive queries $Q(\bm X)$, $Q'(\bm Y)$ we say that
  $Q'$ {\em dominates} $Q$, and write $Q \preceq Q'$, if for every
  database instance $\bm D$, $|Q(\bm D)| \leq |Q'(\bm D)|$.
\end{defn}

The original motivation for the domination problems comes from the
query containment problem under bag semantics.  Given a (not
necessarily full) conjunctive query $Q(\bm Y_0)$, as in~\eqref{eq:cq},
its value under {\em bag semantics} is a bag of tuples, where each
tuple $\bm y_0$ occurs as many times as the number of homomorphisms
from $Q$ to $\bm D$ that map $\bm Y_0$ to $\bm y_0$.  SQL uses bag
semantics.  Chaudhuri and Vardi~\cite{DBLP:conf/pods/ChaudhuriV93}
were the first to study the {\em query containment problem under bag
  semantics}: given $Q, Q'$, check whether
$Q(\bm D) \subseteq Q'(\bm D)$ for every $\bm D$, where both
$Q(\bm D)$, $Q'(\bm D)$ are bags of tuples.  This problem has been
intensively studied in the last thirty years.  It has been shown that
the containment problem under bag semantics is undecidable for {\em
  unions of conjunctive
  queries}~\cite{DBLP:journals/tods/IoannidisR95}, and for {\em
  conjunctive queries with
  inequalities}~\cite{DBLP:conf/pods/JayramKV06}; both used reduction
from Hilbert's 10th Problem.  It should be noted that, under set
semantics, the containment problem for these two classes of queries is
decidable.

When $Q() = \ldots$ is a Boolean query, then under standard
set-semantics it returns either $\set{}$ or $\set{()}$, representing
FALSE and TRUE.  Under bag semantics it may return a bag
$\set{(),(),\ldots,()}$, representing a number, and this number is
equal to the size of the output of the full query,
$Q(\bm X) = \ldots$.  Based on this discussion, the domination problem
$Q \preceq Q'$ for full conjunctive queries is the same as the query
containment problem under bag semantics for Boolean queries.

Kopparty and Rossman~\cite{DBLP:journals/ejc/KoppartyR11} were the
first to establish the connection between the domination problem and
information theory.  We describe this connection, following their
example.

\begin{ex} \label{ex:vee} This example
  from~\cite{DBLP:journals/ejc/KoppartyR11} is attributed to Eric Vee.
  Consider the following queries:
  \begin{align*}
    Q(X,Y,Z) = & R(X,Y) \wedge R(Y,Z) \wedge R(Z,X) \\
    Q'(U,V,W) = & R(U,V) \wedge R(U,W)
  \end{align*}
  We will show that $Q \preceq Q'$.  Chaudhuri and
  Vardi~\cite{DBLP:conf/pods/ChaudhuriV93} already noted that, if
  there exists a surjective homomorphism $Q'\rightarrow Q$, then
  $Q \preceq Q'$.  In our example we have three homomorpisms
  $\varphi_1, \varphi_2, \varphi_3 : Q \rightarrow Q'$, but none of
  them is surjective.

  Consider the following linear expression in entropic terms, defined
  over the variables $U,V,W$ in $Q'$:
  \begin{align*}
    E \defeq & h(UV)+h(UW) - h(U) = h(UV)+h(W|U)
  \end{align*}
  (The expression is derived from the tree decomposition of $Q'$, as
  we explain below.)  For each of the three homomorphism $\varphi_i$,
  denote by $E\circ \varphi_i$ the result of substituting the
  variables $U,V,W$ in $E$ with
  $\varphi_i(U),\varphi_i(V),\varphi_i(W)$.

\begin{claim}
  The following inequality holds for all polymatroids:
  \begin{align}
    h(XYZ) \leq & \max(E \circ \varphi_1,E \circ \varphi_2,E \circ \varphi_3) \label{eq:kopparty:rossman}
  \end{align}
\end{claim}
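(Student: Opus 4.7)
The plan is first to make the three terms $E\circ\varphi_i$ explicit. A homomorphism $\varphi:Q'\to Q$ must send each atom $R(U,V), R(U,W)$ to an atom of the triangle $R(X,Y)\wedge R(Y,Z)\wedge R(Z,X)$, and since no two atoms of the triangle share a first coordinate, $\varphi(V)$ and $\varphi(W)$ are forced to coincide. The three homomorphisms are therefore $\varphi_1=(U,V,W)\mapsto(X,Y,Y)$, $\varphi_2=(U,V,W)\mapsto(Y,Z,Z)$, and $\varphi_3=(U,V,W)\mapsto(Z,X,X)$, giving $E\circ\varphi_1=2h(XY)-h(X)$, $E\circ\varphi_2=2h(YZ)-h(Y)$, $E\circ\varphi_3=2h(ZX)-h(Z)$.

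Next, I would reformulate each candidate bound $h(XYZ)\leq E\circ\varphi_i$ as a comparison of conditional entropies via the chain rule. For instance $h(XYZ)\leq 2h(XY)-h(X)$ is equivalent to $h(Z|XY)\leq h(Y|X)$. Introducing the shorthand $a_1=h(Y|X)$, $a_2=h(Z|Y)$, $a_3=h(X|Z)$ and $b_1=h(Z|XY)$, $b_2=h(X|YZ)$, $b_3=h(Y|XZ)$, the three candidate bounds become $b_1\leq a_1$, $b_2\leq a_2$, $b_3\leq a_3$, and the claim reduces to showing that at least one of them holds on every polymatroid.

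The core step is a cyclic-contradiction argument powered by submodularity. Since conditioning on more variables cannot increase entropy, one has the ``cross'' bounds $b_1=h(Z|XY)\leq h(Z|Y)=a_2$, $b_2=h(X|YZ)\leq h(X|Z)=a_3$, and $b_3=h(Y|XZ)\leq h(Y|X)=a_1$. Assume for contradiction that all three target inequalities fail simultaneously, i.e.\ $a_i<b_i$ for $i=1,2,3$. Chaining with the three submodularity bounds produces $a_1<b_1\leq a_2<b_2\leq a_3<b_3\leq a_1$, forcing $a_1<a_1$, a contradiction. Hence some $b_i\leq a_i$ holds, which is exactly $h(XYZ)\leq E\circ\varphi_i$, and~\eqref{eq:kopparty:rossman} follows.

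The main subtlety I expect is the role of the $\max$: none of the three bounds $h(XYZ)\leq E\circ\varphi_i$ is valid uniformly for every polymatroid (different step functions $h^{\bm V}$ realize different minimizing indices), so the inequality cannot be obtained as a nonnegative linear combination of Shannon inequalities and must be proved by a genuine case split. The observation that makes the case split almost free is that submodularity already threads the $b_i$ into the $a_{i+1\,\mathrm{mod}\,3}$ cyclically, so the simultaneous failure of all three desired inequalities collapses to a single impossible strict cycle.
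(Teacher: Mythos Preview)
Your proof is correct, but it takes a different route from the paper's. The paper does not argue by case split or contradiction; instead it bounds the maximum from below by the \emph{average}, and then proves the single Shannon inequality
\[
  \tfrac{1}{3}\bigl((E\circ\varphi_1)+(E\circ\varphi_2)+(E\circ\varphi_3)\bigr)\ \geq\ h(XYZ),
\]
by rearranging the six conditional terms into three groups of the form $h(XY)+h(Z|Y)\geq h(XYZ)$. So your remark that the inequality ``must be proved by a genuine case split'' is not quite right: the averaging trick sidesteps the case analysis entirely and in fact yields the stronger linear inequality $\sum_i E\circ\varphi_i \geq 3h(XYZ)$.

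Your cyclic-contradiction argument is a clean alternative for this particular example: it uses only the three submodularity bounds $b_i\leq a_{i+1}$ and pigeonholes the strict inequalities into an impossible cycle. The trade-off is that it is tailored to the three-cycle structure of the triangle and does not obviously generalize, whereas the paper's averaging approach is the template for the general theorem on domination via tree decompositions (where one averages $E_T\circ\varphi$ over all homomorphisms $\varphi$).
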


\begin{proof} We expand:
    \begin{align*}
      \max&(E \circ \varphi_1,E \circ \varphi_2,E \circ \varphi_3) = \\
      = & \max(h(XY)+h(Y|X),h(YZ)+h(Z|Y),h(XZ)+h(X|Z))\\
      \geq & \frac{1}{3}\left(h(XY)+h(YZ)+h(ZX)+h(Y|X)+h(Z|Y)+h(X|Z)\right)\\
      = & \frac{1}{3}\left((h(XY)+h(Z|Y))+(h(YZ)+h(X|Z))+(h(ZX)+h(Y|X))\right)\\
      \geq & h(XYZ)
    \end{align*}
  where the last inequality follows from
  $h(XY)+ h(Z|Y) \geq h(XY)+h(Z|XY) = h(XYZ)$ and similarly for the
  other two terms.
\end{proof}

To prove $Q\preceq Q'$, consider a database instance $\bm D$, let
$N \defeq |Q(\bm D)|$, and consider the uniform probability
distribution $(Q(\bm D),p)$.  Its entropy $\bm h$ satisfies
inequality~\eqref{eq:kopparty:rossman}: assume w.l.o.g. that
$h(XYZ) \leq E\circ \varphi_1 = h(XY)+h(Y|X)$ (the other two cases are
similar).  We use $\varphi_1$ to define a probability space
$(Q'(\bm D), p')$: for every three constants $u,v,w$ in the instance
$\bm D$ s.t. $p(X=u)\neq 0$, define
\begin{align*}
    p'(U=u,V=v,W=w) \defeq & \frac{p(X=u,Y=v)p(X=u,Y=w)}{p(X=u)}
\end{align*}
Thus, $V\perp W | U$, the distribution of $UV$ is the same as that of
$XY$, and the distribution of $UW$ is also the same as that of $XY$.
(This is similar to the Copy Lemma~\ref{lmm:copy}.) Denoting by
$\bm h'$ the entropic vector associated to $p'$, we derive:
  \begin{align*}
    \log |Q'(\bm D)| \geq & h'(UVW) = h'(VW|U)+h(U) \\
    = & h'(V|U)+h'(W|U)+h'(U) \mbox{\ \ \ because $V\perp W | U$}\\
    = & h(Y|X) + h(Y|X) + h(X)=  h(XY) + h(Y|X)\\
    = & E\circ \varphi_1  \geq  h(XYZ) = \log|Q(\bm D)|
  \end{align*}
\end{ex}

We generalize Example~\ref{ex:vee}.  A {\em tree decomposition} of a
query $Q(\bm X) = \bigwedge_j R_j(\bm Y_j)$ is a pair $(T, \chi)$,
where $T$ is a tree and $\chi: \text{Nodes}(T) \rightarrow 2^{\bm X}$
such that every atom $R_j(\bm Y_j)$ is {\em covered}, meaning
$\exists n$, $\bm Y_j \subseteq \chi(n)$, and for any variable
$X \in \bm X$, the set of nodes
$\setof{n\in \text{Nodes}(T)}{X \in \chi(n)}$ induces a connected
subgraph of $T$.  Each set $\chi(n)$ is called a {\em bag}.  $Q$ is
{\em chordal} if it admits a tree decomposition where every bag
$\chi(n)$ induces a clique in the Gaifman graph of $Q$; equivalently,
for any two variables $X,Y \in \chi(n)$ the query has a predicate that
contains both $X,Y$.  A chordal query has a canonical tree
decomposition where the bags are the maximal cliques.  $Q$ is called
{\em acyclic}\footnote{More precisely, it is called
  $\alpha$-acyclic~\cite{DBLP:journals/jacm/Fagin83}.} if there exists
a tree decomposition where each bag is precisely one atom of the
query, $\chi(n) = \bm Y_j$ for some $j$.  An acyclic query is, in
particular, chordal.

Fix a query $Q(\bm U)$ with variables $\bm U$ and a tree decomposition
$T$.  We define the following expression of entropic terms:
\begin{align}
  E_T \defeq & \sum_{n \in \text{Nodes}(T)} h(\chi(n)) - \sum_{(n,n') \in \text{Edges}(T)} h(\chi(n)\cap \chi(n')) \label{eq:et:difference}
\end{align}
Equivalently, choose a root node for $T$ and orient all
edges to point away from the root.  Then:
\begin{align*}
  E_T = & \sum_{n \in \text{Nodes}(T)}h(\chi(n)|\chi(n)\cap \chi(\text{Parent}(n)))
\end{align*}
where we set $\chi(\text{Parent}(\text{Root})) \defeq \emptyset$.  The
following holds:

\begin{thm} Let $Q(\bm X), Q'(\bm U)$ be two full conjunctive queries,
  over variables $\bm X$ and $\bm U$ respectively.
  \begin{itemize}
  \item \cite{DBLP:journals/ejc/KoppartyR11,DBLP:journals/tods/KhamisKNS21} Let $T$ be a tree
    decomposition for $Q'$.  If the following inequality holds for all
    entropic vectors $\bm h$:
    \begin{align}
      h(\bm X) \leq & \max_{\varphi: \varphi \in \hom(Q',Q)}E_T \circ \varphi \label{eq:bag:containment:h}
    \end{align}
    then $Q'$ dominates $Q$, $Q \preceq Q'$.
  \item \cite{DBLP:journals/tods/KhamisKNS21} If $Q'$ is chordal and
    $Q\preceq Q'$, then inequality~\eqref{eq:bag:containment:h} holds
    for all entropic vectors, where $T$ is the canonical tree
    decomposition of $Q'$ consisting of its maximal cliques.  In other
    words,~\eqref{eq:bag:containment:h} is a necessary and sufficient
    condition for dominance.
  \end{itemize}
\end{thm}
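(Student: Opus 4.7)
For part (1), the plan is to realize $E_T\circ\varphi$ as the entropy of an explicit distribution supported inside $Q'(\bm D)$, generalizing the construction in Example~\ref{ex:vee}. Given $\bm D$, equip $Q(\bm D)$ with the uniform distribution and let $\bm h$ be its entropic vector, so $h(\bm X)=\log|Q(\bm D)|$. By hypothesis there is a homomorphism $\varphi:Q'\to Q$ with $h(\bm X)\le E_T\circ\varphi$. Root $T$ arbitrarily, orient its edges away from the root, and define a junction-tree (Bayesian-network) distribution $p'$ on $\dom^{\bm U}$ along $T$: the marginal of $p'$ on each bag $\chi(n)$ is obtained by pulling back, under $\varphi$, the $\bm h$-marginal on $\varphi(\chi(n))$, and distinct bags are coupled only through their separators $\chi(n)\cap\chi(\text{Parent}(n))$. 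Since $\varphi$ carries each atom of $Q'$ to an atom of $Q$ with the same relation name and the bags cover all atoms of $Q'$, the support of $p'$ lies in $Q'(\bm D)$; the chain rule along $T$ yields $H(p')=E_T\circ\varphi$. Thus $\log|Q'(\bm D)|\ge H(p')\ge h(\bm X)=\log|Q(\bm D)|$, proving $Q\preceq Q'$.

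For part (2), the plan is to build, from any entropic vector $\bm h$ on $\bm X$, a witness database in which $|Q(\bm D)|$ is close to $2^{h(\bm X)}$ while $|Q'(\bm D)|$ is controlled by $\max_\varphi 2^{E_T\circ\varphi}$, and then read the desired inequality off the domination hypothesis. By Theorem~\ref{th:chan:groups}, approximate $\bm h$ by $\tfrac{1}{r}\bm h^{(r)}$ with $\bm h^{(r)}$ group-realizable, and let $R=\{(aG_1,\ldots,aG_n):a\in G\}$ be the associated relation, so $\log|R|=h^{(r)}(\bm X)$. Define $\bm D$ by $R_j^D:=\Pi_{\bm Y_j}(R)$ for every atom $R_j$ of $Q$; then $R\subseteq Q(\bm D)$, and $Q\preceq Q'$ yields $|Q'(\bm D)|\ge|R|$. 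The coset structure of $R$ tags each output $\bm u\in Q'(\bm D)$ by a map $\varphi_{\bm u}:\bm U\to\bm X$ recording which group-realization coordinate each $u_U$ is a coset of; the conjunctive constraints of $Q'$ force $\varphi_{\bm u}$ to be a homomorphism $Q'\to Q$, yielding a partition $Q'(\bm D)=\bigcup_\varphi Q'_\varphi(\bm D)$ over the finitely many homomorphisms.

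It then remains to show $\log|Q'_\varphi(\bm D)|\le E_T\circ\varphi$ evaluated at $\tfrac{1}{r}\bm h^{(r)}$, and this is where chordality is essential. Since $T$ is the canonical decomposition of the chordal $Q'$ into maximal cliques, each bag $\chi(n)$ is a set of variables that pairwise co-occur in an atom, so the projection of $Q'_\varphi(\bm D)$ onto $\chi(n)$ is constrained atom-by-atom to factor through the corresponding projection of $R$ under $\varphi$, giving $\log|\Pi_{\chi(n)}(Q'_\varphi(\bm D))|\le h^{(r)}(\varphi(\chi(n)))/r$. The standard submodularity-plus-running-intersection bound applied to the uniform entropy of $Q'_\varphi(\bm D)$ then yields $\log|Q'_\varphi(\bm D)|\le E_T\circ\varphi$; combining with $|R|\le\sum_\varphi|Q'_\varphi(\bm D)|$ and letting $r\to\infty$ delivers $h(\bm X)\le\max_\varphi E_T\circ\varphi$ by continuity. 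The main obstacle I expect is precisely this fiber analysis: a non-chordal $Q'$ would admit a maximal bag failing to be a Gaifman clique, whose projection would not be determined atom-by-atom by $\varphi$, breaking the link with $h^{(r)}(\varphi(\chi(n)))/r$ and hence the tree-decomposition upper bound. A secondary technical subtlety is controlling additive losses uniformly in $r$ so that the $r\to\infty$ limit is valid for arbitrary entropic $\bm h$ rather than only almost-entropic ones.
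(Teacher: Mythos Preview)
The paper does not itself prove this theorem; it is stated with citations to \cite{DBLP:journals/ejc/KoppartyR11,DBLP:journals/tods/KhamisKNS21}, and only the forward direction is illustrated via Example~\ref{ex:vee}. Your argument for part~(1) is correct and is exactly the generalization of that example: the junction-tree distribution along $T$, with bag marginals pulled back through $\varphi$, is well-defined (the pulled-back marginals agree on separators), is supported in $Q'(\bm D)$ because every atom of $Q'$ is covered by some bag, and has total entropy $E_T\circ\varphi$ by the chain rule.

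Your argument for part~(2) has a genuine gap at the fiber bound $\log|Q'_\varphi(\bm D)|\le E_T\circ\varphi$. The step you give---that pairwise co-occurrence in atoms forces $|\Pi_{\chi(n)}(Q'_\varphi(\bm D))|\le 2^{h^{(r)}(\varphi(\chi(n)))}$---is false even for group-realized instances. Take $Q=Q'=R(X,Y)\wedge S(Y,Z)\wedge T(Z,X)$ (chordal, one maximal clique $\{X,Y,Z\}$, unique homomorphism $\varphi=\mathrm{id}$), and realize the parity function of Fig.~\ref{fig:parity} by $G=\Z_2^2$ with its three order-$2$ subgroups. Then $|R|=2^{h^{(r)}(XYZ)}=4$, but each of $\Pi_{XY}(R),\Pi_{YZ}(R),\Pi_{ZX}(R)$ is the full $2\times 2$ relation, so $Q'_\varphi(\bm D)$ is their triangle join, which has all $8$ tuples: $\log|Q'_\varphi(\bm D)|=3>2=E_T\circ\varphi$. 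More generally, the running-intersection inequality $g(\bm U)\le\sum_n g(\chi(n))-\sum_{(n,n')}g(\chi(n)\cap\chi(n'))$ has \emph{negative} separator terms, so even granted upper bounds on the $g(\chi(n))$ you would still need \emph{lower} bounds on $g(\chi(n)\cap\chi(n'))$, which the construction does not supply. The group-realization route does not control $|Q'(\bm D)|$ tightly enough precisely because $Q(\bm D)$ itself can strictly exceed $|R|$; the converse direction requires a different witness construction or a different counting argument than the one you outline.
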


Let's call a tree decomposition $T$ {\em simple} if for every edge
$(n,n') \in \text{Edges}(T)$, $|\chi(n) \cap \chi(n')| \leq 1$.  If
$Q'$ admits a simple tree decomposition, then
condition~\eqref{eq:bag:containment:h} is decidable; the proof follows
immediately from Lemma~\ref{lemma:normalization}.  This implies:

\begin{cor}  Assume that $Q'$ is chordal and admits a simple tree
  decomposition.  Then it is decidable whether $Q \preceq Q'$.
  Moreover, if $Q \not\preceq Q'$, then there exists a normal database
  instance (Sec.~\ref{sec:normal:relation}) such that $|Q(\bm
  D)|>|Q'(\bm D)|$.
\end{cor}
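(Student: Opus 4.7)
The plan is to reduce $Q\preceq Q'$ to an LP-feasibility question via Lemma~\ref{lemma:normalization}. By the second part of the preceding theorem, $Q\preceq Q'$ holds iff the inequality $h(\bm X)\leq \max_{\varphi\in\hom(Q',Q)}E_T\circ\varphi$ is valid for every $\bm h\in\Gamma_n^*$, where $T$ is the canonical clique tree decomposition of $Q'$. I claim this is equivalent to the same inequality being valid on the set $N_n$ of normal polymatroids. Since $N_n\subseteq \Gamma_n^*$, one direction is trivial. For the converse, suppose the inequality fails at some $\bm h\in\Gamma_n^*$; apply Lemma~\ref{lemma:normalization} to produce $\bm h'\in N_n$ with $\bm h'\leq \bm h$, $h'(\bm X)=h(\bm X)$, and $h'(X_i)=h(X_i)$ for every $i$. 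Because $T$ is simple, each subtracted term $h(\chi(n)\cap\chi(n'))$ in $E_T\circ\varphi$ is of the form $h(X_i)$ (or $h(\emptyset)=0$), so the subtracted terms are unchanged when passing from $\bm h$ to $\bm h'$, whereas the positive bag-terms $h(\chi(n)\circ\varphi)$ can only decrease. Hence $\max_\varphi E_T\circ\varphi$ evaluated on $\bm h'$ is at most its value on $\bm h$, which is $<h(\bm X)=h'(\bm X)$, exhibiting a failure on $N_n$.

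For decidability, $N_n$ is a polyhedral cone spanned by the $2^n-1$ step functions $\bm h^{\bm V}$, and $\hom(Q',Q)$ is finite, so failure on $N_n$ amounts to the solvability of a finite system of strict linear inequalities in the coefficients $(a_{\bm V})_{\bm V\neq\emptyset}\geq 0$, namely $\sum_{\bm V}a_{\bm V}>E_T\circ\varphi(\bm h^*)$ for every $\varphi$, with $\bm h^*=\sum_{\bm V}a_{\bm V}\bm h^{\bm V}$. LP-feasibility decides this.

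For the normal counterexample database when the inequality fails, take a rational witness $\bm h^*=\sum_{\bm V}a_{\bm V}\bm h^{\bm V}$, clear denominators and scale so that each $a_{\bm V}\in\N$, and form the normal relation $R=\bigotimes_{\bm V}T^{\bm V}_{2^{a_{\bm V}}}$ from Section~\ref{sec:normal:relation}. Define $\bm D$ by projecting $R$ onto the attributes of each atom of $Q$, taking unions over atoms sharing a relation name. Then $R\subseteq Q(\bm D)$ gives $|Q(\bm D)|\geq |R|=2^{h^*(\bm X)}$, while a Kopparty--Rossman-style tree-decomposition count, as in Example~\ref{ex:vee}, bounds $|Q'(\bm D)|$ by $\sum_{\varphi\in\hom(Q',Q)}2^{E_T\circ\varphi(\bm h^*)}$ times a polynomial factor; amplifying $\bm h^*$ by a sufficiently large integer absorbs the polynomial factor, yielding $|Q(\bm D)|>|Q'(\bm D)|$. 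The main obstacle is establishing this upper bound on $|Q'(\bm D)|$ for normal databases: it is the analogue for normal relations of the homomorphism-counting argument in Example~\ref{ex:vee}, using that $R$ factors multiplicatively over the basic normal relations $T^{\bm V}$, so each homomorphism $\psi:Q'\to\bm D$ decomposes coordinate-wise, and the tree decomposition $T$ bounds each factor's contribution by the step-function evaluation of $E_T\circ\varphi$ for the induced $\varphi:Q'\to Q$.
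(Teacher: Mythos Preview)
Your approach is essentially the paper's: reduce validity of~\eqref{eq:bag:containment:h} from $\Gamma_n^*$ to $N_n$ via Lemma~\ref{lemma:normalization}, exploiting that in a simple tree decomposition every subtracted term is either $h(\emptyset)$ or $h(X_i)$, which the lemma preserves. That is exactly what the paper means by ``follows immediately from Lemma~\ref{lemma:normalization}.''

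Two points to tighten. First, you invoke the \emph{canonical} clique tree $T$ (needed for the ``only if'' direction of the preceding theorem) and then assert ``because $T$ is simple.'' The hypothesis only says \emph{some} simple tree decomposition exists; you should observe that this forces every minimal vertex separator of $Q'$ to have size $\le 1$ (since along the path in any tree decomposition between a bag containing $u$ and one containing $v$, some edge-intersection separates $u$ from $v$), and hence the clique tree---whose edge-intersections are exactly the minimal separators of a chordal graph---is itself simple.

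Second, your construction of the normal witness database is the natural one, but the upper bound on $|Q'(\bm D)|$ that you flag as ``the main obstacle'' is indeed where the work lies, and your sketch (``each homomorphism decomposes coordinate-wise'') is not yet a proof: with self-joins the relation $R^{\bm D}$ is a union of projections of $R$ along different attribute maps, so a homomorphism $Q'\to\bm D$ need not factor through a single $\varphi\in\hom(Q',Q)$ atom-by-atom. The paper does not spell this out either (it defers to~\cite{DBLP:journals/ejc/KoppartyR11,DBLP:journals/tods/KhamisKNS21}); if you want a self-contained argument, the cleanest route is to use the domain-product structure of $R=\bigotimes_{\bm V}T^{\bm V}_{N_{\bm V}}$ to factor $|Q'(\bm D)|$ as a product over $\bm V$, bound each factor separately (where it becomes a two-element problem), and then amplify.
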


Finally, we remark that the connection between the query domination
problem and information inequalities is very tight.  The following was
proven in~\cite{DBLP:journals/tods/KhamisKNS21}:

\begin{thm}
  The following problems are computationally equivalent.  (1) Check if
  an inequality of the form
  $\max_{j=1,p}(\bm c^{(j)} \cdot \bm h) \geq 0$ is valid for all
  entropic vectors $\bm h$, where
  $\bm c^{(1)}, \ldots \bm c^{(p)} \in \R^{2^{[n]}}$ are $p$ vectors.
  (2) Given two queries $Q, Q'$ where $Q'$ is acyclic, check whether
  $Q \preceq Q'$.
\end{thm}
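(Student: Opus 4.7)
We sketch reductions in both directions to establish computational equivalence.

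The direction $(2) \Rightarrow (1)$ is essentially immediate from the preceding theorem. Since $Q'$ is acyclic it is in particular chordal, so the second bullet of that theorem applies with $T$ the canonical tree decomposition whose bags are exactly the atoms of $Q'$. Hence $Q \preceq Q'$ holds iff
\begin{align*}
h(\bm X) \leq \max_{\varphi \in \hom(Q',Q)} E_T \circ \varphi
\end{align*}
is valid for all entropic $\bm h$. Expanding $E_T$ via~\eqref{eq:et:difference} and moving $h(\bm X)$ to the other side yields exactly an inequality of the form $\max_{j=1,p}(\bm c^{(j)} \cdot \bm h) \geq 0$, with one vector $\bm c^{(j)}$ per homomorphism. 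The number of homomorphisms is bounded by $|Q|^{|\mathrm{Var}(Q')|}$, so this reduction is effective.

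For the converse $(1) \Rightarrow (2)$, we encode an arbitrary instance $\bm c^{(1)}, \ldots, \bm c^{(p)} \in \R^{2^{[n]}}$ as a domination question with $Q'$ acyclic. After clearing denominators we may assume the coefficients are integers. The plan is to choose $Q$ to consist of one ``pattern'' atom $R_\alpha(\bm X_\alpha)$ per subset $\alpha \subseteq [n]$ that appears in the support of some $\bm c^{(j)}$, using distinct relation symbols, so that databases for $Q$ realize (uniform) entropic vectors on $\bm X$ essentially freely; and to construct $Q'$ as a tree-shaped query with $p$ ``branches'', one per index $j$, together with designated homomorphisms $\varphi_1, \ldots, \varphi_p \colon Q' \to Q$ such that $E_T \circ \varphi_j - h(\bm X)$ matches $\bm c^{(j)} \cdot \bm h$ exactly. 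Positive coefficients of $\bm c^{(j)}$ are realized by bag-entropies of the $j$-th branch; negative coefficients by the separators of $T$ along that branch.

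Two technical obstacles must be overcome. First, $E_T$ has a rigid sign structure (positive bags, negative separators), while $\bm c^{(j)}$ may mix signs arbitrarily: the standard fix, as in~\cite{DBLP:journals/tods/KhamisKNS21}, is to tensor $Q$ and $Q'$ with a common modular gadget that adds the same linear-modular offset to both sides of the domination, shifting all $\bm c^{(j)}$ into a realizable ``cone'' without changing the truth value of the inequality. Second, one must control all homomorphisms $Q' \to Q$, not just the designated $\varphi_j$, since extra homomorphisms could enlarge the maximum and invalidate the reduction; the remedy is to use fresh relation symbols for every atom of $Q'$ and to add ``pinning'' atoms in $Q$ that force any homomorphism to factor through exactly one of the $\varphi_j$. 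The main work is verifying this pinning and confirming that the resulting $Q'$ is indeed acyclic (its hypergraph admits a tree decomposition with one atom per bag); both are direct once the skeleton of $Q'$ is fixed. Combining these pieces, Theorem~7.3's inequality $h(\bm X) \leq \max_\varphi E_T \circ \varphi$ reduces back to $\max_j (\bm c^{(j)} \cdot \bm h) \geq 0$, completing the reduction.
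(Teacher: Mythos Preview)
The paper does not prove this theorem. It is stated with attribution to~\cite{DBLP:journals/tods/KhamisKNS21} and followed only by the remark that decidability of both problems is open; no argument is given in the paper itself. So there is no in-paper proof to compare your attempt against.

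On the substance of your sketch: the reduction from domination to a max-inequality is indeed a direct consequence of the preceding theorem, since acyclic implies chordal and the second bullet then gives you the needed equivalence. One small wrinkle: the canonical tree decomposition guaranteed by that bullet has the \emph{maximal cliques} as bags, which for an acyclic $Q'$ need not literally coincide with the atoms (two atoms on the same variable set collapse into one clique), but this does not affect the reduction.

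The converse direction is only a plan in your write-up, not a proof. You correctly isolate the two real difficulties (the sign pattern of $E_T$ versus arbitrary $\bm c^{(j)}$, and controlling \emph{all} homomorphisms rather than just the intended ones), but you do not actually carry out either construction. In particular, ``fresh relation symbols for every atom of $Q'$'' would by itself kill all homomorphisms $Q'\to Q$ unless those symbols also appear in $Q$, so the pinning mechanism has to be specified more carefully; and the modular-offset gadget needs to be shown to preserve the domination relation exactly. These are precisely the technical contributions of~\cite{DBLP:journals/tods/KhamisKNS21}, and your sketch does not reproduce them. As a proof plan it is pointed in the right direction, but it is not yet a proof.
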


It is currently open whether these problems are decidable.

\section{Conditional Inequalities and Approximate Implication}

\label{sec:conditional:inequalities}

Our last application of information inequalities is for the
approximate implication problem, which can be described informally as
follows.  Let $\sigma_1, \ldots, \sigma_p$ be some constraints on the
database (we will define shortly what constraints we consider), and
suppose we have a proof of the implication
$\bigwedge_i \sigma_i \Rightarrow \sigma_0$.  The question is, if the
database $\bm D$ satisfies the constraints $\sigma_i$ only
approximatively, is it the case that that $\sigma_0$ also holds
approximatively?  We will show here that this question is related to
{\em conditional information inequalities}, whose study requires us to
do another deep dive into the space of polymatroids and entropic
functions.  We start by defining a conditional inequality:

\begin{defn} \label{def:cond:ii}
  A {\em conditional information inequality} is an assertion of the
  following form:
  \begin{align}
    \bm c_1 \cdot \bm h \geq 0 \wedge \cdots \bm c_p \cdot \bm h \geq  0 & \Rightarrow \bm c_0 \cdot \bm h \geq 0 \label{eq:cond:ii}
  \end{align}
  where $\bm c_i \in \R^{2^{[n]}}$ for $i=0,p$ are vectors.
\end{defn}
Sometimes it will be more convenient to replace $\bm c_i$ by
$-\bm c_i$, and write the implication as
$\bigwedge_i \bm c_i \cdot \bm h \leq 0 \Rightarrow \bm c_0 \cdot \bm
h \leq 0$.  As before, the validity of a conditional inequality
depends on the domain of $\bm h$, e.g. it can be valid for
polymatroids, or entropic functions, etc.

The first non-Shannon inequality discovered was a conditional
inequality~\cite{DBLP:journals/tit/ZhangY97}, predating the first
unconditioned Shannon inequality~\cite{zhang1998characterization}.
Kaced and Romashchenko~\cite{DBLP:journals/tit/KacedR13} showed the
first examples of essentially conditional inequalities (explained
below).  We start by describing the connection between conditional
inequalities and the constraint implication problem in databases, then
study the {\em relaxation problem}, a technique for transferring exact
inferences to approximate judgments.  We end with the proof of
Theorem~\ref{th:primal:dual:bound}, which we have postponed until we
developed sufficient technical machinery.

\subsection{The Constraint Implication Problem}

An {\em integrity constraint}, $\sigma$, is an assertion about a
relation $R(\bm X)$ that is required to hold strictly.  The
constraints considered here are Functional Dependencies (FD), already
reviewed in Sec.~\ref{sec:problem}, and Multivalued Dependencies
(MVD).  An {\em MVD} is a statement
$\sigma = (\bm U \mvd \bm V|\bm W)$ where $\bm U \cup \bm V \cup W$
form a partition of $\bm X$.  A relation instance $R$ {\em satisfies}
the MVD, $R \models \sigma$, if
$R = \Pi_{\bm U\bm V}(R) \bowtie \Pi_{\bm U\bm W}(R)$.

The {\em implication problem} asks whether a set of FDs and/or MVDs
$\sigma_i$, $i=1,p$, implies another FD or MVD $\sigma_0$:
\begin{align}
  \sigma_1 \wedge \cdots \wedge \sigma_p \Rightarrow & \sigma_0 \label{eq:fd:mvd:exact:implication}
\end{align}
Armstrong's axioms~\cite{DBLP:journals/tods/ArmstrongD80} are complete
for the implication problem for FDs, while Beeri et
al.~\cite{DBLP:conf/sigmod/BeeriFH77} gave a complete axiomatization
for both FDs and MVDs, and showed that the implication problem is
decidable.  In contrast,
Herrmann~\cite{DBLP:journals/iandc/Herrmann06} showed that the
implication problem of {\em Embedded MVDs} is undecidable; we do not
discuss EMVDs here.

Lee~\cite{DBLP:journals/tse/Lee87} showed the following connection
between information theory and constraints.  Fix a relational instance
$R$, and let $\bm h$ be its associated (uniform) entropic vector.
Then $R \models \bm U \fd \bm V$ iff $h(\bm V|\bm U)=0$, and
$R\models \bm U \mvd \bm V | \bm W$ iff $I_h(\bm V;\bm W|\bm U)=0$.
Therefore, every implication problem for FDs and MVDs can be stated as
a conditional information inequality.  For example, the {\em
  augmentation axiom}~\cite{DBLP:conf/sigmod/BeeriFH77} states
\begin{align*}
(A\mvd B|CD) \Rightarrow & (AC\mvd B|D)
\end{align*}
and is equivalent to the following conditional
inequality:\footnote{This has the form in Def.~\ref{def:cond:ii} once
  we write it as $-I(B;CD|A) \geq 0$ implies $-I(B;D|AC) \geq 0$.}
\begin{align}
  I(B;CD|A) = 0 \Rightarrow & I(B;D|AC) = 0 \label{eq:ex:implication}
\end{align}
This can be proven immediately by observing that the identity
$I(B;CD|A) = I(B;C|A) + I(B;D|AC)$ implies:
\begin{align*}
  I(B;D|AC) \leq & I(B;CD|A)
\end{align*}
Since both terms are $\geq 0$, the
implication~\eqref{eq:ex:implication} follows.

Beyond database applications, Conditional Independencies (CI) are
commonly used in AI, Knowledge Representation, and Machine Learning.
A CI is an assertions of the form $X \perp Y \mid Z$, where $X, Y, Z$
are three random variables, stating that $X$ is independent of $Y$
conditioned on $Z$.  The AI community has extensively studied the
implication problem for CIs.  It was shown that the implication
problem is decidable and finitely axiomatizable for {\em saturated}
CIs~\cite{GeigerPearl1993} (where $XYZ=$ all variables), but not
finitely axiomatizable in
general~\cite{StudenyCINoCharacterization1990}.

\subsection{The Relaxation Problem}

How can we prove a conditional inequality~\eqref{eq:cond:ii}?  One
approach is as follows.  Find $p$ non-negative real numbers
$\lambda_1, \ldots, \lambda_p$ for which the following inequality is
valid:
\begin{align}
  \bm c \cdot \bm h \geq & (\sum_{i=1,p} \lambda_i \bm c_i) \cdot \bm h \label{eq:cond:ii:uncond}
\end{align}
Then, observe that~\eqref{eq:cond:ii:uncond}
implies~\eqref{eq:cond:ii}.  A natural question is whether every
conditional information inequality can be derived in this way, from an
unconditional inequality: when that is the case, then we say that the
conditional inequality~\eqref{eq:cond:ii} {\em relaxes}
to~\eqref{eq:cond:ii:uncond}, or that it is {\em essentially
  unconditional}.  Otherwise we say that it is {\em essentially
  conditional}.  For example, the augmentation axiom above can be
relaxed, hence it is essentially unconditioned.

Besides offering an important proof technique, relaxation is important
in modern database applications, because often the integrity
constraints don't hold {\em exactly}, but only {\em approximatively},
especially when they are mined from a given
dataset~\cite{DBLP:journals/is/GiannellaR04,DBLP:conf/vldb/SismanisBHR06,DBLP:conf/icde/ChuIPY14,DBLP:conf/cikm/BleifussBFRW0PN16,DBLP:journals/pvldb/0001N18,DBLP:conf/sigmod/SalimiGS18,DBLP:conf/uai/PoonD11,pmlr-v48-friesen16}.
The relaxation problem allows us to transfer proofs over exact
constraints to approximate constraints.

Every implication problem for FDs and MVDs
relaxes~\cite{DBLP:journals/lmcs/KenigS22}:

\begin{thm} \label{thm:fd:mvd:relax}
  Consider the statement~\eqref{eq:fd:mvd:exact:implication} asserting
  the implication between a set of FDs/MVDs.  Consider the associated
  conditional information inequality:
  \begin{align}
    \left(\bigwedge_i h(\sigma_i) = 0\right) \Rightarrow& (h(\sigma_0) = 0)\label{eq:fd:mvd:implication}
  \end{align}
  where each expression $h(\sigma_i)$ represents either
  $I_h(\bm V;\bm W|\bm U)$ or $h(\bm V|\bm U)$.  Then the following
  are equivalent:
  \begin{itemize}
  \item The implication~\eqref{eq:fd:mvd:exact:implication} holds for FDs/MVDs.
  \item The implication~\eqref{eq:fd:mvd:implication} holds for all polymatroids.
  \item The implication~\eqref{eq:fd:mvd:implication} holds for all entropic functions.
  \item The implication~\eqref{eq:fd:mvd:implication} holds for all normal polymatroids.
  \item The inequality $n^2/4 (\sum_i h(\sigma_i)) \geq h(\sigma_0)$
    holds for all polymatroids. Furthermore, if $\sigma_0$ is an FD
    (rather than MVD), then $n^2/4$ can be replaced by 1.
  \end{itemize}
\end{thm}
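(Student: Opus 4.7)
My plan follows a cycle of implications $1 \Rightarrow 5 \Rightarrow 2 \Rightarrow 3 \Rightarrow 4 \Rightarrow 1$, where four arrows are essentially routine and the arrow $1 \Rightarrow 5$ carries the main weight. The routine arrows: $5 \Rightarrow 2$ specializes the linear inequality to $h(\sigma_i)=0$, using $h(\sigma_0)\geq 0$ (each $h(\sigma)$ is a conditional entropy or conditional mutual information, hence non-negative on polymatroids); $2 \Rightarrow 3$ and $3 \Rightarrow 4$ are the inclusions $\Gamma_n^*\subseteq\Gamma_n$ and $N_n\subseteq\Gamma_n^*$ (Fig.~\ref{fig:diagram}). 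For $4 \Rightarrow 1$, I would argue contrapositively: from a relation $R$ with $R\models\bigwedge_i\sigma_i$ and $R\not\models\sigma_0$, construct a normal polymatroid that still witnesses the violation. Concretely, take an appropriate direct product of step-function relations $R^{\bm V}$ as in Eq.~\eqref{eq:r:w}, assembled along the lines of the Beeri-Fagin-Howard Armstrong-relation construction for FDs/MVDs; its entropic vector is automatically a non-negative combination of the step functions $\bm h^{\bm V}$ and hence normal (Def.~\ref{def:normal:polymatroid}), while satisfying exactly the FDs and MVDs in the closure of $\{\sigma_i\}$, so that applying (4) gives a contradiction.

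The main content is $1 \Rightarrow 5$, proved by induction on a BFH derivation of $\bigwedge_i \sigma_i \vdash \sigma_0$, translating each axiom into a Shannon inequality with small integer coefficients. In the special case where $\sigma_0$ is an FD $\bm U \fd \bm V$ and only FDs appear in the premise, the coefficient $1$ follows from the iterative closure argument: starting with $\bm U_0=\bm U$ and growing $\bm U_{i+1}=\bm U_i\cup\bm B_i$ at each applied $\bm A_i \fd \bm B_i$ with $\bm A_i \subseteq \bm U_i$, submodularity gives $h(\bm U_{i+1}|\bm U_i) = h(\bm B_i|\bm U_i) \leq h(\bm B_i|\bm A_i) = h(\sigma_i)$, and telescoping yields $h(\bm V|\bm U)\leq h(\bm U^+|\bm U)\leq\sum_i h(\sigma_i)$. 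I would then extend this to include MVDs in the premise by treating the BFH axiom FD+MVD$\Rightarrow$FD as a Shannon-provable step that still contributes coefficient $1$.

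The main obstacle is the $n^2/4$ factor when $\sigma_0$ is an MVD. My plan is to expand every conditional mutual information $I(\bm V;\bm W|\bm U)$ appearing among the premises and the conclusion into at most $|\bm V|\cdot|\bm W|\leq n^2/4$ elemental terms $I(X_j;X_k|\bm Z)$ via the chain rule of Prop.~\ref{prop:I:properties}, then run the closure-style argument at the elemental level where each step contributes coefficient $1$ but there can be up to $n^2/4$ accumulated steps. A complementary dual route is to observe that the set of tuples $(h(\sigma_0),h(\sigma_1),\ldots,h(\sigma_p))$ realizable on polymatroids is a polyhedral cone, so by LP duality (as in Thm.~\ref{th:primal:dual:bound:polyhedral}) the implication is witnessed by a non-negative conic combination whose combined coefficient can be bounded combinatorially by $n^2/4$. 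Making this bound sharp, and in particular confirming that the balanced bipartition $|\bm V|=|\bm W|=\lfloor n/2\rfloor$ is the worst case while ruling out a $\binom{n}{\lfloor n/2\rfloor}$-style blowup, is the crux of the argument.
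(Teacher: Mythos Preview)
The paper does not actually prove Theorem~\ref{thm:fd:mvd:relax}; it states the result and cites~\cite{DBLP:journals/lmcs/KenigS22}. The only proof-adjacent content is the remark following the theorem: if the implication fails, a single two-tuple relation $R_{\bm W}$ (equivalently, a step function $\bm h_{\bm W}$) witnesses the failure. So there is no detailed argument in the paper to compare against.

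That said, your cyclic scheme $1\Rightarrow 5\Rightarrow 2\Rightarrow 3\Rightarrow 4\Rightarrow 1$ is the natural one, and your $4\Rightarrow 1$ direction is aligned with (and in fact simpler than) what you wrote: you do not need a full Armstrong-relation construction. Because every normal polymatroid is a nonnegative combination $\sum a_{\bm W}\bm h_{\bm W}$ and each $h(\sigma_i)$ is linear in $\bm h$, the implication~\eqref{eq:fd:mvd:implication} holds on $N_n$ iff it holds on every step function $\bm h_{\bm W}$, i.e.\ on every two-tuple relation $R_{\bm W}$. The classical completeness of the BFH calculus (equivalently, the dependency-basis characterization) says precisely that two-tuple relations suffice to refute invalid FD/MVD implications; this is exactly the paper's remark. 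So $4\Rightarrow 1$ reduces to that classical fact, no product construction needed.

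Your FD-conclusion argument for $1\Rightarrow 5$ with coefficient $1$ is correct for FD-only premises, and your example-level extension to an MVD premise via $h(\bm V|\bm U)=I(\bm V;\bm W|\bm U)+h(\bm V|\bm U\bm W)$ is the right mechanism; but you should be explicit that a single BFH derivation may reuse a premise, so ``coefficient $1$'' means $\sum_i h(\sigma_i)$ with each premise counted once, and you need to argue that the closure can always be arranged so that each $\sigma_i$ contributes at most once. For the MVD-conclusion case, your chain-rule decomposition into $\leq |\bm V|\cdot|\bm W|\leq n^2/4$ elemental terms is the intended source of the constant, but the proposal does not yet show \emph{why} each elemental term $I(X_j;X_k|\bm Z)$ of $\sigma_0$ is bounded by $\sum_i h(\sigma_i)$ with coefficient $1$; that step (essentially an elemental-level closure argument mixing FD and MVD axioms) is where the real work in~\cite{DBLP:journals/lmcs/KenigS22} lies, and your sketch correctly flags it as the crux without resolving it.
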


Thus, the implication problem for FDs and MVDs relaxes.  A consequence
of the theorem is that, if~\eqref{eq:fd:mvd:implication} fails, then
there exists a relation with only two tuples falsifying the
implication, namely one of the relations $R_{\bm W}$ in~\eqref{eq:r:w}
associated to a step function.

Next, we examine whether all conditional inequalities relax.  To do
this, we need another (last) deep dive into the structure of entropic
functions, and introduce {\em almost-entropic functions}.

\subsection{Background:  Almost Entropic Functions}
\label{subsec:almost:entropic}

We start with a brief review of cones,
following~\cite{schrijver-book,boyd_vandenberghe_2004}.  A set
$K \subseteq \R^n$ is called a \emph{cone}, if $\bm x \in K$ and
$\theta \geq 0$ implies $\theta \bm x \in K$.  The cone is
\emph{convex} if $\bm x_1,\bm x_2 \in K$ and $\theta \in [0,1]$
implies $\theta \bm x_1+ (1-\theta)\bm x_2 \in K$. For any set
$K \subseteq \R^n$, we denote by $\bar K$ its {\em topological
  closure}, and by
$K^* \defeq \setof{\bm y}{\forall \bm x \in K, \bm y^T \cdot \bm x
  \geq 0}$, it's {\em dual}.  The dual is always a closed, convex
cone, $K \subseteq K^{**}$ and, when $K$ is a closed, convex cone,
then $K = K^{**}$.

A cone is \emph{polyhedral} if it has the form
$K = \setof{\bm x}{\bm M \cdot \bm x \geq 0}$, for some matrix
$\bm M \in \R^{m \times n}$.  Any polyhedral cone is closed and
convex, and its dual is also polyhedral.
%

The set of polymatroids $\Gamma_n$ and of entropic vectors
$\Gamma_n^*$ are subsets of $\R^{2^n}$.  The superscript $*$ in
$\Gamma_n^*$ is an unfortunate notation, since it does not represent a
dual, but this notation is already widely used.  Valid inequalities
(Def.~\ref{def:ii}) are the dual cones, $(\Gamma_n^*)^*$, and
$(\Gamma_n)^*$ respectively.  Clearly, $\Gamma_n$ is polyhedral, and
therefore $\Gamma_n = \Gamma_n^{**}$.  What about $\Gamma_n^*$?

It turns out that, when $n \geq 3$, then $\Gamma_n^*$ is neither a
cone nor convex.  This may come as a surprise, so we take the
opportunity to briefly review here the elegant proof by Zhang and
Yeung~\cite{DBLP:journals/tit/ZhangY97}.  Consider the parity function
$h$, shown in Fig.~\ref{fig:parity}, and observe that $h(X)=1$,
$h(Z|XY)=0$, and $I_h(X;Y)=0$.  If $c$ is a natural number, the vector
$c\cdot \bm h$ is also entropic, by Prop.~\ref{prop:sum}; but in
general, $c\cdot \bm h$ is entropic only if there exists a natural
number $N$ such that $c = \log N$, which implies $\Gamma_n^*$ is
neither a cone, nor convex.  To prove this, assume that
$c \cdot \bm h$ is entropic, and realized by a probability
distribution $p(X,Y,Z)$.  Choose any two values $x$, $y$ such that
$p(X=x)>0$ and $p(Y=y)>0$.  Since
$I_{c \bm h}(X;Y) = c\cdot I_h(X;Y)= 0$, it holds that $X \perp Y$,
hence $p(X=x)p(Y=y) = p(X=x,Y=y) > 0$.  Furthermore,
$c\cdot h(Z|XY) = 0$, therefore $p$ satisfies the functional
dependency $XY \rightarrow Z$, and there exists a unique value $z$
s.t. $p(X=x,Y=y,Z=z)>0$.  We have obtained
$p(X=x)p(Y=y)=p(X=x,Y=y,Z=z)$ and, by symmetry, it also holds that
$p(X=x)p(Z=z)=p(X=x,Y=y,Z=z)$.  This implies $p(Y=y)=p(Z=z)$.  Since
$y$ was arbitrary, it follows that $p(Y=y)=p(Y=y')$ for all $y,y'$ in
the support of $Y$.  Therefore, the marginal distribution of $Y$ is
uniform, and its entropy is $c\cdot h(Y) = \log N$, where $N$ is the
size of the support, proving $c = \log N$, since $h(Y)=1$.  Recall
that in Sec.~\ref{sec:entropic:polymatroid:bound} we stated that the
conditional entropy $h(-|\bm U)$ is not always an entropic vector: we
invite the reader to give such an example.

While $\Gamma_n^*$ is neither a cone nor convex,
Yeung~\cite{Yeung:2008:ITN:1457455} proved:

\begin{thm}
  The topological closure $\bar \Gamma^*_n$ of $\Gamma_n^*$ is a
  closed, convex cone. A vector $\bm h \in \bar \Gamma^*$ is called
  {\em almost-entropic}.
\end{thm}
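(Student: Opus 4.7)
Topological closures are always closed, so that part of the conclusion is immediate; the content is convexity and the cone property. My plan builds on two ingredients already available in the paper: Proposition~\ref{prop:sum}, which gives that $\Gamma_n^*$ is closed under vector addition, and Chan and Yeung's Theorem~\ref{th:chan:groups}, which gives that every entropic vector is a limit of $\frac{1}{r}\bm h^{(r)}$ with $\bm h^{(r)}$ group-realizable.

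First, by Proposition~\ref{prop:sum} and continuity of addition, $\bar\Gamma_n^*$ is closed under addition; iterating, $k\bm h = \bm h + \cdots + \bm h \in \Gamma_n^*$ for any $\bm h \in \Gamma_n^*$ and $k \in \N$, so $\bar\Gamma_n^*$ is closed under scaling by natural numbers. Convexity and full $\R_+$-scaling then reduce to one key lemma: \emph{for every $\bm g \in \Gamma_n^*$ and every $N \in \N$, $\bm g/N \in \bar\Gamma_n^*$}. Given the lemma, for $\bm h_1, \bm h_2 \in \Gamma_n^*$ and rational $\lambda = k/N \in [0,1]$, the vector $k\bm h_1 + (N-k)\bm h_2$ is in $\Gamma_n^*$ by addition-closure, so its $1/N$-scaling $\lambda\bm h_1 + (1-\lambda)\bm h_2$ lies in $\bar\Gamma_n^*$. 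Density of rationals, combined with approximating arbitrary $\bm h_i \in \bar\Gamma_n^*$ by entropic sequences and using continuity, gives convexity in full generality. Closure under arbitrary $\theta \in \R_+$ is handled analogously, writing $\theta\bm h$ as a limit of $(k/N)\bm h$ with $k/N \to \theta$ and applying the same key lemma.

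The hard part is the Key Lemma. It is genuinely subtle because $\Gamma_n^*$ itself is \emph{not} closed under $1/N$-scaling: the parity-function argument preceding the theorem shows that $c\bm h$ is entropic only when $c = \log N$ with $N \in \N$, so the $1/N$-scaled vector must actually escape $\Gamma_n^*$ and live in its closure. My approach would be to apply Chan and Yeung's theorem to reduce to the case that $\bm g$ is group-realizable, $g(\alpha) = \log[G : \bigcap_{i \in \alpha} G_i]$, and then construct an approximating sequence $\frac{1}{s_m}\bm u^{(m)} \to \bm g/N$ with $\bm u^{(m)} \in \Upsilon_n$. The construction exploits direct products $G^{m}$ together with carefully chosen chains of subgroups to engineer log-index ratios that approach any prescribed nonnegative real multiple of the original indices. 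Once one such group-theoretic gadget is in place, the lemma follows by packaging it through Chan and Yeung's approximation. The full construction is the one carried out in~\cite{Yeung:2008:ITN:1457455}.
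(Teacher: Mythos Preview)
The paper does not prove this theorem; it merely attributes it to Yeung's book. Your overall outline is the right one: closure under addition (Proposition~\ref{prop:sum}) plus the Key Lemma ($\bm g/N\in\bar\Gamma_n^*$ for every $\bm g\in\Gamma_n^*$ and $N\in\N$) yields the convex-cone property via rational approximation and continuity, exactly as you describe.

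The gap is in your proof of the Key Lemma. Producing a sequence $\frac{1}{s_m}\bm u^{(m)}\to\bm g/N$ with $\bm u^{(m)}\in\Upsilon_n$ does \emph{not} place $\bm g/N$ in $\bar\Gamma_n^*$: each approximant $\frac{1}{s_m}\bm u^{(m)}$ is itself a scaled entropic vector, not known to lie in $\Gamma_n^*$ (or $\bar\Gamma_n^*$) --- that is exactly the Key Lemma applied to $\bm u^{(m)}$, so the argument is circular. The ``direct products $G^m$ with chains of subgroups'' construction does not escape this, because group-realizable coordinates are always logs of integer indices and cannot land on an arbitrary real multiple without a further $1/s$-scaling. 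Your reduction to group-realizable $\bm g$ via Theorem~\ref{th:chan:groups} is itself valid, but it only moves the circularity, it does not break it. Also, the construction in Yeung's book is \emph{not} group-theoretic; it is an elementary dilution (time-sharing) argument and does not invoke Theorem~\ref{th:chan:groups} at all. Concretely: realize $k\bm h\in\Gamma_n^*$ by $(X_1,\ldots,X_n)$, adjoin a fresh symbol $*$ to each alphabet, and set $(Y_1,\ldots,Y_n)=(X_1,\ldots,X_n)$ with probability $t$ and $=(*,\ldots,*)$ with probability $1-t$. A direct computation gives $H(\bm Y_\alpha)=t\,H(\bm X_\alpha)+H_2(t)$ for every $\alpha\neq\emptyset$, where $H_2(t)\defeq -t\log t-(1-t)\log(1-t)\leq 1$. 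Taking $t=c/k$ yields the \emph{entropic} vector $c\bm h+H_2(c/k)\,\bm h_\emptyset$, which converges to $c\bm h$ as $k\to\infty$; hence $c\bm h\in\bar\Gamma_n^*$ for every $c>0$, which is your Key Lemma (and more).
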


The complete picture of all sets of polymatroids discussed in this
paper is shown in Fig.~\ref{fig:diagram}.

If an inequality is valid for $\Gamma_n^*$, then it is also valid for
$\bar \Gamma_n^*$, by continuity. However, this no longer holds for
{\em conditional} information inequalities: Kaced and
Romashchenko~\cite{DBLP:journals/tit/KacedR13} gave an example of a
conditional inequality that is valid for $\Gamma_n^*$, but not for
$\bar \Gamma_n^*$.  Since we are interested in the relaxation problem,
we will consider only validity for $\bar \Gamma_n^*$.

When $n=3$ then one can show that $\bar \Gamma_3^* = \Gamma_3$, hence
it is polyhedral.  However, Mat{\'u}{\v s}~\cite{DBLP:conf/isit/Matus07} showed
that, for $n \geq 4$, $\bar \Gamma_n^*$ is not polyhedral.  This
explains the difficulties in understanding the non-Shannon
inequalities, and also in reasoning about conditional inequalities.

\subsection{A Conditional Inequality that Does Not Relax}

Kaced and Romashchenko~\cite{DBLP:journals/tit/KacedR13} gave four
examples of essentially conditional inequalities.  This is a very
surprising result: it means that a proof of the
implication~\eqref{eq:fd:mvd:exact:implication} becomes useless if one
of the assumptions has even a tiny violation in the data.  We describe
here one of their examples, following the adaptation
in~\cite{DBLP:journals/lmcs/KenigS22}.

\begin{thm}~\cite{DBLP:journals/tit/KacedR13} \label{thm:kaced:romashchenko}
  The following conditional inequality is valid for $\bar \Gamma_n^*$,
  and is essentially conditional:
  \begin{align*}
    I(X; Y|A)=I(X; Y|B)&= I(A; B)= I(A; X|Y)=0    \Rightarrow I(X;Y)=0
  \end{align*}
\end{thm}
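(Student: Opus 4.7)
The plan is to split the argument into two parts: (i) proving the conditional implication is valid on $\bar\Gamma_n^*$, and (ii) showing that no linear relaxation of the form
\[
I(X;Y) \leq \lambda_1 I(X;Y|A) + \lambda_2 I(X;Y|B) + \lambda_3 I(A;B) + \lambda_4 I(A;X|Y), \qquad \lambda_i \geq 0,
\]
can hold on $\bar\Gamma_n^*$.

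For part~(i), I start with a Shannon-level reduction. Combining the identity $I(X;Y) - I(X;Y|A) = I(A;X) + I(A;Y) - I(A;XY)$ with the chain rule $I(A;XY) = I(A;Y) + I(A;X|Y)$ yields $I(X;Y) = I(X;Y|A) + I(A;X) - I(A;X|Y)$. Under the first and fourth premises this collapses to $I(X;Y) = I(A;X)$, so the goal reduces to deriving $I(A;X) = 0$ from $I(X;Y|B) = 0$ and $I(A;B) = 0$. By part~(ii) this final step cannot be a purely Shannon derivation; instead I apply the Copy Lemma (Lemma~\ref{lmm:copy}), copying $(X,Y)$ over $B$ to obtain $(X',Y')$ with $(X',Y') \perp (X,Y,A) \mid B$ and $(X',Y',B) \sim (X,Y,B)$. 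The inherited independencies together with $A \perp B$ give $A \perp X'$ in the extended entropic vector; by iterating the copy (or a symmetry argument linking $X$ to its copies conditionally on $B$) one forces the original $I(A;X)$ to vanish. To upgrade from $\Gamma_n^*$ to $\bar\Gamma_n^*$, I need a quantitative continuity statement $I(X;Y) \leq \varphi(I_1,I_2,I_3,I_4)$ with $\varphi$ continuous and $\varphi(0,0,0,0) = 0$, obtained by tracking error terms through the above Copy-Lemma derivation; this continuous bound then extends to the topological closure.

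For part~(ii), I construct a parametric family $\bm h_\varepsilon \in \bar\Gamma_n^*$, $\varepsilon \in (0,1)$, in which each premise $I_i(\bm h_\varepsilon) = O(\varepsilon)$ but $I(X;Y)(\bm h_\varepsilon) = \Omega(g(\varepsilon))$ for some $g$ satisfying $g(\varepsilon)/\varepsilon \to \infty$ as $\varepsilon \to 0$; the canonical choice is $g(\varepsilon) = \sqrt{\varepsilon}$. A natural candidate is a small perturbation of a parity-type distribution on four bits: take $A, B$ to be nearly independent fair coins and let $(X,Y)$ be generated so that the three conditional independencies $X \perp Y \mid A$, $X \perp Y \mid B$, and $A \perp X \mid Y$ each hold up to $O(\varepsilon)$ slack, while the residual correlation between $X$ and $Y$ is tuned to order $\sqrt{\varepsilon}$. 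For any fixed $\lambda_1,\ldots,\lambda_4 \geq 0$, one then has $\sum_i \lambda_i I_i(\bm h_\varepsilon) = O(\varepsilon) \ll \sqrt{\varepsilon} = \Theta(I(X;Y)(\bm h_\varepsilon))$ for all sufficiently small $\varepsilon$, contradicting the hypothesized linear relaxation.

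The main obstacle is the construction in~(ii): producing a concrete family whose four conditional information quantities shrink strictly faster than $I(X;Y)$, and rigorously verifying the sublinear separation of rates. This is precisely the non-Shannon phenomenon isolated in~\cite{DBLP:journals/tit/KacedR13}. Part~(i), though requiring the Copy Lemma, reduces via a Shannon identity to a localized algebraic claim, whereas~(ii) demands structural insight into the geometry of $\bar\Gamma_n^*$ near the intersection of the four premise hyperplanes — explaining conceptually why $\bar\Gamma_n^*$ is not polyhedral and why essentially conditional inequalities exist at all.
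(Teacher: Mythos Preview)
Your proposal has genuine gaps in both parts.

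\textbf{Part (i).} Your reduction $I(X;Y)=I(A;X)$ under the first and fourth premises is fine, but then you claim the goal reduces to deriving $I(A;X)=0$ from \emph{only} $I(X;Y|B)=0$ and $I(A;B)=0$. That implication is false: take $A=X$, $B$ independent of everything, and $Y$ independent of $X$; then both premises hold but $I(A;X)=H(X)>0$. You still have all four premises available, of course, but your Copy-Lemma sketch does not use them. Copying $(X,Y)$ over $B$ does give $I(A;X')=0$ via $I(A;X'|B)=0$ and $I(A;B)=0$, but the copy is over $B$, not over $A$, so $h(AX')\neq h(AX)$ in general and $I(A;X')=0$ says nothing about $I(A;X)$. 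The phrase ``iterating the copy (or a symmetry argument \ldots)'' does not close this; you would need to explain how the remaining premises link $X$ and $X'$ in a way that transfers the conclusion, and no single Copy-Lemma application does that. Moreover, even a correct Copy-Lemma argument lives in $\Gamma_n^*$, and your proposed extension to $\bar\Gamma_n^*$ via a continuous $\varphi(I_1,I_2,I_3,I_4)$ with $\varphi(0,0,0,0)=0$ is exactly what part~(ii) says cannot exist in linear form; you do not construct any nonlinear one.

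The paper sidesteps all this: it quotes Mat{\'u}{\v s}' infinite family~\eqref{eq:matus}, valid on $\bar\Gamma_n^*$ since each is an unconditional linear inequality, and observes that under the four premises it collapses to $I(X;Y)\leq \frac{1}{k}I(A;Y|X)$ for every $k\geq 1$, whence $I(X;Y)=0$. The fifth term $I(A;Y|X)$, with vanishing coefficient, is exactly the device that replaces your missing continuous $\varphi$.

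\textbf{Part (ii).} Your plan (parametric family with premises $o(I(X;Y))$) is correct in spirit, but ``a small perturbation of a parity-type distribution'' with the claimed $\sqrt{\varepsilon}$ vs.\ $\varepsilon$ separation is not substantiated. The paper gives an explicit four-atom distribution in which three of the four premises are \emph{exactly} zero, the fourth is $O(\varepsilon^2)$, and $I(X;Y)=\varepsilon+O(\varepsilon^2)$; this is both simpler and sharper than what you sketch.
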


Notice that none of the constraints is an FD or an MVD, so this does
not contradict Theorem~\ref{thm:fd:mvd:relax}.  We prove the theorem,
and start by proving the conditional inequality. For that we use the
following non-Shannon inequality, by
Mat{\'{u}}s~\cite{DBLP:conf/isit/Matus07}.  For every $k\geq 1$,
\begin{align}
  I(X;Y) \leq & \frac{k+3}{2}I(X;Y|A)+I(X;Y|B)+ I(A;B)\nonumber\\
     + &  \frac{k+1}{2}I(A;X|Y)+ \frac{1}{k}I(A;Y|X)\label{eq:matus}
\end{align}
When $k=1$, this is Zhang and Yeung's
inequality~\eqref{eq:zhang:yeung}.  Mat{\'{u}}s
proved~\eqref{eq:matus} by induction on $k$, by applying the Copy
Lemma~\ref{lmm:copy} at each induction step; we omit the proof.

Since~\eqref{eq:matus} holds for $\Gamma_n^*$, it also holds for
$\bar \Gamma_n^*$.  To check the conditional inequality in
Thm.~\eqref{thm:kaced:romashchenko}, let $\bm h \in \bar \Gamma_n^*$
such that $I(X; Y|A)=I(X; Y|B)= I(A; B)= I(A; X|Y)=0$: then
inequality~\eqref{eq:matus} becomes $I(X;Y) \leq \frac{1}{k}I(A;Y|X)$
and, since $k$ is arbitrary, it follows that $I(X;Y) = 0$.

Finally, we show that the conditional inequality does not relax, by
describing, for each $\lambda > 0$, an entropic vector $\bm h$ s.t.
\begin{align}
  I(X;Y) \geq & \lambda(I(X; Y|A)+I(X; Y|B)+ I(A; B)+ I(A; X|Y)) \label{eq:not:relaxed}
\end{align}
Let  $\bm h$ be  the entropy of the following distribution:

\null\hfill
\begin{tabular}{|l|l|l|l|l} \cline{1-4}
  $A$ & $B$ & $X$ & $Y$ & $p$ \\ \cline{1-4}
  0 & 0 & 0 & 0 & $1/2 -\varepsilon$ \\
  1 & 0 & 0 & 1 & $1/2 -\varepsilon$ \\
  0 & 1 & 1 & 0 & $\varepsilon$ \\
  1 & 1 & 0 & 0 & $\varepsilon$ \\ \cline{1-4}
\end{tabular}
\hfill\null

\noindent If $\varepsilon$ is small enough then one can
check:\footnote{Complete calculations are included
  in~\cite{DBLP:journals/lmcs/KenigS22}; note that here we have
  swapped the roles of $A$ and $B$, in order to better draw the
  connection to Zhang and Yeung's inequality~\eqref{eq:zhang:yeung}.}
\begin{align*}
  I(X;Y)=&\varepsilon + O(\varepsilon^2), \ I(A;X|Y)=O(\varepsilon^2), \\
  I(X;Y|A)=&I(X;Y|B)=I(A;B)=0
\end{align*}
which proves~\eqref{eq:not:relaxed} for $\varepsilon$ is small enough.

\subsection{Conditional Inequalities Relax with Error Terms}

However, in another twist, it turns out that every conditional
inequality relaxes, if we admit a small error term.  The following was
proven in~\cite{DBLP:journals/lmcs/KenigS22}:

\begin{thm} \label{th:relaxation} Suppose that the following holds:
  \begin{align}
\forall \bm h \in \bar \Gamma_n^*,\ \ \left(\bigwedge_{i=1,p} \bm c_i \cdot \bm h \leq 0\right) \Rightarrow & \bm c_0\cdot \bm h\leq 0 \label{eq:cond:ii:non-relaxed}
  \end{align}
  Then, for every $\varepsilon > 0$ there exists
  $\lambda_1, \ldots, \lambda_p \geq 0$ such that:
  \begin{align}
\forall \bm h \in \bar \Gamma_n^*,\ \ \bm c_0\cdot \bm h\leq & \left(\sum_{i=1,p} \lambda_i \bm c_i \cdot \bm h\right)+\varepsilon h(\bm X) \label{eq:cond:ii:relaxed}
  \end{align}
  where $\bm X$ is the set of all $n$ variables. 
\end{thm}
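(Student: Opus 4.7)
The plan is to recast the conclusion as a containment in a convex cone and then exploit that the coordinate vector $\bm e_{\bm X}$ (the $2^n$-vector with $1$ in the $\bm X$-coordinate and zeros elsewhere) lies in the topological interior of the dual cone of $\bar\Gamma_n^*$. Setting $D := (\bar\Gamma_n^*)^*$, $P := \text{cone}(\bm c_1,\ldots,\bm c_p)$, and $C' := -D + P$, I would first observe that the existence of non-negative $\lambda_i$'s with $\bm c_0\cdot\bm h \leq \sum_i\lambda_i\bm c_i\cdot\bm h + \varepsilon h(\bm X)$ for every $\bm h\in\bar\Gamma_n^*$ is equivalent to $\bm c_0 - \varepsilon\bm e_{\bm X} \in C'$, since the inequality rearranges to $\bm c_0 - \sum_i\lambda_i\bm c_i - \varepsilon\bm e_{\bm X} \in -D$.

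Next I would extract from the hypothesis the weaker statement $\bm c_0 \in \overline{C'}$. Let $K := \bar\Gamma_n^* \cap \bigcap_i\{\bm h : \bm c_i\cdot\bm h\le 0\}$, a closed convex cone; the hypothesis is precisely $-\bm c_0\in K^*$. Applying the standard duality identity $(A\cap B)^* = \overline{A^*+B^*}$ for closed convex cones, and noting that the dual of the polyhedral halfspace intersection is $\text{cone}(-\bm c_1,\ldots,-\bm c_p) = -P$, yields $K^* = \overline{D-P} = \overline{-C'}$, hence $\bm c_0\in\overline{C'}$.

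The crux is the geometric claim $\bm e_{\bm X}\in\text{int}(D)$, taken inside the hyperplane $V := \{\bm h : h(\emptyset)=0\}$ to which $\bar\Gamma_n^*$ and all $\bm c_i$ may be restricted. Since $\bar\Gamma_n^*$ is pointed (entropies are non-negative) and spans $V$ (the non-zero step functions $\bm h^{\bm V}$ of Section~\ref{sec:special:cases} are entropic and form a basis of $V$), its dual $D$ has non-empty interior in $V$, characterized by $\text{int}(D) = \{\bm y\in V : \bm y\cdot\bm h > 0 \text{ for every nonzero } \bm h\in\bar\Gamma_n^*\}$. For $\bm e_{\bm X}$ this amounts to $h(\bm X) > 0$ whenever $\bm h \neq 0$, which follows because some $h(\bm U) > 0$ and monotonicity gives $h(\bm X) \geq h(\bm U) > 0$.

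To finish, given $\varepsilon > 0$, I would pick $\delta_0 > 0$ so that $-\bm e_{\bm X} + \delta_0 B_V \subseteq -D$ (where $B_V$ is the closed unit ball in $V$), then choose $\bm c_0'\in C'$ with $\norm{\bm c_0' - \bm c_0} < \delta_0\varepsilon$; such $\bm c_0'$ exists because $\bm c_0 \in \overline{C'}$. Decomposing $\bm c_0' = -\bm a + \sum_i\lambda_i\bm c_i$ with $\bm a\in D$ and $\lambda_i\geq 0$, I then rewrite $\bm c_0 - \varepsilon\bm e_{\bm X} = \bm c_0' + \varepsilon\bigl(-\bm e_{\bm X} - (\bm c_0' - \bm c_0)/\varepsilon\bigr)$; the second summand lies in $\varepsilon(-D) = -D$ because its perturbation from $-\bm e_{\bm X}$ has norm less than $\delta_0$, and absorbing it together with $-\bm a$ into a single $-D$-term yields the desired representation of $\bm c_0 - \varepsilon\bm e_{\bm X}$ in $C'$, giving the coefficients $\lambda_i$. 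The main obstacle, and the single geometric input distinguishing this result from the failed unconditional relaxation of Kaced and Romashchenko, is the interior assertion $\bm e_{\bm X}\in\text{int}(D)$; once that is in hand, conic duality absorbs the closure gap in $C'$ via the single shift by $-\varepsilon\bm e_{\bm X}$.
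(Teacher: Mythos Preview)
Your argument is correct and follows the same conic-duality route as the paper: both reduce the hypothesis to $-\bm c_0 \in (K \cap L^*)^* = \overline{K^* + \conehull(\{-\bm c_1,\ldots,-\bm c_p\})}$ with $K = \bar\Gamma_n^*$, then approximate and absorb the approximation error into the $\varepsilon h(\bm X)$ slack. The only difference is in how the error is absorbed. The paper proves a general lemma for an arbitrary closed convex cone (Lemma~\ref{lemma:relaxation}), bounding the error term $\bm e \cdot \bm h$ by $\|\bm e\|_1\,\|\bm h\|_\infty$ via H\"older and then invoking $\|\bm h\|_\infty = h(\bm X)$ by monotonicity; you instead make the equivalent geometric observation that $\bm e_{\bm X}$ lies in the interior of $K^*$ (within the hyperplane $h(\emptyset)=0$), so the shift by $-\varepsilon \bm e_{\bm X}$ lands in $-\text{int}(K^*)$ and swallows the perturbation directly. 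Your interior-point formulation makes explicit the single structural feature of $\bar\Gamma_n^*$ that the argument needs, and your closing remark tying this to why the essentially-conditional Kaced--Romashchenko examples are not contradicted is a nice point the paper does not spell out.
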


Even with the error term, condition~\eqref{eq:cond:ii:relaxed} still
implies~\eqref{eq:cond:ii:non-relaxed}, because, if
$\bm c_i \cdot \bm h \leq 0$ for all $i$,
then~\eqref{eq:cond:ii:relaxed} implies
$\bm c_0 \cdot \bm h \leq \varepsilon h(\bm X)$ and, since
$\varepsilon$ is arbitrary, we obtain $\bm c_0 \cdot \bm h \leq 0$.
In fact, Mat{\'{u}}s' inequality~\eqref{eq:matus} can be seen as a
relaxation, with an error term, of the conditional inequality in
Theorem~\ref{thm:kaced:romashchenko}.  Theorem~\ref{th:relaxation}
shows that this was not accidental: {\em every} conditional inequality
follows from an unconditional with an error term that tends to 0.  In
the next section we will show that Theorem~\ref{th:relaxation} has a
surprising application, to the proof of
Theorem~\ref{th:primal:dual:bound}.  Before that, we prove
Theorem~\ref{th:relaxation} by showing:

\begin{lmm}~\cite{DBLP:journals/lmcs/KenigS22} \label{lemma:relaxation}
  Let $K \subseteq \R^n$ be a closed, convex cone, and
  $\bm c_i \in \R^n$, $i=0,p$ be vectors such that the following
  holds:
  \begin{align}
    \forall \bm x \in K:\ \ \left(\bigwedge_{i=1,p} \bm c_i\cdot \bm x \leq 0\right) \Rightarrow  & \bm c_0 \cdot \bm x \leq 0 \label{eq:k:ei}
  \end{align}
  Then, for every $\varepsilon > 0$, there exists
  $\lambda_1,\ldots,\lambda_p \geq 0$ such that:
  \begin{align}
    \forall \bm x \in K:\ \ \bm c_0 \cdot \bm x \leq & \left(\sum_{i=1,p}\lambda_i \bm c_i\right) \cdot \bm x+ \varepsilon ||\bm x||_\infty \label{eq:k:ai}
  \end{align}
\end{lmm}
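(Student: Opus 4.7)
The plan is to reformulate the hypothesis as membership in a certain dual cone, then use the standard duality identity $(K_1 \cap K_2)^* = \overline{K_1^* + K_2^*}$ for closed convex cones together with an approximation argument. Throughout, let $C \defeq \setof{\bm x \in \R^n}{\bm c_i \cdot \bm x \leq 0,\ i=1,p}$, which is a polyhedral (hence closed, convex) cone, and let $K^* \defeq \setof{\bm y}{\forall \bm x \in K, \bm y \cdot \bm x \geq 0}$ denote the dual cone of $K$. By Farkas' lemma, the dual of $C$ admits the explicit description $C^* = \setof{-\sum_{i=1}^p \lambda_i \bm c_i}{\lambda_i \geq 0}$.

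First I would observe that the hypothesis~\eqref{eq:k:ei} says precisely that $\bm c_0 \cdot \bm x \leq 0$ for every $\bm x \in K \cap C$, equivalently that $-\bm c_0 \in (K \cap C)^*$. Next I would invoke the convex-analytic identity
\begin{align*}
(K \cap C)^* = & \overline{K^* + C^*},
\end{align*}
which holds for any two closed convex cones. (One containment is immediate: if $\bm y_1 \in K^*$ and $\bm y_2 \in C^*$, then $\bm y_1 + \bm y_2$ is nonnegative on $K \cap C$, and $(K \cap C)^*$ is closed so contains the closure. The reverse containment follows by taking duals twice, using the bipolar theorem $L^{**} = L$ for closed convex cones $L$, and the elementary fact $(A + B)^* = A^* \cap B^*$.) Combining this with the Farkas description of $C^*$ gives
\begin{align*}
-\bm c_0 \ \in\ & \overline{K^* + \setof{-\textstyle\sum_{i} \lambda_i \bm c_i}{\lambda_i \geq 0}}.
\end{align*}

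Now I would invoke approximation: for every $\varepsilon > 0$, there exist $\bm u \in K^*$ and nonnegative scalars $\lambda_1, \ldots, \lambda_p$ and an error vector $\bm e$ with $\|\bm e\|_1 \leq \varepsilon$ such that $-\bm c_0 = \bm u - \sum_i \lambda_i \bm c_i + \bm e$. For any $\bm x \in K$, the definition of $K^*$ yields $\bm u \cdot \bm x \geq 0$, and the Hölder inequality gives $|\bm e \cdot \bm x| \leq \|\bm e\|_1 \|\bm x\|_\infty \leq \varepsilon \|\bm x\|_\infty$. Substituting into the identity above, rearranged as $\bm c_0 = -\bm u + \sum_i \lambda_i \bm c_i - \bm e$, yields
\begin{align*}
\bm c_0 \cdot \bm x \ =\ -\bm u \cdot \bm x + \sum_i \lambda_i \bm c_i \cdot \bm x - \bm e \cdot \bm x \ \leq\ & \sum_i \lambda_i \bm c_i \cdot \bm x + \varepsilon \|\bm x\|_\infty,
\end{align*}
which is exactly~\eqref{eq:k:ai}.

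The main obstacle is the duality identity $(K \cap C)^* = \overline{K^* + C^*}$, and in particular the fact that the closure is essential: the sum $K^* + C^*$ need not itself be closed when $K$ is non-polyhedral (as is the case for $\bar \Gamma_n^*$ when $n \geq 4$). This is precisely what forces the error term $\varepsilon \|\bm x\|_\infty$ in~\eqref{eq:k:ai}, and also what explains why Kaced and Romashchenko's essentially conditional inequality cannot be relaxed without an error term: there exist $-\bm c_0$ lying in $\overline{K^* + C^*} \setminus (K^* + C^*)$, so no finite choice of multipliers $\lambda_i$ can achieve $\varepsilon = 0$. Everything else in the proof is a routine application of the Farkas lemma and the $\ell_1$/$\ell_\infty$ Hölder duality.
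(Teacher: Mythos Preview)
Your proof is correct and follows essentially the same route as the paper's: reformulate the hypothesis as $-\bm c_0 \in (K \cap C)^*$, apply the duality identity $(K_1 \cap K_2)^* = \overline{K_1^* + K_2^*}$ (the paper writes this as $\overline{\conehull(K_1^* \cup K_2^*)}$, which is the same thing for cones), then approximate to within $\varepsilon$ in $\ell_1$ and use H\"older to convert the error into $\varepsilon\|\bm x\|_\infty$. Your additional remark explaining why the closure is the source of the error term, and why it is unavoidable in the non-polyhedral case, is a nice gloss that the paper leaves implicit.
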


The lemma implies the theorem, because $\bar \Gamma_n^*$ is a closed,
convex cone, and $||\bm h||_\infty = h(\bm X)$.

\begin{proof} (of Lemma~\ref{lemma:relaxation}) Let
  $L\defeq \setof{-\bm c_i}{i=1,p}$.  Then condition~\eqref{eq:k:ei}
  says that $\bm x \in K\cap L^*$ implies
  $\bm c_0 \cdot \bm x \leq 0$, or, equivalently,
  $-\bm c_0 \in (K \cap L^*)^*$.  The following holds for any closed,
  convex cones $K_1, K_2$ (see\cite[Sec.5.3]{DBLP:journals/lmcs/KenigS22}):
  \begin{align*}
    (K_1 \cap K_2)^* = & \overline{\conehull(K_1^* \cup K_2^*)}
  \end{align*}
  where $\conehull(A)$ is the conic hull of a set $A$, i.e. the set of
  positive, linear combinations of vectors in $A$.  Also, if $L$ is
  finite, then $L^{**} = \conehull(L)$.  Therefore, $-\bm c_0$ belongs
  to the following set:
  \begin{align*}
(K \cap L^*)^*= & \overline{\conehull(K^* \cup L^{**})} \\
   = &  \overline{\conehull(K^* \cup \conehull(L))}\\
   = &  \overline{\conehull(K^* \cup L)}
  \end{align*}
  For any $\varepsilon > 0$, there exists $\bm e \in \R^n$ such that
  $||\bm e||_1 < \varepsilon$ and:
  \begin{align*}
    -\bm c_0 + \bm e \in & \conehull(K^* \cap L)
  \end{align*}
  By the definition of the conic hull, and the fact that $K^*$ is a
  convex cone, we obtain that there exists $\bm d \in K^*$ and
  $\lambda_i \geq 0$, for $i=1,p$ such that:
  \begin{align*}
    -\bm c_0 + \bm e = & \bm d - \sum_{i=1,p} \lambda_i \bm c_i
  \end{align*}
  We prove~\eqref{eq:k:ai}.  Let $\bm x \in K$, and observe that
  $\bm d \cdot \bm x \geq 0$, then:
  \begin{align*}
    \left(\sum_{i=1,p} \lambda_i \bm c_i\right)\cdot \bm x - \bm c_0
    \cdot \bm x + \bm e \cdot \bm x = & \bm d \cdot \bm x \geq 0
  \end{align*}
  and~\eqref{eq:k:ai} follows from
  $\bm e \cdot \bm x \leq ||\bm e||_1 \cdot ||\bm x||_\infty \leq
  \varepsilon ||\bm x||_\infty$.
\end{proof}

We end this section with a brief discussion of why
Theorem~\ref{th:relaxation} only holds for the set of almost entropic
functions $\bar \Gamma_n^*$, and fails for $\Gamma_n^*$.  Kaced and
Romashchenko~\cite{DBLP:journals/tit/KacedR13} gave an example of a
conditional inequality that is valid for $\Gamma_n^*$, but not for
$\bar \Gamma_n^*$.  If Theorem~\ref{th:relaxation} were to hold for
that inequality, then the relaxed
inequality~\eqref{eq:cond:ii:relaxed} holds for $\bar \Gamma_n^*$, and
then we could prove that the conditional inequality also holds for
$\bar \Gamma_n^*$.

\subsection{Proof of Theorem~\ref{th:primal:dual:bound}}

\label{sec:proof:of:duality}

We have now the machinery needed to prove
Theorem~\ref{th:primal:dual:bound}, which states that
$\text{Log-L-Bound}_{\Gamma_n^*}$ and
$\text{Log-U-Bound}_{\Gamma_n^*}$ are asymptotically equal.  What
makes the proof a little difficult is the ill-behaved nature of the
entropic functions $\Gamma_n^*$.  To cope with that, use some help
from the almost entropic functions $\bar \Gamma_n^*$, and prove that,
here, the two bounds are equal (not just asymptotically).  We state
and prove the theorem for an arbitrary closed, convex cone $K$:

\begin{thm}
  \label{th:primal:dual:bound:k} Fix $Q, \Sigma, \bm b$, and let $K$
  be a closed, convex cone s.t. $N_n \subseteq K \subseteq
  \Gamma_n$. (Recall that $N_n$ is the set of normal polymatroids,
  Def.~\ref{def:normal:polymatroid}.)  Then:
  \begin{align*}
  \text{Log-L-Bound}_K(Q,\Sigma,\bm b)= & \text{Log-U-Bound}_K(Q,\Sigma,\bm b)
  \end{align*}
\end{thm}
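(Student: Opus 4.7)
The plan is to derive strong duality from the relaxation Lemma~\ref{lemma:relaxation} via a homogenization trick that converts the affine constraints $h(\sigma)\le b_\sigma$ into a genuine conditional inequality over a closed convex cone. Weak duality $\text{Log-L-Bound}_K\le\text{Log-U-Bound}_K$ is already~\eqref{eq:l:leq:u:for:k}, so only the reverse direction remains; write $L^*\defeq\text{Log-L-Bound}_K(Q,\Sigma,\bm b)$ and note that $L^*=+\infty$ is handled by weak duality, so I may assume $L^*<\infty$.

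To homogenize, I adjoin a scalar variable $t$ and work in the closed convex cone $K'\defeq K\times\R_{\ge 0}\subseteq\R^{2^n+1}$. The pivotal intermediate claim is
\[
\forall(\bm h,t)\in K':\ \ \bigwedge_{\sigma\in\Sigma}\bigl(h(\sigma)-tb_\sigma\le 0\bigr)\ \Longrightarrow\ h(\bm X)-tL^*\le 0.
\]
For $t>0$ this follows by scaling, since $\bm h/t\in K$ satisfies $(\Sigma,\bm b)$ and therefore $h(\bm X)/t\le L^*$. For $t=0$ the hypothesis forces $h(\sigma)=0$ for every $\sigma$ (because $h(\sigma)\ge 0$ on any polymatroid in $\Gamma_n$); were $h(\bm X)>0$, then $\lambda\bm h\in K$ would be feasible for $(\Sigma,\bm b)$ for every $\lambda\ge 0$, yielding arbitrarily large $\lambda h(\bm X)$, contradicting $L^*<\infty$.

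Having established the implication, I apply Lemma~\ref{lemma:relaxation} to $K'$ with $\bm c_\sigma=(\bm e_\sigma,-b_\sigma)$ and $\bm c_0=(\bm e_{\bm X},-L^*)$, where $\bm e_{\bm V}\cdot\bm h=h(\bm V)$. For every $\varepsilon>0$ this gives nonnegative coefficients $\lambda_\sigma$ such that
\[
h(\bm X)-tL^*\le\sum_\sigma\lambda_\sigma\bigl(h(\sigma)-tb_\sigma\bigr)+\varepsilon\,\|(\bm h,t)\|_\infty\qquad\forall(\bm h,t)\in K'.
\]
Specializing at $t=0$ and using $\|\bm h\|_\infty=h(\bm X)$ for $\bm h\in K\subseteq\Gamma_n$ gives $(1-\varepsilon)h(\bm X)\le\sum_\sigma\lambda_\sigma h(\sigma)$, a valid $\Sigma$-inequality on $K$ with weights $w_\sigma\defeq\lambda_\sigma/(1-\varepsilon)$. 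Specializing at $(\bm 0,1)$ gives $\sum_\sigma\lambda_\sigma b_\sigma\le L^*+\varepsilon$, hence $\sum_\sigma w_\sigma b_\sigma\le(L^*+\varepsilon)/(1-\varepsilon)$. Letting $\varepsilon\to 0$ yields $\text{Log-U-Bound}_K\le L^*$, as required.

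The main obstacle is the $t=0$ case of the implication, which is exactly where $L^*<\infty$ and the closedness of $K$ under scaling are both essential; without them, a degenerate direction of growth could open a duality gap. A secondary delicacy is that Lemma~\ref{lemma:relaxation} only supplies an error term $\varepsilon\,\|(\bm h,t)\|_\infty$; the redeeming identity is that on polymatroids $\|\bm h\|_\infty=h(\bm X)$, so the error is absorbed cleanly into the left-hand side via the $(1-\varepsilon)$ factor and vanishes in the limit $\varepsilon\to 0$.
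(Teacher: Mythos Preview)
Your argument is correct, and it is a genuinely different route from the paper's.

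The paper proves the theorem by casting both bounds as a primal/dual cone program and invoking Slater's condition. Because $K$ itself may have empty interior (it lies in the hyperplane $h(\emptyset)=0$) and because some constraints may be tight ($b_\sigma=0$), Slater's condition fails on the nose. The paper therefore partitions $\Sigma$ into the tight part $\Sigma_0$ (a set of FDs) and the strict part $\Sigma_1$, passes to the lattice $L_{\Sigma_0}$ of closed sets, and projects $K\cap F$ onto a lower-dimensional cone $K_0$ on which Slater's condition can be verified using step functions --- this is precisely where the hypothesis $N_n\subseteq K$ is used in earnest. Showing that the bounds are preserved under this projection is the most delicate step, and it is where the paper appeals to Theorem~\ref{th:relaxation} (hence, in the end, to Lemma~\ref{lemma:relaxation}).

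You bypass Slater entirely. Your homogenization $K'=K\times\R_{\ge 0}$ lets you apply Lemma~\ref{lemma:relaxation} directly: the $t=0$ specialization extracts a valid $\Sigma$-inequality, and the $(\bm 0,1)$ specialization bounds its objective value, with the error term absorbed by $\|\bm h\|_\infty=h(\bm X)$ on $\Gamma_n$. This is shorter, avoids the lattice machinery and the separate treatment of tight constraints, and uses $N_n\subseteq K$ only to guarantee $K\neq\emptyset$ (so that $\bm 0\in K$); the hypothesis $K\subseteq\Gamma_n$ does the real work, both for $h(\sigma)\ge 0$ in the $t=0$ branch and for identifying $\|\bm h\|_\infty$ with $h(\bm X)$. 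What the paper's approach buys in exchange for its length is an explicit Slater point and the cone-programming framework, which may be reusable for related optimization questions; your approach buys a cleaner and more self-contained proof of exactly this statement.
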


In particular, the lower and upper bounds are equal for
$K = \bar \Gamma_n^*$, while they are not necessarily equal for
$\Gamma_n^*$.  It shows that the set $\bar \Gamma_n^*$ is better
behaved that $\Gamma_n^*$, and that explains why it was used in prior
work~\cite{DBLP:conf/icdt/GogaczT17,DBLP:conf/pods/Khamis0S17} to
study lower bounds.

Before we prove the theorem, we show how to use it to prove
Theorem~\ref{th:primal:dual:bound}.  Its proof follows from
Theorem~\ref{th:primal:dual:bound:k} and two
identities,~\eqref{eq:u:gamman:u:gammans}
and~\eqref{eq:the:same:u:bound}, which we prove below.  The first is
very simple:
\begin{align}
  \text{Log-U-Bound}_{\bar \Gamma_n^*}(Q,\Sigma,\bm b)=\text{Log-U-Bound}_{\Gamma_n^*}(Q,\Sigma,\bm b)\label{eq:u:gamman:u:gammans}
\end{align}
and follows directly from the fact that $\bar \Gamma_n^*$ and
$\Gamma_n^*$ have the same dual cone,
$(\bar \Gamma_n^*)^* = (\Gamma_n^*)^*$; in other words, they define
the same set of valid inequalities $\bm c\cdot \bm h \geq 0$.

The second equality requires a proof, and we state it as a lemma:
\begin{lmm} \label{lmm:the:same:u:bound} The following holds:
  \begin{align}
    \sup_k \frac{\text{Log-L-Bound}_{\Gamma_n^*}(Q,\Sigma,k\bm b)}{\text{Log-L-Bound}_{\bar \Gamma_n^*}(Q,\Sigma,k\bm b)}=&1 \label{eq:the:same:u:bound}
  \end{align}
\end{lmm}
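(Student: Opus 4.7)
The direction $\leq 1$ in~\eqref{eq:the:same:u:bound} is immediate from $\Gamma_n^* \subseteq \bar \Gamma_n^*$. For the reverse inequality, the plan is to show that for every $\varepsilon > 0$ there exist an integer $k$ and $\bm g \in \Gamma_n^*$ with $\bm g \models (\Sigma, k \bm b)$ and $g(\bm X) \geq (1 - O(\varepsilon)) k L^*$, where $L^* \defeq \text{Log-L-Bound}_{\bar \Gamma_n^*}(Q,\Sigma,\bm b)$. Since $\bar \Gamma_n^*$ is a cone, $\text{Log-L-Bound}_{\bar \Gamma_n^*}(Q,\Sigma,k \bm b) = k L^*$, so this gives $\sup_k \geq 1$. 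The construction parallels the asymptotic-tightness proof of Section~\ref{sec:th:tight:non-tight}, but it starts from an almost-entropic optimum and replaces Chan--Yeung's theorem by topological approximation inside $\bar \Gamma_n^*$.

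First, Theorem~\ref{th:primal:dual:bound:k} applied to the closed convex cone $K = \bar \Gamma_n^*$ (which satisfies $N_n \subseteq K \subseteq \Gamma_n$) shows that the supremum defining $L^*$ is attained, by some $\bm h^* \in \bar \Gamma_n^*$ with $h^*(\bm X) = L^*$ and $\bm h^* \models (\Sigma, \bm b)$. The proof of the Slack Lemma~\ref{lemma:create:slack} uses only nonnegative scaling, so it transfers verbatim to the cone $\bar \Gamma_n^*$; applied to $\bm h^*$ it produces an integer $k_1$ and $\bm h' \in \bar \Gamma_n^*$ with $h'(\bm X) \geq (1-\varepsilon) k_1 L^*$ and strict slack $h'(\sigma) \leq (1-\varepsilon/2) k_1 b_\sigma$ for every $\sigma \in \Sigma$, and in particular $h'(\sigma) = 0$ whenever $b_\sigma = 0$. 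Since $\bm h' \in \bar \Gamma_n^*$, the definition of topological closure lets me pick $\bm g \in \Gamma_n^*$ with $\|\bm g - \bm h'\|_\infty \leq \eta$ for $\eta$ as small as desired; choosing $\eta$ below both $(\varepsilon/4) k_1 \min_{\sigma:\, b_\sigma > 0} b_\sigma$ and $\varepsilon k_1 L^*$ yields $g(\sigma) \leq k_1 b_\sigma$ for every $\sigma$ with $b_\sigma > 0$ and $g(\bm X) \geq (1-2\varepsilon) k_1 L^*$. Setting $k \defeq k_1$ then completes the argument when $\Sigma$ contains no FD constraints.

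The main obstacle, as the preceding step shows, is handling statistics with $b_\sigma = 0$: topological approximation can produce $g(\sigma) > 0$ even when $h'(\sigma) = 0$, so $\bm g$ is not guaranteed to satisfy the FDs encoded in $\Sigma$. My plan is to remove these in advance by passing to the FD-lattice of Section~\ref{subsec:modular:normal}: let $\Phi \subseteq \Sigma$ collect the FD constraints, let $L_\Phi$ be the lattice of $\Phi$-closed subsets of $\bm X$, and exploit the bijection between polymatroids satisfying $\Phi$ and polymatroids on $L_\Phi$ via restriction and closure-extension. The same bijection relates entropic and almost-entropic vectors on the two sides, because restriction to the FD-face is a continuous linear projection that maps $\Gamma_n^*$ onto its restriction and commutes with topological closure. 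On $L_\Phi$ no $b_\sigma = 0$ constraint remains, so the three-step construction above applies and produces the required witness. Verifying that face-restriction commutes with closure is the lone technical point: it follows from the fact that the FD-face is cut out by linear equalities $h(\bm V \mid \bm U) = 0$ and that the projection of a convex cone along these equalities is continuous and closure-preserving, so the image of $\bar\Gamma_n^*$ on $L_\Phi$ equals the closure of the image of $\Gamma_n^*$.
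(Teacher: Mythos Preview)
Your argument is correct and essentially matches the paper's proof in the case where no statistic has $b_\sigma = 0$: the paper also uses linearity of $\text{Log-L-Bound}_{\bar\Gamma_n^*}$ in $\bm b$, the Slack Lemma, and an approximation of an almost-entropic optimum by an entropic vector. The only real difference is the order (the paper approximates first, then applies the Slack Lemma), which is immaterial in the FD-free case.

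The gap is in your treatment of the FD constraints. You need that an almost-entropic vector $\bm h'$ satisfying all the FDs in $\Sigma_0 := \{\sigma : b_\sigma = 0\}$ can be approximated by an \emph{entropic} vector that \emph{also} satisfies those FDs; equivalently, that $\Gamma_n^* \cap F$ is dense in $\bar\Gamma_n^* \cap F$, where $F = \{\bm h : h(\sigma)=0,\ \sigma\in\Sigma_0\}$. This does \emph{not} follow from abstract topology: intersecting a dense subset with a closed linear subspace need not yield a dense subset of the intersection. Your proposed fix via the lattice $L_\Phi$ does not help either. The coordinate projection $\Pi_{L_\Phi}$ is indeed continuous, so $\Pi_{L_\Phi}(\bar\Gamma_n^*) \subseteq \overline{\Pi_{L_\Phi}(\Gamma_n^*)}$, and you can approximate $\Pi_{L_\Phi}(\bm h')$ by $\Pi_{L_\Phi}(\bm g)$ for some $\bm g \in \Gamma_n^*$. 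But to get back to $2^{\bm X}$ you must \emph{closure-extend}, setting $\tilde g(\bm U) := g(\bm U^+)$; this produces a polymatroid satisfying $\Sigma_0$, but there is no reason for $\tilde{\bm g}$ to be entropic. The bijection between polymatroids on $2^{\bm X}$ satisfying $\Sigma_0$ and polymatroids on $L_\Phi$ does not restrict to a bijection on entropic vectors.

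The paper closes this gap not by topology but by invoking Chan and Yeung's Theorem~\ref{th:chan:groups} together with its FD-preserving addendum (Appendix~\ref{app:th:chan:groups}): the group-realizable approximants $\bm h^{(r)}$ can be chosen to satisfy exactly the same FDs as $\bm h'$. That is the substantive ingredient you are missing; once you have it, your Slack-Lemma-then-approximate order works just as well as the paper's approximate-then-Slack-Lemma order.
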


\begin{proof} The proof is similar to that of
  Theorem~\ref{th:tight:non-tight} item (1) in
  Sec.~\ref{sec:th:tight:non-tight}.  The LHS is obviously $\leq 1$.
  To prove that it is $\geq 1$, denote by\footnote{If
    $\text{Log-L-Bound}_{\bar \Gamma_n^*}(Q,\Sigma,\bm b)=\infty$ then
    we choose $L$ an arbitrarily large number, and make minor
    adjustments to the proof; we omit the details.}
  $L \defeq \text{Log-L-Bound}_{\bar \Gamma_n^*}(Q,\Sigma,\bm b)$, and
  observe that $\text{Log-L-Bound}_{\bar \Gamma_n^*}$ is linear in
  $\bm b$,
  $\text{Log-L-Bound}_{\bar \Gamma_n^*}(Q,\Sigma,k\bm b) = k L$,
  because $\bar \Gamma_n^*$ is a convex cone.  It suffices to show
  that, for all $\varepsilon > 0$, there exists $\bm h\in \Gamma_n^*$
  such that $\bm h \models (\Sigma, k\bm b)$ and
  $h(\bm X) \geq (1-\varepsilon)^2 kL$.

  Start with some  $\bm h \in \bar \Gamma_n^*$ satisfying:
  \begin{align*}
    \forall \sigma \in \Sigma:\ h(\sigma) \leq & b_\sigma, & h(\bm X)= & L
  \end{align*}
  which exists by the definition of
  $\text{Log-L-Bound}_{\bar \Gamma_n^*}$
  (Def.~\ref{def:ep:bound:dual}) and the fact that the set
  $\setof{\bm h \in \R^{2^{[n]}}}{||\bm h||_\infty \leq L}$ is
  compact.  Chan and Yeung's Theorem~\ref{th:chan:groups} proves that
  $\Upsilon_n$ is dense in $\Gamma_n^*$, and therefore it is also
  dense in $\bar \Gamma_n^*$.  Assume w.l.o.g. that $h(\bm X) > 0$,
  then
  $g \defeq \min_{\bm U, \bm V: h(\bm V|\bm U>0)} h(\bm V|\bm U) > 0$
  (the smallest non-zero value of any expression $h(\bm V|\bm U)$).
  Let $\delta \defeq \varepsilon g /4$.  Since $\Upsilon_n$ is dense,
  there exists $r \in \N$ and $\bm h^{(r)} \in \Upsilon_n$ such that
  $||\frac{1}{r}\bm h^{(r)}-\bm h||_\infty \leq \delta$.  We have
  $h(\bm X) >0$ hence $h(\bm X) \geq g$, therefore:
  \begin{align*}
    \frac{1}{r}h^{(r)}(\bm X) \geq & h(\bm X)- \delta = h(\bm X)-\varepsilon g/4\\
    \geq & h(\bm X) - (\varepsilon/4)h(\bm X) \geq (1-\varepsilon)h(\bm X)\\
    = & (1-\varepsilon) L
  \end{align*}
  We claim that
  $\frac{1}{r}h^{(r)}(\bm V|\bm U)\leq (1+\varepsilon/2) h(\bm V|\bm
  U)$, for all $\bm U, \bm V$. If $h(\bm V|\bm U) = 0$, then $\bm h$
  satisfies the FD $\bm U\rightarrow \bm V$, and therefore
  $\bm h^{(r)}$ also satisfies this FD (see the note after
  Theorem~\ref{th:chan:groups}), implying $h^{(r)}(\bm V|\bm U) = 0$.
  Otherwise, $h(\bm V|\bm U) \geq g$ and,
  \begin{align*}
    \frac{1}{r}h^{(r)}(\bm V|\bm U) \leq & h(\bm V|\bm U) +2\delta = h(\bm V|\bm U) + \varepsilon g /2 \\
    \leq & (1+\varepsilon/2) h(\bm V|\bm U)
  \end{align*}
  Therefore, we have:
  \begin{align*}
    \forall \sigma \in \Sigma:\ h^{(r)}(\sigma)\leq & (1+\varepsilon/2)r b_\sigma& h^{(r)}(\bm X) \geq & (1-\varepsilon) r L
  \end{align*}
  Finally, by the Slack lemma,
  $\exists k\in \N, \bm h' \in \Gamma_n^*$ such that:
  \begin{align*}
    \forall \sigma \in \Sigma:\ h'(\sigma)\leq & (1-\varepsilon/2)k h^{(r)}(\sigma) \leq (1-(\varepsilon/2)^2) k r b_\sigma\leq k r b_\sigma\\
    h'(\bm X) \geq & (1-\varepsilon)k h^{(r)}(\bm X) \geq  (1-\varepsilon)^2k r L
  \end{align*}
  This completes the proof.
\end{proof}


In the remainder of this section we prove Theorem~\ref{th:primal:dual:bound:k}.

\begin{proof} (of Theorem~\ref{th:primal:dual:bound:k}) We will use
  the following definition from~\cite[Example
  5.12]{boyd_vandenberghe_2004}:
  
  \begin{defn} \label{def:primal:dual:cone:program}
    Let $K$ be a proper cone (meaning: closed, convex, with a
    non-empty interior, and pointed i.e. $\bm x, - \bm x \in K$
    implies $\bm x =0$).  A primal/dual cone program in standard
    form\footnote{We changed to the original
      formulation~\cite{boyd_vandenberghe_2004} by replacing $\bm c$
      with $-\bm c$, replacing $\bm y$ with $-\bm y$.} is the
    following:
  \begin{align*}
    &
      \begin{array}{ll|ll}
        Primal && Dual& \\ \hline
        \mbox{Maximize} & \bm c^T \cdot \bm x & \mbox{Minimize} & \bm y^T\cdot b\\
        \mbox{where} & \bm A \cdot \bm x = \bm b &  \mbox{where} & (\bm y^T\cdot \bm A - \bm c^T)^T \in K^*\\
             & \bm x \in K & & 
      \end{array}
  \end{align*}
\end{defn}

Denote by $P^*, D^*$ the optimal value of the primal and dual
respectively.  {\em Weak duality} states that $P^* \leq D^*$, and is
easy prove.  When {\em Slater's condition} holds, which says that
there exists $\bm x$ in the interior of $K$ such that
$\bm A \bm x = \bm b$, then {\em strong duality} holds too:
$P^* = D^*$.

$\text{Log-L-Bound}_K(Q,\Sigma,\bm b)$ and
$\text{Log-U-Bound}_K(Q,\Sigma,\bm b)$ can be expressed as a cone
program, by letting $\bm A$ and $\bm c$ be the matrix and vector
defined in the proof of Theorem~\ref{th:primal:dual:bound:polyhedral}
(thus, $\bm A \cdot \bm h = (h(\sigma))_{\sigma \in \Sigma}$ and
$c_{\bm X} = 1$, $c_{\bm U}=0$ for $\bm U \neq \bm X$):
%
%
\begin{align}
  & \begin{array}{ll|ll}
      \multicolumn{2}{l|}{\text{Log-L-Bound}_K}& \multicolumn{2}{|l}{\text{Log-U-Bound}_K} \\ \hline
      \multicolumn{2}{l|}{\mbox{Maximize } \bm c^T \cdot h} & \multicolumn{2}{|l}{\mbox{Minimize }  \bm w^T\cdot\bm b}\\
      \mbox{where} & \bm A \cdot \bm h + \bm \beta = \bm b &  \mbox{where} & (\bm w^T\cdot \bm A - \bm c^T)^T \in K^*\\
                                               & (\bm h,\bm \beta) \in K\times \R_+^s & & \bm w\geq 0 
    \end{array} \label{eq:primal:dual:cone:h}
\end{align}
Here $\bm \beta$ are slack variables that convert an inequality
$h(\sigma) \leq b_{\sigma}$ into an equality
$h(\sigma)+ \beta_{\sigma} = b_{\sigma}$.  We leave it to the reader
to check that these two programs are indeed primal/dual as in
Def.~\ref{def:primal:dual:cone:program}.

However, in general, Slater's condition need not hold
for~\eqref{eq:primal:dual:cone:h}.  For example, $K=\bar \Gamma_n^*$
lies in the hyperplane $h(\emptyset)=0$, and thus has an empty
interior. This could be addressed by removing the $\emptyset$
dimension, but we have a bigger problem. Some of the constraints may
be tight: when $b_\sigma = 0$, then $h(\sigma) = 0$, meaning that no
feasible solution $\bm h$ exists in the interior of $K$.
Instead, we will define a different cone, $K_0$, by using polymatroids
on a lattice, as in Sec.~\ref{sec:background:non-shannon}.

Partition $\Sigma$ into $\Sigma_0 = \setof{\sigma}{b_\sigma =0}$ and
$\Sigma_1 = \setof{\sigma}{b_\sigma > 0}$, and denote by $\bm b_1$ the
restriction of $\bm b$ to $\Sigma_1$.  In other words, $\Sigma_0$
defines a set of functional dependencies, while $(\Sigma_1,\bm b_1)$
defines non-tight statistics.  Let $(L_{\Sigma_0},\subseteq)$ be the
lattice of the closed sets of $\Sigma_0$ (defined in
Sec.~\ref{sec:problem}); we will drop the subscript and write simply
$(L,\subseteq)$ to reduce clutter.  Define $F \subseteq \R^{2^{[n]}}$
the following cone\footnote{In fact $F$ is even a vector space.}:
\begin{align*}
  F \defeq & \setof{\bm h \in \R^{2^{[n]}}}{\forall \sigma \in \Sigma_0: h(\sigma)=0}
\end{align*}
Let $L_0 \defeq L-\set{\hat 0}$, $N \defeq |L_0|$.  Our cone
$K_0 \subseteq \Rp^N$ is:
\begin{align}
  K_0 \defeq & \Pi_{L_0}(K \cap F) \label{eq:the:cone:k}
\end{align}
The function $\Pi_{L_0}$ projects a $2^{[n]}$-dimensional vector
$(h_{\bm U})_{\bm U \in 2^{\bm X}}$ to the $N$-dimensional vector
$(h_{\bm U})_{\bm U \in L_0}$.  Thus, $K_0$ not only removes the
$\emptyset$ dimension, but also removes all dimensions subject to a
tight constraint.  We prove the following:
\begin{enumerate}[(1)]
\item \label{item:k:1}  $K_0$ is proper
\item \label{item:k:2} $\text{Log-L-Bound}_K = \text{Log-L-Bound}_{K_0}$ and\\
  $\text{Log-U-Bound}_K = \text{Log-U-Bound}_{K_0}$
\item \label{item:k:3} $\text{Log-L-Bound}_{K_0}=\text{Log-U-Bound}_{K_0}$
\end{enumerate}
Theorem~\ref{th:primal:dual:bound:k} follows from these three claims.

We start with item~\ref{item:k:1}, and observe that $K_0$ is a closed,
convex cone, because $K \cap F$ is a closed, convex cone, and
$\Pi_{L_0}$ is a linear isomorphism $K \cap F \rightarrow K_0$:
indeed, $\Pi_{L_0}$ is surjective by the definition of $K_0$, and it
is injective because, if $\bm h, \bm h' \in K\cap F$, then
$\Pi_{L_0}(\bm h)=\Pi_{L_0}(\bm h')$ implies that, for every set
$\bm U$, $h(\bm U) = h(\bm U^+) = h'(\bm U^+) = h'(\bm U)$.  It is
immediate to check that $K_0$ is pointed, and we will show below that
$K_0$ has a non-empty interior: this implies that it is proper.

Next, we prove item~\ref{item:k:2}, and for that we will write $\bm h$
for a vector in $K \cap F$ and write $\bm h^{(0)}$ for a vector in
$K_0$.  For any statistics $\sigma = (\bm V|\bm U) \in \Sigma_1$,
denote by $\sigma^+ = ((\bm U\bm V)^+|\bm U^+)$.  We say that a vector
$\bm h^{(0)}$ {\em satisfies} the statistics $(\Sigma_1, \bm b_1)$, in
notation $\bm h^{(0)} \models (\Sigma_1,\bm b_1)$, if
$\bm h^{(0)}(\sigma^+) \leq \bm b_{\sigma}$ for all
$\sigma \in \Sigma_1$.  By definition, a vector $\bm h^{(0)} \in K_0$
satisfies the FDs $\Sigma_0$.

\begin{lmm}  The following holds:
  \begin{align*}
    \text{Log-L-Bound}_K(Q,\Sigma,\bm b)=\text{Log-L-Bound}_{K_0}(Q,\Sigma_1,\bm b_1)
  \end{align*}
\end{lmm}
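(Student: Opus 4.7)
The plan is to construct a bijection between the feasible sets of the two suprema that preserves the objective $h(\bm X)$. The projection $\Pi_{L_0}: K\cap F \to K_0$ has already been shown (in verifying item~\ref{item:k:1}) to be a linear isomorphism, so I would use it as the bijection and verify two things: that the feasible set on the $K$-side is exactly $K\cap F$ intersected with the non-tight constraints, and that the non-tight satisfaction condition transfers correctly under $\Pi_{L_0}$.

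First I would reduce the $K$-side feasible set from $K$ to $K\cap F$. Since $K \subseteq \Gamma_n$, every $\bm h \in K$ satisfies $h(\sigma) \geq 0$ for each conditional expression $\sigma \in \Sigma$, by monotonicity. The tight constraints $h(\sigma) \leq b_\sigma = 0$ for $\sigma \in \Sigma_0$ therefore force $h(\sigma) = 0$, i.e., $\bm h \in F$. Hence the feasible set for $\text{Log-L-Bound}_K(Q,\Sigma,\bm b)$ is exactly $\{\bm h \in K \cap F : h(\sigma) \leq b_\sigma \text{ for all } \sigma \in \Sigma_1\}$, and the supremum is unchanged by restricting to $K \cap F$.

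Second I would translate the non-tight constraints through $\Pi_{L_0}$. A vector $\bm h \in K\cap F$ satisfies the FDs in $\Sigma_0$, so by Lee's correspondence (reviewed in Sec.~\ref{sec:problem} and Sec.~\ref{sec:background:non-shannon}) we have $h(\bm W) = h(\bm W^+)$ for every $\bm W \subseteq \bm X$. Setting $\bm h^{(0)} \defeq \Pi_{L_0}(\bm h)$, this yields for every $\sigma = (\bm V \mid \bm U) \in \Sigma_1$:
\[ h(\sigma) = h(\bm U\bm V) - h(\bm U) = h((\bm U\bm V)^+) - h(\bm U^+) = h^{(0)}(\sigma^+), \]
under the natural convention $h^{(0)}(\hat 0) = 0$ when $\bm U^+ = \emptyset$. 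Thus $\bm h \models (\Sigma_1, \bm b_1)$ if and only if $\bm h^{(0)} \models (\Sigma_1, \bm b_1)$; and since $\bm X^+ = \bm X \in L_0$, the objectives agree, $h(\bm X) = h^{(0)}(\bm X)$.

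Combining these two steps, $\Pi_{L_0}$ gives a bijection between the feasible sets of the two suprema that preserves the objective, yielding the desired equality. The main subtlety is really just the bookkeeping around the closure convention (including the edge case $\bm U^+ = \emptyset$ for cardinality statistics), but once the lattice viewpoint of Sec.~\ref{sec:background:non-shannon} is in hand, no new analytical content beyond the linear isomorphism $\Pi_{L_0}$ of item~\ref{item:k:1} is needed.
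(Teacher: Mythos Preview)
Your proposal is correct and follows essentially the same approach as the paper: the paper's proof is a one-sentence sketch (``the projection of a vector $\bm h \in K$ satisfying $(\Sigma,\bm b)$ is a vector $\bm h^{(0)} \in K_0$ satisfying $(\Sigma_1,\bm b_1)$, and conversely''), and you have simply supplied the natural details behind that sketch --- namely that the tight constraints force $\bm h \in F$, that $h(\sigma)=h^{(0)}(\sigma^+)$ via $h(\bm W)=h(\bm W^+)$, and that $\Pi_{L_0}$ is the requisite bijection. One minor quibble: the identity $h(\bm W)=h(\bm W^+)$ for $\bm h\in F$ is not really ``Lee's correspondence'' but rather an immediate consequence of $\bm h$ satisfying the FDs in $\Sigma_0$ (iterate the closure steps); the mathematical content is fine, just the attribution is slightly off.
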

\noindent Equivalently, the lemma states:
\begin{align*}
  \sup_{\bm h} & \bigsetof{h(\bm X)}{\bm h \in K, \bm h \models  (\Sigma, \bm b)}\\
  = & \sup_{\bm h^{(0)}} \bigsetof{h^{(0)}(\hat 1)}{\bm h^{(0)} \in K_0, \bm h^{(0)} \models  (\Sigma_1, \bm b_1)}
\end{align*}
and the proof is immediate, because the projection of a vector
$\bm h \in K$ satisfying $(\Sigma, \bm b)$ is a vector
$\bm h^{(0)} \in K_0$ satisfying $(\Sigma_1, \bm b_1)$, and,
conversely, every such $\bm h^{(0)}$ is the projection of a vector
$\bm h$.

Recall that we have assumed $N_n \subseteq K \subseteq \Gamma_n$:

\begin{lmm}  The following holds:
  \begin{align*}
    \text{Log-U-Bound}_K(Q,\Sigma,\bm b)=\text{Log-U-Bound}_{K_0}(Q,\Sigma_1,\bm b_1)
  \end{align*}
\end{lmm}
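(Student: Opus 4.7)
I would prove the equality by establishing both inequalities separately. The direction $\text{Log-U-Bound}_{K_0} \leq \text{Log-U-Bound}_K$ is immediate: any $\bm w \geq 0$ feasible for the $K$-program, meaning $\sum_{\sigma \in \Sigma} w_\sigma h(\sigma) \geq h(\bm X)$ for every $\bm h \in K$, restricts to $\bm w|_{\Sigma_1}$ feasible for the $K_0$-program. Indeed, on $\bm h \in K \cap F$ the $\Sigma_0$-terms vanish, and the isomorphism $\Pi_{L_0}$ from the preceding lemma sends the resulting inequality to the $K_0$-constraint. Since $b_\sigma = 0$ for $\sigma \in \Sigma_0$, the objectives agree.

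The reverse inequality $\text{Log-U-Bound}_K \leq \text{Log-U-Bound}_{K_0}$ is a Farkas-type extension. Given $\bm w_1 \geq 0$ on $\Sigma_1$ feasible for $K_0$, I would exhibit $\bm w_0 \geq 0$ on $\Sigma_0$ so that $\bm w = (\bm w_1, \bm w_0)$ is feasible for $K$; the objective is then automatically $\sum_{\sigma \in \Sigma_1} w_\sigma b_\sigma$ since $b|_{\Sigma_0} \equiv 0$. Writing $K$-feasibility as $\bm A_1^T \bm w_1 + \bm A_0^T \bm w_0 - \bm c \in K^*$ with $\bm A, \bm c$ as in Theorem~\ref{th:primal:dual:bound:polyhedral}, the existence of such $\bm w_0$ is equivalent to
\[
\bm A_1^T \bm w_1 - \bm c \in \tilde K^* := K^* + \text{cone}\bigl\{-\bm A_0^T e_\sigma : \sigma \in \Sigma_0\bigr\}.
\]
Meanwhile, $K_0$-feasibility of $\bm w_1$ translates, again via $\Pi_{L_0}$, to $\bm A_1^T \bm w_1 - \bm c \in (K \cap F)^*$. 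Bipolar duality together with $K \subseteq \Gamma_n$ (forcing each $\bm A_0^T e_\sigma \in K^*$, so the positive-sign terms in $F^\perp$ can be absorbed into $K^*$) yields $(K \cap F)^* = \overline{K^* + F^\perp} = \overline{\tilde K^*}$.

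The key technical step, and the main obstacle, is verifying that $\tilde K^*$ is already closed, so that $(K \cap F)^* = \tilde K^*$ and a finite $\bm w_0 \geq 0$ can be extracted. For polyhedral $K$ (e.g., $K = \Gamma_n$) this is automatic. In the general setting $N_n \subseteq K \subseteq \Gamma_n$ I would invoke the classical closure lemma for sums of closed convex cones and finite-dimensional subspaces, exploiting that $K$ contains the step functions $\bm h^{\bm V - \bm U} \in N_n$, which witness $h(\sigma) > 0$ for each nontrivial $\sigma = (\bm V|\bm U) \in \Sigma_0$ (trivial FDs with $\bm V \subseteq \bm U$ contribute $0$ to every polymatroid and may be dropped from $\Sigma_0$ a priori). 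If direct closedness proves elusive, an alternative route is to perturb $\bm b_1$ by $\varepsilon \mathbf{1}$, extend the relaxed $K_0$-feasible $\bm w_1^{(\varepsilon)}$ to a $K$-feasible $\bm w^{(\varepsilon)}$ using the extra slack, and pass $\varepsilon \to 0$ so that the objectives converge to equality by continuity.
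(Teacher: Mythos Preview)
Your easy direction (projecting $\bm w$ to $\bm w|_{\Sigma_1}$) is correct and matches the paper. The gap is in the hard direction.

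Your primary route hinges on closedness of $\tilde K^* = K^* + \mathrm{cone}\{-\bm A_0^T e_\sigma\}$ (equivalently of $K^* + F^\perp$). There is no ``classical closure lemma'' giving this: the sum of a closed convex cone and a polyhedral cone (even a subspace) need not be closed in finite dimensions, and the existence of interior witnesses $\bm h^{\bm V-\bm U}$ with $h(\sigma)>0$ does not by itself force closedness of the dual sum. In fact, the essentially-conditional inequalities of Kaced--Romashchenko discussed in Sec.~\ref{sec:conditional:inequalities} are precisely examples where, for $K=\bar\Gamma_n^*$, the analogous sum fails to be closed: a conditional inequality holds on $K\cap F$ but no finite coefficients $\lambda_\sigma$ yield an unconditional relaxation. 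Their $F$ is cut out by mutual-information conditions rather than FDs, so your specific case is not directly ruled out, but you have given no argument that FD conditions behave better.

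The paper sidesteps closedness entirely via Theorem~\ref{th:relaxation} (relaxation with error). Given $\bm w^{(0)}$ feasible for $K_0$, that theorem produces, for every $\varepsilon>0$, coefficients $\lambda'_\sigma$ on $\Sigma_0$ such that
\[
\sum_{\sigma\in\Sigma_0}\lambda'_\sigma h(\sigma)+\sum_{\sigma\in\Sigma_1}w^{(0)}_{\sigma^+}h(\sigma)+\varepsilon h(\bm X)\geq h(\bm X)
\]
holds on all of $K$ (after a short Shannon-inequality step converting $h(\sigma^+)$ back to $h(\sigma)$). Dividing by $1-\varepsilon$ gives a $K$-feasible weight vector whose objective is $\bigl(\sum_{\sigma\in\Sigma_1}w^{(0)}_{\sigma^+}b_\sigma\bigr)/(1-\varepsilon)$, because $b_\sigma=0$ on $\Sigma_0$. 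The $\lambda'_\sigma$ may blow up as $\varepsilon\to 0$---this is exactly the non-closedness you are trying to avoid---but they carry zero cost, so the \emph{infimum} on the LHS still matches the RHS. Your Farkas approach would try to find a finite $\bm w_0$, which need not exist.

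Your fallback perturbation idea is also off: perturbing $\bm b_1$ by $\varepsilon\mathbf 1$ does nothing to feasibility in the U-Bound program, since the feasibility constraint $K\models\sum w_\sigma h(\sigma)\geq h(\bm X)$ does not involve $\bm b$ at all. The correct perturbation is on the inequality side (the $\varepsilon h(\bm X)$ term above), and it is Theorem~\ref{th:relaxation} that supplies the needed $\lambda'_\sigma$ for each $\varepsilon$.
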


\begin{proof} We need to prove:
  \begin{align}
    \inf_{\bm w} & \bigsetof{\sum_{\sigma \in \Sigma} w_\sigma b_\sigma}{K \models \sum_{\sigma \in \Sigma}w_\sigma h(\sigma)\geq h(\bm X)}\label{eq:relaxation:application}\\
 = &  \inf_{\bm w^{(0)}}  \bigsetof{\sum_{\sigma \in \Sigma_1} w_{\sigma^+}^{(0)} b_{\sigma}}{K_0\models \sum_{\sigma \in \Sigma_1}w^{(0)}_{\sigma^+} h^{(0)}(\sigma^+)\geq h^{(0)}(\bm X)}\nonumber
  \end{align}
  A vector $\bm w$ on the LHS defines an unconstrained inequality,
  while a vector $\bm w^{(0)}$ on the RHS defines a constrained
  inequality, because $\bm w^{(0)}$ satisfies
  \begin{align*}
    \forall \bm h^{(0)} \in K_0, \sum_{\sigma \in \Sigma_1}w^{(0)}_{\sigma^+} h^{(0)}(\sigma^+)\geq h^{(0)}(\bm X)
  \end{align*}
  iff it satisfies
  \begin{align}
    \forall \bm h\in K, \bigwedge_{\sigma\in\Sigma_0} h(\sigma)=0 \Rightarrow \sum_{\sigma \in \Sigma_1}w^{(0)}_{\sigma^+} h(\sigma^+)\geq h(\bm X) \label{eq:w:k:k0:constrained}
  \end{align}
  which is a constrained inequality.

  We start by showing that LHS$\geq$RHS in
  Eq.~\eqref{eq:relaxation:application}.  For that it suffices observe
  that, if $\bm w = (\bm w_\sigma)_{\sigma \in \Sigma}$ defines a
  valid inequality
  $\sum_{\sigma \in \Sigma}w_\sigma h(\sigma)\geq h(\bm X)$, then its
  projection $\bm w^{(0)} = (w_{\sigma^+})_{\sigma \in \Sigma_1}$
  defines a valid constrained inequality
  $\sum_{\sigma \in \Sigma_1}w^{(0)}_{\sigma^+} h^{(0)}(\sigma^+)\geq
  h^{(0)}(\bm X)$, because all the missing terms $h(\sigma)$ for
  $\sigma \in \Sigma_0$ are $=0$.  Therefore
  $\inf_{\bm w}(\cdots) \geq \inf_{\bm w^{(0)}}(\cdots)$

  We prove now that LHS$\leq$RHS.
%
  Let $\bm w^{(0)}$ be a vector defining a valid constrained
  inequality~\eqref{eq:w:k:k0:constrained}.  The objective value of
  the RHS of~\eqref{eq:relaxation:application} is
  $\sum_{\sigma \in \Sigma_1} w^{(0)}_{\sigma^+}b_\sigma$.  By the relaxation
  theorem~\ref{th:relaxation}, for every $\varepsilon > 0$ there
  exists $\lambda_\sigma \geq 0$, for $\sigma \in \Sigma_0$, such that
  the following is a valid, unconstrained inequality:
  \begin{align*}
    K \models \sum_{\sigma\in \Sigma_0}\lambda_\sigma h(\sigma)
    + \sum_{\sigma \in \Sigma_1} w^{(0)}_{\sigma^+} h(\sigma^+) + \varepsilon h(\bm X) \geq & h(\bm X)
  \end{align*}
  This inequality is not yet of the form on the LHS
  of~\eqref{eq:relaxation:application}, because we have terms
  $h(\sigma^+)$ instead of $h(\sigma)$.  For each such term,
  $h(\sigma^+) = h((\bm U\bm V)^+|\bm U^+)$, where
  $\sigma = (\bm V|\bm U)$, we use the following Shannon inequality:
  \begin{align*}
    h(\sigma^+) = & h((\bm U\bm V)^+|\bm U^+) \leq h((\bm U\bm V)^+|\bm U) \\
    = & h(\bm U\bm V|\bm U) + h((\bm U\bm V)^+|\bm U\bm V) \\
    = & h(\sigma) + h((\bm U\bm V)^+|\bm U\bm V) \leq h(\sigma)+\sum_{\sigma \in
    \Sigma_0}h(\sigma)
  \end{align*}
  The last inequality,
  $h((\bm U\bm V)^+|\bm U\bm V) \leq \sum_{\sigma \in
    \Sigma_0}h(\sigma)$, can be checked by induction on the number of
  steps needed to compute the closure $(\bm U\bm V)^+$ using the FDs
  in $\Sigma_0$. (It also follows from
  Theorem~\ref{thm:fd:mvd:relax}.)  This implies that there exists
  coefficients $\lambda'_{\sigma}\geq 0$ such that the following is a
  valid, unconstrained inequality:
  \begin{align*}
    K \models \sum_{\sigma\in \Sigma_0}\lambda'_\sigma h(\sigma)
+ \sum_{\sigma \in \Sigma_1} w^{(0)}_{\sigma^+} h(\sigma) + \varepsilon h(\bm X) \geq & h(\bm X)
  \end{align*}
  or, equivalently,
  \begin{align*}
    K \models \sum_{\sigma\in \Sigma_0}\frac{\lambda'_\sigma}{1-\varepsilon} h(\sigma)
+ \sum_{\sigma \in \Sigma_1} \frac{w^{(0)}_{\sigma^+}}{1-\varepsilon} h(\sigma) \geq & h(\bm X)
  \end{align*}
  This is a valid inequality for the LHS
  of~\eqref{eq:relaxation:application}, and its objective value is
  $(\sum_{\sigma \in \Sigma_1}
  w^{(0)}_{\sigma^+}b_\sigma)/(1-\varepsilon)$, because $b_\sigma=0$
  for all $\sigma\in \Sigma_0$.  Since $\varepsilon$ can be chosen
  arbitrarily small, it follows that LHS$\leq$RHS
  in~\eqref{eq:relaxation:application}.
%
\end{proof}

This completes the proof of item~\ref{item:k:2}.  It remains to prove
item~\ref{item:k:3}.  For that we represent both
$\text{Log-L-Bound}_{K_0}(Q,\Sigma_1,\bm b_1)$ and
$\text{Log-U-Bound}_{K_0}(Q,\Sigma_1,\bm b_1)$ as the solutions to the
primal/dual cone program~\eqref{eq:primal:dual:cone:h} over the cone
$K_0$.  Notice that the vector $\bm b$ is restricted to $\bm b_1$ and,
therefore, $b_\sigma > 0$ for all $\sigma \in \Sigma_1$: we no longer
have tight constraints.  Similarly, the matrix $\bm A$ will be
restricted to a matrix $\bm A_1$ whose rows correspond to the closed
sets $\bm U \subseteq \bm X$.  It remains to check Slater's condition,
and, in particular, prove that $K_0$ has a non-empty interior.  For
that purpose we extended the definition of step functions from
Sec.~\ref{subsec:modular:normal} to our lattice $L$.  For each closed
set $\bm W \in L$, s.t.  $\bm W \neq \hat 1$, we define the step
function at $\bm W$ as follows.
\begin{align*}
  \forall \bm U \in L: \ \
  h_{\bm W}^{(0)}(\bm U) \defeq & \begin{cases}
    0 & \mbox{if $\bm U \subseteq \bm W$}\\
    1 & \mbox{otherwise}
    \end{cases}
\end{align*}
Let $\bm h_{\bm W} \in K$ be the standard step function in
Eq.~\eqref{eq:step:function} (we assumed $N_n \subseteq K$).
$\bm h_{\bm W}$ satisfies all FDs $\Sigma_0$, because the only FDs
that it does not satisfy are of the form $\bm U \rightarrow \bm V$
where $\bm U \subseteq \bm W$, $\bm V \not\subseteq \bm W$, and none
of the FDs in $\Sigma_0$ have this form because $\bm W^+ = \bm W$.  It
follows $\bm h_{\bm W} \in K \cap F$, and this proves
$\bm h^{(0)}_{\bm W} = \Pi_{L_0}(\bm h_{\bm W}) \in K_0$.  There are
$N$ step functions $\bm h^{(0)}_{\bm W} \in K_0$, and it is
straightforward to check that they are independent vectors in $\R^N$.
Let $\varepsilon > 0$ be small enough such that
$2 \varepsilon N < \min_{\sigma \in \Sigma_1} b_\sigma$.  Define
$\bm h \defeq \sum_{\bm W \in L_0} \varepsilon \bm h_{\bm W}$, and
$\bm h^{(0)} \defeq \Pi_{L_0}(\bm h)$.  Since $K \cap F$ is a convex
cone, $\bm h \in K\cap F$ and therefore $\bm h^{(0)} \in K_0$.  We
claim that there exists slack variables $\bm \beta$ such that
$(\bm h^{(0)},\beta)$ is a feasible solution to the cone program
in~\eqref{eq:primal:dual:cone:h} and, furthermore,
$(\bm h^{(0)}, \bm \beta)$ belongs to the interior of
$K_0 \times \Rp^s$.  Indeed, for all $\sigma\in \Sigma_1$,
$h^{(0)}(\sigma) \leq \varepsilon N < b_\sigma$, hence, if we define
$\beta_\sigma \defeq b_\sigma - h^{(0)}(\sigma)$, the pair
$(\bm h^{(0)}, \bm \beta) \in K_0 \times \Rp^s$ is a feasible solution
to the primal~\eqref{eq:primal:dual:cone:h}.  Next, we prove that
$(\bm h^{(0)}, \bm \beta)$ is in the interior of $K_0 \times \Rp^s$.
Since $\beta_\sigma > 0$ for all $\sigma \in \Sigma_1$, it follows
that $\bm \beta$ is in the interior of $\Rp^s$.  Set
$\bm h' \defeq \sum_{\bm W \in L_0} (\varepsilon + \delta_{\bm W}) \bm
h_{\bm W}$, where $\delta_{\bm W} \in (-\varepsilon,\varepsilon)$ are
$N$ arbitrary numbers, we have $\bm h' \in K \cap F$, hence
$\Pi_{L_0}(\bm h') \in K_0$.  This proves that $\bm h^{(0)}$ is in the
interior of $K_0$, thus, verifying Slater's condition.  It follows
that
$\text{Log-L-Bound}_{K_0}(Q,\Sigma_1,\bm b_1) =
\text{Log-U-Bound}_{K_0}(Q, \Sigma_1, \bm b_1)$, completing the proof
of item~\ref{item:k:3}.
\end{proof}

\section{Conclusions}

Data is ultimately information, and therefore the connection between
databases and information theory is no surprise.  We have discussed
applications of information inequalities to several database theory
problems: query upper bounds, query evaluation, query domination, and
reasoning about approximate constraints.  There are major open
problems in information theory, for example the decidability of
entropic information inequalities, the complexity of deciding Shannon
inequalities, a characterization of the cone $\bar \Gamma_n^*$, and
each such open problem has a corresponding open problem in database
theory.  In some cases the converse holds too, for example the query
domination problem is computationally equivalent to checking validity
of max-information inequalities, hence any proof of (un)-decidability
of one problem carries over to the other.

A broader question is whether information theory can find wider
applications in finite model theory.  For example, functional
dependencies and multivalued dependencies can be specified either
using first order logic sentences, or using entropic terms.  Are there
other properties in finite model theory that can be captured using
information theory?  Such a connection would enable logical
implications to be relaxed to approximate reasoning, with lots of
potential in modern, data-driven applications that rely heavily on
statistical reasoning.

\bibliographystyle{alpha}
\bibliography{bib}

\appendix

\section{Proof of the claim in Example~\ref{ex:pods2016}}

\label{app:ex:pods2016}
We prove that any valid $\Sigma$-information inequality is a positive
linear combination of the four inequalities shown in
Example~\ref{ex:pods2016}.  Such an inequality has the following form:
\begin{align}
  w_Rh(XY) + w_Sh(YZ)+w_Th(ZU) & \nonumber \\
  +w_Bh(U|XZ) + w_Ah(X|YU) \geq & h(XYZU) \label{eq:ex:pods2016:ineq}
\end{align}
where $w_R, \ldots, w_A \geq 0$ are non-negative real numbers.  Since
the inequality holds for all polymatroids, it also holds for every
step function $\bm h^{\bm V}$ (see Eq.~\eqref{eq:step:function:alt}),
for all $\bm V \subseteq \set{X,Y,Z,U}$.  There are $2^4-1 = 15$ step
functions, but we only use 5 of them:
\begin{align*}
  \bm h^{X}: && w_R + w_B \geq & 1\\
  \bm h^{Y}: && w_R + w_S \geq & 1\\
  \bm h^{Z}: && w_S + w_T \geq & 1\\
  \bm h^{U}: && w_T + w_A \geq & 1\\
  \bm h^{XU}: && w_R + w_T \geq & 1
\end{align*}
Consider the three constraints for $\bm h^Y,\bm h^Z,\bm h^{XU}$, which
mention only the variables $w_R, w_S, w_T$.  Any solution to these
three constraints can be immediately extended to a solution to all 5
constraints, by setting $w_B \geq \max(0, 1-w_R)$ and
$w_A \geq \max(0,1-w_T)$.  On the other hand, the three constraints on
$w_R, w_S, w_T$ assert that they form a fractional edge cover of a
triangle.  The fractional edge covering polytope of a triangle has
four extreme vertices,
%
%
$(0,1,1),(1,0,1),(1,1,0),(1/2,1/2,1/2)$.  It follows that the extreme
vertices of our polytope over all 5 variables are:
\begin{align*}
  &
    \begin{array}{|ccccc|} \hline
      w_R& w_S&w_T&w_A&w_B \\ \hline
      0 & 1 & 1 & 0 & 1 \\
      1 & 0 & 1 & 0 & 0 \\
      1 & 1 & 0 & 1 & 0 \\
      1/2 & 1/2 & 1/2 & 1/2 & 1/2 \\ \hline
      \end{array}
\end{align*}
Each of these vectors corresponds precisely to one of the four
inequalities that we listed in Example~\ref{ex:pods2016}, and,
conversely, any $\Sigma$-inequality of the
form~\eqref{eq:ex:pods2016:ineq} is dominated by some convex
combination of one of these four.

\section{Proof of~\eqref{eq:shannon:complicated}, and Failure of  the Copy Lemma}
\label{app:eq:shannon:complicated}

We start by proving Inequality~\eqref{eq:shannon:complicated}.  The
proof follows immediately from the following identity:
\begin{align*}
  - & I(X ; Y) + I(X ; Y |A)+I(X ; Y |B)+I(A:B)+  \\
    & + I(X ; Y |A')+I(A'; Y |X)+I(A'; X |Y) + 3I(A';AB|XY) \\
  = & I(A;B|A') + I(A;A'|Y) + I(A;A'|X) \\
    & + I(A;A'|BXY) + I(B;A'|Y) + I(B;A'|X) \\
    & + I(B;A'|AXY) + I(X;Y|BA') + I(X;Y|AA') \\
    & + I(X;A'|ABY) + I(Y;A'|AB) \\
\end{align*}

Next, we prove that the polymatroid in Fig.~\ref{fig:zhang:yeung:h}
does not satisfy the Copy Lemma.  Assuming otherwise, let $h'$ denote
the polymatroid over variables $X,Y,A,B,A',B'$.  Using only basic
Shannon inequalities, we derive a contradiction.  We will drop the
index $h'$ from $I_{h'}(\cdots)$ and write simply $I(\cdots)$.  By
assumption, the values $h'(\bm U)$ for all sets $\bm U$ that do not
contain both $A$ and $A'$, or both $B$ and $A'$ are known, for example
$h'(A'|XY) = h(A|XY) = 1$.  We also known the value $h'(\bm U)$ when
$\bm U$ contain $XY$, example
$h'(AA'XY) = h'(AA'|XY)+h'(XY) = h'(A|XY)+h'(A'|XY)+h(XY) =1 + 1 +
3=5$.  We proceed by examining the other sets where $A,A'$ or $B,B'$
co-occur, and start by showing $h'(AA') \leq 3$:
\begin{align*}
  I(A;A'|X) = & h'(AX)+h'(A'X) - h'(X) - h'(AA'X) \\
  = & 3+3-2-h'(AA'X)\geq 0
\end{align*}
and we derive $h'(AA'X)\leq 4$.  Similarly (replacing $X$ with $Y$) we
derive $h'(AA'Y) \leq 4$.  Finally, we have:
\begin{align*}
  I(X;&Y|AA') = \\
      = & h'(AA'X)+h'(AA'Y)-h'(AA'XY) - h'(AA') \geq 0
\end{align*}
and we derive that $h'(AA') \leq 3$.  We repeat the argument above by
replacing $A$ with $B$, and derive similarly that $h'(BA') \leq 3$.
Next, we show that $h'(ABA') \geq 5$, which follows from:
\begin{align*}
  I(XY;&A'|AB) = \\
  = & h'(ABXY)+h'(ABA')-h'(AB) - h'(ABA'XY) \\
  = & 4  +h'(ABA')-4-5 \geq 0
\end{align*}
thus $h'(ABA') \geq 5$.  (We also have $h'(ABA')\leq h'(ABA'XY)=5$,
hence $h'(ABA') = 5$, but the inequality suffices for us.)  Finally,
we derive a contradiction:
\begin{align*}
  I(A;B|A') = & h'(AA')+h'(BA')-h'(A')-h'(ABA')\\
  \leq & 3 + 3 - 2 - 5 = -1
\end{align*}

\section{Addendum to Theorem~\ref{th:chan:groups}}

\label{app:th:chan:groups}

We briefly sketch here the proof that, if an entropic function $\bm h$
satisfies a set of functional dependencies, then so do all
$\bm h^{(r)}$, for all $r \geq 0$.  For that we need to review the
main argument of the proof in~\cite{DBLP:journals/tit/ChanY02}.

Let $\bm h$ be an entropic function, realized by a probability
distribution $(R,p)$.  The first step is to ensure that the
probabilities $p(t)$ can be assumed to be rational numbers.  Assume
w.l.o.g. that $R$ is the support of $p$, then, by Lee's
result~\cite{DBLP:journals/tse/Lee87},
$h \models \bm U \rightarrow \bm V$, iff
$R \models \bm U \rightarrow \bm V$.  Consider now any sequence of
probability distributions on $R$, $p^{(k)}: R \rightarrow [0,1]$, of
rational numbers, such that $\lim_k p^{(k)} = p$.  Then $p^{(k)}$, and
its entropic vector $\bm h^{(k)}$, continue to satisfy the same FDs as
$R$ and, thus, the same FDs as $\bm h$.  Since $\bm h^{(k)}$ can be
arbitrarily close to $\bm h$, it suffices to prove that the theorem
holds for an entropic vector $\bm h$ realized by a probability
distribution $(R,p)$ where the probabilities are rational numbers.
Assume they have a common denominator $q > 0$, and let $N = |R|$.

From here on, we follow Chan and Yeung's
proof~\cite{DBLP:journals/tit/ChanY02}.  For each
$r = q, 2q, 3q, \ldots$ define the following $r \times n$ matrix
$\bm M_r = (m_{\rho i})_{\rho=1,r;i=1,n}$. Its rows are copies of the
tuples in $R$, where each tuple $\bm x\in R$ occurs $r\cdot p(\bm x)$
times in the matrix $\bm M_r$.  Intuitively, $\bm M_r$ can be viewed
as a relation with $n$ attributes and $r$ tuples, including
duplicates, whose uniform probability distribution has the same
entropic vector $h$ as $(R,p)$.  Let $G$ be the symmetric group $S_r$,
i.e. the group of permutation on the set $\set{1,2,\ldots,r}$; one
should think of $G$ as the group of permutations on the rows of
$\bm M_r$. For each $i = 1,\ldots, n$, let $G_i$ the subgroup that
leaves the column $i$ invariant, in other words:
\begin{align*}
  G_i = & \setof{\sigma \in G}{m_{\sigma(\rho),i}=m_{\rho,i}, \forall \rho = 1,r}
\end{align*}
Denoting similarly $G_\alpha$ the subgroup of permutations that leave
the set of columns $\alpha \subseteq [n]$ invariant, one can check
that $G_\alpha = \bigcap_{i \in \alpha} G_i$.  Let $\bm h^{(r)}$ be
the entropy of the uniform probability distribution on the relational
instance $\setof{(aG_1,\ldots, aG_n)}{a \in G}$.  Using a
combinatorial argument, Chan and
Yeung~\cite{DBLP:journals/tit/ChanY02} prove that
$\lim_{r \rightarrow \infty} \frac{1}{r} \bm h^{(r)} = \bm h$.  We
will not repeat that argument here, but make the additional
observation that, if $R$ satisfies the FD $\bm U \rightarrow \bm V$,
then $G_{\bm U} \subseteq G_{\bm V}$, which implies that $\bm h^{(r)}$
also satisfies the same FD.

\section{Proof of Equation~\eqref{eq:normal:coefficients}}

\label{app:eq:normal:coefficients}

M\"obius inversion formula states that, if
$f, g : 2^{\bm X} \rightarrow \R$ are two set functions, and one of
the identities below holds, then so does the other:
\begin{align}
  f(\bm U) = & \sum_{\bm V \subseteq \bm U}g(\bm V) & g(\bm U) = & \sum_{\bm V \subseteq \bm U}(-1)^{|\bm U-\bm V|}f(\bm V) 
\label{eq:mobius:inversion}
\end{align}
It is immediate to derive that the following identities are also
equivalent:\footnote{Define $h(\bm U) \defeq f(\bm X-\bm U)$
  then~\eqref{eq:mobius:inversion} becomes:
  \begin{align*}
    h(\bm U) = &\sum_{\bm V \subseteq \bm X-\bm U}g(\bm V) & g(\bm U) = & \sum_{\bm V \subseteq \bm U}(-1)^{|\bm U-\bm V|}h(\bm X - \bm V) 
  \end{align*}
  The claim follows by replacing $\bm V$ with $\bm X-\bm V$ in the
  second equation, then renaming $h$ to $f$.}
\begin{align*}
  f(\bm U) = & \sum_{\bm V \subseteq \bm X-\bm U}g(\bm V) & g(\bm U) = & \sum_{\bm V: \bm X-\bm U \subseteq \bm V}(-1)^{|\bm V\cap\bm U|}f(\bm V)
\end{align*}

To prove Equation~\eqref{eq:normal:coefficients}, we use
Equation~\eqref{eq:coefficients:normal}, and the fact that
$h(\bm X) = \sum_{\bm V: \bm V \subseteq \bm X} a_{\bm V}$:
\begin{align*}
  h(\bm U) = & \sum_{\bm V: \bm V \cap \bm U \neq \emptyset} a_{\bm V} =  h(\bm X) - \sum_{\bm V: \bm V \subseteq \bm X-\bm U} a_{\bm V} \\
  h(\bm X|\bm U) = & \sum_{\bm V: \bm V \subseteq \bm X-\bm U} a_{\bm V}
\end{align*}
M\"obius' inversion formula implies:
\begin{align*}
  a_{\bm U} = & \sum_{\bm V: \bm X-\bm U\subseteq \bm V}(-1)^{|\bm V\cap \bm U|}h(\bm X|\bm V)
\end{align*}
If $\bm X-\bm U\subseteq \bm V$, then we can write $\bm V$ uniquely as
$(\bm X-\bm U) \cup \bm V_0$, where $\bm V_0\subseteq \bm U$.  After
renaming $\bm V_0$ to $\bm V$ we derive:
\begin{align*}
  a_{\bm U} = & \sum_{\bm V \subseteq \bm  U}(-1)^{|\bm V|}h(\bm X|(\bm X-\bm U)\cup\bm V)\\
= & \sum_{\bm V \subseteq \bm  U}(-1)^{|\bm V|}h(\bm X)-\sum_{\bm V \subseteq \bm  U}(-1)^{|\bm V|}h((\bm X-\bm U)\cup\bm V)\\
= & - \sum_{\bm V \subseteq \bm  U}(-1)^{|\bm V|}h((\bm X-\bm U)\cup\bm V)\\
= & - \sum_{\bm V \subseteq \bm  U}(-1)^{|\bm V|}\left(h((\bm X-\bm U)\cup\bm V)-h(\bm X-\bm U)\right)\\
  = & - \sum_{\bm V \subseteq \bm  U}(-1)^{|\bm V|}h(\bm V|\bm X-\bm U)
\end{align*}
We used twice the fact that
$\sum_{\bm V \subseteq \bm U}(-1)^{|\bm V|} = 0$ when
$\bm U \neq \emptyset$.  This completes the proof.

\section{Proof of Lemma~\ref{lemma:normalization}}

\label{app:lemma:normalization}

To prove the lemma we need to establish a simple fact:
\begin{prop} \label{prop:max:normal}
  Let $\alpha_1, \ldots, \alpha_n \geq 0$ be non-negative real
  numbers.  Define the following set function
  $\bm h: 2^{\bm X} \rightarrow \R$:
  \begin{align*}
    h(\bm U) \defeq & \max_{i: X_i \in \bm U} \alpha_i
  \end{align*}
  Then $\bm h$ is a normal polymatroid.
\end{prop}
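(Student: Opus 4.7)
The plan is to exhibit an explicit decomposition of $\bm h$ as a non-negative linear combination of step functions $\bm h^{\bm V}$, which by Definition~\ref{def:normal:polymatroid} suffices to show that $\bm h$ is a normal polymatroid. The key observation is that the max function is naturally telescoping when the $\alpha_i$'s are sorted.

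First, I would reorder the variables so that $\alpha_1 \geq \alpha_2 \geq \cdots \geq \alpha_n \geq 0$; this is without loss of generality since the definition of $\bm h$ is invariant under relabeling. Set $\alpha_{n+1} \defeq 0$ as a convenient boundary value, and for each $k = 1, \ldots, n$, let $\bm V_k \defeq \set{X_1, X_2, \ldots, X_k}$. Then I claim
\begin{align*}
  \bm h = \sum_{k=1}^{n} (\alpha_k - \alpha_{k+1}) \bm h^{\bm V_k}.
\end{align*}
Each coefficient $\alpha_k - \alpha_{k+1}$ is non-negative because the sequence was sorted in decreasing order, so once the identity is verified this immediately exhibits $\bm h$ as a non-negative linear combination of step functions.

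To verify the identity, fix any non-empty $\bm U \subseteq \bm X$, and let $i^* \defeq \min\setof{i}{X_i \in \bm U}$. Since the $\alpha_i$'s are sorted in decreasing order, $h(\bm U) = \alpha_{i^*}$. By definition of $\bm h^{\bm V_k}$ in Eq.~\eqref{eq:step:function:alt}, we have $h^{\bm V_k}(\bm U) = 1$ iff $\bm U \cap \bm V_k \neq \emptyset$ iff $k \geq i^*$. Therefore the right-hand side evaluates to the telescoping sum $\sum_{k=i^*}^{n} (\alpha_k - \alpha_{k+1}) = \alpha_{i^*} - \alpha_{n+1} = \alpha_{i^*} = h(\bm U)$. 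The case $\bm U = \emptyset$ is trivial, since both sides vanish.

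There is no real obstacle here: the whole argument reduces to the telescoping identity after sorting. The only thing to be careful about is the boundary convention $\alpha_{n+1} = 0$, which ensures that when $\bm U = \bm X$ (so $i^* = 1$) the full sum $\sum_{k=1}^{n}(\alpha_k - \alpha_{k+1})$ correctly collapses to $\alpha_1 = \max_i \alpha_i$.
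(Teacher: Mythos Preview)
Your proof is correct and takes essentially the same approach as the paper: sort the $\alpha_i$'s, take consecutive differences as coefficients, and use nested step functions, verifying the identity via a telescoping sum. The only cosmetic difference is that the paper sorts in increasing order (with $\alpha_0 \defeq 0$) and uses suffix sets $\bm X_{[i:n]}$, whereas you sort in decreasing order (with $\alpha_{n+1} \defeq 0$) and use prefix sets $\bm V_k = \{X_1,\ldots,X_k\}$; these are mirror images of one another.
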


\begin{proof}
  Set $\alpha_0 \defeq 0$ and assume w.l.o.g. that
  $\alpha_0 \leq \alpha_1 \leq \alpha_2 \leq \cdots \leq \alpha_n$.
  Define $\delta_i \defeq \alpha_i - \alpha_{i-1}$ for $i=1,n$.  We
  prove:
  \begin{align}
    \bm h = & \sum_{i=1,n} \delta_i \bm h^{\bm X_{[i+1:n]}}\label{eq:h:delta}
  \end{align}
  If $k$ is the largest index s.t. $X_k \in \bm U$, then:
  \begin{align*}
    \sum_{i=1,n} \delta_i \bm h^{\bm X_{[i+1:n]}}(\bm U) = & \sum_{i=1,k}\delta_i=\alpha_k = h(\bm U)
  \end{align*}
  which proves~\eqref{eq:h:delta}.  Since
  $\delta_i \geq 0, \forall i$, $\bm h$ is a normal polymatroid.
\end{proof}

\begin{proof} (of Lemma~\ref{lemma:normalization}) We prove the claim
  by induction on the number of variables $n = |\bm X|$.  When $n=1$
  then the claim holds vacuously, so assume $n \geq 2$.  We will use
  the following identity:
  \begin{align*}
\forall \bm U \subseteq \bm X-\set{X_n}:&&& h(\bm U) = h_1(\bm U)+h_2(\bm U)\\
\text{where}:  &&&  h_1(\bm U) \defeq h(\bm U|X_n),\ \ h_2(\bm U) \defeq I(\bm U; X_n)
  \end{align*}
  Since $\bm h_1$, is a polymatroid in $n-1$ variables, by induction
  hypothesis we obtain a normal polymatroid $\bm h_1'$ such that the
  pair $\bm h_1, \bm h_1'$ satisfies the properties (a),(b),(c) of
  Lemma~\ref{lemma:normalization}.  Define:

  \begin{align*}
    \forall \bm U \subseteq \bm X-\set{X_n}:&&   h_2'(\bm U) \defeq & \max_{X_i \in \bm U} I(X_i; X_n)
  \end{align*}
  $\bm h_2'$ is a normal polymatroid, by Prop.~\ref{prop:max:normal}.
  The pair $\bm h_2, \bm h_2'$ satisfies properties (a),(c) of
  Lemma~\ref{lemma:normalization}: condition
  (a) follows from Eq.~\eqref{eq:I:chain:condition} in
  Prop.~\ref{prop:I:properties}:
  \begin{align}
    h_2'(\bm U) = \max_{X_i \in \bm U} I(X_i; X_n) \leq & I(\bm U; X_n) = h_2(\bm U) \label{eq:h2uiuxn}
  \end{align}
  and condition (c), $h_2(X_i)=h_2'(X_i)$, holds by definition.  (They
  do not satify contidion (b), i.e.
  $h_2(\bm X-\set{X_n})\neq h_2'(\bm X-\set{X_n})$).  We define:

  \begin{align*}
    a \defeq & \min_{i=1,n-1} h(X_n|X_i) = \min_{i=1,n-1} (h(X_n) - I(X_i;X_n)) = h(X_n) - h_2'(\bm X-\set{X_n})
  \end{align*}
  and observe that $a \geq 0$.

  Since both $\bm h_1'$ and $\bm h_2'$ are normal polymatroids, we can
  write them as:
  \begin{align*}
    \bm h_1' = & \sum_{\bm V \subseteq \bm X-\set{X_n},\bm V\neq\emptyset}b_{\bm V} \bm h^{\bm V}&
    \bm h_2' = & \sum_{\bm V \subseteq \bm X-\set{X_n},\bm V\neq\emptyset}c_{\bm V} \bm h^{\bm V}
  \end{align*}
  where $b_{\bm V}, c_{\bm V}$ are non-negative coefficients.  Define
  $\bm h'$ as:
  \begin{align*}
    \bm h' \defeq & \sum_{\bm V \subseteq \bm X-\set{X_n},\bm V\neq\emptyset}b_{\bm V} \bm h^{\bm V}+ \sum_{\bm V \subseteq \bm X-\set{X_n},\bm V\neq\emptyset}c_{\bm V} \bm h^{\bm V\cup\set{X_n}}+a\bm h^{\set{X_n}}
  \end{align*}
  We claim that $\bm h'$ satisfies the conditions of the lemma.
  Obviously, $\bm h'$ is a normal polymatroid, it remains to check
  conditions (a),(b),(c).  First, we use directly the defintion of
  $\bm h^{\bm V}$ in Eq.~\eqref{eq:step:function:alt} and derive the
  following identities, for all
  $\bm U, \bm V \subseteq \bm X - \set{X_n}$, $\bm V \neq \emptyset$:
  \begin{align*}
    h^{\bm V}(\bm U \cup \set{X_n}) = & h^{\bm V\cup \set{X_n}}(\bm U)=h^{\bm V}(\bm U)& h^{\bm V\cup \set{X_n}}(\bm U\cup \set{X_n})= & h^{\bm V}(\bm X - \set{X_n}) = 1.\\
h^{\set{X_n}}(\bm U)= & 0 & h^{\set{X_n}}(\bm U\cup \set{X_n})= & 1
  \end{align*}
  These identities imply the following, for $\bm U \subseteq \bm X-\set{X_n}$:
  \begin{align}
    h'(\bm U) = & h_1'(\bm U) + h_2'(\bm U)\nonumber\\
    h'(\bm U \cup \set{X_n}) = & h_1'(\bm U) + h_2'(\bm X-\set{X_n}) + a  =  h_1'(\bm U) + h(X_n)\label{eq:local:xyz}
  \end{align}
  We can check condition (a), $\bm h' \leq \bm h$, by considering two
  cases.  First, when $\bm U \subseteq \bm X - \set{X_n}$, then:
  \begin{align}
    h'(\bm U) = &  h_1'(\bm U) + h_2'(\bm U)  \leq  h_1(\bm U)+h_2(\bm U) = h(\bm U|X_n) + I(\bm U; X_n) = h(\bm U) \label{eq:hph1ph2:1}
  \end{align}
  We used the fact that both $\bm h_1'$ and $\bm h_2'$ satisfy
  condition (a) (see Eq.~\eqref{eq:h2uiuxn}).  Second, when the
  argument includes $X_n$, we have:
  \begin{align}
    h'(\bm U\cup \set{X_n}) = &  h_1'(\bm U) + h(X_n) \leq h(\bm U|X_n) + h(X_n) =  h(\bm U\cup\set{X_n}) \label{eq:hph1ph2:2}
  \end{align}
  Equations~\eqref{eq:hph1ph2:1} and~\eqref{eq:hph1ph2:2} imply
  $\bm h' \leq \bm h$, which proves condition (a).  Next, we check
  condition (b), and here we use~\eqref{eq:local:xyz}, and the fact
  that $\bm h_1'$ satisfies condition (b):
  \begin{align*}
    h'(\bm X) = & h_1'(\bm X-\set{X_n})+h(X_n) =  h_1(\bm X-\set{X_n})+h(X_n)=h(\bm X-\set{X_n}|X_n) + h(X_n) = h(\bm X)
  \end{align*}
  Finally we check condition (c), $h'(X_i)=h(X_i)$.  For $i<n$ we use
  the fact that $\bm h_1'$, $\bm h_2'$ satisfy (c):
  \begin{align*}
    h'(X_i) = & h_1'(X_i) + h_2'(X_i) = h_1(X_i) + h_2(X_i) =  h(X_i|X_n)+I(X_i;X_n)=h(X_i)
  \end{align*}
  When $i=n$ we use~\eqref{eq:local:xyz} and dervie
  $h'(X_n) = h_1'(\emptyset) + h(X_n)$.
\end{proof}

\end{document}